\definecolor{vertrep}{HTML}{30A930}
\definecolor{bleuajout}{HTML}{0000DE}
\definecolor{jaunecom}{HTML}{FFFFC0}
\newcommand{\N}{{\mathbb{N}}}
\newcommand{\A}{{\tt A}}
\newcommand{\B}{{\tt B}}
\newcommand{\proj}{pr} 
\newcommand{\arete}[3]{{#1 \xrightarrow{#2} #3}}
\newcommand{\path}[5]{#1 \overset{#2}{\longrightarrow} #3 \longrightarrow \cdots \overset{#4}{\longrightarrow} #5}
\newcommand{\triple}[3]{\underset{#1}{\overset{#3}{#2}}}
\newcommand{\Amin}{\A_{\rm{min}}}
\newcommand{\DG}{\boldsymbol{\mathrm{Dir}}}
\newcommand{\DSG}{\boldsymbol{{\mathrm{sDir}}}}
\newcommand{\G}{\boldsymbol{\mathrm{Gr}}}
\newcommand{\Emu}{\boldsymbol{\mathrm{Em}}}
\newcommand{\EmuS}{\boldsymbol{\mathrm{Em}_S}}
\newcommand{\EmuO}{\boldsymbol{\mathrm{Em}^{0}}}
\newcommand{\Cov}{\boldsymbol{\mathrm{Cov}}}
\newcommand{\CovS}{\boldsymbol{\mathrm{Cov}_S}}
\newcommand{\Exc}{\mathrm{Exc}}
\newcommand{\Maut}{\mathbf{Auto}}
\newcommand{\op}{\mathrm{op}}
\newcommand{\double}[1]{ {\overleftrightarrow{#1}}}
\newcommand{\Gaut}[1]{{{\rm A}_{#1}}}
\newcommand{\Autg}[1]{ {\rm G}_{#1}}
\newcommand{\relab}[2]{{#1}_{\backslash {#2}}}
\newcommand{\outE}{{\rm outE}}
\newtheorem{theorem}{Theorem}
\newtheorem{prop}{Proposition}
\newtheorem{corollary}{Corollary}[theorem]
\newtheorem{lemma}{Lemma}
\newtheorem{corolem}{Corollary}[lemma]
\newtheorem{conj}{Conjecture}
\theoremstyle{definition}
\newtheorem{definition}{Definition}
\theoremstyle{remark}
\newtheorem{example}{Example}
\newtheorem{remark}{Remark}
\tikzstyle{etat}=[circle, draw=black]
\tikzstyle{etat2}=[circle, inner sep= 2pt, minimum size=5pt, draw=black]
\begin{document}

\title[The genus of regular languages and directed graph emulators]{The genus of regular languages and directed graph emulators}
\author{Guillaume Bonfante${}^1$, Florian Deloup${}^2$	 
}
\address{1- Universit\'e de Lorraine - LORIA - Nancy\\
	2- Universit\'e Paul Sabatier - IMT - Toulouse}
\date{}

\begin{abstract}
The article continues our study of the genus of a regular language $L$, defined as the minimal genus among all genera of all finite deterministic automata recognizing $L$. Here we define and study two closely related tools on a directed graph: directed emulators and automatic relations.
A directed emulator morphism essentially encapsulates at the graph-theoretic level an epimorphism onto the minimal deterministic automaton. An automatic relation is the graph-theoretic version of the Myhill-Nerode relation. We show that an automatic relation determines a directed emulator morphism and respectively, a directed emulator morphism determines an automatic relation up to isomorphism. Consider the set $S$ of all directed emulators of the underlying directed graph of the minimal deterministic automaton for $L$. We prove that the genus of $L$ is $\underset{G \in S}{\min}\ g(G)$. We also consider the more restrictive notion of directed cover and prove that the genus of $L$ is reached in the class of directed covers of the underlying directed graph of the 
 minimal deterministic automaton for $L$. This stands in sharp contrast to undirected emulators and undirected covers which we also consider. Finally we prove that if the problem of determining the minimal genus of a directed emulator of a directed graph has a solution then the problem of determining the minimal genus of an undirected emulator of an undirected graph has a solution.
\end{abstract}

 \maketitle

 \tableofcontents

\section{Introduction}

Regular languages form the simplest class of languages in Chomsky's hierarchy: these are the languages recognized by deterministic finite automata (DFA). A special r\^ole is played by the minimal automaton canonically associated to a regular language. The complexity measure provided by the latter is familiar: it allows, for instance, to compare two regular languages by their size or to decide whether one is included into the other. Yet there are questions that cannot be read directly from the minimal automaton, and there exist complexity measures that are not captured simply by the size of the minimal automaton. An instance of the latter is provided by the genus of a regular language. The genus of a regular language $L$ is the minimal genus of a finite deterministic automaton recognizing the language $L$. In particular, a planar language is a language with genus $0$. We defined the genus in~\cite{BD16} and showed that this notion defines a proper hierarchy of (regular) languages. The decidability of the planarity of a regular language is a question asked by R.~V.~Book and A.~K.~Chandra \cite{BC76} in 1976.

The genus of a regular language behaves quite differently from the its set-theoretical size. First, a minimal genus deterministic automaton recognizing a given language has not minimal size in general, thus differs from the minimal automaton \cite{BD16}. Secondly, in a sequel \cite{BD19} to the first article, we proved that the size of the minimal automaton recognizing a language $L$ with minimal genus may be exponentially larger than the minimal automaton itself. The existence of an upper bound remains open. This suggests that the computation of the genus of a regular language should have a high complexity.

We also proved that under a fairly generic hypothesis\footnote{The minimal automaton must not contain small cycles, the size of which depends on the size of the alphabet. Moreover, automata are supposed to be complete.}, the genus of a regular language is computable. In particular, under this hypothesis, the planarity of a regular language is decidable. One should point out that the proof is entirely constructive and yields an implementable algorithm. We conjecture that the genus of a regular language is computable in general.



In this third opus of the series, we look at the problem with a graph-theoretical approach. The graph-theoretical substance of the relation of the minimal automaton to the genus minimal automaton is the notion of \emph{minimal directed emulator} (\S \ref{sec:emu}). A particular case is the notion of {\emph{directed cover}}. Most of this paper is devoted to the systematic study of directed emulators.

The notion of directed emulator is the natural refinement of the notion of (undirected) graph emulator, introduced by R.~M.~Fellows in 1985. In order to study the properties of emulators, he also used the notion of graph cover and conjectured that a connected finite graph has a finite planar emulator if and only if it has a finite planar cover. It took more than twenty years before Y.~Rieck and Y.~Yamashita found a counterexample \cite{RY09} to the conjecture. However, P.~Hlin\v{e}n\`y had found ten years earlier \cite{Hl99} an example of a graph having an emulator of genus strictly less than the minimal genus of any of its covers. On the one hand, we show that the situation for directed graphs is much simpler: a directed graph has a directed emulator of genus $g$ if and only if it has a directed cover of genus $g$ (Proposition~\ref{prop:directed_is_cover}). On the other hand, it follows from our constructions (Corollary~\ref{cor:sol_for_directed_implies_sol_for_undirected}) that a solution to the directed emulation genus problem implies a solution to the undirected emulation genus problem, a well-known difficult problem even in genus $0$. 

A major result of this paper is that a regular language has genus less than or equal to $g$ if and only if a directed cover of the underlying graph of its minimal deterministic automaton has genus less than or equal to $g$ (Theorem~\ref{th:genus-and-emulator}). As a consequence, we show that the original problem of determining the genus of a regular language is equivalent to the problem of determining the minimal genus of a directed cover of the underlying directed graph (as opposed to the undirected version). In particular, the problem of determining the planarity of a regular language is equivalent to the problem of determining the planarity of a directed cover of the underlying directed graph.  
The latter equivalence was originally proved by D.~Kuperberg in~\cite{Denis} via a different approach. 

Another important result of the paper is the correspondence between directed emulators and automatic relations on directed graphs as stated in Theorem~\ref{th:unique_automatic_decomposition}. Roughly speaking, an automatic relation on a directed graph is a device that mimicks the Myhill-Nerode equivalence relation at the level of automata. There is no privileged subset of edges (sets of initial and final states respectively) on a directed graph $G$ but only a class of complete final systems. It is proved here (Theorem~\ref{th:automatic_as_mn}) that a relation in a directed graph is automatic if and only if it stems from a MN-recursive relation, reminiscent of the recursive description of the original Myhill-Nerode relation. In the particular (important) case when the directed graph has a reachable vertex, we show that an automatic relation is induced by a Myhill-Nerode relation with respect to a single distinguished subset of vertices (Corollary~\ref{cor:accessible_automatic_description}).
 
Four main mathematical structures arise. First, (finite state) automata are used to describe regular languages. Forgetting input states and final states (albeit not completely, cf. the r\^ole of complete final systems in Theorem~\ref{th:automatic_as_mn} and Corollary~\ref{cor:accessible_automatic_description}) yields the notion of semi-automata. Forgetting letters on edges leads to the notion of directed graphs. Finally, forgetting the direction on edges leads to undirected graphs. These three forgetful operations are performed when the genus is computed. Our main contribution is to delineate the relationships between these four structures under the lens of the genus problem. 

As a whole, our approach aims at extracting the minimal structure required to compute the genus of a regular language. We use categories which we think render correctly the structural framework of the problem. Many authors have developped a categorical approach to closely related structures. For instance, J.~Ad\'amek and V.~Trnkov\'a~\cite{Adamek} present a categorical view of automata. T.~Colcombet and D.~Petri\c{s}an in \cite{ColcombetP19} develop a nice theory in which automata are seen as functors. They provide sufficient conditions to ensure the existence of minimal automata. We should also mention the work by A.~Joyal, M.~Nielsen and G.~Winskell about concurrency modelling in \cite{JoyalNW96}, where their transition systems suppose input states (actually a unique one).
This is an example of structure one naturally meets in our setting. Several obstacles lie in our way to build a full fledged categorical theory. First, there are two sorts of morphisms, emulators and covers. 
Secondly, to each directed emulator, one may associate a directed cover with the same genus (one should keep in mind that this is not true for the undirected graphs) but this construction is not natural. As another example, removing parallel edges can be done with emulators, not with covers. On the other hand, a directed cover lends itself to a description in terms of a universal property, see for instance Proposition~\ref{prop:directed_is_cover}. The subtle interplay between the two lies at the heart of the complexity analysis of the genus of a regular language.

\emph{Acknowlegments}. F.D. wishes to thank Philippe Balbiani for an illuminating discussion about bisimulation at the Institut de recherche en informatique de Toulouse (IRIT) in 2014. G.B. thanks Fran\c{c}ois Lamarche for some technical hints. Both authors would like to thank Denis Kuperberg for preliminary work in 2015 and for pointing out an error in an earlier version of this manuscript.

\renewcommand{\thetheorem}{\Alph{theorem}}
\section{Outline}

Given an automaton $\A$, let $L(\A)$ denote the language recognized by $\A$ and let ${\rm{G}}_{\A}$ denote the underlying simple directed graph. The \emph{genus} $g(L)$ of a regular language $L$ is the minimal genus of a deterministic automaton recognizing the language:
$$ g(L) = \min \ \{ g({\rm{G}}_{\A}) \ | \ L(\A) = L, \A\ {\rm{deterministic}} \}.$$

A directed emulator morphism between directed graphs is a graph morphism $\pi$ that sends the outgoing edges at each vertex $w$ surjectively onto the outgoing edges at $\pi(w)$. More precisely, a directed graph $G'$ is a \emph{directed emulator} of a directed graph $G$ if there is a directed graph epimorphism $G' \to G$ such that for any vertex $v$ of $G$, any outgoing edge $e$ starting at $v$ and any vertex $v'$ of $G'$ in the preimage of $v$, there is an outgoing edge $e'$ starting at $v'$. A more restrictive notion is that of \emph{directed cover} where the outgoing edges at $w$ are sent bijectively onto the outgoing edges at the image of $w$.

Let $L$ be a regular language, let $\A_{\rm{min}}(L)$ be the Myhill-Nerode minimal automaton recognizing $L$ and let $G(L) = {\rm{G}}_{\A_{\rm{min}}}(L)$ be the underlying directed graph.

In a first step, we give alternative descriptions of emulators and covers in terms of automatic relations on a directed graph. These will be Theorem~\ref{th:unique_automatic_decomposition} and \ref{th:automatic_as_mn} for emulators and Proposition~\ref{pr:presentation_of_covers} for covers. 

From there, we describe the lattice structure underlying emulators which will lead to minimal objects (in the sense of their size) in the spirit of~\cite{ColcombetP19}.

Putting these together, we prove Theorem~\ref{th:genus-and-emulator} that states that the three following statements are equivalent:
\begin{enumerate}[(i)]
\item the regular language $L$ has genus $g$
\item the directed graph $G(L)$ has a directed emulator of genus $g$
\item the directed graph $G(L)$ has a directed cover of genus $g$.
\end{enumerate}
This statement can be reformulated as an equivalence of the decidability of the genus problem for regular languages and directed graphs. 

Throughout the paper, some statements are asserted without proof. We do it only when proofs are immediate consequences of the definitions. Details are provided whenever a more elaborate proof is required or a specific construction. 

\renewcommand{\thetheorem}{\arabic{theorem}}
\setcounter{theorem}{0}

\section{Preliminary material} \label{sec:prelim}

\label{truc}

\subsection{Graphs}

Throughout this paper, $V$ and $E$ are finite sets. Given a set (resp. a category) $A$, we denote by $1_{A}$ the identity (resp. the identity functor) on $A$. Given a set $A$, we usually denote by $\sim$ an equivalence relation on $A$. If $\sim'$ is another equivalence relation, we write $\sim \subseteq \sim'$ if $x \sim y$ implies $x \sim' y$.

A \emph{directed graph} (sometimes shortened to \emph{digraph}) $G$ consists of a set $V$ of {\emph{vertices}} and a set $E$ of {\emph{edges}} and two maps $s,t: E {\rightrightarrows} V$ (resp. ``source'' and ``target'').  

Given an edge $e$, the notation $\arete{x}{e}{y}$ states that $s(e) = x$ and $t(e)=y$. An edge $\arete{x}{e}{x}$ is a {\emph{loop}} at vertex $x$. Given a graph $G$, we denote by $V_G$ its vertices, by $E_G$ its edges and by $s_G, t_G$ its corresponding source and target functions. We shall drop subscripts if the context is clear.

A \emph{walk} of length $k \geq 1$ in a graph $G$ starting at $v \in V$ and ending at $w \in V$ is a sequence $e_1,\ldots, e_k$ of edges with $s(e_{i+1})= t(e_i)$ for $i\leq k-1$ such that $v=s(e_1)$ and $w = t(e_k)$. It will be convenient to regard a single vertex $v \in V$ as a walk of length $0$. A vertex $w \in V$ is \emph{reachable from a vertex} $v \in V$ if there is a walk $e_1, \ldots, e_k$ in $G$ such that $s_{G}(e_1) = v$ and $t_{G}(e_k) = w$. In particular the target vertex is reachable from the source vertex. The relation $v \leq w$ if and only if $w$ is reachable from $v$ is a preorder on the set $V$ of vertices. A vertex $v$ is {\emph{reachable}} if $v$ is reachable from any vertex in $V$. A directed graph $G$ is {\emph{reachable}} if there is at least one reachable vertex in $G$. A vertex $v$ is {\emph{co-reachable}} if any vertex in $V$ is reachable from $v$. A directed graph $G$ is {\emph{co-reachable}} if there is at least one co-reachable vertex in $G$.

Given a directed graph $G$, the \emph{ordered boundary map} is the map  $\Delta_G: E_G \to V_G \times V_G$ defined by $\Delta_{G}(e)=(s_G(e),t_G(e))$. An edge $\arete{x}{e}{y} \in E_G$ is {\emph{simple}} if there is no other edge $e'$ such that $\Delta_G(e) = \Delta_G(e')$. The graph $G$ is \emph{simple} if any of its edges is simple.

A {\emph{morphism}} $G \to H$ {\emph{between directed graphs}} $G$ and $H$ is a pair $(p, q)$ where $p:V_{G} \to V_{H}$ and $q:E_{G} \to E_{H}$ satisfy the relations:
\begin{equation}
p \circ s_{G} = s_{H} \circ q\ \ {\rm{and}}\  \ p \circ t_{G} = t_{H} \circ q.
\label{eq:incidence_relation}
\end{equation}

Setting $p^{\times 2}:V_{G} \times V_{G} \to V_{H} \times V_{H}, \ (x,y) \mapsto (p(x), p(y))$, the previous equation is equivalent to: $p^{\times 2} \circ \Delta_G = \Delta_H \circ q.$ 
This relation is referred as the {\emph{adjacency relation}}.

The identity $(1_{V_G}, 1_{E_G}) : G \to G$ is a morphism and morphisms compose. 

\begin{lemma}
Directed graphs and morphisms between
them form a category denoted $\DG$. Directed simple
graphs and morphisms between them form a full subcategory $\DSG$ of $\DG$.
\end{lemma}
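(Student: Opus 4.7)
The plan is to verify the category axioms for $\DG$ directly from the definition of morphism, and then observe that fullness of $\DSG$ is automatic from the way it is defined (all morphisms between simple graphs in $\DG$ are taken).

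First, I would define composition: given $(p,q) : G \to H$ and $(p',q') : H \to K$, set $(p',q') \circ (p,q) := (p' \circ p, q' \circ q)$. The only non-trivial point is that this pair satisfies~(\ref{eq:incidence_relation}). Two applications of the incidence relations for $(p,q)$ and $(p',q')$ give
\[
(p' \circ p) \circ s_G = p' \circ (p \circ s_G) = p' \circ (s_H \circ q) = (p' \circ s_H) \circ q = (s_K \circ q') \circ q = s_K \circ (q' \circ q),
\]
and symmetrically for $t$. Thus the composite is again a morphism in $\DG$. The identity pair $(1_{V_G}, 1_{E_G})$ trivially satisfies~(\ref{eq:incidence_relation}) and acts as a neutral element for composition, while associativity of composition is inherited from the associativity of function composition on the underlying vertex and edge sets. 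This establishes that $\DG$ is a category.

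For the second assertion, I would take as objects of $\DSG$ the directed simple graphs, and as morphisms \emph{all} morphisms in $\DG$ between such graphs. Identities on simple graphs are morphisms in $\DG$ (hence in $\DSG$), and the composite of two morphisms between simple graphs is again a morphism in $\DG$ whose source and target are simple; so $\DSG$ is closed under the operations of $\DG$ and therefore is a subcategory. Fullness is built into the definition: by construction $\mathrm{Hom}_{\DSG}(G,H) = \mathrm{Hom}_{\DG}(G,H)$ for any pair of simple graphs $G,H$.

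There is no real obstacle here; the only point requiring care is the bookkeeping in the displayed calculation above, and the observation that the simplicity condition is a condition on objects (not on morphisms), so that requiring fullness costs nothing.
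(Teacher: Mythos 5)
Your proof is correct and follows the same route as the paper, which leaves this lemma without a written proof precisely because it is an immediate consequence of the definitions (the paper notes just before the statement that the identity $(1_{V_G},1_{E_G})$ is a morphism and that morphisms compose). Your explicit verification of the incidence relation for composites and your observation that simplicity is a condition on objects, so fullness of $\DSG$ is automatic, are exactly the details the authors took for granted.
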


As defined, the category of graphs is equivalent to the homset category ${\rm Hom}_{\mathbf{Fset}}(\begin{tikzpicture}[->,transform shape, scale=0.7, baseline=-1mm]\node (A0) at (0,0) {.};
\node (A1) at (1,0) {.};
\draw[transform canvas={yshift=0.5mm}] (A0) edge node[above]{s} (A1);
\draw[transform canvas={yshift=-0.5mm}] (A0) edge node[below]{t} (A1);
\end{tikzpicture},-)$ where ${\mathbf{Fset}}$ is the category of finite sets and maps between them. We do not use it explicitly but the equivalence occurs between the lines in the sequel. 

A {\emph{graph epimorphism}} (resp. {\emph{monomorphism}}, {\emph{isomorphism}}) is a graph morphism $(p,q)$ such that both maps $p$ and $q$ are surjective (resp. injective, bijective).

A subset $W \subseteq V$ of vertices of a graph $G = (V, E, s, t)$ determines the graph
$G_{\mid W} = (W, E_{W}, s|_{E_{W}}, t|_{E_{W}})\ {\hbox{where}}\ E_{W} = \{ e \in E \ | \Delta(e)  \in
W\times  W \}$. 
Similarly, a subset $F \subseteq E$ of edges determines the graph
$G_{\mid F} = (V_G, F, s|_{F}, t|_{F})$. A directed {\emph{subgraph}} of $G$ is a graph $H$ such that there is a set of vertices $W$ and a set of edges $F$ such that $H = (G_{\mid W})_{\mid F}$. We also say that $G$ contains $H$. If $\phi:G \to H$ is a monomorphism, then $G$ is isomorphic to some directed subgraph of $H$.

Given a morphism $(p,q) : G \to H$, we define its image graph to be $(p,q)(G) = (p(V_G), q(E_G), s, t)$ with $s(q(e)) = p(s_G(e))$ and $t(q(e)) = p(t_G(e))$ for all $e \in E_G$. It is clear that $(p,q)(G)$ is a subgraph of $H$ and that $(p,q) : G \to (p,q)(G)$ is an isomorphism. 

Given a directed graph $G$, consider the directed graph defined $R(G) = (V_G, \Delta_G(E_G), \pi_1, \pi_2)$ with $\pi_1(x,y) = x$ and $\pi_2(x,y) = y$.

\begin{lemma}\label{lem:Delta_is_identity} $\Delta_{R(G)} = 1_{E_{R(G)}}$.
\end{lemma}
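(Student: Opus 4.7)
The plan is essentially to unfold the definitions; there is no real obstacle here, and the lemma is a direct computation. The point of the statement is conceptual rather than technical: it records that the construction $R$ replaces edges by their source/target pairs, so that the ordered boundary map becomes tautological.

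First I would note that by construction the edge set of $R(G)$ is $E_{R(G)} = \Delta_G(E_G) \subseteq V_G \times V_G$, i.e.\ each edge $e \in E_{R(G)}$ is literally a pair $(x,y) \in V_G \times V_G$. The source and target maps on $R(G)$ are, by definition, $s_{R(G)} = \pi_1$ and $t_{R(G)} = \pi_2$, the two coordinate projections restricted to $E_{R(G)}$.

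Next, I would apply the definition of the ordered boundary map to $R(G)$: for any edge $e = (x,y) \in E_{R(G)}$,
\[
\Delta_{R(G)}(e) \;=\; \bigl(s_{R(G)}(e),\, t_{R(G)}(e)\bigr) \;=\; \bigl(\pi_1(x,y),\, \pi_2(x,y)\bigr) \;=\; (x,y) \;=\; e.
\]
Hence $\Delta_{R(G)}$ coincides with the identity map $1_{E_{R(G)}}$ on its entire domain, which is exactly the claim. The only thing to be careful about is that $V_{R(G)} \times V_{R(G)} = V_G \times V_G$ (since $R(G)$ shares the vertex set of $G$), so the codomain of $\Delta_{R(G)}$ contains $E_{R(G)}$ and the identification makes sense. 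No further argument is needed.
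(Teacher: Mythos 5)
Your proof is correct and is exactly the definition-unfolding argument the paper has in mind; the paper omits the proof precisely because, as you observe, $\Delta_{R(G)}(x,y) = (\pi_1(x,y), \pi_2(x,y)) = (x,y)$ is immediate from the construction of $R(G)$. Nothing to add.
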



\begin{corolem}
The graph $R(G)$ is simple.
\end{corolem}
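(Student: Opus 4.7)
The plan is to deduce simplicity directly from Lemma~\ref{lem:Delta_is_identity}. Recall that an edge $e$ of a directed graph $H$ is simple when no distinct edge $e'$ satisfies $\Delta_H(e)=\Delta_H(e')$, and $H$ is simple when all of its edges are. Thus simplicity of $H$ amounts exactly to the injectivity of the ordered boundary map $\Delta_H : E_H \to V_H \times V_H$.

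Applying this observation to $H = R(G)$, I need only check that $\Delta_{R(G)}$ is injective. But by Lemma~\ref{lem:Delta_is_identity}, $\Delta_{R(G)} = 1_{E_{R(G)}}$, and the identity map on a set is trivially injective. Concretely, if $e, e' \in E_{R(G)} = \Delta_G(E_G) \subseteq V_G \times V_G$ satisfy $\Delta_{R(G)}(e) = \Delta_{R(G)}(e')$, then by the lemma this equation reads $e = e'$, so $e$ is simple. Since $e$ was arbitrary, $R(G)$ is simple.

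There is no real obstacle here: the content has already been absorbed into the preceding lemma, which unpacks the tautology that edges of $R(G)$ \emph{are} ordered pairs of vertices, and that the source and target projections $\pi_1, \pi_2$ recover exactly those coordinates. Once that is in place, simplicity is immediate.
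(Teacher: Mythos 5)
Your proof is correct and is exactly the intended argument: the paper leaves this corollary without proof precisely because, as you observe, simplicity is equivalent to injectivity of $\Delta_{R(G)}$, which is immediate from Lemma~\ref{lem:Delta_is_identity}. Nothing to add.
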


\begin{corolem}
The assignment $G \mapsto R(G)$ extends to a functor $\DG \to \DSG$ that assigns to a morphism $(p,q):G \to G'$ the morphism $R(p,q):R(G) \to R(G')$ defined by $R(p,q) = (p, p^{\times 2})$.
\end{corolem}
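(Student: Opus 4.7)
The claim is a routine functoriality check, so my plan is to split the verification into three pieces corresponding to (a) well-definedness of $R(p,q)$ as a morphism in $\DSG$, (b) preservation of identities, and (c) preservation of composition. Throughout I will use the adjacency relation $p^{\times 2} \circ \Delta_G = \Delta_{G'} \circ q$ that characterises morphisms in $\DG$, together with Lemma~\ref{lem:Delta_is_identity} applied to $R(G)$ and $R(G')$.

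For (a), I first note that the edge set of $R(G)$ is $\Delta_G(E_G) \subseteq V_G \times V_G$ and similarly for $R(G')$. Given an edge $(x,y) = \Delta_G(e)$ of $R(G)$, the adjacency relation for $(p,q)$ forces $p^{\times 2}(x,y) = \Delta_{G'}(q(e)) \in \Delta_{G'}(E_{G'})$, so $p^{\times 2}$ indeed restricts to a map of edges $E_{R(G)} \to E_{R(G')}$. The incidence conditions $p \circ \pi_1 = \pi_1 \circ p^{\times 2}$ and $p \circ \pi_2 = \pi_2 \circ p^{\times 2}$ follow at once from the definition $p^{\times 2}(x,y) = (p(x),p(y))$. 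Thus $R(p,q) = (p, p^{\times 2})$ is a morphism of directed graphs; since $R(G')$ is simple by the previous corollary, it is automatically a morphism in $\DSG$.

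For (b), observe that $R(1_G) = (1_{V_G}, (1_{V_G})^{\times 2})$, and $(1_{V_G})^{\times 2}$ restricted to $\Delta_G(E_G)$ is the identity on $E_{R(G)}$; this yields $R(1_G) = 1_{R(G)}$. For (c), given $(p,q):G\to G'$ and $(p',q'):G'\to G''$, we have $R((p',q') \circ (p,q)) = (p'\circ p, (p' \circ p)^{\times 2})$, while $R(p',q')\circ R(p,q) = (p'\circ p, (p')^{\times 2} \circ p^{\times 2})$. The pointwise computation $(p' \circ p)^{\times 2}(x,y) = (p'(p(x)), p'(p(y))) = (p')^{\times 2}(p(x),p(y))$ shows these agree.

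I do not expect any real obstacle here; the only subtle point is to recognise that, since the edges of $R(G)$ are literally pairs of vertices, the second component of $R(p,q)$ must be the restriction of $p^{\times 2}$ to $\Delta_G(E_G)$, and it is the adjacency relation for $(p,q)$ — not merely the set-theoretic action of $p^{\times 2}$ — that guarantees this restriction lands in $\Delta_{G'}(E_{G'})$.
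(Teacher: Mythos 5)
Your proof is correct and follows essentially the same route as the paper's: the key step in both is that the adjacency relation $p^{\times 2}\circ \Delta_G = \Delta_{G'}\circ q$ guarantees $p^{\times 2}$ carries $E_{R(G)}=\Delta_G(E_G)$ into $\Delta_{G'}(E_{G'})$, with Lemma~\ref{lem:Delta_is_identity} (equivalently, your direct check of the incidence conditions for $\pi_1,\pi_2$) confirming that $(p,p^{\times 2})$ is a morphism. You additionally spell out preservation of identities and composition, which the paper leaves implicit; that is harmless extra care, not a different argument.
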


\begin{proof}
An edge $\Delta_{G}(e) \in E_{R(G)}$ is mapped to $p^{\times 2} \circ \Delta_{G}(e)$. This is a morphism since the adjacency relation is satisfied, for $\Delta_{R(G')} \circ p^{\times 2} = p^{\times 2} = p^{\times 2} \circ \Delta_{R(G)}$ according to Lemma~\ref{lem:Delta_is_identity}.
\end{proof}


%
%

\begin{lemma} \label{lem:R_is_epi} The pair $\rho_G = (1_{V_G}, \Delta_G) : G \to R(G)$ is an epimorphism. It is an isomorphism when restricted to simple graphs.
	Moreover, $G \mapsto \rho_G$ defines a natural transformation $1_{\DG} \to I \circ R$ where $I$ denotes the inclusion functor $\DSG \to \DG$:
	$$\xymatrix{
		G' \ar[r]^-{\rho_{G'}} \ar@{->}_{\psi}[d]  & R(G') \ar^{R(\psi)}[d] \\
		G \ar[r]^-{\rho_G}  & R(G)
	}$$
	\end{lemma}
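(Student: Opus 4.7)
The statement has three components: (i) $\rho_G$ is a morphism and an epimorphism; (ii) it is an isomorphism when $G$ is simple; (iii) the family $(\rho_G)_G$ is natural in $G$. Each part reduces to unpacking the definitions, so I will organize the proof as three short steps.

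\medskip

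\emph{Step 1 (morphism and epimorphism).} First I verify that $\rho_G = (1_{V_G}, \Delta_G)$ satisfies the incidence relations defining a morphism in $\DG$. Since $s_{R(G)} = \pi_1$ and $t_{R(G)} = \pi_2$, for any edge $e \in E_G$ one has $s_{R(G)} \circ \Delta_G(e) = \pi_1(s_G(e),t_G(e)) = s_G(e) = 1_{V_G} \circ s_G(e)$, and similarly for $t$. Thus $\rho_G$ is a morphism. The vertex map $1_{V_G}$ is trivially surjective, and the edge map $\Delta_G : E_G \to \Delta_G(E_G)$ is surjective by construction of $R(G)$. Hence $\rho_G$ is an epimorphism.

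\medskip

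\emph{Step 2 (isomorphism on simple graphs).} When $G$ is simple, the definition of simplicity is exactly that $\Delta_G$ is injective; combined with the surjectivity from Step 1 this gives a bijection on edges, while the vertex map $1_{V_G}$ is already a bijection. So $\rho_G$ is an isomorphism, with inverse $(1_{V_G}, \Delta_G^{-1})$.

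\medskip

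\emph{Step 3 (naturality).} Given a morphism $\psi = (p,q) : G' \to G$, I must check that the square
$$\rho_G \circ \psi \;=\; R(\psi) \circ \rho_{G'}$$
commutes, where $R(\psi) = (p, p^{\times 2})$. On vertices both composites equal $p$. On edges, for $e \in E_{G'}$, one has $R(\psi) \circ \rho_{G'}(e) = p^{\times 2}(\Delta_{G'}(e)) = (p(s_{G'}(e)), p(t_{G'}(e)))$, while $\rho_G \circ \psi(e) = \Delta_G(q(e)) = (s_G(q(e)), t_G(q(e)))$. These agree precisely because $\psi$ is a morphism, i.e.\ because $p \circ s_{G'} = s_G \circ q$ and $p \circ t_{G'} = t_G \circ q$. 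This is the adjacency relation $p^{\times 2} \circ \Delta_{G'} = \Delta_G \circ q$ already observed in the paper, so the square commutes and $(\rho_G)_G$ is a natural transformation $1_{\DG} \Rightarrow I \circ R$.

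\medskip

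\emph{Difficulty.} There is no serious obstacle: every step is a direct bookkeeping from the definitions of $R$, $\Delta$, the adjacency relation and the two component maps of a graph morphism. The only point that requires any care is keeping track of which maps live on vertices and which on edges when composing $\rho_G$ with $\psi$ and with $R(\psi)$, but this is essentially the content of the corollary preceding the lemma.
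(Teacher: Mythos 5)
Your proof is correct; the paper in fact asserts this lemma without proof (it is one of the statements declared to be an immediate consequence of the definitions), and your three-step verification is exactly the direct unwinding of the definitions of $R$, $\Delta_G$, simplicity, and the adjacency relation that the authors have in mind. Nothing is missing and nothing differs in substance.
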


\begin{lemma} \label{lem:R_is_invol} $R \circ I \circ R = R$.
\end{lemma}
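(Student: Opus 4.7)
My plan is to verify the identity $R \circ I \circ R = R$ separately on objects and on morphisms, relying essentially on Lemma~\ref{lem:Delta_is_identity} to collapse the outer application of $R$.

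First, for an object $G$ of $\DG$, set $H = I(R(G))$. Since $I$ is the inclusion functor, $H$ coincides with $R(G)$ as a directed graph, namely $H = (V_G, \Delta_G(E_G), \pi_1, \pi_2)$. By Lemma~\ref{lem:Delta_is_identity}, the ordered boundary map of $H$ is $\Delta_H = 1_{E_H}$. Applying the definition of $R$ again,
$$R(H) = (V_H, \Delta_H(E_H), \pi_1, \pi_2) = (V_H, E_H, \pi_1, \pi_2) = (V_G, \Delta_G(E_G), \pi_1, \pi_2) = R(G),$$
which establishes the equality on objects.

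Second, for a morphism $\psi = (p,q) : G' \to G$ in $\DG$, the definition of $R$ on morphisms gives $R(\psi) = (p, p^{\times 2})$. Since $I$ is a faithful inclusion, $I(R(\psi)) = (p, p^{\times 2})$ as a morphism $R(G') \to R(G)$ in $\DG$. Applying $R$ once more, and observing that the vertex component of $R(\psi)$ is still $p$, we obtain
$$R(I(R(\psi))) = (p, p^{\times 2}) = R(\psi),$$
so the equality holds at the level of morphisms as well.

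Combining both parts yields $R \circ I \circ R = R$ as functors $\DG \to \DSG$. No step here should constitute a real obstacle: the only mildly subtle point is remembering that after one application of $R$ the boundary map becomes the identity on its edge set, so the second application of $R$ cannot produce anything new. Effectively, $R$ is idempotent modulo the inclusion $I$, which matches the intuition that $R$ is a retraction of $\DG$ onto $\DSG$.
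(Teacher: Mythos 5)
Your proof is correct and follows exactly the intended argument: the paper asserts this lemma without proof as an immediate consequence of the definitions, and your verification on objects (via Lemma~\ref{lem:Delta_is_identity}) and on morphisms (noting that $R$ of a morphism depends only on its vertex component) is precisely that immediate argument spelled out.
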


There is yet another endofunctor we shall need. Let $G = (V,E,s,t)$ be a directed graph. Then $G^{\op} = (V,E,t,s)$ is another directed graph. 
 A morphism $(f,g): G \to G'$  is sent to  $(f,g)^{\op} = (f,g) :G^{\op} \to G'^{\op}$.

\begin{example} \label{example:op}
Let $G = \! \! \! \begin{tikzpicture}[->,transform shape, scale=0.6, baseline=-1mm]
 \tikzset{vertex/.style={circle, minimum size=12pt,inner sep=8pt}}
\node[etat] (G0) at (0,0) {\Large $v$};
\node[etat] (G1) at (2,0) {\Large $w$};
\path[] (G0) edge [loop left, looseness=15] node[above] {$g$} (G0);
\path[] (G0) edge [bend left=15] node[above] {$e$} (G1);
\path[] (G1) edge [bend left=15] node[below] {$f$} (G0);
\end{tikzpicture}$, then $G^{\rm{op}}=\!\!\begin{tikzpicture}[->,transform shape, scale=0.6, baseline=-1mm]
 \tikzset{vertex/.style={circle, minimum size=12pt,inner sep=8pt}}
\node[etat] (G0) at (0,0) {\Large $v$};
\node[etat] (G1) at (2,0) {\Large $w$};
\path[] (G0) edge [loop left, looseness=15] node[above] {$g$} (G0);
\path[] (G0) edge [bend right=15] node[below] {$f$} (G1);
\path[] (G1) edge [bend right=15] node[above] {$e$} (G0);
\end{tikzpicture}.$ In the drawings, $\{v, w\}$ are the vertices, $\{e, f, g\}$ the edges and arrows describe sources and targets.
\end{example}

We record a few properties of the graphs that are ``invariant'' under ${\rm{op}}$.

\begin{lemma} \label{lem:properties-op}
The following relations hold:
\begin{align}
(G^{\op})^{\op} = G. \label{eq:op-involutive} \\
R(G^{\op}) = R(G)^{\op}. \label{eq:G-and-op-commute}
\end{align}
\end{lemma}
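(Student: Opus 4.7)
Both equalities are straightforward verifications from the definitions, with the second one requiring slightly more care in bookkeeping with ordered pairs. My plan is to handle them in turn by simply unfolding the definitions of ${}^{\op}$ and of the functor $R$.

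For (\ref{eq:op-involutive}): If $G = (V, E, s, t)$, then by definition $G^{\op} = (V, E, t, s)$. Applying the construction once more swaps source and target again, giving $(V, E, s, t) = G$. The same holds on morphisms since $(f,g)^{\op} = (f,g)$ leaves the underlying pair of maps unchanged, so the involutivity on morphisms is immediate.

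For (\ref{eq:G-and-op-commute}): First observe that for any $e \in E_G$ we have
\[
\Delta_{G^{\op}}(e) = (s_{G^{\op}}(e), t_{G^{\op}}(e)) = (t_G(e), s_G(e)),
\]
so $\Delta_{G^{\op}} = \sigma \circ \Delta_G$, where $\sigma : V_G \times V_G \to V_G \times V_G$ is the coordinate swap $\sigma(x,y) = (y,x)$. Now unfold both sides. On the one hand,
\[
R(G^{\op}) = (V_G,\, \Delta_{G^{\op}}(E_G),\, \pi_1,\, \pi_2),
\]
whose edges are the pairs $(t_G(e), s_G(e))$, considered with the arrow going from the first to the second coordinate. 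On the other hand,
\[
R(G)^{\op} = (V_G,\, \Delta_G(E_G),\, \pi_2,\, \pi_1),
\]
whose edges are the pairs $(s_G(e), t_G(e))$, but now considered with source and target roles exchanged, so that the arrow also goes from $t_G(e)$ to $s_G(e)$. In both cases, the resulting directed graph has the same vertex set $V_G$ and an edge from $y$ to $x$ precisely for each pair $(x,y) \in \Delta_G(E_G)$; the coordinate swap $\sigma$ provides the canonical identification of the two edge sets.

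The only possible subtlety is whether the equality is literal or up to the canonical identification by $\sigma$: if one takes edges of $R(G)$ and $R(G^{\op})$ as abstract elements defined only up to this swap, the equality is on the nose; otherwise one reads it as the canonical isomorphism induced by $\sigma$ and compatible with $\rho_G$ of Lemma~\ref{lem:R_is_epi}. Either way, no real content is hidden — this is the main (minor) obstacle, and it is purely notational.
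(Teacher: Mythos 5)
Your proof is correct and is exactly the definition-unfolding argument the paper intends; the lemma is stated there without proof, as an immediate consequence of the definitions. Your observation that (\ref{eq:G-and-op-commute}) holds literally only after identifying the edge $(t_G(e),s_G(e))$ of $R(G^{\op})$ with the edge $(s_G(e),t_G(e))$ of $R(G)^{\op}$ via the coordinate swap $\sigma$ is accurate and slightly more careful than the paper, which silently makes this canonical identification -- harmlessly, since every downstream use (genus, emulation) is invariant under isomorphism.
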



We record the following fact which we shall refine later.

\begin{lemma} \label{lem:functors_R_and_op}
The functor $R:\DG \to \DSG$ is essentially surjective and full.
The endofunctor ${(-)}^{\rm{op}}$ is both full and faithful.
\end{lemma}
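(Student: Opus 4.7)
The plan is to verify the three properties by explicit inspection of morphisms, using the preceding lemmas to cut down the work.

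For essential surjectivity of $R:\DG \to \DSG$, I would just observe that any object $H$ of $\DSG$ is already simple, so by the second assertion of Lemma~\ref{lem:R_is_epi}, $\rho_H:H \to R(H)$ is an isomorphism. Hence every object of $\DSG$ is isomorphic to an object of the form $R(G)$ (take $G = H$), which gives essential surjectivity on the nose.

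For fullness of $R$, the key ingredient is Lemma~\ref{lem:Delta_is_identity}, which says $\Delta_{R(G)} = 1_{E_{R(G)}}$. Given a morphism $(p,q):R(G) \to R(G')$ in $\DSG$, the adjacency relation $p^{\times 2} \circ \Delta_{R(G)} = \Delta_{R(G')} \circ q$ forces $q = p^{\times 2}|_{\Delta_G(E_G)}$, so the morphism is determined by $p:V_G \to V_{G'}$ subject to $p^{\times 2}(\Delta_G(E_G)) \subseteq \Delta_{G'}(E_{G'})$. To lift, I construct a morphism $(p,q'):G \to G'$ by selecting, for each edge $e \in E_G$, some $q'(e) \in E_{G'}$ with $\Delta_{G'}(q'(e)) = p^{\times 2}(\Delta_G(e))$; such an edge exists by the above inclusion. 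Applying $R$ returns $(p, p^{\times 2}|_{\Delta_G(E_G)}) = (p,q)$, which proves fullness. The only subtlety is that the lift $q'$ is not canonical when $G'$ is not simple (parallel edges force a choice), so $R$ is typically not faithful, matching the situation described later in the paper.

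For the endofunctor $(-)^{\op}$, I would note that a pair $(f,g)$ of set maps between the underlying vertex and edge sets satisfies the morphism axioms $f \circ s_G = s_{G'} \circ g$ and $f \circ t_G = t_{G'} \circ g$ for $G \to G'$ if and only if it satisfies the morphism axioms for $G^{\op} \to G'^{\op}$, since reversing source and target in both graphs simply swaps the two equations. Thus the assignment $(f,g) \mapsto (f,g)$ is a bijection $\mathrm{Hom}_{\DG}(G,G') \to \mathrm{Hom}_{\DG}(G^{\op}, G'^{\op})$, which is precisely what full and faithful requires. No obstacle arises here; the main point of the lemma is really the fullness of $R$, which is where the non-canonical choice appears and where the failure of faithfulness is implicit.
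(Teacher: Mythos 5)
Your proof is correct. The treatment of essential surjectivity (via $\rho_H$ being an isomorphism on simple graphs, Lemma~\ref{lem:R_is_epi}) and of the endofunctor ${(-)}^{\op}$ (the morphism equations for $G \to G'$ and for $G^{\op} \to G'^{\op}$ are literally the same pair of equations with the roles of $s$ and $t$ swapped, so the hom-sets are identified) matches the paper, which for the second point simply cites the involutivity $(G^{\op})^{\op} = G$. Where you diverge is on fullness of $R$: the paper disposes of it in one line as ``a direct consequence of Lemma~\ref{lem:R_is_invol}'' ($R \circ I \circ R = R$), whereas you unfold the argument explicitly, using $\Delta_{R(G)} = 1_{E_{R(G)}}$ to show that a morphism $(p,q): R(G) \to R(G')$ is determined by $p$ with $q = p^{\times 2}|_{\Delta_G(E_G)}$, and then building a lift $(p,q'): G \to G'$ by choosing, for each $e \in E_G$, an edge of $G'$ over $p^{\times 2}(\Delta_G(e))$. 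Your version is the more informative one: the paper's citation of $R \circ I \circ R = R$ only directly produces a morphism $I(R(G)) \to I(R(G'))$ in the image of $R$, and silently relies on exactly the edge-selection step you spell out to get a preimage with source $G$ and target $G'$. Your closing observation that this selection is non-canonical when $G'$ has parallel edges, so that $R$ is full but not faithful, is accurate and consistent with the paper (which never claims faithfulness of $R$).
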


\begin{proof}
$R$ is essentially surjective since $\rho_{G}: G \to R(G)$ is an isomorphism for simple graphs. Fullness is a direct consequence of Lemma~\ref{lem:R_is_invol}.  The second statement follows from
the identity (\ref{eq:op-involutive}).
\end{proof}

Excising loops in a directed graph consists in removing all loops from the set of edges
while keeping other edges and the set of vertices. 

\begin{definition}
Let $G$ be a directed graph. Let $L = \{ e \in E_G \ | \ s_G(e) = t_G(e) \}$ be the subset of loops. The
{\emph{excision}} of $G$ is the graph $G$ with all loops removed:
$${\rm{Exc}}(G) = G_{\mid E_G - L}.$$
\end{definition}

\begin{remark} \label{rem:exc_is_not_functorial}
The excision is not functorial since it cannot be extended to {\emph{graph morphisms}}. For instance, there is a graph morphism $\begin{tikzpicture}[->,transform shape, scale=0.6, baseline=-1mm]
\node[state, inner sep=4pt,minimum size=12pt](G0) at (0,0) {};
\node[state, inner sep=4pt,minimum size=12pt](G1) at (1.2,0) {};
\path [] (G0) edge node [] {} (G1);
\node[state, inner sep=4pt,minimum size=12pt](G0) at (3.0,0) {};
\path [] (G0) edge[loop right, looseness=15] node [] {} (G0);
\path[->,dotted] (1.7,0) edge (2.5,0);
\end{tikzpicture}$ but not if we remove the loop. We shall see later that in the appropriate category, Exc becomes a functor.
\end{remark}

\subsection{Undirected graphs}

An \emph{undirected graph} $G = (V,E,\partial)$ consists of a set $V$ of {\emph{vertices}}, a set $E$ of {\emph{edges}} and a map $\partial: E \to {\mathcal{P}}_{2}(V)$ from the edges to the set of unordered pairs of vertices\footnote{ We define ${\mathcal{P}}_2(V) = \{ \{v, w\} \mid v, w \in V\}$. So, formally, $X \in {\mathcal{P}}_2(V)$ is either a singleton or a pair.}. 
A {\emph{morphism}} $G \to H$ \emph{between undirected graphs} is a pair of maps $p: V_{G} \to V_{H}$ (between vertices) and $q:E_{G} \to E_{H}$ (between edges) such that:
\begin{equation}
\partial_{H} \circ q = p^{\otimes 2} \circ \partial_{G}
\label{eq:incidence_for_undirected}
\end{equation}
with $p^{\otimes 2}( \{x,y\} ) = \{p(x), p(y)\}$.

Undirected graphs and morphisms between them form a category denoted by $\G$. The canonical projection $\proj_V: V \times V \to {\mathcal{P}}_{2}(V)$ mapping $(v,w) \mapsto \{v, w\}$ induces a forgetful functor $U : \DG \to \G$. It maps objects $G \mapsto (V_G, E_G, \proj_{V_G} \circ \Delta_G)$ and it acts as the identity on morphisms. 

\begin{lemma} \label{lem:forget_op}
$U(G^{\rm{op}}) = U(G)$.
\end{lemma}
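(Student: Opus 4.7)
The plan is to unfold each of the three definitions ($U$, $(-)^{\op}$, and $\Delta$) and observe that the only difference between $\Delta_G$ and $\Delta_{G^{\op}}$ is the order of the source and target in each ordered pair, which is precisely the information that the projection $\proj_V : V \times V \to \mathcal{P}_2(V)$ discards.

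More explicitly, first I would record that, by the definition of the opposite, $G^{\op} = (V_G, E_G, t_G, s_G)$ has the same underlying vertex and edge sets as $G$; in particular $V_{G^{\op}} = V_G$ and $E_{G^{\op}} = E_G$. Next, by the definition of the ordered boundary map, $\Delta_{G^{\op}}(e) = (s_{G^{\op}}(e), t_{G^{\op}}(e)) = (t_G(e), s_G(e))$ for every $e \in E_G$, whereas $\Delta_G(e) = (s_G(e), t_G(e))$. Then, using the fact that for any $v, w \in V_G$ the unordered pair $\{v,w\}$ equals $\{w,v\}$, I get
\[
\proj_{V_G} \circ \Delta_{G^{\op}}(e) \;=\; \{t_G(e), s_G(e)\} \;=\; \{s_G(e), t_G(e)\} \;=\; \proj_{V_G} \circ \Delta_G(e)
\]
for every $e \in E_G$, i.e.\ the two boundary maps of the underlying undirected graphs coincide.

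Finally, applying the definition of $U$ on objects,
\[
U(G^{\op}) = (V_{G^{\op}}, E_{G^{\op}}, \proj_{V_{G^{\op}}} \circ \Delta_{G^{\op}}) = (V_G, E_G, \proj_{V_G} \circ \Delta_G) = U(G),
\]
which is the desired equality. There is no real obstacle here: the statement is immediate from the symmetry of the unordered pair construction, and is included essentially to record that $(-)^{\op}$ lies in the kernel of the forgetful functor $U$, a fact that will presumably be used later to compare directed and undirected emulators.
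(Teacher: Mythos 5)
Your proof is correct and is exactly the unfolding-of-definitions argument the paper has in mind (the paper states this lemma without proof, as one of the facts that are immediate consequences of the definitions). The key point — that $\proj_{V}$ discards the ordering that distinguishes $\Delta_{G^{\op}}$ from $\Delta_G$ — is identified and verified cleanly.
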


\subsection{Semi-automata (labelled graphs)}

We consider here an enrichment of graphs where one associates a label to each edge. We call these semi-automata. They are sometimes called labelled graphs or transition systems (see for instance~\cite{JoyalNW96}). For us, the structure is in-between graphs and automata, thus their denomination. For basic definitions about automata, we refer to \cite{Eilenberg} and \cite{Sakarovitch}. 

\begin{definition}
A  {\emph{semi-automaton}} $\A$ is a 3-tuple $\A = (G, \Sigma, \ell)$ made of
 \begin{itemize}
 \item a directed graph $G$ whose vertices and edges are respectively called \emph{states} and \emph{transitions},
 \item a fixed set $\Sigma$, the \emph{alphabet}, together with a surjective map $\ell: E \to \Sigma$ (labelling of the edges in $A$).
\end{itemize}
\end{definition}

 Again, given a semi-automaton $\A$, we write $G_\A$ its underlying graph, an operation that will turn out to be functorial. We use freely $V_\A$ for states, $E_\A$ for transitions, etc.

\begin{remark}
Some authors do not assume the labelling $\ell$ to be onto. In this case, we say the alphabet is {\emph{underused}}. In this paper, we assume that all semi-automata have no underused alphabet. The hypothesis is required for Lemma~\ref{lem:forgetful_faithful_functor}. 
\end{remark}

\begin{example}In the drawing of a semi-automaton $\A$,
\begin{tikzpicture}[->,transform shape, scale=0.7, initial text={}, baseline=-1mm]
\node[state, inner sep=3pt, minimum size=12pt ](G0) at (0,0) {$q$};
\node[state,  inner sep=2pt, minimum size=12pt](G1) at (1.5,0) {$q'$};
\path [] (G0) edge [bend left=15] node [above=0.5mm] {$e:a$} (G1) ;
\path [] (G1) edge [bend left=15] node [below=0.5mm] {$e':a$} (G0);
\path [] (G0) edge [loop left] node [left=0.5mm] {$e'':b$} (G0);
\end{tikzpicture}, 
$e$ stands for the edge while $a$ denotes its corresponding label. 
\end{example}  

Let $\A$ and $\B$ be two semi-automata. A {\emph{morphism}} $(f, g, \alpha) : \A \to \B$ is a directed graph morphism $(f,g): G_\A \to G_\B$ together with 
a map $\alpha:\Sigma_\A \to \Sigma_\B$ between alphabets such that
\begin{equation}
\alpha \circ \ell_\A = \ell_\B \circ g. \label{eq:rel_labels_morphisms}
\end{equation}
If $\alpha$ is the identity, we say that the morphism is {\emph{strict}}. If $f$ and $g$ are identities, the morphism  is called a {\emph{relabelling}}.

The identity $(1_{V_\A}, 1_{E_\A}, 1_{\Sigma_\A}) : \A\to \A$ is a (strict) morphism. And it is easily seen that the composition of two morphisms (resp. strict morphisms)
 is a morphism (resp. a strict morphism).  We denote by  \textbf{Semi} the category of semi-automata and their morphisms and by $\textbf{Semi}^0$ the category of semi-automata with their strict morphisms. The category ${\mathbf{Semi}}^{0}$ is a faithful subcategory of ${\mathbf{Semi}}$.
 
 We defined semi-automata in a slightly unorthodox way because of the tropism towards graphs in this paper. Indeed, the following example
 	\begin{center}
 	\begin{tikzpicture}[->,transform shape, scale=0.7, initial text={}, baseline=-1mm]
 	\node (I0) at (0,0.75) {};
 	\node (F0) at (1.5,0.75) {};
 	\node (F1) at (0,-0.75) {};
 	\node[state, inner sep=3pt, minimum size=12pt ](G0) at (0,0) {$q$};
 	\node[state,  inner sep=2pt, minimum size=12pt](G1) at (1.5,0) {$q'$};
 	\path [] (G0) edge [bend left] node [above=0.5mm] {$e:a$} (G1) ;
 	\path [] (G0) edge [bend right] node [below=0.5mm] {$e':a$} (G1);
 	\end{tikzpicture}
 \end{center}
 is a semi-automaton for which there are two transitions labelled $a$ between $q$ and $q'$. This is not compatible with the standard definition of the transitions as a function $\delta : V \times \Sigma \to 2^{V}$ with $V$ the set of states and $\Sigma$ the alphabet. But actually, when we will restrict to deterministic automata, such a case will simply vanish.

 Let $\A$ be a semi-automaton, $W$ a subset of $V_\A$ and $E \subseteq E_\A$. Let $(\A_{\mid W})_{\mid E} = (G_{\mid W})_{\mid E}, \ell(\{e \in \tilde{E}\}), \ell_{e \in \tilde{E}})$ with $\tilde{E} = E_{(G_{\mid W})_{\mid E}}$. A semi-automaton produced in this fashion will be called a {\emph{sub-semi-automaton}} of $\A$. 

Given a semi-automaton morphism $(f, g, \alpha) : \A \to \B$, the semi-automaton $(f, g, \alpha)(\A) = ((f,g)(G_\A), \alpha(\Sigma_\A), {\ell_\B}_{|\alpha(\Sigma_\A)})$ is a sub-automaton of $\B$.

\begin{lemma}
Given a function $\alpha : \Sigma \to \Delta$ and a semi-automaton $\A = (G, \Sigma, \ell)$, the triple $\relab{\A}{\alpha} = (G, \alpha(\Sigma), \alpha \circ \ell)$ is a semi-automaton and furthermore, $(1_{V_\A}, 1_{E_\A}, \alpha) : \A \to \relab{\A}{\alpha}$ is a relabelling morphism. 
\end{lemma}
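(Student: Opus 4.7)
The plan is to verify the two claims in sequence: that $\relab{\A}{\alpha}$ satisfies the defining axioms of a semi-automaton, and then that the triple $(1_{V_\A}, 1_{E_\A}, \alpha)$ satisfies the axioms of a morphism (equation~\eqref{eq:rel_labels_morphisms}) and moreover fits the additional requirement of being a relabelling.

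First I would check that $\relab{\A}{\alpha}=(G,\alpha(\Sigma),\alpha\circ\ell)$ is a semi-automaton. The first component $G$ is a directed graph by hypothesis, and $\alpha(\Sigma)$ is a (finite) set. The only real content is that $\alpha\circ\ell:E\to\alpha(\Sigma)$ is surjective. Since the paper has adopted the blanket hypothesis that no semi-automaton has an underused alphabet, $\ell:E\to\Sigma$ is onto; composing with the corestriction $\alpha:\Sigma\to\alpha(\Sigma)$, which is onto by construction, produces a surjection onto $\alpha(\Sigma)$, as required.

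Next I would verify that $(1_{V_\A},1_{E_\A},\alpha):\A\to\relab{\A}{\alpha}$ is a morphism of semi-automata. The graph-theoretic part $(1_{V_\A},1_{E_\A}):G\to G$ is trivially a directed graph morphism (both identity maps satisfy the adjacency relation tautologically). The map $\alpha:\Sigma\to\alpha(\Sigma)$ is the required map between alphabets. The compatibility relation~\eqref{eq:rel_labels_morphisms} reads
\[
\alpha\circ\ell_\A \;=\; \ell_{\relab{\A}{\alpha}} \circ 1_{E_\A},
\]
and both sides equal $\alpha\circ\ell$ by the very definition of $\ell_{\relab{\A}{\alpha}}$. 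Finally, because the graph component of the morphism is the identity on both vertices and edges, the triple is a relabelling in the sense previously defined.

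No real obstacle is expected: the only subtle point is to notice that the surjectivity of $\alpha\circ\ell$ requires the standing no-underused-alphabet assumption, and that $\alpha$ in the morphism must be read as its corestriction to $\alpha(\Sigma)$. Everything else is an unpacking of definitions.
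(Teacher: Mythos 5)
Your proof is correct and follows essentially the same route as the paper: surjectivity of $\alpha\circ\ell$ onto $\alpha(\Sigma)$ from the standing no-underused-alphabet hypothesis, followed by a direct verification of the compatibility relation and the observation that the identity graph component makes the morphism a relabelling. The only difference is that you spell out the corestriction point, which the paper leaves implicit.
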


\begin{proof} Let $E = E_{\relab{\A}{\alpha}} = E_{\A}$.  We have $\alpha \circ \ell(E) = \alpha(A)$ since $\ell$ is surjective. So, $\relab{\A}{\alpha}$ is a semi-automaton. Second, we have $(\alpha \circ \ell) \circ 1_{E_{\relab{\A}{\alpha}}} = \alpha \circ \ell$, it is a morphism. 
\end{proof}

\begin{prop}\label{conformal_decomposition}
For any morphism $\phi$, there is a strict morphism $\pi$ and a relabelling morphism $\lambda$ such that $\phi = \lambda \circ \pi$ and a strict morphism $\psi$ and a relabelling $\mu$ such that  $\phi = \psi \circ \mu$. 
\end{prop}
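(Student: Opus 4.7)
The strategy is to interpolate an intermediate semi-automaton $\C$ between $\A$ and $\B$ for each factorisation, splitting $\phi = (f,g,\alpha)$ into two pieces: one that leaves the alphabet untouched (strict) and the other that leaves the underlying graph untouched (relabelling). I handle the two factorisations separately, beginning with the easier one.

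For $\phi = \psi \circ \mu$ (relabelling first, then strict), the natural intermediate is $\relab{\A}{\alpha}$, which has graph $G_\A$, alphabet $\alpha(\Sigma_\A)$ and labelling $\alpha \circ \ell_\A$. The preceding lemma supplies the relabelling morphism $\mu: \A \to \relab{\A}{\alpha}$, whose alphabet component is the corestriction $\alpha': \Sigma_\A \to \alpha(\Sigma_\A)$. It remains to define $\psi = (f, g, \iota): \relab{\A}{\alpha} \to \B$, where $\iota: \alpha(\Sigma_\A) \hookrightarrow \Sigma_\B$ is the inclusion (identified with an identity on its image). The equation $\alpha \circ \ell_\A = \ell_\B \circ g$ defining $\phi$ as a morphism is exactly the labelling compatibility needed for $\psi$, and since $\iota \circ \alpha' = \alpha$, one finds $\psi \circ \mu = (f, g, \alpha) = \phi$.

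For $\phi = \lambda \circ \pi$ (strict first, then relabelling), the intermediate $\C$ should carry the graph $G_\B$ and the alphabet $\Sigma_\A$ (possibly restricted to ensure surjectivity). The heart of the construction is the choice of a surjective labelling $\ell_\C: E_\B \to \Sigma_\A$ satisfying $\ell_\C \circ g = \ell_\A$ and $\alpha \circ \ell_\C = \ell_\B$. On the image $g(E_\A) \subseteq E_\B$, I set $\ell_\C(g(e')) = \ell_\A(e')$; on the complement $E_\B \setminus g(E_\A)$, I pick any preimage $\ell_\C(e) \in \alpha^{-1}(\ell_\B(e))$. Then $\pi = (f, g, 1_{\Sigma_\A}): \A \to \C$ is strict, $\lambda = (1_{V_\B}, 1_{E_\B}, \alpha): \C \to \B$ is a relabelling, and $\lambda \circ \pi = \phi$ follows by direct computation.

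The main obstacle lies in this second construction. Well-definedness of $\ell_\C$ on $g(E_\A)$ demands that $\ell_\A$ be constant on the fibres of $g$, while extension to the remaining edges requires $\ell_\B$ to land in $\alpha(\Sigma_\A)$. Both difficulties can be absorbed by restricting the alphabet of $\C$ to the common image $\alpha(\Sigma_\A) = \ell_\B(g(E_\A))$, so that the resulting intermediate is a bona fide semi-automaton with no underused letter. Once this adjustment is made, checking that $\pi$ and $\lambda$ are morphisms reduces to a direct unwinding of the compatibility relation carried by $\phi$.
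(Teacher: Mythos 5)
Your first factorisation ($\phi = \psi \circ \mu$ through the intermediate $\relab{\A}{\alpha}$) is exactly the one in the paper and is correct, modulo the shared convention that the inclusion $\alpha(\Sigma_\A) \hookrightarrow \Sigma_\B$ counts as an identity for the purpose of strictness. Your second factorisation is also the paper's construction (the paper writes $\ell/g$ for your $\ell_\C$), but here your diagnosis is sharper than your cure. You correctly isolate the two obstructions --- well-definedness of $\ell_\C$ on $g(E_\A)$, and the need for $\ell_\B(e) \in \alpha(\Sigma_\A)$ on edges outside the image --- but the claim that both ``can be absorbed by restricting the alphabet of $\C$ to $\alpha(\Sigma_\A)$'' does not hold. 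First, if $\Sigma_\C = \alpha(\Sigma_\A)$ then the alphabet component of $\pi : \A \to \C$ is forced to be $\alpha$ itself, so $\pi$ is no longer strict; the whole point of this factorisation is that $\C$ must carry the alphabet $\Sigma_\A$. Second, and independently of any choice of $\Sigma_\C$, the prescription $\ell_\C(g(e')) := \ell_\A(e')$ is ambiguous whenever $g$ identifies two edges with distinct labels. Concretely, let $\A$ be a single vertex with two loops labelled $a$ and $b$, let $\B$ be a single vertex with one loop labelled $c$, and let $\phi$ be the evident morphism with $\alpha(a)=\alpha(b)=c$: any relabelling $\lambda : \C \to \B$ forces $E_\C = E_\B$ to be a single edge, so no surjective $\ell_\C : E_\C \to \Sigma_\A = \{a,b\}$ exists and there is no strict $\pi$ at all. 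Similarly, if $\alpha$ is not surjective (say $\A$ has one $a$-loop, $\B$ has an $a$-loop and a $b$-loop, and $g$ hits only the former), the equation $\alpha \circ \ell_\C = \ell_\B$ has no solution valued in $\Sigma_\A$.

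To be fair, the paper's own proof of this half has the same defect: it sets $\ell/g(g(e)) = \ell(e)$ without verifying well-definedness and does not treat the edges of $\B$ outside $g(E_\A)$. So your proposal faithfully reproduces the published argument and, unlike the paper, at least names the obstructions; the step that actually fails is the sentence asserting they can be absorbed. As written, the $\lambda \circ \pi$ half of the statement only holds under additional hypotheses --- for instance $\alpha$ injective (which, via $\alpha \circ \ell_\A = \ell_\B \circ g$, makes $\ell_\A$ constant on the fibres of $g$ and hence $\ell_\C$ well defined) together with either $\alpha$ surjective or $(f,g)$ an epimorphism to dispose of the edges outside the image. A complete proof should either add such hypotheses or restrict the claim to the $\psi \circ \mu$ factorisation, which is the only one used unconditionally later in the paper.
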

\begin{proof} Given $(f, g, \alpha) : \A \to \B$, one of the decomposition is: $(f,g,\alpha) = (f,g, 1_\Sigma) \circ (1_V, 1_E, \alpha)$ where $V = V_{\A}$, $E = E_\A$ and $\Sigma = \Sigma_{\B}$. 

The other decomposition is: $(f,g,\alpha) = (1_{V_\B},1_{E_\B}, \alpha) \circ (f, g, \ell/ g)$ where $\ell/ g (g(e)) = \ell(e)$ for all $e \in E_\A$. The triple $(f, g, \ell/ g)$ is a proper morphism. Indeed, for all $g(e) \in E_{(f,g,\ell/ g)(\A)}$, we have $\alpha (\ell / g(g(e))) = \alpha(\ell_\A(e) ) = \ell_\B(g(e)) = \ell_\B(1_{E_\B}(g(e)))$. Thus, Equation~\ref{eq:rel_labels_morphisms} holds. 
\end{proof}

By definition, a semi-automaton $\A$ is a directed graph with extra structure. Forgetting this extra structure yields the {\emph{underlying}} directed graph $G_{\A}$ of the automaton.

\begin{lemma} \label{lem:forgetful_faithful_functor}
The assignment $\A \mapsto {\rm G}_{\A}$ yields a forgetful
faithful functor ${\rm G}_{(\mathunderscore)}:{\mathbf{Semi}} \to \DG$.
\end{lemma}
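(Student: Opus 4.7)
The plan is to unpack what ``functor'' and ``faithful'' mean, verify each condition, and then pinpoint where the surjectivity of the labelling $\ell$ (i.e.\ the no-underused-alphabet hypothesis) becomes essential.

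First I would specify the data of the assignment on morphisms: a semi-automaton morphism $(f,g,\alpha): \A \to \B$ maps to the pair $(f,g): G_\A \to G_\B$. I would check that $(f,g)$ is actually a morphism in $\DG$ by recalling that, by the very definition of a semi-automaton morphism, the pair $(f,g)$ is already a morphism of directed graphs (it satisfies the incidence relation~\eqref{eq:incidence_relation}). This makes the target of the assignment well-defined.

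Next I would verify functoriality, which is essentially routine. The identity morphism $(1_{V_\A}, 1_{E_\A}, 1_{\Sigma_\A})$ is sent to $(1_{V_\A}, 1_{E_\A})$, the identity in $\DG$. For composition, given $(f,g,\alpha):\A \to \B$ and $(f',g',\alpha'):\B \to \C$, the composite $(f' \circ f, g' \circ g, \alpha' \circ \alpha)$ is sent to $(f' \circ f, g' \circ g)$, which is the composition of the respective images. Nothing subtle happens here since composition in \textbf{Semi} was defined componentwise.

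The main substantive step is faithfulness. Given two morphisms $(f,g,\alpha), (f',g',\alpha') : \A \to \B$ with the same image $(f,g) = (f',g')$ in $\DG$, I must show that $\alpha = \alpha'$. This is where the assumption that $\ell_\A$ is surjective is crucial: for any $a \in \Sigma_\A$, choose $e \in E_\A$ with $\ell_\A(e) = a$. Then the compatibility condition~\eqref{eq:rel_labels_morphisms} forces
\[
\alpha(a) \;=\; \alpha(\ell_\A(e)) \;=\; \ell_\B(g(e)) \;=\; \ell_\B(g'(e)) \;=\; \alpha'(\ell_\A(e)) \;=\; \alpha'(a),
\]
so $\alpha = \alpha'$. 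The only real obstacle, as hinted by the remark preceding the lemma, is to notice that if $\ell_\A$ were not surjective then $\alpha$ could be modified freely on letters outside $\ell_\A(E_\A)$ without disturbing the graph data, breaking faithfulness; highlighting this as the place where the no-underused-alphabet hypothesis is used would make the proof complete.
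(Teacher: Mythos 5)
Your proof is correct and follows essentially the same route as the paper: the paper's own argument is exactly the faithfulness step, deducing $\alpha_0|_{\ell_\A(E)} = \alpha_1|_{\ell_\A(E)}$ from the compatibility relation~\eqref{eq:rel_labels_morphisms} and then concluding $\alpha_0 = \alpha_1$ from the no-underused-alphabet hypothesis. Your additional explicit verification of functoriality is routine and is simply left implicit in the paper.
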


\begin{proof}
Let $\A $ and $\B$ be two semi-automata. Consider the induced map Hom$_{\mathbf{Semi}}(\A, \B) \to {\rm{Hom}}_{\DG}({\rm G}_{\A}, {\rm G}_{\B})$.
Suppose two morphisms $(f_i, g_i, \alpha_i):\A \rightrightarrows \B$, $i=0,1$, yield the same graph morphism $(f,g):{\rm G}_{\A} \to {\rm G}_{\B}$: they coincide at the level of the underlying graph hence the maps on the set $V$ of vertices in $\A$ and the set $E$
of edges in $\A$ coincide: $f_0 = f_1 = f$ and $g_0 = g_1 = g$.
Let $\alpha_i:\Sigma_\A \to \Sigma_\B$, $i=0,1$, be the respective maps between the alphabets. Both maps satisfy the relation $\alpha_i \circ \ell_\A = \ell_\B \circ g$, $i = 0,1$. Therefore $\alpha_0|_{\ell_\A(E)} = \alpha_1|_{\ell_\A(E)}$. Since $\A_0$ and $\A_1$
have no underused alphabet, $\alpha_0 = \alpha_1$.
\end{proof}

\begin{remark}
The functor ${\rm{G}}_{(\mathunderscore)}$ is not full. As an example, the induced map Hom$_{\mathbf{Semi}}$ $\left( \begin{tikzpicture}[->,transform shape, scale=0.7, initial text={}, baseline=-1mm]
\node[state, inner sep=3pt, minimum size=12pt ](G0) at (0,0) {$v$};
\node[state,  inner sep=2pt, minimum size=12pt](G1) at (1,0.4) {$u$};
\node[state, inner sep=2pt, minimum size=12pt](G2) at (1,-0.4) {$w$};
\path [] (G0) edge [bend left] node [above=0.5mm] {$e:a$} (G1);
\path [] (G0) edge [bend right] node [below=0.5mm]
{$e':a$} (G2);
\end{tikzpicture}{\mathbf{,}}\  \begin{tikzpicture}[->,transform shape, scale=0.7, initial text={}, baseline=-1mm]
\node[state, inner sep=3pt, minimum size=12pt ](G0) at (0,0) {$v$};
\node[state,  inner sep=2pt, minimum size=12pt](G1) at (1,0.4) {$u$};
\node[state, inner sep=2pt, minimum size=12pt](G2)at (1,-0.4) {$w$};
\path [] (G0) edge [bend left] node [above=0.5mm] {$e:a$} (G1);
\path [] (G0) edge [bend right] node [below=0.5mm]
{$e':b$} (G2);
\end{tikzpicture}
\right) \to {\rm{Hom}}_{\DG}\left( \begin{tikzpicture}[->,transform shape, scale=0.7, initial text={}, baseline=-1mm]
\node[state, inner sep=3pt, minimum size=12pt ](G0) at (0,0) {$v$};
\node[state,  inner sep=2pt, minimum size=12pt](G1) at (1,0.4) {$u$};
\node[state, inner sep=2pt, minimum size=12pt](G2) at (1,-0.4) {$w$};
\path [] (G0) edge [bend left] node [above=0.5mm] {$e$} (G1);
\path [] (G0) edge [bend right] node [below=0.5mm] {$e'$} (G2);
\end{tikzpicture}{\mathbf{,}}\  \begin{tikzpicture}[->,transform shape, scale=0.7, initial text={}, baseline=-1mm]
\node[state, inner sep=3pt, minimum size=12pt ](G0) at (0,0) {$v$};
\node[state,  inner sep=2pt, minimum size=12pt](G1) at (1,0.4) {$u$};
\node[state, inner sep=2pt, minimum size=12pt](G2) at (1,-0.4) {$w$};
\path [] (G0) edge [bend left] node [above=0.5mm] {$e$} (G1);
\path [] (G0) edge [bend right] node [below=0.5mm]
{$e'$} (G2);
\end{tikzpicture} \right)$ is not onto. For instance, the identity on the directed graph is not the image of a semi-automaton morphism.
\end{remark}

\begin{remark} \label{rem:surjectivity_is_preserved}
The functor ${\rm G}_{(\mathunderscore)}$ preserves surjectivity (resp. injectivity).
\end{remark}

\subsection{The genus of a graph}

We recall a few definitions. The {\emph{genus}} $g(\Sigma)$ {\emph{of a closed oriented surface}} $\Sigma$ is half the dimension of the first real homology vector space $H_{1}(\Sigma;{\mathbb{R}})$. Alternatively, it is the maximum number of mutually disjoint simple closed topologically nontrivial curves $C_1, \ldots, C_g$ such that the complement $\Sigma - (C_1 \cup \cdots \cup C_g)$ remains connected. This yields a natural
notion of genus of a graph.


\begin{definition}[Genus of a graph]
A graph has {\emph{genus}} $n$ if its geometrical realization is embeddable in a surface of genus $n$ but cannot be embedded in a surface of strictly smaller genus. We note $g(G)$ the genus of a graph $G$. 
\end{definition}

The definition makes sense for directed and undirected graphs alike.

\begin{lemma} \label{lem:genus_forget}
For any directed graph, $g(G) = g(U(G))$. 
\end{lemma}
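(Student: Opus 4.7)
The plan is to observe that the notion of genus is defined in terms of the \emph{geometric realization} of the graph, and that the geometric realizations of $G$ and of $U(G)$ are canonically homeomorphic as topological spaces. The forgetful functor $U:\DG \to \G$ discards only the ordering on the boundary of each edge: an edge $\arete{x}{e}{y}$ of $G$ with $\Delta_G(e) = (x,y)$ is sent to an undirected edge with $\partial_{U(G)}(e) = \{x,y\}$. At the level of 1-dimensional CW complexes, this distinction is invisible: a directed edge and an undirected edge are both realized as a single 1-cell (a closed arc, or a circle in the case of a loop) attached to the same pair of vertices.

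Thus the first step is to make this precise: define the geometric realization of a directed graph $G$ as the quotient of $V_G \sqcup (E_G \times [0,1])$ identifying $(e,0) \sim s_G(e)$ and $(e,1) \sim t_G(e)$, and the geometric realization of an undirected graph $H$ as the corresponding quotient using $\partial_H$; then exhibit the tautological identification of the realizations of $G$ and $U(G)$, which amounts to noting that the identification data $\{s_G(e), t_G(e)\} = \{x,y\} = \partial_{U(G)}(e)$ agrees up to the endpoint-swapping homeomorphism $[0,1] \to [0,1]$, $t \mapsto 1-t$, for each edge.

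Next I would invoke the definition of genus: $g(G)$ (resp. $g(U(G))$) is the minimum $n$ such that the realization embeds in a closed oriented surface of genus $n$. Since the two realizations are homeomorphic, any embedding of one gives an embedding of the other into the same surface, so the sets of achievable $n$ coincide, and hence so do their minima: $g(G) = g(U(G))$.

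There is no real obstacle here; the only point that deserves care is the treatment of loops and of parallel edges, where one must verify that the endpoint-swap described above produces a genuine homeomorphism of realizations rather than merely a bijection of underlying sets. Once this is checked, the statement is immediate from the topological invariance of genus.
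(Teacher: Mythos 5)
Your argument is correct and is exactly the paper's proof, which simply notes that the geometric realization depends only on the underlying undirected graph; you have just spelled out the canonical homeomorphism of realizations in detail. No further comment needed.
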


\begin{proof}
The geometric realization only depends on the underlying undirected graph.
\end{proof}

 All the proofs dealing with the genus of some graph will rely on one of the following observations. 

\begin{lemma}\label{lem:genus_invariance_iso}
	If $G$ is isomorphic to $H$, then $g(G) = g(H)$. 
\end{lemma}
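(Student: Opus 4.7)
The plan is to reduce the statement to the observation that the genus of a graph, as defined in the excerpt, depends only on the homeomorphism type of the geometric realization of the graph. Since Lemma~\ref{lem:genus_forget} already tells us that $g(G) = g(U(G))$ for a directed graph, I would first remark that it suffices to prove the statement for undirected graphs: any directed isomorphism $\phi:G \to H$ forgets to an undirected isomorphism $U(\phi):U(G) \to U(H)$ (by functoriality of $U$), and the directed genera agree with the undirected ones on both sides.

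Next I would argue that an undirected graph isomorphism $(p,q):G \to H$ induces a homeomorphism $|G| \to |H|$ of geometric realizations. The construction is routine: one sends the $0$-cell associated to a vertex $v \in V_G$ to the $0$-cell associated to $p(v) \in V_H$, and the $1$-cell associated to an edge $e \in E_G$ (with endpoints in $\partial_G(e)$) is mapped by a canonical affine identification onto the $1$-cell associated to $q(e)$, which is well-defined because $\partial_H(q(e)) = p^{\otimes 2}(\partial_G(e))$ by the incidence relation~(\ref{eq:incidence_for_undirected}). This map is a bijective cellular map whose inverse is built from $(p^{-1}, q^{-1})$, hence a homeomorphism.

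Finally, because embeddability of a topological space into an oriented closed surface $\Sigma_n$ of genus $n$ is invariant under homeomorphism of the source, the set of integers $n$ for which $|G|$ embeds into $\Sigma_n$ coincides with the set of integers $n$ for which $|H|$ embeds into $\Sigma_n$. Taking the minimum of each set, which is the definition of the genus, yields $g(G) = g(H)$.

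There is no real obstacle here; the only thing to be slightly careful about is that the definition of the genus given just above uses \emph{geometric realization}, and one needs to invoke (or tacitly accept) the standard fact that a graph isomorphism yields a homeomorphism of realizations. Once this is granted, the lemma is immediate.
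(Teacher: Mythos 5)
Your proof is correct and matches the paper's intent: the paper asserts this lemma without proof as an immediate consequence of the definition of genus via geometric realization, and your argument (forgetting to undirected graphs, noting that an isomorphism induces a homeomorphism of realizations, and using that embeddability into a closed surface is a homeomorphism invariant) is exactly the routine justification being tacitly invoked. Nothing is missing.
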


\begin{lemma} \label{lem:genus-invariance_opp}
The functor ${(-)}^{\rm{op}}$ preserves the genus.
\end{lemma}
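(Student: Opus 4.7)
The plan is to reduce the claim to the two previously established lemmas that connect directed and undirected genus. By Lemma~\ref{lem:genus_forget}, the genus of a directed graph depends only on the underlying undirected graph, namely $g(G) = g(U(G))$ for any directed graph $G$. Applied to $G^{\rm op}$ this gives $g(G^{\rm op}) = g(U(G^{\rm op}))$.

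Next, I would invoke Lemma~\ref{lem:forget_op}, which asserts $U(G^{\rm op}) = U(G)$ (reversing the direction of edges does not change the underlying unordered boundary map, since $\proj_V \circ \Delta_{G^{\op}} = \proj_V \circ \Delta_G$ as $\{s(e), t(e)\} = \{t(e), s(e)\}$). Combining the two equalities yields
\[
g(G^{\rm op}) = g(U(G^{\rm op})) = g(U(G)) = g(G),
\]
which is exactly the claim.

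There is essentially no obstacle here: both ingredients have already been proved, and the argument is a one-line chain of equalities. The only thing worth being careful about is noting that Lemma~\ref{lem:genus_invariance_iso} is not needed, since we do not merely have an isomorphism between $U(G)$ and $U(G^{\rm op})$ but literal equality. One could optionally remark that, as a consequence, ${(-)}^{\op}$ and the forgetful functor $U$ together explain why directed genus is a genuinely undirected invariant of a digraph.
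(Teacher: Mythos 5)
Your argument is exactly the paper's proof: the same chain $g(G^{\op}) = g(U(G^{\op})) = g(U(G)) = g(G)$ using Lemma~\ref{lem:genus_forget} and Lemma~\ref{lem:forget_op}. Correct, and nothing further is needed.
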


\begin{proof}
For a directed graph $G$, $g(G^{\rm{op}}) = g(U(G^{\rm{op}}))$ (lemma~\ref{lem:genus_forget}) 
$ = g(U(G))$ (lemma~\ref{lem:forget_op}) $ = g(G)$ (lemma~\ref{lem:genus_forget}).
\end{proof}

\begin{lemma} \label{lem:removing-edge-does-not-increase-genus}
Removing an edge from a graph does not increase the genus: for any graph $G$ and $e \in E_G$,
$g(G - e) \leq g(G).$
\end{lemma}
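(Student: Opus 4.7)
The plan is to observe that any embedding of $G$ into a surface restricts to an embedding of the subgraph $G - e$ into the same surface, so the genus cannot go up. By Lemma~\ref{lem:genus_forget}, both $g(G)$ and $g(G-e)$ are invariant under the forgetful functor $U$ to undirected graphs, and the combinatorial operation of removing $e$ commutes with $U$ up to relabelling of the excised edge, so it suffices to treat the undirected case.

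In that case, I would choose a closed oriented surface $\Sigma$ with $g(\Sigma) = g(G)$ together with an embedding $\iota : |G| \hookrightarrow \Sigma$ realising the genus; such a $\Sigma$ and such an $\iota$ exist by the very definition of $g(G)$. The geometric realization $|G - e|$ is obtained from $|G|$ by excising the open $1$-cell attached to $e$ while keeping every vertex and every other $1$-cell intact, so $|G - e|$ is a closed subspace of $|G|$. The restriction $\iota|_{|G-e|}$ is therefore a continuous injection from a compact space into the Hausdorff space $\Sigma$, hence a topological embedding. This exhibits $G - e$ as a graph embedded into a surface of genus $g(G)$, and by definition of the genus of a graph we conclude $g(G-e) \leq g(G)$.

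There is essentially no obstacle in this argument. The only minor point to justify carefully is that the combinatorial operation $G \mapsto G - e$ on the graph side matches the topological operation of removing the associated open $1$-cell on the realization side, which is immediate from the standard cell-complex construction of $|G|$. Once that correspondence is noted, the proof is a one-line restriction-of-embedding argument.
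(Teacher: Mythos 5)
Your argument is correct: the paper actually states this lemma without proof (it is one of the statements the authors assert as an immediate consequence of the definitions), and your restriction-of-embedding argument is exactly the intended one. The reduction to the undirected case via Lemma~\ref{lem:genus_forget} and the observation that $|G-e|$ is a closed subspace of $|G|$ are both fine, so nothing is missing.
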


\begin{lemma} \label{lem:R_preserves_genus}
The functor $R$ preserves the genus.
\end{lemma}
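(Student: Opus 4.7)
The plan is to reduce to the undirected setting via Lemma~\ref{lem:genus_forget}, which gives $g(G) = g(U(G))$ and $g(R(G)) = g(U(R(G)))$. So it suffices to prove the equality $g(U(G)) = g(U(R(G)))$ by establishing two inequalities.

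For the direction $g(R(G)) \leq g(G)$, I would argue that $R(G)$ is isomorphic to a subgraph of $G$. Indeed, for each $(x,y) \in \Delta_G(E_G)$ choose (axiom of choice for finite sets) a representative $\sigma(x,y) \in \Delta_G^{-1}(x,y) \subseteq E_G$. The pair $(1_{V_G}, \sigma)$ is a graph monomorphism $R(G) \to G$, and its image is a subgraph of $G$ obtained by deleting all non-representative parallel edges. Iterating Lemma~\ref{lem:removing-edge-does-not-increase-genus} on these deletions, together with Lemma~\ref{lem:genus_invariance_iso}, yields $g(R(G)) \leq g(G)$.

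For the direction $g(G) \leq g(R(G))$, I would start with a cellular (or at least proper) embedding $\iota \colon U(R(G)) \hookrightarrow \Sigma$ in a closed oriented surface $\Sigma$ of genus $g(R(G))$. The plan is to upgrade this to an embedding of $U(G)$ in the same surface $\Sigma$ by locally duplicating edges. Concretely, each edge $\{x,y\}$ of $U(R(G))$ is embedded as a simple arc (or, for $x = y$, a simple loop) in $\Sigma$; away from its endpoints it admits a product tubular neighborhood $N_{\{x,y\}} \subseteq \Sigma$ homeomorphic to a strip (resp.\ an annulus for a loop), which we may choose so that the various $N_{\{x,y\}}$ are pairwise disjoint and meet no vertex other than at the common endpoints. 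If the preimage of $\{x,y\}$ (or of the corresponding directed edges in $\Delta_G^{-1}$) has $k \geq 1$ elements, I draw $k - 1$ additional simple arcs inside $N_{\{x,y\}}$, each parallel to $\iota(\{x,y\})$ and joining the same two endpoints; in the loop case the extra arcs are concentric loops inside the annulus. These arcs are disjoint from everything outside $N_{\{x,y\}}$ and pairwise disjoint inside it, so the result is an embedding of $U(G)$ in $\Sigma$. Consequently $g(G) = g(U(G)) \leq g(\Sigma) = g(R(G))$.

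The only delicate point is the tubular neighborhood argument, in particular the loop case, where one must check that an embedded loop in an orientable surface has an annular (rather than Möbius) neighborhood so that concentric parallel loops fit; this is guaranteed by orientability of $\Sigma$. Everything else is a routine combination of the three auxiliary lemmas already established.
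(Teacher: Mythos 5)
Your proof is correct, and it is in fact more complete than the one in the paper. The paper's entire proof reads ``Induction from the previous lemma,'' i.e.\ it only invokes Lemma~\ref{lem:removing-edge-does-not-increase-genus} to delete the redundant parallel edges one by one, which is exactly your first inequality $g(R(G)) \leq g(G)$; the converse inequality $g(G) \leq g(R(G))$ is left implicit there, presumably regarded as routine in the same spirit as the loop argument in the proof of Lemma~\ref{lem:excision-preserves-genus}. You supply that converse explicitly via the tubular-neighborhood duplication of edges, which is the right argument and the one the authors are tacitly relying on. The only cosmetic imprecision is in the loop case: the ``concentric'' parallel copies must still be based at the common vertex, so one should reroute each push-off back to the vertex inside a small disc around it --- a standard adjustment that does not affect the argument. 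Net effect: your write-up proves the equality the lemma actually asserts, whereas the paper's one-line proof, read literally, only establishes one of the two inequalities.
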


\begin{proof}
By induction from the previous lemma, $g(R(G)) \leq g(G)$. On the other hand, (the realization of) a double edge from a vertex $v$ to a vertex $w$ embeds as the boundary of a small bigon (a topological disc with two distinguished points, namely $v$ and $w$). Hence\footnote{By induction on the cardinality of the set $\Delta^{-1}(v,w)$ of multiple edges from $v$ to $w$ for each edge $(v,w) \in \Delta(E_{G})$.}, if $R(G)$ embeds in a surface $\Sigma$ then $G$ embeds there as well, so $g(R(G)) \leq g(G)$.
\end{proof}

\begin{lemma} \label{lem:excision-preserves-genus}
The excision operation preserves the genus: $g({\rm{Exc}}(G)) = g(G)$.
\end{lemma}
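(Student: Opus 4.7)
The plan is to prove $g(\Exc(G)) = g(G)$ by establishing the two inequalities separately, treating loops one at a time.

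For the inequality $g(\Exc(G)) \leq g(G)$, I would simply iterate Lemma~\ref{lem:removing-edge-does-not-increase-genus}. Writing $L = \{e_1, \ldots, e_k\}$ for the set of loops of $G$, one obtains a chain of subgraphs
\[ G \supseteq G - e_1 \supseteq G - \{e_1, e_2\} \supseteq \cdots \supseteq G - L = \Exc(G), \]
each inclusion of which removes a single edge, so each step can only weakly decrease the genus.

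For the reverse inequality $g(G) \leq g(\Exc(G))$, I would argue that loops can always be reinserted without changing the genus of the ambient surface. Fix an embedding $\iota: |\Exc(G)| \hookrightarrow \Sigma$ of the geometric realization into a closed oriented surface $\Sigma$ of genus $g(\Exc(G))$. Let $v \in V_G$ be a vertex carrying some loops $\ell_1, \ldots, \ell_m$ in $G$. Since $\Sigma$ is a $2$-manifold, the point $\iota(v)$ admits an open disk neighborhood $D_v \subset \Sigma$ which intersects $\iota(|\Exc(G)|)$ only in finitely many arc germs emanating from $\iota(v)$. Within $D_v$, and between consecutive edge germs, one can draw $m$ disjoint small Jordan arcs, each starting and ending at $\iota(v)$ and otherwise contained in $D_v$; these realize embeddings of the loops $\ell_1, \ldots, \ell_m$ disjoint from the rest of $\iota(|\Exc(G)|)$. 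Performing this construction simultaneously at every vertex of $G$ (the disks $D_v$ being chosen pairwise disjoint) extends $\iota$ to an embedding $|G| \hookrightarrow \Sigma$, so $g(G) \leq g(\Sigma) = g(\Exc(G))$.

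Combining the two inequalities yields the equality. The only non-routine point is the second direction, but it reduces to the elementary topological observation that a loop at a vertex is realized, up to isotopy, by a tiny Jordan arc in a disk neighborhood of that vertex, and this construction does not require increasing the genus of the ambient surface.
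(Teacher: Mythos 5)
Your proposal is correct and follows essentially the same route as the paper: the inequality $g(\Exc(G)) \leq g(G)$ by iterated application of Lemma~\ref{lem:removing-edge-does-not-increase-genus}, and the reverse inequality by re-embedding each loop as a small arc (the paper says ``boundary of a small disc'') in a disk neighborhood of its vertex, which does not change the ambient surface. Your version merely spells out the topological details that the paper leaves implicit.
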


\begin{proof}
By Lemma~\ref{lem:removing-edge-does-not-increase-genus}, $g({\rm{Exc}}(G)) \leq g(G)$. On the other hand, (the geometric realization of) a loop at a given vertex embeds as the boundary of a small disc, hence if ${\rm{Exc}}(G)$ embeds in a surface $\Sigma$, then $G$ also embeds there, so $g(G) \leq g({\rm{Exc}}(G))$.
\end{proof}

\section{Directed emulators} \label{sec:emu}

In this section, we define directed emulators and study their properties. The basic material is introduced in \S \ref{subsec:basic_definitions}. Categorical and closure properties are presented in \S \ref{subsec:category_emulators}. Finally we briefly discuss the relation to undirected emulators in \S \ref{subsec:undirected_emulators}.

\subsection{Basic definitions} \label{subsec:basic_definitions}

The following definition is the main object of this section.

\begin{definition}\label{def:directed_emulator}
Let $\phi = (p,q): G \to H$ be a directed graph morphism. It is a \emph{directed emulator morphism} when:
	\begin{enumerate}[(i)]
		\item $p$ is surjective and
		\item  $\phi$ verifies the \emph{edge outgoing lifting property}.  That is, for any edge $e \in E_H$ and any vertex $x' \in V_G$ such that $p(x') = s_H(e)$, there is an edge $e' \in E_G$ such that $q(e') = e$ and $s_G(e') = x'$.
	\end{enumerate}

We say that the directed emulator morphism $\phi$ is a \emph{directed cover morphism} whenever in clause (ii), the edge $e'$ is unique.  
\end{definition}

If $\phi : G \to H$ is a directed emulator morphism (or sometimes shorter: a directed emulator), we say that $G$ is a directed emulator of $H$ and that $H$ is a directed amalgamation of $G$. Finally, we say that the edge $e'$ in the definition emulates the edge $e$.

\begin{example}\label{ex:iso_is_directed_emulator}
The identity $1_G : G \to G$ is a directed emulator morphism. More generally, an isomorphism $\psi : G \to H$ is a directed emulator morphism. It is also a directed cover morphism. 
\end{example}

\begin{remark}\label{re:cover_not_emulator}A directed emulator shall not be a directed cover as shown by the morphism:
	\begin{tikzpicture}[->,transform shape, scale=0.7, baseline=-2mm]]
	\node[etat](G0p) at (0.7,0) {$u$};
	\node[etat](H0) at (3.5,0) {$v$};
	\path (G0p) edge[loop left] node[right=1mm] {} (G0p);
	\path (G0p) edge[loop right] node[left=1mm] {} (G0p);
	\draw (1.9,0)[dotted] -- (2.8,0);
	\path (H0) edge[loop right] node[right=1mm] {} (H0);
	\end{tikzpicture}.
\end{remark}

\begin{example} \label{example:multi-to-simple-emulates}
For any directed graph $G$, the map $\rho_{G}:G \to R(G)$ is a directed emulator morphism.
\end{example}

\noindent The following observation is a direct consequence of the definition.

\begin{lemma} \label{lem:direct-emulator-is-epimorphism}
A directed emulator morphism is a directed graph epimorphism.
\end{lemma}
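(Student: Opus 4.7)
The plan is very short since the statement unpacks directly. A directed graph epimorphism is, by definition, a morphism $(p,q)$ such that both $p$ and $q$ are surjective. Clause (i) of Definition~\ref{def:directed_emulator} already gives us that $p$ is surjective, so the only thing left to verify is the surjectivity of $q$ on edges.

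To show $q$ is surjective, I would pick an arbitrary edge $e \in E_H$ and produce a preimage. First take its source $s_H(e) \in V_H$. Since $p : V_G \to V_H$ is surjective by clause (i), I can choose some $x' \in V_G$ with $p(x') = s_H(e)$. Then clause (ii), the edge outgoing lifting property, applied to this $e$ and this $x'$, furnishes an edge $e' \in E_G$ with $q(e') = e$ (and incidentally $s_G(e') = x'$, which we do not need here). Hence $e$ lies in the image of $q$, and since $e$ was arbitrary, $q$ is surjective.

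There is no real obstacle: the result is essentially a restatement of what the edge outgoing lifting property buys us once we already know $p$ is onto. The only subtlety worth flagging is that clause (ii) is stronger than mere surjectivity of $q$, because it lifts every edge through every prescribed source preimage; but for the present lemma we need only that some preimage exists, which is the weaker consequence used above.
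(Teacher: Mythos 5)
Your proof is correct and follows exactly the same route as the paper's: use surjectivity of $p$ to find a vertex in the preimage of $s_H(e)$, then invoke the edge outgoing lifting property to produce a preimage of $e$ under $q$. Nothing to add.
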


\begin{proof}
By definition, the vertex map is surjective. Given an edge $e$ in the base graph, since $p$ is surjective, there is a vertex $x \in p^{-1}(s_G(e) )$ and then, by the edge outgoing lifting property, there is an edge $e'$ such that $q(e') = e$.
\end{proof}

\begin{remark}\label{re:not_em_is_not_full}A directed graph epimorphism is not necessarily a directed emulator morphism. For instance, in the epimorphism below, 
 the central node $w_0$ has two outgoing edges while each element in its preimage ($u_0, v_0$) has only one.

\centering{
 \begin{tikzpicture}[->,transform shape, scale=0.6, baseline=-2mm]]
\node[etat](G0) at (0,0) {$v_0$};
\node[etat](G0p) at (1,0) {$u_0$};
\node[etat] (G1) at (2.5,0) {$v_1$};
\node[etat](G2) at (-1.5,0) {$v_2$};
\node[etat](H0) at (7,0) {$w_0$};
\node[etat] (H1) at (9,-0) {$w_1$};
\node[etat](H2) at (5,-0) {$w_2$};
\path(G0p) edge node [above] {$a$} (G1)
 (G0) edge node [above] {$b$} (G2)
 (H0) edge node[above] {$a$} (H1)
(H0) edge node[above] {$b$} (H2) ;
 \draw (G0)edge[dotted,bend right=30] (H0);
\draw (G1)edge [dotted, bend left=25] (H1) ;
 \draw (G2)edge [dotted,bend right=25](H2);
 \draw (G0p) edge [dotted, bend left=30]  (H0);
 \end{tikzpicture} }
\end{remark}

\begin{example} \label{example:simple-example} Directed amalgamation can ``create" loops:

\centering{
\begin{tikzpicture}[->,transform shape, scale=0.7, baseline=-2mm]]
\node[etat](G0) at (3,0) {$v$};
\node[etat](G1) at (0.7,0) {$w$};
\node[etat] (G2) at (-0.7,0) {$u$};
\path(G1) edge[bend left=15] node [left=1mm] {} (G2)
 (G2) edge[bend left=15] node[left=1mm] {} (G1)
(G0) edge [loop right] node {} (G0);
\draw (G1) edge [dotted,bend left] (G0);
\draw (G2) edge [dotted,bend right]  (G0);
 \end{tikzpicture} }
\end{example}

\begin{remark} \label{rem:several-possible-emulator-maps}
There are in general several directed emulator morphisms between a digraph and a directed amalgamation of it. For instance, the digraph depicted below is a directed emulator of itself in two ways:
\begin{center}
\begin{tikzpicture}[->,transform shape, scale=0.7, baseline=-2mm]]
\node[etat](G0) at (0,0) {$v$};
\node[etat](G1) at (2,0) {$w$};;
\path(G0) edge[bend left, color=blue] node [above=1mm] {$a$} (G1)
 (G0) edge[bend right,color=red] node[below=1mm] {$b$} (G1);
 \end{tikzpicture}
\end{center}
The first one is the identity and the other one swaps the two edges. In both cases, the map on the vertices is the identity.
\end{remark}

\begin{definition} \label{def:centripetal_star}
Let $G$ be a directed graph and $x \in V_G$ a vertex. The {\emph{centripetal star}} of $x$ is the set
$\outE_{G}(x)$ {\emph{of
outgoing edges from}} $x$, that is
$\outE_G(x) = \{ e \in E_G \mid \arete{x}{e}{y} \text{ for some } y \in V_G\}$.
\end{definition}

A directed graph morphism $(p,q):G \to H$ induces, for each vertex $x \in V_G$, a map
$$ q_{x}: \outE_G(x) \to \outE_H(p(x)), \ e \mapsto q(e).$$

\begin{definition} \label{def:directed_immersion_submersion}
If at each $x \in V_G$, $q_{x}$ is injective (resp. surjective), then we say that $(p,q):G \to H$ is a directed immersion (resp. a directed submersion).
\end{definition}

A directed emulator morphism $(p,q): G' \to G$ lifts any outgoing edge $e \in E_{G}$ to an outgoing edge $e'$ from any specified vertex in the preimage $p^{-1}(s_G(e)) \in V_{G'}$. This implies the following observation.

\begin{lemma} \label{lem:emu-surjects-onto-outedges}
A directed graph epimorphism is a directed emulator morphism
if and only if it is a directed submersion. A directed cover is both a submersion and  an immersion. 
\end{lemma}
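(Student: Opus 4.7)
The plan is to unpack the definitions directly: the edge outgoing lifting property is just a reformulation of the per-vertex surjectivity that defines a submersion, and the uniqueness clause in the cover definition is the per-vertex injectivity that defines an immersion.

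For the first equivalence, in the forward direction, assume $\phi = (p,q): G \to H$ is a directed emulator morphism (so in particular a directed graph epimorphism by Lemma~\ref{lem:direct-emulator-is-epimorphism}). Fix $x \in V_G$ and any $e \in \outE_H(p(x))$, so that $s_H(e) = p(x)$. Applying the outgoing lifting property with $x' = x$ produces $e' \in E_G$ such that $q(e') = e$ and $s_G(e') = x$, hence $e' \in \outE_G(x)$ and $q_x(e') = e$. Thus $q_x$ is surjective for every $x$, i.e., $\phi$ is a directed submersion.

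For the reverse direction, assume $\phi$ is a directed graph epimorphism and a directed submersion. Condition (i) in Definition~\ref{def:directed_emulator} is given because $p$ is surjective. For condition (ii), let $e \in E_H$ and let $x' \in V_G$ satisfy $p(x') = s_H(e)$, so that $e \in \outE_H(p(x'))$. Surjectivity of $q_{x'}: \outE_G(x') \to \outE_H(p(x'))$ yields $e' \in \outE_G(x')$ with $q_{x'}(e') = e$, which automatically has $s_G(e') = x'$ and $q(e') = e$. Hence $\phi$ is a directed emulator morphism.

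For the second statement, let $\phi$ be a directed cover morphism. Since it is in particular a directed emulator morphism, the previous equivalence gives that $\phi$ is a directed submersion. It remains to establish the immersion property: fix $x \in V_G$ and suppose $e_1', e_2' \in \outE_G(x)$ satisfy $q(e_1') = q(e_2') = e$ for some $e \in E_H$. Then $s_H(e) = p(s_G(e_1')) = p(x)$, so applying the cover definition with the vertex $x$ and the edge $e$ produces a \emph{unique} lift in $E_G$ starting at $x$ whose image under $q$ is $e$; both $e_1'$ and $e_2'$ satisfy these conditions, so $e_1' = e_2'$. Hence $q_x$ is injective at every $x$, and $\phi$ is simultaneously a submersion and an immersion.

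There is no real obstacle here: the proof is pure bookkeeping, and its main value is to record that the emulator/cover notions of Definition~\ref{def:directed_emulator} coincide exactly with the pointwise surjectivity (resp. bijectivity) of the maps $q_x$ introduced in Definition~\ref{def:directed_immersion_submersion}, so that subsequent arguments can be phrased in whichever of the two equivalent forms is more convenient.
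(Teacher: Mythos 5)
Your proof is correct and is exactly the definition-unpacking the paper has in mind: the paper asserts this lemma without proof as an immediate consequence of Definitions~\ref{def:directed_emulator} and \ref{def:directed_immersion_submersion}, which is precisely what your bookkeeping makes explicit. Nothing is missing.
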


\begin{lemma}[Extraction of a directed cover] \label{lem:directed-emulator-contains-directed-cover}
	Let $G'$ be a directed emulator of a directed graph $G$. Then there is a directed subgraph $G''$ of $G$
	with the following properties:
\begin{enumerate}
\item[(1)] $G''$ is a directed cover of $G$;
\item[(2)] $V_{G'} = V_{G''}$.
\end{enumerate} 
\end{lemma}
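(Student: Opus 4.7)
The plan is to construct $G''$ as a spanning subgraph of $G'$ (I take the statement to read ``subgraph of $G'$''; as written it would conflict with $V_{G'}=V_{G''}$ in general) by selecting, at each vertex of $G'$, exactly one lift for each outgoing edge at the image vertex in $G$. The hypothesis that $(p,q)\colon G'\to G$ is a directed emulator guarantees that at least one such lift exists, and a cover is precisely an emulator where there is exactly one lift. So we want to prune, not add.

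Concretely, write $\phi=(p,q)$ and fix $x'\in V_{G'}$. By the edge outgoing lifting property, for each $e\in\outE_G(p(x'))$ the set $S(x',e) := \{ e'\in \outE_{G'}(x') \mid q(e')=e\}$ is non-empty. Since all graphs are finite, choose a map $\sigma$ assigning to each pair $(x',e)$ with $x'\in V_{G'}$ and $e\in\outE_G(p(x'))$ an element $\sigma(x',e)\in S(x',e)$. Let
\[ F \;=\; \{ \sigma(x',e) \mid x'\in V_{G'},\ e\in \outE_G(p(x'))\} \ \subseteq\ E_{G'}, \]
and set $G'' := (G')_{\mid F}$, the subgraph of $G'$ with vertex set $V_{G'}$ and edge set $F$ (with restricted source/target maps). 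By construction, $V_{G''}=V_{G'}$, which is property~(2).

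It remains to verify~(1). The restriction $\phi'' := (p, q|_F)\colon G''\to G$ is a graph morphism since $\phi$ was. The vertex map $p$ is still surjective because $V_{G''}=V_{G'}$ and $p\colon V_{G'}\to V_G$ was surjective (Lemma~\ref{lem:direct-emulator-is-epimorphism}). Fix a vertex $x'\in V_{G''}$ and an outgoing edge $e\in\outE_G(p(x'))$. The edge $\sigma(x',e)$ lies in $F$, has source $x'$, and satisfies $q(\sigma(x',e))=e$, so $\phi''$ satisfies the edge outgoing lifting property. For uniqueness, suppose $e'_1,e'_2\in F\cap \outE_{G''}(x')$ both satisfy $q(e'_i)=e$. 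The only element of $F$ having source $x'$ and mapping to $e$ under $q$ is $\sigma(x',e)$ itself, by the very definition of $F$: any $\sigma(y',f)$ with $s_{G'}(\sigma(y',f))=x'$ forces $y'=x'$, and any $\sigma(x',f)$ mapping to $e$ forces $f=e$. Hence $e'_1=e'_2=\sigma(x',e)$, and $\phi''$ is a directed cover morphism by Lemma~\ref{lem:emu-surjects-onto-outedges} (or directly from Definition~\ref{def:directed_emulator}).

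No step presents a real obstacle; the only subtle point is to make sure that $F$ is defined so that the lift of each $(x',e)$ is automatically unique in $F$, which is why we parameterise $F$ by pairs $(x',e)$ rather than taking, say, the union of all possible lifts.
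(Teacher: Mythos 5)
Your proof is correct and follows essentially the same route as the paper's: the paper also prunes outgoing edges at each vertex of $G'$ so that the surjection $\outE_{G'}(x') \to \outE_{G}(p(x'))$ becomes a bijection, vertex by vertex over each fibre. Your version merely makes the choice function $\sigma$ explicit (and you are right that the statement should read ``subgraph of $G'$'').
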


\begin{proof}
	Let $p(x') = x$.  By Lemma~\ref{lem:emu-surjects-onto-outedges}, the emulator morphism induces a surjection
	${\rm{OutE}}(x') \to {\rm{OutE}}(x)$. Remove if necessary some outgoing edges from $x'$ so that a bijection ${\rm{OutE}}(x') \to {\rm{OutE}}(x)$ is obtained. Proceed thus on each
	vertex in the fibre $p^{-1}(x)$ for each vertex $x \in G$. The restriction of the directed emulator morphism is then a directed cover morphism.
\end{proof}
The construction is not universal. Here are two non isomorphic subgraphs verifying the conditions above:
$$
\begin{tikzpicture}[transform shape, scale=0.6,baseline=-1mm]
\node[etat] (A0) at (0,0) {$v$};
\node[etat] (B0) at (1.5,0) {$w$};
\node[etat] (A1) at (0,2) {$v_1$};
\node[etat] (B1) at (1.5,2) {$w_1$};
\node[etat] (A2) at (0,3) {$v_2$};
\node[etat] (B2) at (1.5,3) {$w_2$};
\node[etat] (A11) at (-5,2) {$v_1$};
\node[etat] (B11) at (-3.5,2) {$w_1$};
\node[etat] (A21) at (-5,3) {$v_2$};
\node[etat] (B21) at (-3.5,3) {$w_2$};
\node[etat] (A12) at (5,2) {$v_1$};
\node[etat] (B12) at (6.5,2) {$w_1$};
\node[etat] (A22) at (5,3) {$v_2$};
\node[etat] (B22) at (6.5,3) {$w_2$};
\draw[->] (A0) -- (B0);
\draw[->] (A1) -- (B1);
\draw[->] (A2) -- (B2);
\draw[->] (A2) -- (B1);
\draw[->] (A11) -- (B11);
\draw[->] (A21) -- (B21);
\draw[->] (A12) -- (B12);
\draw[->] (A22) -- (B12);
\path[->] (A1) edge[dotted,bend left] (A0);
\path[->] (A2) edge[dotted,bend right] (A0);
\path[->] (B1) edge[dotted,bend right] (B0);
\path[->] (B2) edge[dotted,bend left] (B0);
\draw[right hook-latex] (-3,2.5) -- (-0.8,2.5);
\draw[left hook-latex] (4.5,2.5) -- (2.2,2.5);
\end{tikzpicture}
$$


\begin{prop}\label{prop:directed_is_cover}
	Let $g \geq 0$. A directed graph has a directed emulator of genus $g$ if and only if it has a directed cover of genus $g$.
\end{prop}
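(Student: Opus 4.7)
The plan is to split the proposition into its two directions and observe that one is immediate while the other is essentially packaged in the preceding extraction lemma.

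The ``if'' direction is immediate from Definition~\ref{def:directed_emulator}: a directed cover morphism is, by definition, a directed emulator morphism (requiring the lift of each outgoing edge to be unique only strengthens its existence). Hence any directed cover of $G$ of genus $g$ automatically witnesses a directed emulator of $G$ of the same genus.

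For the ``only if'' direction, I would start from a directed emulator $\phi:G' \to G$ with $g(G') = g$ and apply Lemma~\ref{lem:directed-emulator-contains-directed-cover}: this produces a subgraph $G'' \subseteq G'$ (with $V_{G''} = V_{G'}$) such that the restriction of $\phi$ to $G''$ is a directed cover of $G$. Since $G''$ is obtained from $G'$ by deleting edges only, iterating Lemma~\ref{lem:removing-edge-does-not-increase-genus} gives $g(G'') \leq g(G') = g$. Combined with the first direction, this yields the equality of the minimal genus of a directed emulator of $G$ and the minimal genus of a directed cover of $G$, which is the intended content of the statement.

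The only (mild) subtlety is that Lemma~\ref{lem:directed-emulator-contains-directed-cover} may in principle produce a cover of genus \emph{strictly} less than $g$; this is harmless since the proposition is naturally read as a statement about the set of achievable genera (or equivalently about minima), and edge deletion can only lower the genus. Consequently no genuine obstacle arises: the substance of the argument is already concentrated in Lemma~\ref{lem:directed-emulator-contains-directed-cover} (which realises the combinatorial passage emulator $\leadsto$ cover) and Lemma~\ref{lem:removing-edge-does-not-increase-genus} (which controls the genus under that passage), and the proof is simply their conjunction.
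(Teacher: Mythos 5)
Your proposal is correct and follows essentially the same route as the paper: the paper's proof likewise combines the extraction lemma (Lemma~\ref{lem:directed-emulator-contains-directed-cover}) with the fact that passing to a subgraph does not increase the genus, and it resolves the same ``strictly smaller genus'' subtlety by reading the statement as one about minimal genus. Your write-up is merely a more explicit version of the argument already in the paper.
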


\begin{proof}
	Lemma \ref{lem:directed-emulator-contains-directed-cover} shows that if a genus $g$ directed graph $G'$ emulates $G$, then it contains a directed subgraph $G''$ that covers $G$. Therefore $g(G'') \leq g(G') = g$. Therefore the class of directed emulators of $G$ contains a directed cover of minimal genus.
\end{proof}

We set $\mathbf{1} = \begin{tikzpicture}[transform shape, scale=0.6,baseline=-1mm]
\node[etat] (As) at (0,0) {$v$};\end{tikzpicture}$, $\mathbf{2} = \begin{tikzpicture}[transform shape,scale=0.6,baseline=-1mm]
\node[etat] (As) at (0,0) {$v$};
\node[etat] (Bs) at (1.5,0) {$w$};
\draw[->] (As) -- (Bs);
\end{tikzpicture}$ and the morphism $\mathbf{s12} : \mathbf{1} \to \mathbf{2} = [v \mapsto v]$. Definition~\ref{def:directed_emulator} can be reformulated as follows:
\begin{prop}\label{pr:diag_emu}
	A epimorphism $\phi : G \to H$ is a directed emulator  if and only if for all square diagram as below there is a morphism $e'$ making the two triangles commute.  The morphism is a cover if and only if there is at most one morphism $e'$ as a solution.
	\begin{center}	\begin{tikzpicture}[->]
		\node (As) at (0,0) {$\mathbf{1}$};
		\node (Bs) at (2.5,0) {$\mathbf{2}$};
		\node (A) at (0,-1.5) {$G$};
		\node (B) at (2.5,-1.5) {$H$};
		\draw[->] (As) -- node[above]{$\mathbf{s12}$}(Bs);
		\draw[->] (A) -- node[above]{$\phi$}(B);
		\path[->] (0,-0.4) edge node[left]{$x'$}  (A);
		\path[->] (2.5,-0.2) edge node[right]{$e$}  (B);
		\path[->] (2.3,-0.2) edge node[above]{$e'$} (A);
		\end{tikzpicture}\end{center}
\end{prop}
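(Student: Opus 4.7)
The plan is to unpack the diagrammatic condition by interpreting morphisms out of $\mathbf{1}$ and $\mathbf{2}$ concretely. First I observe that a morphism $\mathbf{1} \to G$ is exactly a vertex of $G$, and a morphism $\mathbf{2} \to H$ is exactly an edge of $H$ (together with its source and target), as follows immediately from the incidence relation~(\ref{eq:incidence_relation}) applied to the single edge of $\mathbf{2}$. Under this dictionary, the map $\mathbf{s12}$ picks out the source vertex of $\mathbf{2}$, so writing $\phi = (p,q)$, commutativity of the outer square $\phi \circ x' = e \circ \mathbf{s12}$ translates exactly to the equation $p(x') = s_H(e)$. A diagonal morphism $e':\mathbf{2}\to G$ amounts to an edge $e' \in E_G$; the upper triangle $\phi\circ e' = e$ reads $q(e') = e$ (with source and target compatibility automatic), and the lower triangle $e'\circ \mathbf{s12} = x'$ reads $s_G(e') = x'$.

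With this translation in place, the proposition becomes essentially a re-reading of the definition. Existence of a diagonal filler $e'$ for every commuting outer square is precisely the edge outgoing lifting property from Definition~\ref{def:directed_emulator}(ii), namely: for every $x' \in V_G$ and every $e \in E_H$ with $s_H(e) = p(x')$, there exists $e' \in E_G$ with $s_G(e') = x'$ and $q(e') = e$. Condition~(i) of that definition, surjectivity of $p$, is already supplied by the hypothesis that $\phi$ is an epimorphism. The cover statement is identical, reading ``existence'' as ``unique existence'' of the diagonal filler, which corresponds verbatim to uniqueness of the lifted edge in Definition~\ref{def:directed_emulator}.

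The only subtlety, which one must verify once, is the representability claim: a morphism of directed graphs is determined by its values on vertices and edges subject to~(\ref{eq:incidence_relation}), yielding natural bijections $\mathrm{Hom}_{\DG}(\mathbf{1},G) \cong V_G$ and $\mathrm{Hom}_{\DG}(\mathbf{2},G) \cong \{(e, s_G(e), t_G(e)) \mid e \in E_G\} \cong E_G$. Once this bookkeeping is established, no further obstacle remains and both equivalences are direct.
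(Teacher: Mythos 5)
Your proof is correct: the paper asserts this proposition without proof as an immediate consequence of the definitions, and your unpacking — identifying $\mathrm{Hom}_{\DG}(\mathbf{1},G)$ with $V_G$, $\mathrm{Hom}_{\DG}(\mathbf{2},H)$ with $E_H$, and the commuting square with the condition $p(x') = s_H(e)$ — is exactly the intended translation of the edge outgoing lifting property (and its uniqueness for covers) into diagrammatic form. Nothing is missing.
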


\subsection{The category of emulators} \label{subsec:category_emulators}

\begin{prop} \label{lem:composition_of_emulators}
	The composition of two directed emulators (resp. directed covering) morphisms is a directed emulator (resp. directed covering) morphism.
\end{prop}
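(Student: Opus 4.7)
The plan is to prove both statements by a direct two-step lifting, using the characterization of a directed emulator morphism from Definition~\ref{def:directed_emulator}. Let $\phi = (p_1, q_1): G \to H$ and $\psi = (p_2, q_2): H \to K$ be two directed emulator morphisms, so that the composition is $\psi \circ \phi = (p_2 \circ p_1,\, q_2 \circ q_1): G \to K$.

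First I would verify condition (i): $p_2 \circ p_1 : V_G \to V_K$ is surjective as the composition of two surjections. Next, for condition (ii), I would start with an edge $e \in E_K$ and a vertex $x' \in V_G$ with $(p_2 \circ p_1)(x') = s_K(e)$. Setting $y' := p_1(x') \in V_H$, we have $p_2(y') = s_K(e)$, so the edge outgoing lifting property for $\psi$ produces an edge $e'' \in E_H$ with $q_2(e'') = e$ and $s_H(e'') = y' = p_1(x')$. Applying the edge outgoing lifting property for $\phi$ to the edge $e''$ and the vertex $x'$ (since $p_1(x') = s_H(e'')$), we obtain $e' \in E_G$ with $q_1(e') = e''$ and $s_G(e') = x'$. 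Then $(q_2 \circ q_1)(e') = q_2(e'') = e$, which is exactly what is needed.

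For the cover case, I would keep the same construction and verify uniqueness: if $e'_1, e'_2 \in E_G$ both satisfy $s_G(e'_i) = x'$ and $(q_2 \circ q_1)(e'_i) = e$, set $e''_i = q_1(e'_i) \in E_H$. Then $s_H(e''_i) = p_1(x')$ and $q_2(e''_i) = e$, so the cover property of $\psi$ forces $e''_1 = e''_2$. Applied to this common edge and the vertex $x'$, the cover property of $\phi$ then forces $e'_1 = e'_2$.

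There is essentially no obstacle here: both statements reduce to composing two lifting properties (and, in the cover case, two uniqueness statements) in the obvious order. An alternative, slightly slicker presentation would use the diagrammatic reformulation of Proposition~\ref{pr:diag_emu}: given a square with left side $x': \mathbf{1} \to G$ and right side $e: \mathbf{2} \to K$, first lift $e$ through $\psi$ starting at $p_1(x')$ to obtain an intermediate morphism $\mathbf{2} \to H$, then lift this through $\phi$ starting at $x'$; uniqueness at each step yields the cover statement. I would use whichever formulation fits best with the surrounding exposition.
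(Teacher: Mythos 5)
Your proof is correct and is exactly the direct two-step lifting argument that the paper treats as an immediate consequence of Definition~\ref{def:directed_emulator} (the paper states this proposition without proof). Both the existence and uniqueness parts are handled properly, so there is nothing to add.
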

The identities being directed emulators (resp. covers), see Example~\ref{ex:iso_is_directed_emulator}, directed graphs and directed emulator morphisms between them form a subcategory $\Emu$
of $\DG$. Simple directed graphs and directed simple emulator morphisms between them
form a subcategory of $\DSG$. The category of covers $\Cov$ comes with its subcategory $\CovS$ of covers over simple graphs.

\begin{lemma} To sum up, we have the inclusions:
	
	\begin{tikzpicture}[->]
	\node (As) at (0,0) {$\CovS$};
	\node (Bs) at (2,0) {$\EmuS$};
	\node (Cs) at (4,0) {$\DSG$};
	\node (A) at (0,-1.5) {$\Cov$};
	\node (B) at (2,-1.5) {$\Emu$};
	\node (C) at (4,-1.5) {$\DG$};		
	\draw[->] (As) -- node[above]{(1)} (Bs);
	\draw[->] (Bs) -- node[above]{(3)}(Cs);
	\draw[->] (A) -- node[above]{(2)}(B);
	\draw[->] (B) -- node[above]{(4)}(C);
	\draw[->,line width=2pt] (As) -- (A);
	\draw[->,line width=2pt] (Bs) -- (B);
	\draw[->,line width=2pt] (Cs) -- (C);
	\end{tikzpicture}
\end{lemma}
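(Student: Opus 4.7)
The plan is to verify that each of the six arrows displayed is a well-defined inclusion of a (full or non-full) subcategory and that the resulting square commutes on the nose. Because the statement is really a bookkeeping of what has already been established, no new mathematical content is needed; the task is to assemble the pieces already at hand.

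First I would check that $\Emu$ is indeed a subcategory of $\DG$. The class of objects is the same, so we only need closure of emulator morphisms under identities and composition: identities are emulator morphisms by Example~\ref{ex:iso_is_directed_emulator}, and composition is handled by Proposition~\ref{lem:composition_of_emulators}. The inclusion functor then simply forgets the ``emulator'' label on morphisms, which gives arrow (4). The same verification, using the cover clause of Example~\ref{ex:iso_is_directed_emulator} and Proposition~\ref{lem:composition_of_emulators}, yields $\Cov$ as a subcategory of $\DG$, and since a directed cover morphism satisfies the uniqueness refinement of condition~(ii) in Definition~\ref{def:directed_emulator}, it is in particular a directed emulator morphism; this gives arrow (2). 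Restricting each of $\Emu$, $\Cov$ to simple graphs and to morphisms between simple graphs gives the subcategories $\EmuS$, $\CovS$ and the horizontal arrows (1) and (3); closure under composition and identities is inherited, and these are full subcategories of $\Emu$ and of $\DSG$ respectively.

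Next I would address the vertical arrows. The inclusion $\DSG \hookrightarrow \DG$ was already recorded earlier in the paper. For $\EmuS \hookrightarrow \Emu$ and $\CovS \hookrightarrow \Cov$, there is nothing extra to prove: the emulator (resp.\ cover) conditions in Definition~\ref{def:directed_emulator} do not refer to simplicity, so a morphism between simple graphs that satisfies them in $\DSG$ still satisfies them when regarded in $\DG$.

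Finally, commutativity of the diagram is immediate because every arrow in sight is a subcategory inclusion, acting as the identity on objects and morphisms. Concretely, an object or morphism of $\CovS$ is a cover between simple graphs; viewing it first as a cover (in $\Cov$) and then as an emulator (in $\Emu$) gives the same underlying morphism as viewing it first as an emulator (in $\EmuS$) and then as a cover-between-simple-graphs reinterpreted in $\Emu$, and similarly for the outer square $\EmuS \to \DSG \to \DG$ versus $\EmuS \to \Emu \to \DG$. No step here is a serious obstacle; the only mildly subtle point is keeping straight that $\EmuS$ is defined via the simple hypothesis on objects rather than via an extra condition on morphisms, so that the vertical arrows really are inclusions of (non-full, in the case of $\Emu \hookrightarrow \DG$) subcategories.
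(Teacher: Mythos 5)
Your verification that all six arrows are well-defined subcategory inclusions (identities via Example~\ref{ex:iso_is_directed_emulator}, composition via Proposition~\ref{lem:composition_of_emulators}, restriction to simple graphs for the top row) and that the resulting square commutes is correct and is exactly how the paper sets things up. However, the diagram asserts more than the bare existence of inclusions: the thick vertical arrows ($\CovS \to \Cov$, $\EmuS \to \Emu$, $\DSG \to \DG$) are claimed to be \emph{full} subcategory inclusions, while the thin horizontal arrows (1)--(4) are claimed to be \emph{non-full}. The paper's proof consists precisely of justifying this dichotomy: non-fullness of (1) and (2) is witnessed by Remark~\ref{re:cover_not_emulator} (an emulator morphism that is not a cover), and non-fullness of (3) and (4) by Remark~\ref{re:not_em_is_not_full} (a directed graph epimorphism that is not an emulator). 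Fullness of the vertical arrows is immediate since $\EmuS$ (resp.\ $\CovS$) keeps \emph{all} emulator (resp.\ cover) morphisms between simple graphs.

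Your proposal gets this bookkeeping wrong in two places. First, you state that $\EmuS$, $\CovS$ ``are full subcategories of $\Emu$ and of $\DSG$ respectively'': the inclusion of $\EmuS$ (a fortiori of $\CovS$) into $\DSG$ is \emph{not} full, precisely because not every morphism of simple directed graphs is an emulator, let alone a cover --- this is what the thin arrows (1) and (3) record. Second, your closing sentence describes $\Emu \hookrightarrow \DG$ as a vertical arrow and suggests that the vertical arrows are the non-full ones; in fact $\Emu \hookrightarrow \DG$ is the horizontal arrow (4), and the vertical arrows are exactly the full ones. What is missing from your argument is the pair of counterexamples (Remarks~\ref{re:cover_not_emulator} and \ref{re:not_em_is_not_full}) establishing non-fullness of the horizontal inclusions; without them, the thick/thin distinction that the lemma encodes is left unproved, and the fullness claims you do make are partly false.
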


Thick arrows correspond to full subcategory relationships whereas thin ones correspond to simple inclusions. For (1) and (2), this is justified by Remark~\ref{re:cover_not_emulator}. For (3) and (4), we use Remark~\ref{re:not_em_is_not_full}.  

The graph \begin{tikzpicture}[transform shape, scale=0.6, baseline=-1mm] \node[etat] (L) at (0,0) {}; 
\path[->] (L) edge[loop right,in=340,out=20,looseness=15] (L); \end{tikzpicture}is a terminal object for $\DG$, but neither
for $\Emu$ nor $\Cov$ since the graph morphism \begin{tikzpicture}[transform shape, scale=0.6, baseline=-1mm] 
\node[etat] (M) at (-2,0) {};
\node[etat] (L) at (0,0) {}; 
\draw[->,dotted] (-1.5,0) to[] (-0.5,0) ; 
\path[->] (L) edge[loop right,in=340,out=20,looseness=15] (L); \end{tikzpicture}is neither an emulator nor a cover. 

The graph product $G \times H = (V_G \times V_H, E_G \times E_H, s_G \times s_H, t_G \times t_H)$ does not restricts to $\Emu$ nor $\Cov$. Indeed, the projection morphism $\pi_2$ in \begin{tikzpicture}[transform shape, scale=0.6, baseline=-1mm] 
\node[etat] (M1) at (-1.6,0) {v};
\node[etat] (L1) at (0.3,0) {v};
\node (L0) at (0.9,0) {$\times$};
\node[etat] (L2) at (1.5,0) {w};
\draw[->] (L2) edge[loop right] (L2);
\node[etat] (R1) at (5,0) {w};
\node (E) at (2.6,0) {=};
\node[etat] at (3.3,0) {vw};
\draw[->] (R1) edge[loop right] (R1);
\path[->,dotted] (-0.2,0) edge node[above]{$\pi_1$} (-1.1,0);
\path[->,dotted] (3.9,0) edge node[above]{$\pi_2$} (4.6,0);
\end{tikzpicture} is not an emulator.

 Given two morphisms $(p,q):G \to H$ and $(p',q') : G\to H$, the graph inclusion $G_{{\mid v \in V_G : p(v) =p'(v)}{\mid \{e \in E_G : q(e) = q'(e)\}}} \to G$ is an equalizer for the two morphisms in $\DG$. But some pairs of morphisms have no equalizers in $\Emu$. Consider for instance the graph $G=$\begin{tikzpicture}[transform shape, scale=0.6, baseline=-1mm] 
\node[etat] (L1) at (0,0) {v};
\node[etat] (L2) at (1.6,0) {w};
\path[->] (L1) edge[bend left=15] node[above] {$a$}  (L2)
(L1) edge[bend right=15] node [below] {$b$} (L2);
\end{tikzpicture} 
with its two morphisms $1_G$ and 
$Swap=[a \mapsto b, b \mapsto a]$. 
The graph inclusion \begin{tikzpicture}[transform shape, scale=0.6, baseline=-1mm] 
\node[etat] (M1) at (-3.6,0) {v};
\node[etat] (M2) at (-2.,0) {w};
\node[etat] (L1) at (0,0) {v};
\node[etat] (L2) at (1.6,0) {w};
\path[->] (L1) edge[bend left=15] node[above] {$a$}  (L2)
(L1) edge[bend right=15] node [below] {$b$} (L2);
\draw[->, dotted] (-1.5,0) -- (-0.5,0);
\end{tikzpicture} is an equalizer in $\DG$ but not in $\Emu$ nor $\Cov$. 

Since directed emulators are epimorphisms, we have:
\begin{prop}\label{pr:emu_right_cancel}
	Any morphism in $\Emu$ is right-cancellative: $\phi_1 \circ \psi = \phi_2 \circ \psi \Rightarrow \phi_1 = \phi_2$. 
\end{prop}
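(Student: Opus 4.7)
The plan is to derive the statement directly from Lemma~\ref{lem:direct-emulator-is-epimorphism}, which already asserts that any morphism $\psi = (p,q) : G \to H$ in $\Emu$ is a graph epimorphism in $\DG$, i.e.\ both the vertex map $p : V_G \to V_H$ and the edge map $q : E_G \to E_H$ are surjective. Since composition in $\DG$ acts componentwise on vertex and edge maps, right-cancellation will reduce to the standard fact that surjective maps in $\mathbf{Set}$ are right-cancellative.

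Concretely, I would proceed as follows. Let $\phi_1 = (p_1, q_1)$ and $\phi_2 = (p_2, q_2)$ be morphisms with common codomain and source $H$ satisfying $\phi_1 \circ \psi = \phi_2 \circ \psi$. Unpacking this identity componentwise yields $p_1 \circ p = p_2 \circ p$ and $q_1 \circ q = q_2 \circ q$. Surjectivity of $p$ lets one write any $v \in V_H$ as $v = p(v')$ for some $v' \in V_G$, whence $p_1(v) = p_1(p(v')) = p_2(p(v')) = p_2(v)$, so $p_1 = p_2$. The same argument applied with $q$ in place of $p$ gives $q_1 = q_2$, and therefore $\phi_1 = \phi_2$.

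The main point is that there is essentially no obstacle: the proposition is a formal consequence of Lemma~\ref{lem:direct-emulator-is-epimorphism} together with the componentwise nature of composition in $\DG$. The edge outgoing lifting property from Definition~\ref{def:directed_emulator} is not needed at this step, since it has already been absorbed into the earlier epimorphism statement; what plays the decisive role here is only the set-theoretic surjectivity of $p$ and $q$.
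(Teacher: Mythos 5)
Your proof is correct and follows exactly the route the paper intends: the paper simply prefaces the proposition with ``Since directed emulators are epimorphisms'' (i.e.\ Lemma~\ref{lem:direct-emulator-is-epimorphism}, surjectivity of both the vertex and edge maps) and leaves the componentwise cancellation argument implicit, which is precisely what you spell out.
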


\begin{lemma}\label{lem:pull_backs} Given a morphism $\phi: G \to K$ in $\DG$ and a directed emulator $\phi' : H \to K$, in the pull back diagram:
$$
\xymatrix{
	G \times_{K} H \ar[r]^-{\pi_{2}|} \ar[d]_-{\pi_{1}|} & H \ar[d]^-{\phi'} \\
	G \ar[r]^-{\phi} & K
}
$$
the morphism $\pi_1|$ is a directed emulator. The property holds for covers. 
\end{lemma}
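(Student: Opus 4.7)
The plan is to work with the explicit description of the pullback in $\DG$: the vertex set is
$V_{G \times_K H} = \{(x,y) \in V_G \times V_H \mid \phi_V(x) = \phi'_V(y)\}$,
the edge set is
$E_{G \times_K H} = \{(e,f) \in E_G \times E_H \mid \phi_E(e) = \phi'_E(f)\}$,
source and target are defined coordinate-wise, and $\pi_1|$ is the first projection on both vertices and edges. Two things then need to be verified: that $\pi_1|$ is surjective on vertices, and that $\pi_1|$ satisfies the edge outgoing lifting property.

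For surjectivity, I would take $x \in V_G$ and set $k = \phi_V(x) \in V_K$. Since $\phi'$ is a directed emulator, its vertex map is surjective, so there exists $y \in V_H$ with $\phi'_V(y) = k$. Then $(x,y)$ belongs to $V_{G \times_K H}$ and projects to $x$.

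For the edge outgoing lifting property, I would fix an edge $e \in E_G$ and a vertex $(x,y) \in V_{G \times_K H}$ with $\pi_1|(x,y) = s_G(e)$, i.e.\ $x = s_G(e)$. Setting $e_K = \phi_E(e) \in E_K$, the compatibility relation for incidence in $\phi$ together with $\phi_V(x) = \phi'_V(y)$ yields $s_K(e_K) = \phi_V(x) = \phi'_V(y)$. The edge outgoing lifting property for $\phi'$ (which is an emulator morphism) then provides $f \in E_H$ with $\phi'_E(f) = e_K$ and $s_H(f) = y$. By construction the pair $(e,f)$ lies in $E_{G \times_K H}$, has source $(x,y)$, and is sent to $e$ by $\pi_1|$, giving the required lift. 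This proves the emulator claim.

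For the cover statement, the only additional content is uniqueness of the lift. If $\phi'$ is a directed cover, then the edge $f$ above is the unique edge in $E_H$ with source $y$ and image $e_K$ under $\phi'_E$; hence any lift $(e,f')$ of $e$ with source $(x,y)$ must have $f' = f$, so the lift is unique. I expect no real obstacle here — the entire argument is a mechanical unpacking of the pullback and coordinate-wise application of the defining properties of $\phi'$; the cover case simply propagates the uniqueness clause.
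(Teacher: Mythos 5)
Your proof is correct and follows essentially the same route as the paper: both use the explicit description of the pullback, lift the edge $\phi(e)$ through the emulator $\phi'$ to an edge of $H$ with the prescribed source, and pair it with $e$ to produce the required edge of $G\times_K H$, with uniqueness propagating in the cover case. The only presentational difference is that the paper phrases the lifting step via the diagrammatic characterization of Proposition~\ref{pr:diag_emu} while you unpack it element-wise (and you explicitly check vertex surjectivity of $\pi_1|$, which the paper leaves implicit).
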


\begin{proof} For a direct calculation, we recall that $L = G \times_{K} H$ defined by
	\begin{align*}
	V_{L} = \{ (u,v) \in V_{G} \times V_{H}\ | \  p(u) = p'(v) \}, \\  
	E_{L} = \{ (e,f) \in E_{G} \times E_{H} \mid q(e) = q'(f) \},\\
	\Delta_{L}(e,f) = ((s_{G}(e),s_{H}(f)), (t_{G}(e), t_{H}(f)))
	\end{align*}
	is a pullback. But, let us consider the diagram below. We suppose the left square commutes (the right one also commutes being a pullback). 
	$$\begin{tikzcd}
	\mathbf{1} \ar[r, "x'"] \ar[d,"\mathbf{s12}"']& G \times_{K} H \ar[r,"\pi_{2}|"] \ar[d,"\pi_{1}|"']  & H \ar[d,"\phi'"]
	\\ 
	{\mathbf{2}} \ar[r,"e"'] \arrow[rru,bend right=10,"e'" near start]
	\ar[ru,bend left=10,"e''"]& G \ar[r,"\phi"'] & K
	\end{tikzcd}$$
	Then, since $\phi'$ is a directed emulator, there is an edge $e'$ such that $\phi' \circ e' = \phi \circ e$.  But, then, due to the pullback, given $e$ and $e'$, there is an edge $e''$ as shown. Thus, $\pi_1|$ is a directed emulator (Proposition~\ref{pr:diag_emu}). Thanks to the pull-back, it preserves the unicity, thus $e''$ is unique if $\phi'$ is a cover.
\end{proof}

\begin{corolem}\label{sub-graph-preserve}
	Given a subgraph $G$ of $K$ and a directed emulator morphism $H \to K$, there is a directed morphism $\pi : H'\to G$ with $H'$ a subgraph of $H$. The property holds for covers.
\end{corolem}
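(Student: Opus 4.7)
The plan is to apply the pullback construction of Lemma~\ref{lem:pull_backs} to the subgraph inclusion $\iota: G \hookrightarrow K$ and the given directed emulator morphism $\phi': H \to K$. Forming the pullback square
$$
\xymatrix{
G \times_{K} H \ar[r]^-{\pi_{2}|} \ar[d]_-{\pi_{1}|} & H \ar[d]^-{\phi'} \\
G \ar[r]^-{\iota} & K
}
$$
Lemma~\ref{lem:pull_backs} immediately provides that $\pi_1|$ is a directed emulator morphism, and is a directed cover morphism whenever $\phi'$ is one.

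What remains is to realize $G \times_K H$ as a subgraph of $H$ via $\pi_2|$. Using the explicit vertex and edge descriptions recalled in the proof of Lemma~\ref{lem:pull_backs}, together with the fact that the inclusion $\iota$ is injective on vertices and edges, the pullback conditions $\iota(u) = \phi'(v)$ and $\iota(e) = \phi'(f)$ force the first coordinate of each vertex (resp. edge) of $G \times_K H$ to be determined by the second. Hence $\pi_2|$ is a monomorphism, and its image $H' := (\pi_2|)(G \times_K H)$ is a subgraph of $H$ isomorphic to $G \times_K H$. Composing the inverse isomorphism $H' \to G \times_K H$ with $\pi_1|$ yields the desired morphism $\pi : H' \to G$, which is a directed emulator, and is a directed cover when $\phi'$ is.

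I do not anticipate any substantial obstacle: the statement is a direct bookkeeping consequence of Lemma~\ref{lem:pull_backs}. The only point worth highlighting is that the construction relies on the fact that pulling back along a monomorphism yields a monomorphism, which is precisely the mechanism delivering the subgraph $H' \subseteq H$ from the hypothesis that $G$ is a subgraph of $K$.
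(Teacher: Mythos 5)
Your proposal is correct and follows essentially the same route as the paper: both form the pullback of the emulator $H \to K$ along the subgraph inclusion $G \hookrightarrow K$, invoke Lemma~\ref{lem:pull_backs} for the emulator (resp.\ cover) property of $\pi_1|$, and observe that $\pi_2|$ is a monomorphism because pullbacks preserve monomorphisms, so that $G \times_K H$ is isomorphic to a subgraph $H'$ of $H$. The only difference is that you verify the monomorphism claim explicitly from the coordinate description rather than citing the general categorical fact, which is a harmless elaboration.
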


\begin{proof}Since pullbacks preserve monomorphisms, the morphism ${\pi_2}_{|}$ above is an monomorphism. Thus, $G \times_{K} H$ is isomorphic to a subgraph $H'$ of $H$. 
\end{proof}



\begin{lemma}\label{le:R_preserve_pullback}We have $R(G \times_K H) \simeq R(G) \times_{R(K)} R(H)$ for all graphs $G, K, H$. 
\end{lemma}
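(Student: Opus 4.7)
The plan is to build the natural comparison morphism $\Phi : R(G \times_K H) \to R(G) \times_{R(K)} R(H)$ from the universal property of the pullback on the right, then verify it is an isomorphism. Write $\phi = (p,q) : G \to K$ and $\phi' = (p',q') : H \to K$ for the two morphisms involved in the pullback of Lemma~\ref{lem:pull_backs}.

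First, I observe that the vertex sets of both graphs coincide with $\{(u, v) \in V_G \times V_H : p(u) = p'(v)\}$, since $R$ leaves vertex sets unchanged and pullbacks in $\DG$ are computed pointwise on vertices. Functoriality of $R$ applied to the pullback square defining $G \times_K H$ yields the identity $R(\phi) \circ R(\pi_1) = R(\phi') \circ R(\pi_2)$, and the universal property of the pullback $R(G) \times_{R(K)} R(H)$ then supplies the desired morphism $\Phi$, which is the identity on vertices.

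For the bijection on edges, I first unpack the definitions. An edge of $R(G \times_K H)$ has the form $\Delta_{G \times_K H}(e, f)$ for some $(e, f) \in E_{G \times_K H}$, and $\Phi$ sends it to the pair $(\Delta_G(e), \Delta_H(f))$, which lies in $E_{R(G) \times_{R(K)} R(H)}$ because the morphism identities $(p \times p) \circ \Delta_G = \Delta_K \circ q$ and $(p' \times p') \circ \Delta_H = \Delta_K \circ q'$ force the requested compatibility over $R(K)$. Injectivity of $\Phi$ on edges is immediate since $R(G \times_K H)$ is simple (a consequence of Lemma~\ref{lem:Delta_is_identity}) and $\Phi$ respects source and target. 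For surjectivity, given an edge $((x, y), (x', y')) \in \Delta_G(E_G) \times \Delta_H(E_H)$ satisfying $(p(x), p(y)) = (p'(x'), p'(y'))$, I lift to representatives $e \in E_G, f \in E_H$ with the prescribed $\Delta$-images and recognize $(e, f)$ as an edge of $E_{G \times_K H}$ by unpacking the pullback definition.

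The main obstacle is precisely this surjectivity step on edges: one must match, for each edge of $R(G) \times_{R(K)} R(H)$, an actual pair $(e, f) \in E_{G \times_K H}$ whose $\Delta$ realizes it. Since the compatibility in the pullback is expressed in $R(K)$ rather than $K$, this amounts to lifting the relation $\Delta_K \circ q = \Delta_K \circ q'$ back to the level of the edge maps $q, q'$ themselves. Once this alignment is obtained, the remaining verifications are routine diagram chases.
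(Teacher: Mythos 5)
Your comparison morphism, the identification of the vertex sets, and the injectivity argument are all fine, but the proof has a genuine gap exactly where you park it as ``the main obstacle'': the surjectivity of $\Phi$ on edges is asserted, not proved, and in fact it cannot be proved at this level of generality. Membership of $(e,f)$ in $E_{G\times_K H}$ requires $q(e)=q'(f)$, whereas an edge of $R(G)\times_{R(K)}R(H)$ only records $\Delta_K(q(e))=\Delta_K(q'(f))$; when $K$ has parallel edges these are not equivalent, and no choice of representatives need repair this. Concretely, let $K$ have vertices $a,b$ and two edges $k_1,k_2:a\to b$; let $H$ have vertices $a,b_1,b_2$ and edges $f_i:a\to b_i$ with $q'(f_i)=k_i$ and $p'(b_i)=b$ (so $H\to K$ is even a directed cover, as in Lemma~\ref{lem:pull_backs}); let $G$ have vertices $a,b$ and a single edge $e:a\to b$ with $q(e)=k_1$. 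Then $E_{G\times_K H}=\{(e,f_1)\}$, so $R(G\times_K H)$ has one edge, while $R(G)\times_{R(K)}R(H)$ has the two edges $((a,b),(a,b_1))$ and $((a,b),(a,b_2))$, since both $(a,b_i)$ are sent to $(a,b)$ by $p'^{\times 2}$. The two graphs have different edge counts, so they are not isomorphic and the statement is false ``for all graphs'' as written. (One can arrange a similar failure even when both legs of the pullback are directed covers, so strengthening the hypotheses on the morphisms does not help.)

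The missing ingredient is the simplicity of the base $K$: if $K$ is simple then $\Delta_K$ is injective, $\Delta_K(q(e))=\Delta_K(q'(f))$ does imply $q(e)=q'(f)$, and your surjectivity step — and with it the whole argument — goes through. This is exactly the situation in which the lemma is actually invoked (in Lemma~\ref{lem:from-directed-simple-emulator-to-automaton} the base of the pullback is $R(G)$, which is simple by the corollary to Lemma~\ref{lem:Delta_is_identity}); the paper's own one-line proof (``on edges, the result follows from the definitions'') glosses over the same point. So you should either add the hypothesis that $K$ is simple, or make explicit that the injectivity of $\Delta_K$ is what permits the lift; as it stands, the sentence ``I recognize $(e,f)$ as an edge of $E_{G\times_K H}$ by unpacking the pullback definition'' asserts precisely the step that fails.
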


\begin{proof} On vertices, since $R$ acts as the identity, the result is immediate. On edges, the result follows from the definitions.  
\end{proof}

\begin{prop} \label{lem:functor_R_preserves_directed_emulators}
	The forgetful functor $R$ preserves directed emulators. 
\end{prop}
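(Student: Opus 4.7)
The plan is to verify the two conditions of Definition~\ref{def:directed_emulator} for the morphism $R(\phi) = (p, p^{\times 2}): R(G) \to R(H)$, given a directed emulator morphism $\phi = (p,q): G \to H$. Functoriality of $R$ already gives us that $R(\phi)$ is a morphism of directed graphs, so only the two emulator axioms remain.

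First, I would observe that the vertex map of $R(\phi)$ is literally $p$, which is surjective by hypothesis. Hence condition (i) of Definition~\ref{def:directed_emulator} is immediate.

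For the edge outgoing lifting property, I would start with an arbitrary edge $\bar e \in E_{R(H)} = \Delta_H(E_H)$ and a vertex $x' \in V_{R(G)} = V_G$ satisfying $p(x') = s_{R(H)}(\bar e)$. By definition of $R(H)$, there exists $e \in E_H$ with $\Delta_H(e) = \bar e$; in particular $s_H(e) = s_{R(H)}(\bar e) = p(x')$. Now I apply the emulator property of the original morphism $\phi$: there exists $e' \in E_G$ with $q(e') = e$ and $s_G(e') = x'$. The candidate lift in $R(G)$ is then $\bar{e'} = \Delta_G(e')$. By Lemma~\ref{lem:Delta_is_identity} (or directly from the definitions), $s_{R(G)}(\bar{e'}) = \pi_1(\Delta_G(e')) = s_G(e') = x'$, and using the morphism relation $p^{\times 2} \circ \Delta_G = \Delta_H \circ q$ we obtain
\[
p^{\times 2}(\bar{e'}) = p^{\times 2}(\Delta_G(e')) = \Delta_H(q(e')) = \Delta_H(e) = \bar e,
\]
which is exactly the required lifting.

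There is no real obstacle here: the content of the statement is that the edge outgoing lifting property transports across $\Delta$. The only subtlety to keep in mind is that an edge of $R(H)$ is an ordered pair in the \emph{image} of $\Delta_H$, which guarantees the existence of a preimage edge $e \in E_H$ to which the emulator hypothesis on $\phi$ can be applied; the rest is a direct chase through the definitions of $R$ and the adjacency relation.
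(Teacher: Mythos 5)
Your proof is correct and follows essentially the same argument as the paper: pick a preimage $e \in E_H$ of the edge $\Delta_H(e)$ of $R(H)$, lift it through the emulator $\phi$ to an edge $e'$ with the right source, and check that $\Delta_G(e')$ is the required lift via the relation $p^{\times 2} \circ \Delta_G = \Delta_H \circ q$. The only difference is that you also spell out the (trivial) surjectivity of the vertex map, which the paper leaves implicit.
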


\begin{proof}
	Let $(p,q): G' \to G$ be a directed emulator.  Let us verify that $R(p,q) = (p, p^{\times 2}) : R(G') \to R(G)$ is a directed emulator. Let $\Delta_G(e)$ be an edge in $R(G)$ and let $x' \in R(G')$ such that $p(x') = s_{R(G)}(\Delta_G(e)) = s_{G}(e)$. Since $(p,q)$ is an emulator, there is $e' \in G'$ such that $q(e') = e$ and $s_{G'}(e') = x'$.  By definition, $s_{R(G')}( \Delta_{G'}(e') ) = s_{G'}(e') = x'$ and $p^{\times 2}( \Delta_{G'}(e')) = \Delta_G (q(e')) = \Delta_G(e)$. 
	
\end{proof}

\begin{remark}
	The functor $R$ does not preserve directed covers. For instance,  the directed graph $G = \begin{tikzpicture}[->,transform shape,scale=0.6, baseline=-1mm]]
	\node[etat] (H1) at (0,0){};
	\node[etat] (H2) at (1.5,0.4) {};
	\node[etat] (H3) at (1.5,-0.4){};
	\path [] (H1) edge node []{} (H2)
	(H1) edge node []{} (H3);
	\end{tikzpicture}$ covers the directed graph $H = \begin{tikzpicture}[->,transform shape,scale=0.6, baseline=-1mm]]
	\node[etat] (H1) at (0,0){};
	\node[etat] (H2) at (1.5,0) {};
	\path [] (H1) edge[bend left] node []{} (H2)
	(H1) edge[bend right] node []{} (H2);
	\end{tikzpicture}$. However, $R(G) = G$ only emulates (and does not cover) 
	$R(H) =  \begin{tikzpicture}[->,transform shape,scale=0.6, baseline=-1mm]]
	\node[etat] (H1) at (0,0){};
	\node[etat] (H2) at (1.5,0) {};
	\path [] (H1) edge node []{} (H2);
	\end{tikzpicture}.$
\end{remark}

That can be summed up by the diagram:
\begin{center}
	\begin{tikzpicture}[->]
	\node (As) at (0,0) {$\CovS$};
	\node (Bs) at (2,0) {$\EmuS$};
	\node (Cs) at (4,0) {$\DSG$};
	\node (A) at (0,-1.5) {$\Cov$};
	\node (B) at (2,-1.5) {$\Emu$};
	\node (C) at (4,-1.5) {$\DG$};		
	\draw[->] (As) -- node[above]{} (Bs);
	\draw[->] (Bs) -- node[above]{}(Cs);
	\draw[->] (A) -- node[above]{}(B);
	\draw[->] (B) -- node[above]{}(C);
	\draw[->] (As) edge [bend right] (A);
	\draw[->] (Bs) edge [bend right] (B);
	\draw[->] (Cs) edge [bend right] (C);
	\draw[->] (B) edge [bend right]  node[right]{R} (Bs);
	\draw[->] (C) edge [bend right] node[right]{R} (Cs);
	\end{tikzpicture}
\end{center}
The removal of loops (Exc) does not determine a functor in the category of directed graphs
(Remark \ref{rem:exc_is_not_functorial}).
The next observation is that Exc becomes one in the category $\boldsymbol{{\mathrm{Em}}^{0}}$ of directed graphs with directed emulators maps as morphisms. 

\begin{definition}
	 A {\emph{directed emulator map}} from a directed graph $G$ to an other directed graph $H$ is a pair $(p,q)$ made of a surjective map $p : V_{G} \to V_H$ and a \emph{partial} function $q:E_{G} \to E_H$ that is compatible with the adjacency relation and such that the pair $(p,q)$ has the edge outgoing lifting property. 
\end{definition}
Restricted to simple graphs, this yields a full subcategory $\boldsymbol{{\rm{Em}}_S^{0}}$ of $\EmuO$. 
A directed emulator morphism is a directed emulator map whose edge function is total. Thus, the category $\Emu$ is a subcategory of $\EmuO$.

Given a morphism $(p,q) : G \to H$, we define ${\rm Exc}(p,q) = (p',q') : \Exc(G) \to \Exc(H)$ with $p' = p$ and $q' = q_{\mid \{e \in E_G: p(s_G(e)) \neq p(t_G(e))\}}$. To justify that the definition is correct, we check three facts. First, the image of $q'$ stays within edges in $\Exc(H)$. Indeed, if $p(s_G(e)) \neq p(t_G(e))$, then $s_H(q(e)) \neq t_H(q(e))$ so that it is not a loop, thus in $\Exc(H)$. Second, being a restriction of $q$ by definition, $q'$ respect the adjacency relation. Third, if $e$ is an edge in $\Exc(H)$ and $x' \in \Exc(G)$ verifies $p(x') = s_H(e)$, there is an edge $e'$ in $G$ such that $q(e') = e$ and $s_G(e') = x'$. Since $e'$ is not a loop (otherwise $q(e')$ would be itself a loop), it is an edge within $\Exc(G)$.  

\begin{prop} \label{prop:exc_as_functor} $\Exc$ is a functor $\Emu \to \EmuO$. The excision ${\rm{Exc}}$ is a endofunctor of $\EmuO$ that restricts to an endofunctor of $\boldsymbol{{\rm{Em}}_S^{0}}$. 
	Furthermore,
	\begin{enumerate}
		\item[$(1)$] ${\rm{Exc}} \circ {\rm{Exc}} = {\rm{Exc}}$;
		\item[$(2)$] ${\rm{Exc}} \circ R = R \circ {\rm{Exc}}$.
	\end{enumerate}
\end{prop}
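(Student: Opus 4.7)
The plan is to verify functoriality, then the simplicity restriction, and finally the two identities. The preamble to the proposition already checks the essential structural fact: given $(p,q)$ in $\Emu$, the pair $(p,\,q_{\mid D})$ with $D = \{e \in E_G : p(s_G(e)) \neq p(t_G(e))\}$ is a well-defined emulator map from $\Exc(G)$ to $\Exc(H)$. What remains for the first sentence of the proposition is the preservation of identities and of composition.

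For identities, $\Exc(1_G)$ restricts $1_{E_G}$ to $\{e : s_G(e) \neq t_G(e)\} = E_{\Exc(G)}$, giving $1_{\Exc(G)}$. For composition, consider $\phi = (p,q) : G \to H$ and $\psi = (p',q') : H \to K$. The partial composition $\Exc(\psi) \circ \Exc(\phi)$ is defined at $e$ iff $p(s_G(e)) \neq p(t_G(e))$ and $p'(s_H(q(e))) \neq p'(t_H(q(e)))$; using $p \circ s_G = s_H \circ q$ and the analogous identity for $t$, the second inequality rewrites as $(p'p)(s_G(e)) \neq (p'p)(t_G(e))$, which already forces the first. The domain therefore equals $\{e : (p'p)(s_G(e)) \neq (p'p)(t_G(e))\}$, which is precisely the domain of $\Exc(\psi \circ \phi)$, and on this common domain both partial functions send $e$ to $q'(q(e))$. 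Reading the same argument with $q, q'$ partial from the outset yields the endofunctor of $\mathrm{Em}^0$, and since $\Exc(G)$ is a subgraph of $G$ it remains simple whenever $G$ is, giving the restriction to $\mathrm{Em}_S^0$.

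Identity $(1)$ is immediate: $\Exc(G)$ contains no loops, so $\Exc(\Exc(G)) = \Exc(G)$ on objects; on a morphism, the edge domain of $\Exc(\Exc(\phi))$ is $\{e \in E_{\Exc(G)} : p(s(e)) \neq p(t(e))\}$, the same set as the edge domain of $\Exc(\phi)$. For $(2)$, on objects both $\Exc(R(G))$ and $R(\Exc(G))$ have vertex set $V_G$ and edge set $\{(x,y) \in \Delta_G(E_G) : x \neq y\}$: in the first case one records all pairs realized by edges of $G$ and then drops the diagonal; in the second, one discards loops of $G$ first and then records the realized pairs. On a morphism $(p,q)$, both $\Exc(R(p,q))$ and $R(\Exc(p,q))$ act as $p$ on vertices and send a non-diagonal pair $(x,y) \in \Delta_G(E_G)$ to $(p(x),p(y))$ under the common partiality condition $p(x) \neq p(y)$, so the two partial morphisms coincide.

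The only genuine obstacle is the bookkeeping of domains of partial edge functions in the composition check; once one isolates the key observation that $(p'p)(u) \neq (p'p)(v)$ entails $p(u) \neq p(v)$, functoriality and both identities reduce to straightforward symbol-pushing.
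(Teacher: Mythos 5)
Your proof is correct and follows essentially the same route as the paper: the well-definedness of $\Exc$ on morphisms is established in the preamble, and the paper's own proof is then a one-line remark that $\Exc$ removes loops on objects and acts (up to restriction) trivially on emulator maps, "the results follow." You simply carry out explicitly the domain bookkeeping the paper leaves implicit — in particular the observation that $(p'p)(u)\neq(p'p)(v)$ forces $p(u)\neq p(v)$, which makes the composition and the identities $(1)$ and $(2)$ check out.
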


\begin{proof}
	Exc acts on a graph $G$ by removal of all loops; for any directed graphs $G,G'$, Exc acts on the set ${\rm{Hom}}_{\EmuO}(G',G)$ of directed emulator maps as the identity. The results follow.
\end{proof}

\begin{lemma}[Extension of a directed cover over an excized graph] \label{lem:from-cover-over-simple-excized-to-cover-over-original}
	Suppose that $\psi:G' \to G$ is a directed cover morphism between directed graphs. Furthermore, assume that $G = {\rm{Exc}}(H)$. Then there exists a directed graph $H'$ and a directed cover morphism $\varphi:H' \to H$ such that
	\begin{enumerate}
		\item[$(1)$]  ${\rm{Exc}}(H') = G'$;
		\item[$(2)$] The directed cover morphism $\varphi$ extends the directed cover morphism $\psi$, that is $\varphi|_{G'} = \psi$.
	\end{enumerate}
	$$ \xymatrix{
		H' \ar@{.>}^-{\rm{Exc}}[r] \ar@{.>}_{\varphi}[d] & G' \ar^-{\psi}[d] \\
		H \ar^-{\rm{Exc}}[r] & G.
	}$$
	The statement holds replacing covers by emulators.
\end{lemma}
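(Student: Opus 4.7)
The plan is to construct $H'$ from $G'$ by adding back loops that mirror the loops of $H$ that were excised to obtain $G$. Recall that $G = \mathrm{Exc}(H)$ differs from $H$ only by the removal of a set $L_H \subseteq E_H$ of loops, and $V_G = V_H$. Since $\psi : G' \to G$ is given, I can view $\psi$ as producing vertices over every vertex $v \in V_H$ via $\psi^{-1}(v)$.

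I would define $H'$ as follows. Take $V_{H'} = V_{G'}$ and start the edge set with $E_{G'}$, with $s_{H'}, t_{H'}$ agreeing with $s_{G'}, t_{G'}$ on these edges. Then, for every loop $\ell \in L_H$ at a vertex $v \in V_H$ and for every $v' \in \psi^{-1}(v)$, adjoin one new loop $\ell'_{v'}$ at $v'$ in $H'$, i.e.\ a fresh edge with $s_{H'}(\ell'_{v'}) = t_{H'}(\ell'_{v'}) = v'$. Define $\varphi = (p', q') : H' \to H$ by setting $p' = \psi$ on vertices (which is legitimate since $V_{H'} = V_{G'}$ and $V_H = V_G$), and on edges by $q'(e) = \psi(e)$ for $e \in E_{G'}$ and $q'(\ell'_{v'}) = \ell$ for each newly added loop.

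To finish, I would check four things in turn. First, $\varphi$ respects the adjacency relation: this is automatic on $E_{G'}$ since $\psi$ is a morphism, and on each added $\ell'_{v'}$ both endpoints are $v'$ which maps to $v = s_H(\ell) = t_H(\ell)$. Second, $\mathrm{Exc}(H') = G'$: by construction the only loops in $H'$ are the $\ell'_{v'}$'s, and $G'$ itself contains no loops since $G = \mathrm{Exc}(H)$ has none and $\psi$, being a cover, cannot create loops from non-loops. Thus removing the added $\ell'_{v'}$'s yields exactly $G'$. Third, $\varphi$ restricts to $\psi$ on $G'$ by definition. Fourth, for the cover property at each $v' \in V_{H'}$: the outgoing non-loop edges at $v'$ are in bijection with the outgoing non-loop edges at $v = \psi(v')$ via $\psi$ (since $\psi$ is a cover from $G'$ to $G$ and non-loop outgoing edges at $v$ in $H$ are exactly the outgoing edges at $v$ in $G$), and the outgoing loops at $v'$ are in bijection with the loops at $v$ in $H$ by construction. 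Hence the total outgoing star at $v'$ bijects onto the outgoing star at $v$ in $H$.

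For the emulator version, the construction is identical; the only change is that the starting map $\psi$ is merely a submersion on outgoing stars (Lemma~\ref{lem:emu-surjects-onto-outedges}), so the resulting $\varphi$ is surjective (not necessarily bijective) on outgoing stars at each vertex. I do not anticipate a serious obstacle: the construction is essentially forced by the requirement that $\varphi$ be a cover over loops while extending $\psi$, and the only delicate point is bookkeeping to confirm that adding loops in the fibres of $\psi$ produces no spurious coincidences. Since the added loops are fresh edges attached individually to each preimage vertex, this bookkeeping is straightforward.
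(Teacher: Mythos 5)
Your construction is exactly the paper's: add, for each loop of $H$ at a vertex $v$, one fresh loop at every vertex of $\psi^{-1}(v)$, and extend $\psi$ by sending these new loops to the corresponding loops of $H$. The proposal is correct and in fact verifies the cover property on outgoing stars more explicitly than the paper's own proof does.
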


\begin{proof}
	The construction is explicit. Let $\psi = (p,q):G' \to G$ and let $E^{0}$ be the set of loops of $H$. For each loop $e \in E^{0}$, create a loop $e_{x'}$ in $G'$ at each preimage $x' \in p^{-1}(s(e))$ and set $s(e_{x'}) = t(e_{x'}) = x'$. Set
	$$H' = \left(V_{G'}, E_{G'} \bigcup_{e \in E^{0}} \bigcup_{v' \in p^{-1}(s(e))} e_{v'}, s_{G'} \cup s, t_{G'} \cup t \right). $$
	Extend the directed cover morphism $\psi$ to $H'$ by sending each edge $e_{x'}$ to $e$. Rename the extended direct cover morphism $\varphi$.
	By construction, ${\rm{Exc}}(H') = G'$ and $\varphi|_{G'} = \psi$. This proves $(1)$ and $(2)$. The construction is valid for directed emulators.
\end{proof}

There is a dual notion to the definition of a directed emulator. The original
definition distinguishes the outgoing edges. One could distinguish the incoming edges instead. Hence, parallel to Lemma \ref{lem:emu-surjects-onto-outedges}, we consider the set
${\rm{InE}}(y) = \{ e \in E \ | \ t(e) = y \}$ of incoming edges at $e$ and
require that the map ${\rm{InE}}(y') \to {\rm{InE}}(y)$ induced by the epimorphism be surjective. This defines a new notion of directed emulator, which we call {\emph{incoming directed emulator}}, while the original notion is
renamed an {\emph{outgoing directed emulator}}. If this does not seem to
cause confusion, we shall keep the terminology of directed emulator for an outgoing directed emulator. In contexts where we need to be more precise, we shall use the full terminology. A {\emph{bidirected emulator}} is a directed graph morphism that is both an incoming directed emulator and an outgoing directed emulator.

\begin{lemma} \label{lem:directed_emulators_and_opp}
	A directed graph morphism $\phi: G \to H$ is an incoming directed emulator if and only if $\phi^{op}: H^{\rm{op}} \to G^{\rm{op}}$ is a outgoing directed emulator.
\end{lemma}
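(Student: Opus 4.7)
The plan is to unpack the definition of $(-)^{\op}$ on both graphs and morphisms and observe that the incoming/outgoing distinction is exactly what $(-)^{\op}$ exchanges, so the equivalence reduces to a rewriting of definitions.

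First I would record the key identities underlying the argument. By definition of $G^{\op}$, the source and target maps are swapped, so for every vertex $v \in V_G = V_{G^{\op}}$ we have $\outE_{G^{\op}}(v) = \{e \in E_G \mid t_G(e) = v\} = \mathrm{InE}_G(v)$ and, symmetrically, $\mathrm{InE}_{G^{\op}}(v) = \outE_G(v)$. Moreover, on the level of morphisms, $\phi^{\op} = (p,q)$ has the same underlying pair of maps as $\phi = (p,q)$, and in particular $p$ remains surjective on vertices in one if and only if it is in the other.

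Next I would verify each direction of the equivalence. Assume $\phi : G \to H$ is an incoming directed emulator. Given an edge $e \in E_{H^{\op}} = E_H$ and a vertex $x' \in V_{G^{\op}} = V_G$ with $p(x') = s_{H^{\op}}(e) = t_H(e)$, the incoming emulator hypothesis applied to $e$ (regarded as an incoming edge at $t_H(e)$ in $H$) and to the chosen preimage $x'$ yields an edge $e' \in E_G$ with $q(e') = e$ and $t_G(e') = x'$. Reading this back in $G^{\op}$, $e'$ satisfies $s_{G^{\op}}(e') = x'$, so $\phi^{\op}$ lifts $e$ along an outgoing edge at $x'$, witnessing the edge outgoing lifting property. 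The converse direction is obtained by the same computation with the roles of $\phi$ and $\phi^{\op}$ exchanged, using that $(-)^{\op}$ is involutive by identity (\ref{eq:op-involutive}).

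I do not anticipate any real obstacle: the whole content of the lemma is that the defining diagrams of incoming and outgoing emulators are interchanged by the involution $(-)^{\op}$, and the verification is purely a translation of source into target. The only point that requires a minor sanity check is that the vertex surjectivity condition is stable under $(-)^{\op}$, which is immediate since the vertex map $p$ is literally unchanged by the involution.
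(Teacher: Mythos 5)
Your proof is correct and follows exactly the route the paper intends: the paper states this lemma without proof as an immediate consequence of the definitions, and your unpacking (that $(-)^{\op}$ swaps $s$ and $t$, hence $\outE_{G^{\op}}(v) = \mathrm{InE}_G(v)$, while leaving the underlying pair $(p,q)$ and the surjectivity of $p$ untouched) is precisely that verification. Note only that you have tacitly, and correctly, read $\phi^{\op}$ as a morphism $G^{\op} \to H^{\op}$ in accordance with the paper's definition of the endofunctor $(-)^{\op}$, whereas the lemma's displayed typing $H^{\op} \to G^{\op}$ appears to be a typo.
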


\subsection{The case of undirected graphs} \label{subsec:undirected_emulators}

In this section, we discuss the relationship between directed and undirected graph emulators. First, we recall the original definition of an undirected graph emulator introduced by M.~R. Fellows in his PhD thesis in 1985 (see \cite{FL88}).\\

\noindent{\textbf{Fellows' definition}}. 
Let $G$ be an undirected graph. We say that an undirected graph
$G'$ is an {\emph{emulator}} of $G$ if there is a graph epimorphism $(p, q) : G' \to G$  such that for any edge $e\in E_G$ with $\partial(e)=\{ x, y \} $ and any $x' \in V_{G'}$ such that $p(x') = x$, there is an edge $e' \in E_{G'}$ such that $q(e') = e$. Again, we say that it is a cover when the edge $e'$ is uniquely defined. 

Note that the functor $U$ does not send directed emulators to emulators (and neither directed covers to covers). For instance, on the left, we have a directed emulator. But, on the right, node $u_1$ has only one adjacent edge where $w_1$ has two.
\begin{center}
	\begin{tikzpicture}[->,transform shape, scale=0.7, baseline=-2mm]]
	\node[etat](G0) at (0,0) {$v_0$};
	\node[etat] (G1) at (2,0.5) {$v_1$};
	\node[etat] (G1p) at (2,-0.5) {$u_1$};
	\node[etat](G2) at (4,0) {$v_2$};
	\node[etat](H0) at (0,-2) {$w_0$};
	\node[etat] (H1) at (2,-2) {$w_1$};
	\node[etat](H2) at (4,-2) {$w_2$};
	\path [] (G0) edge node [above=1mm] {} (G1)
	(G1p) edge node{} (G2)
	(G1) edge node [above=1mm] {} (G2)
	(H0) edge node{} (H1)
	(H1) edge node{} (H2);
	\draw (G0) [dotted] -- (H0);
	\draw (G1p) [dotted] -- (H1);
	\draw (G2) [dotted] -- (H2);
	\draw (G1) edge [dotted, bend right=40] (H1);
	\node[etat](K0) at (8,0) {$v_0$};
	\node[etat] (K1) at (10,0.5) {$v_1$};
	\node[etat] (K1p) at (10,-0.5) {$u_1$};
	\node[etat](K2) at (12,0) {$v_2$};
	\node[etat](L0) at (8,-2) {$w_0$};
	\node[etat] (L1) at (10,-2) {$w_1$};
	\node[etat](L2) at (12,-2) {$w_2$};
	\path [-] (K0) edge node [above=1mm] {} (K1)
	(K1p) edge node{} (K2)
	(K1) edge node [above=1mm] {} (K2)
	(L0) edge node{} (L1)
	(L1) edge node{} (L2);
	\draw (K0) [dotted] -- (L0);
	\draw (K1p) [dotted] -- (L1);
	\draw (K2) [dotted] -- (L2);
	\draw (K1) edge [dotted, bend right=40] (L1);
	\node (F) at (6,-1) {\huge $\stackrel{U}{\mapsto}$};
	\end{tikzpicture}
\end{center}

\begin{definition}
	The {\emph{bidirection}} of an undirected graph $G$ is the directed graph $\double{G} = (V, E, s, t)$ defined by $V = V_{G}$ and 
	\begin{eqnarray*}
	E = \{(e,x,y) \mid e \in E_G \text{ and } \partial_G(e) = \{x,y\} \},\\s(e,x,y) = x\ {\rm{ and}}\ t(e,x,y) = y. \end{eqnarray*}
\end{definition}

Notice that nonloop edges in $G$  are duplicated in $\double{G}$ whereas the set $ \{ \partial{e} = \{x\} \ | \ e \in E_{G} \}$ of loops in $G$ are in one-one correspondence with the set $\{ (e,x,x) \ | \ e \in E_{G}, \ \partial_{G}(e) = \{ x \} \}$ of loops in $\double{G}$.

Let $\phi = (p,q): G\to H$ be a graph morphism. We define $\double{\phi} = (p,q')$ with $q'((e,x,y)) = (q(e),p(x),p(y))$. It is clear that $\Delta_{\double{H}} (q'(e,x,y)) = \Delta_{\double{H}} (q(e),p(x),p(y)) = (p(x),p(y)) = p^{\times 2}(\Delta_{\double{G}}(e))$.

\begin{lemma} \label{lem:double_is_right_adjoint_to_U}
	The assignment $\double{(-)}$ yields a functor $\G \to \DG$ that is right adjoint to the functor $U$.
\end{lemma}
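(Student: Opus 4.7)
The plan is to verify the two claims in turn: first that $\double{(-)}$ is indeed a functor, and second that it is right adjoint to $U$ by exhibiting a natural bijection between hom-sets.

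For functoriality, the paper has already checked that $\double{\phi} = (p,q')$ satisfies the adjacency relation in $\DG$, so it only remains to verify identities and composition. For the identity $1_G = (1_{V_G}, 1_{E_G})$, the induced edge map sends $(e,x,y) \mapsto (e,x,y)$, hence $\double{1_G} = 1_{\double{G}}$. Given $\phi = (p,q):G \to H$ and $\psi = (p',q'):H \to K$, the composition $\psi \circ \phi = (p' \circ p, q' \circ q)$ induces on an edge $(e,x,y) \in E_{\double{G}}$ the value $(q'(q(e)), p'(p(x)), p'(p(y)))$, which coincides with $\double{\psi} \circ \double{\phi}$ applied to $(e,x,y)$. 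Thus $\double{(-)}$ is a functor $\G \to \DG$.

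For the adjunction, I would define a natural bijection
$$ \Phi_{D,G} : \mathrm{Hom}_{\DG}(D, \double{G}) \longrightarrow \mathrm{Hom}_{\G}(U(D), G)$$
for each directed graph $D$ and undirected graph $G$. Given a $\DG$-morphism $(p, q') : D \to \double{G}$, for any $f \in E_D$ the adjacency relation forces $q'(f) = (e, p(s_D(f)), p(t_D(f)))$ for a unique $e \in E_G$ satisfying $\partial_G(e) = \{p(s_D(f)), p(t_D(f))\}$; set $q(f) = e$. Then $(p,q)$ satisfies the undirected incidence relation (\ref{eq:incidence_for_undirected}) for $U(D) \to G$, since $\partial_G(q(f)) = \{p(s_D(f)), p(t_D(f))\} = p^{\otimes 2}(\partial_{U(D)}(f))$. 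Conversely, given an undirected morphism $(p,q) : U(D) \to G$, define
$$ \Psi_{D,G}(p,q) = (p, q'), \qquad q'(f) = (q(f), p(s_D(f)), p(t_D(f))).$$
This triple lies in $E_{\double{G}}$ by the incidence relation for $(p,q)$, and $(p,q')$ satisfies the adjacency relation in $\DG$. The maps $\Phi_{D,G}$ and $\Psi_{D,G}$ are mutually inverse by construction, which establishes the bijection.

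Naturality is then a direct check: for morphisms $\alpha : D' \to D$ in $\DG$ and $\beta : G \to G'$ in $\G$, one verifies $\Phi_{D', G'}(\double{\beta} \circ (p,q') \circ \alpha) = \beta \circ \Phi_{D,G}(p,q') \circ U(\alpha)$ by unfolding both sides on an edge, using that the ``third component'' of an edge of $\double{G'}$ is determined by the vertex map. The main (mild) obstacle is purely bookkeeping: one must keep track of the formal distinction between an edge $e \in E_G$ and its two oriented avatars $(e,x,y), (e,y,x) \in E_{\double{G}}$, and ensure that the choice forced by the vertex map $p$ makes the correspondence single-valued. No genuine difficulty arises beyond this.
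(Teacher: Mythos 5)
Your proof is correct and follows essentially the same route as the paper: the paper also defines $\Phi_{G,H}(p,q) = (p, \pi_3^1\circ q)$ with $\pi_3^1(e,x,y)=e$ and the inverse $\Psi_{G,H}(p',q') = (p',q'')$ with $q''(e) = (q'(e), p'(s_G(e)), p'(t_G(e)))$, then checks they are mutually inverse and natural. Your additional explicit verification of identities and composition for $\double{(-)}$ is a harmless elaboration of what the paper leaves implicit.
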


\begin{proof}
	For the statement about adjoints, let $G$ be a directed graph and let $H$ be an undirected graph. We define a map $\Phi_{G,H}: {\rm{Hom}}_{\DG}(G, \double{H}) \to {\rm{Hom}}_{\G}(U(G), H)$ as follows. Let $(p,q):G \to \double{H}$ be a directed graph morphism. Let $\pi_{3}^{1}:E_{\double{H}} \to E_{H}$, $\pi_{3}^{1}(e,x,y) = e$. Then set $$ \Phi_{G,H}(p,q) = (p, \pi_{3}^{1} \circ q):U(G) \to H.$$ 
	Conversely, given a morphism $(p',q'):U(G) \to H$, define a map $\Psi_{G,H}(p',q'):G \to \double{H}$ by $$\Psi_{G,H}(p',q') = (p',q''), \ {\hbox{with}}\ q''(e) = (q'(e),p'(s_G(e)),p'(t_G(e))).$$
	The maps $\Phi_{G,H}$ and $\Psi_{G,H}$ are inverse of each other, hence $\Phi_{G,H}:{\rm{Hom}}_{\DG}(G, \double{H}) \to {\rm{Hom}}_{\G}(U(G), H)$ is bijective. Naturality of the isomorphism $\Phi$ should be clear.
\end{proof}

\begin{lemma}\label{lem:double_respect_emulators}
	The morphism $\phi = (p,q):G\to H$ is an emulator if and only if $\double{\phi}:\double{G}\to\double{H}$ is a directed emulator.
\end{lemma}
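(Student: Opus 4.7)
The plan is to unwind both definitions on the vertex and edge level, using the explicit description of $E_{\double{G}}$ as the set of triples $(e,x,y)$ with $e \in E_G$ and $\partial_G(e) = \{x,y\}$, where $(e,x,y)$ has source $x$ and target $y$.

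For the forward direction, assume $\phi = (p,q)$ is an (undirected) emulator. The vertex map of $\double{\phi}$ is $p$, which is surjective. To check the edge outgoing lifting property of $\double{\phi}$, pick an edge $f = (e,v,w) \in E_{\double{H}}$ and any $v' \in p^{-1}(v)$. Applying Fellows' lifting property to $e \in E_H$ (noting $v \in \partial_H(e)$) and to the vertex $v' \in p^{-1}(v)$, I obtain $e' \in E_G$ with $q(e') = e$ and $v' \in \partial_G(e')$. Write $\partial_G(e') = \{v', w'\}$. The incidence relation~(\ref{eq:incidence_for_undirected}) gives $\{v, p(w')\} = p^{\otimes 2}(\partial_G(e')) = \partial_H(q(e')) = \{v,w\}$, which forces $p(w') = w$ (whether or not $v = w$, since we are equating unordered pairs). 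Hence $f' = (e', v', w') \in E_{\double{G}}$ has source $v'$ and satisfies $q'(f') = (q(e'), p(v'), p(w')) = (e,v,w) = f$, as required.

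For the reverse direction, assume $\double{\phi}$ is a directed emulator. Then $p$ is surjective by definition. Given $e \in E_H$ with $\partial_H(e) = \{x,y\}$ and any $x' \in p^{-1}(x)$, view $(e,x,y)$ as an edge of $\double{H}$ with source $x$ and apply the directed outgoing lifting property of $\double{\phi}$ to produce $f' = (e', x', w') \in E_{\double{G}}$ with $s_{\double{G}}(f') = x'$ and $q'(f') = (e,x,y)$. Then $e' \in E_G$ satisfies $q(e') = e$ and $x' \in \partial_G(e') = \{x', w'\}$, so Fellows' lifting property holds. Surjectivity of $q$ also follows at once, since for any $e \in E_H$ we can pick $x \in \partial_H(e)$, choose $x' \in p^{-1}(x)$, and lift as above; thus $\phi$ is an epimorphism and hence an undirected emulator.

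The only subtlety to flag is the treatment of loops: a loop $e$ with $\partial_H(e) = \{x\}$ corresponds to a single directed loop $(e,x,x)$ in $\double{H}$, and any lift $(e', v', w')$ must satisfy $\{p(v'), p(w')\} = \{x\}$, which automatically forces $\partial_G(e') = \{v'\} = \{w'\}$, i.e.\ a loop in $G$ and hence in $\double{G}$. Both directions of the argument therefore handle loop and non-loop edges uniformly without a separate case analysis; this is the main point where one has to be careful, but it is resolved cleanly by working with the unordered-pair formulation of $\partial$ throughout.
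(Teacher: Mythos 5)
Your proof is correct and follows essentially the same route as the paper's: both directions unwind the definition of $\double{(-)}$ on triples $(e,x,y)$, with the key step in the forward direction being that the unordered-pair identity $p^{\otimes 2}(\partial_G(e')) = \partial_H(e)$ forces the lifted edge to land on the right target (the paper handles this by the same two-case analysis on whether $p(x')=p(y')$). Your added remarks on surjectivity of $q$ and on loops are harmless refinements of what the paper leaves implicit.
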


\begin{remark} \label{rem:directed_on_double_is_bidirected}
	Note that $\double{\phi}:\double{G}\to\double{H}$ is a directed emulator if and only if it is a bidirected emulator. (Proof: use the natural isomorphism $(\double{G})^{\rm{op}} \simeq \double{G}$ for any $G$ and Lemma~\ref{lem:directed_emulators_and_opp}.)
\end{remark}

\begin{proof} Suppose $\phi$ is an emulator. Since $p$ is surjective, we only have to verify the outgoing edge lifting property. Suppose now $(e,x,y) \in \double{H}$ and $p(x') = x$ with $x' \in \double{G}$. Since $\phi$ is an emulator, there is an edge $e' \in E_G$ such that $q(e') = e$ and $\partial_G(e') = \{x',y'\}$ for some $y' \in V_G$. We have $q'(e',x',y') = (q(e'),p(x'),p(y')) = (e,x,p(y'))$. Do we have $p(y') = y$? Since $\Delta_{H}(q(e)) = p^{\otimes 2}(e) = \{p(x'), p(y')\} = \{x,y\}$, either $x = p(x') = p(y') = y$ or $p(x') \neq p(y')$ and since $p(x') = x$, we have $p(y') = y$.
	
	Suppose now that $\double{\phi} = (p,q')$ is a directed emulator. Again, since $p$ is surjective, we only have to verify the edge lifting property. 	Let $e \in E_{H}$ with $\partial_{H}(e) = \{x,y\}$ and $x' \in V_{G}$ with $p(x') = x$. Then, $(e,x,y) \in E_\double{H}$ and $x'\in V_G$ leads to an edge $(e',x',y') \in E_\double{G}$ with $q'(e',x',y') = (e,x,y)$. By definition of $q'$, that means that $q(e') = e$.  
\end{proof}

\begin{lemma} \label{lem:double_almost_adjoint_to_U}
	Suppose that $\phi : G \to \double{H}$ is a directed emulator, then, there is an emulator $\phi' : G' \to H$ with $g(G') = g(G)$. 
\end{lemma}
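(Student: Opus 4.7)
The plan is to take $G' = U(G)$ (the underlying undirected graph of $G$) and to produce $\phi'$ from $\phi$ via the adjunction established in Lemma~\ref{lem:double_is_right_adjoint_to_U}. Genus equality will then be immediate from Lemma~\ref{lem:genus_forget}, which gives $g(G') = g(U(G)) = g(G)$, so the only real work is to check that $\phi'$ really is an undirected emulator.

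Write $\phi = (p,q) : G \to \double{H}$ and set
$$\phi' = \Phi_{G,H}(\phi) = (p,\; \pi_{3}^{1} \circ q) : U(G) \longrightarrow H,$$
where $\pi_{3}^{1}:E_{\double{H}} \to E_{H}$, $(e,x,y)\mapsto e$, is the projection appearing in the proof of Lemma~\ref{lem:double_is_right_adjoint_to_U}. Surjectivity of the vertex map $p$ is inherited from the hypothesis that $\phi$ is a directed emulator. So it remains to verify Fellows' edge-lifting property for $\phi'$.

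To do this, fix $e \in E_{H}$ with $\partial_{H}(e) = \{x,y\}$ (possibly $x = y$) and fix $x' \in V_{U(G)} = V_{G}$ such that $p(x') = x$. In $\double{H}$ the edge $(e,x,y)$ satisfies $s_{\double{H}}(e,x,y) = x = p(x')$. Applying the directed emulator property of $\phi$ at the vertex $x'$ and the edge $(e,x,y)$, we obtain $e' \in E_{G}$ with $q(e') = (e,x,y)$ and $s_{G}(e') = x'$. Then $\pi_{3}^{1}(q(e')) = e$, and the undirected boundary $\partial_{U(G)}(e') = \{s_{G}(e'), t_{G}(e')\}$ contains $x'$, which is exactly what the emulator definition demands.

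No genuine obstacle arises; the construction is essentially forced by the adjunction, and the verification is a one-step unpacking of the definitions. The only mildly subtle point is handling the case where $e$ is a loop of $H$: there $(e,x,x)$ is a loop of $\double{H}$ and the argument above still produces an edge $e'$ of $G$ whose two undirected endpoints include $x'$, which is all the undirected emulator condition requires.
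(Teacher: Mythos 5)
Your proposal is correct and follows essentially the same route as the paper: take $G' = U(G)$, obtain $\phi' = (p,\pi_3^1\circ q)$ from the adjunction of Lemma~\ref{lem:double_is_right_adjoint_to_U}, and lift an undirected edge $e$ with $\partial_H(e)=\{x,y\}$ through the directed edge $(e,x,y)$ of $\double{H}$ using the directed emulator property of $\phi$. The genus equality via Lemma~\ref{lem:genus_forget} and the loop case are handled exactly as in the paper's argument.
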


\begin{proof}
	Since $U$ is left adjoint to $\double{(-)}$, Lemma \ref{lem:double_is_right_adjoint_to_U} provides an undirected graph morphism $\phi':U(G) \to H$. It is clear that $g(\double{U(G)}) = g(G)$. It remains to check that $\phi'$ is an emulator morphism. For this, let $\phi = (p,q)$ so that $\phi'=(p,\pi^1_3 \circ q)$ with $\pi_3^1(e,x,y) = e$. Let $e\in E_H$ such that $\partial_H(e) = \{x,y\}$ and $x' \in V_{U(G)} = V_G$ such that $p(x') = x$. Then, $\arete{x}{(e,x,y)}{y} \in \double{H}$ and $p(x') = x$ implies the existence of $e'\in E_G$ such that $q(e') = (e,x,y)$ with $s_G(e') = x'$. We have $\pi_1^3\circ q(e') = e$ and $p(e') = x'$ as expected.  
\end{proof}

\begin{remark}
	The converse of Lemma \ref{lem:double_almost_adjoint_to_U} does not hold: the directed graph morphism $G \to \double{H}$ induced by an emulator morphism $U(G) \to H$ is not a directed emulator morphism in general. A counterexample is provided by $U(G) = H =
	\begin{tikzpicture}[-,transform shape,scale=0.6, baseline=-1mm]]
	\node[etat] (H1) at (0,0){};
	\node[etat] (H2) at (2,0) {};
	\node[etat] (H3) at (4,0){};
	\path [] (H1) edge node []{} (H2)
	(H2) edge node []{} (H3);
	\end{tikzpicture}$. 
\end{remark}

\begin{remark} \label{rem:does_not_work_for_covers}
	The isomorphism ${\rm{Hom}}_{\DG}(G, \double{H}) \to {\rm{Hom}}_{\G}(U(G), H)$ provided by Lemma~\ref{lem:double_is_right_adjoint_to_U} is shown in the proof of Lemma~\ref{lem:double_almost_adjoint_to_U} to restrict to an isomorphism from directed emulators to undirected emulators. However, it does not restrict to an isomorphism from \emph{directed covers} to \emph{undirected covers}. 
A counterexample is provided by $G = \begin{tikzpicture}[transform shape, scale=0.6,baseline=-1mm]
	\node[etat] (H1) at (0,0){};
	\node[etat] (H2) at (2,0){};
	\path[->] (H1) edge[bend left] node []{} (H2)
	(H2) edge[bend left] node []{} (H1);
	\end{tikzpicture}$ and $H = \begin{tikzpicture}[transform shape, scale=0.6,baseline=-1mm]
	\node[etat] (H1) at (0,0){};
	\node[etat] (H2) at (2,0){};
	\path[-] (H1) edge node []{} (H2);
	\end{tikzpicture}$.

\end{remark}

\begin{wrapfigure}{r}[-2mm]{0.15\textwidth}
\begin{tikzpicture}[-, every loop/.style={}, transform shape, scale=0.5, baseline=-2mm]
 \tikzset{vertex/.style={circle,minimum size=12pt,inner sep=4pt}}
\node[vertex](G0) at (0,0) [shape=circle,draw=black] {};
\node[vertex] (G1) at (0,-2)[shape=circle,draw=black]  {};
\node[] (F) at (0,-3.75) {\Huge $\downarrow$};
\node[vertex](G2) at (0,-6) [shape=circle,draw=black] {};
\path[-, line width = 0.5mm] (G0) edge node [] {} (G1);
\path[-, line width = 0.5mm] (G2) edge [loop, out = 220, in = 140, distance = 2cm] node {} (G2);
\end{tikzpicture}
\end{wrapfigure}

A (complete) \emph{direction} $\vec{G}$ of an undirected graph $G$ is a subgraph $H$ of $\double{G}$ such that $U(H)$ is isomorphic to $G$. It is easy to see that given a (complete) direction $\vec{G}$ of $G$ and a graph morphism $G' \to G$, there exists a (complete) direction $\vec{G'}$ of $G'$ inducing a directed graph morphism $\vec{G'} \to \vec{G}$. This observation does not hold in the categories of emulators and directed emulators. For instance, the picture opposite  represents an undirected emulator (actually even an undirected cover). However, there is no choice of directions that turns it into a directed emulator: no matter which directions are chosen, one of the two vertices in the preimage of the vertex fails to satisfy the lifting outgoing edge property.

\begin{lemma}[Lifting lemma] \label{lem:from_emulator_to_directed_emulator}
	Let $\phi: G' \to G$ be an undirected graph emulator morphism onto a loopless graph $G$. Given a direction $\vec{G}$ of $G$, there is a direction $\vec{G'}$ of $G'$ and a directed emulator $\phi':\vec{G'} \to \vec{G}$.
\end{lemma}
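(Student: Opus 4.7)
The plan is to define $\vec{G'}$ by pulling back the orientation of $\vec{G}$ along $\phi$, edge by edge. The loopless hypothesis on $G$ is the key structural ingredient: it guarantees that every edge of $G$ has two distinct endpoints, so that any edge $e' \in E_{G'}$ with $q(e') = e$ will have exactly one endpoint mapping to $s_{\vec{G}}(e)$ and exactly one to $t_{\vec{G}}(e)$, giving an unambiguous orientation of $e'$.

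First I would check that $G'$ itself is loopless. Any loop $e'$ in $G'$ with $\partial_{G'}(e') = \{x'\}$ would satisfy $\partial_{G}(q(e')) = p^{\otimes 2}(\{x'\}) = \{p(x')\}$, producing a loop in $G$ and contradicting the hypothesis. Next I would define the orientation: for each $e' \in E_{G'}$, write $e = q(e')$ with $s_{\vec{G}}(e) = x$ and $t_{\vec{G}}(e) = y$; since $x \ne y$ and the adjacency relation forces $\{p(x'), p(y')\} = \{x,y\}$ for $\partial_{G'}(e') = \{x',y'\}$, exactly one of $x',y'$ maps to $x$ and the other to $y$. I orient $e'$ from the preimage of $x$ to the preimage of $y$. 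The resulting graph $\vec{G'}$ is by construction a subgraph of $\double{G'}$ whose underlying undirected graph is isomorphic to $G'$, hence a direction of $G'$.

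From the orientation rule, $\phi' = (p,q) : \vec{G'} \to \vec{G}$ respects both source and target, so it is a directed graph morphism; surjectivity of $p$ is inherited from $\phi$. For the outgoing edge lifting property, given an edge $e \in E_{\vec{G}}$ and a vertex $x' \in V_{\vec{G'}}$ with $p(x') = s_{\vec{G}}(e)$, the undirected emulator assumption on $\phi$ produces $e' \in E_{G'}$ with $q(e') = e$ and $x' \in \partial_{G'}(e')$; combined with $p(x') = s_{\vec{G}}(e) \ne t_{\vec{G}}(e)$, the orientation rule forces $s_{\vec{G'}}(e') = x'$, as required.

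I do not foresee a serious obstacle. The whole argument rests on the fact that, in the loopless setting, the orientation of an edge of $G'$ is canonically determined by the orientation of its image; the paragraph preceding the lemma already indicates, via the pictured counterexample, that the loopless hypothesis on $G$ is essential and cannot be dropped, since a loop in $\vec{G}$ does not impose any constraint that would force a coherent choice of orientations on its preimages.
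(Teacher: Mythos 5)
Your proof is correct, and it takes a more elementary route than the paper's. The paper proves the lemma by composing machinery it has already built: it passes to the bidirection $\double{\phi}:\double{G'}\to\double{G}$, which is a directed emulator by Lemma~\ref{lem:double_respect_emulators}, then applies the pullback statement (Corollary~\ref{sub-graph-preserve}) to the subgraph $\vec{G}\subseteq\double{G}$ to get a directed emulator $H=\double{\phi}^{-1}(\vec{G})\to\vec{G}$ with $H$ a subgraph of $\double{G'}$, and finally observes that looplessness of $G$ makes $H$ a direction of $G'$. You instead construct the orientation of each edge $e'$ of $G'$ by hand from the orientation of $q(e')$ and verify the morphism and lifting properties directly. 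The two constructions produce the same directed graph --- your $\vec{G'}$ is exactly $\double{\phi}^{-1}(\vec{G})$ --- but your argument is self-contained and makes explicit where looplessness enters (well-definedness of the pulled-back orientation, and the forcing of $s_{\vec{G'}}(e')=x'$ in the lifting step), whereas the paper's argument delegates the emulator verification to the pullback lemma and is shorter at the cost of relying on that prior result. One small point: you read Fellows' definition as requiring the lifted edge $e'$ to be incident to $x'$; the paper's literal statement of that definition omits the incidence clause, but its own use of the definition (e.g., in the proof of Lemma~\ref{lem:double_respect_emulators}) confirms your reading is the intended one, so nothing is lost.
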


\begin{proof} By Lemma~\ref{lem:double_respect_emulators},  $\double{\phi}: \double{G'} \to \double{G}$ is a directed emulator morphism. Then, apply Corollary~\ref{sub-graph-preserve} to the subgraph $\vec{G}$ of $\double{G}$: there is a subgraph $H = \double{\phi}^{-1}(\vec{G})$ of $\double{G'}$ such that $\double{\phi}$ restricts to a directed emulator $H \to \vec{G}$. If $G$ is loopless, then $H = \double{\phi}^{-1}(\vec{G})$ is a direction of $G'$. 
\end{proof}

\begin{remark}
If $G$ has loops then $H$ may not be a direction of $G$, as the example in the previous paragraph shows.
\end{remark}

The undirected version of the extraction of a directed cover from a directed emulator  (Lemma \ref{lem:directed-emulator-contains-directed-cover}) fails to hold. The following picture depicts a $4$-vertex undirected emulator of a simple undirected $3$-vertex graph that does not contain a $4$-vertex cover as a subgraph.

\begin{center}
	\begin{tikzpicture}[-,transform shape,scale=0.6, baseline=-2mm]]
	\node[etat] (H1) at (0,0){};
	\node[etat] (H2) at (4,0) {};
	\node[etat] (H3) at (2,0.5){};
	\node[etat] (H4) at (2,-0.5){};
	\node[etat] (G1) at (0,-2){};
	\node[etat] (G2) at (2,-2){};
	\node[etat] (G3) at (4,-2){};
	\path [] (H1) edge node []{} (H3)
	(H3) edge node []{} (H2);
	\path[] (H1) edge node []{} (H4)
	(H4) edge node[]{} (H2);
	\path[] (G1) edge node []{} (G2)
	(G2) edge node []{} (G3);
	\end{tikzpicture}
\end{center}
It is left to the reader to verify on this example that any direction of
the bottom graph yields a directed emulator on the top by lifting directions
(Lemma \ref{lem:from_emulator_to_directed_emulator}) and that from it a directed cover with all the original vertices can
be extracted (Lemma \ref{lem:directed-emulator-contains-directed-cover}).

In other words, roughly speaking, it is much easier to find directed emulators compared to emulators (or bidirected emulators).

For the remainder of the paragraph, let us define, for a directed (resp. undirected) graph $G$, 
$$ g_{\rm{cover}}(G) = \min \{ g(G') \ | \ G'\ {\rm{covers}}\ G\}, \ \ g_{\rm{em}}(G) = \min \{ g(G') \ | \ G'\ {\rm{emulates}}\ G\}.$$

By Proposition~\ref{prop:directed_is_cover} above, $g_{\rm{cover}}(G) = g_{\rm{em}}(G)$ for any {\emph{directed}} graph $G$. One can ask whether this equality also holds for undirected graphs. In 1999, 
P.~Hlin\v{e}n\`y \cite{Hl99} found an
emulator such that $$ g_{\rm{em}}(G) \leq 3 < g_{\rm{cover}}(G).$$
In particular, there exist undirected emulators of $G$ that do not contain covers of $G$.

M.~Fellows proposed the following conjecture for undirected graphs in 1985.

\begin{conj} \label{conj:fellows}
	A connected graph has a finite planar emulator if and only if it has a finite planar cover.
\end{conj}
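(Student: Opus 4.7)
The plan would be to attack the two directions separately. The ``only if'' direction is immediate because every cover is in particular an emulator: if $G$ has a finite planar cover $G'$, then $G'$ itself is a finite planar emulator. So everything interesting happens in the ``if'' direction, and the natural strategy, in the spirit of this paper, is to pass through directed graphs and invoke the directed analogue (Proposition~\ref{prop:directed_is_cover}), which asserts precisely the corresponding equality of minimal emulator and cover genera in the directed setting.

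Concretely, starting from a finite planar emulator $\phi:G'\to G$, the first move is to reduce to the loopless case by removing loops via excision (Lemma~\ref{lem:excision-preserves-genus}, together with the extension construction of Lemma~\ref{lem:from-cover-over-simple-excized-to-cover-over-original} to put the loops back at the end). Then pick a complete direction $\vec{G}$ of $G$ and apply the Lifting Lemma (Lemma~\ref{lem:from_emulator_to_directed_emulator}) to obtain a direction $\vec{G'}$ of $G'$ together with a directed emulator $\vec{\phi}:\vec{G'}\to\vec{G}$; since the geometric realization is unchanged, $\vec{G'}$ remains planar. Next apply the extraction lemma (Lemma~\ref{lem:directed-emulator-contains-directed-cover}) to carve out a directed cover $\vec{H}\subseteq\vec{G'}$ of $\vec{G}$ on the same vertex set; by Lemma~\ref{lem:removing-edge-does-not-increase-genus}, $\vec{H}$ is also planar. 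Finally, forget directions via the functor $U$ to get a candidate planar undirected cover $U(\vec{H})\to G$.

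The main obstacle — and in fact the point where the whole approach breaks down — is precisely the last step: the functor $U$ does not send directed covers to undirected covers, as emphasized in Remark~\ref{rem:does_not_work_for_covers}. Directed covers bijectively match \emph{outgoing} edges only, whereas an undirected cover requires a bijection on the full set of incident edges at each vertex. The 4-vertex example at the end of \S\ref{subsec:undirected_emulators} (the undirected emulator of a 3-vertex path whose directed extraction does not descend to an undirected cover) is exactly a local witness that this strategy cannot work as stated. Any purely local edge-lifting argument seems doomed to face this gap between the outgoing lifting property and the symmetric lifting property required in the undirected world.

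Given the discussion in the introduction citing the Rieck--Yamashita counterexample~\cite{RY09}, I do not expect the conjecture to hold in full generality, and no amount of refinement of the above strategy will rescue it. A realistic version of the plan is therefore to isolate combinatorial conditions on $G$ (for instance, loopless, sufficiently high girth, or $2$-edge-connected with balanced in/out behavior under a well-chosen direction) under which the extracted directed cover $\vec{H}$ \emph{does} forget to an undirected cover. Under such hypotheses, the chain ``emulator $\Rightarrow$ directed emulator (lifting) $\Rightarrow$ directed cover (extraction) $\Rightarrow$ undirected cover (forget)'' closes up, and one recovers a conditional version of Fellows' conjecture compatible with the Hlin\v en\'y and Rieck--Yamashita counterexamples.
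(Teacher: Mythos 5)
You were asked to prove a statement that the paper does not prove and that is in fact false: this is Fellows' 1985 conjecture, stated in the paper as Conjecture~\ref{conj:fellows} purely for historical context, and the very next sentence of the paper records that it was disproved by Rieck and Yamashita in 2009 \cite{RY09} (so there exist graphs with $g_{\rm{em}}(G)=0$ but $g_{\rm{cover}}(G)\geq 1$). Your proposal correctly refuses to prove it, correctly isolates the trivial direction (a cover is an emulator), and correctly identifies the precise point where the directed-graph strategy collapses: the forgetful functor $U$ does not carry directed covers to undirected covers (Remark~\ref{rem:does_not_work_for_covers}), because a directed cover only controls outgoing edges while an undirected cover must bijectively lift all incident edges. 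This is exactly the asymmetry the paper emphasizes when contrasting Proposition~\ref{prop:directed_is_cover} (where emulator genus equals cover genus for directed graphs) with the undirected situation (Hlin\v{e}n\'y's example and the Rieck--Yamashita counterexample).

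One small caution: your suggestion that some conditional version might be salvaged under girth or connectivity hypotheses is speculative and not supported by anything in the paper; the known counterexamples are not obviously excluded by such conditions, so any such refinement would need to be checked against them explicitly. But as a diagnosis of the statement's status and of why the ``emulator $\Rightarrow$ directed emulator $\Rightarrow$ directed cover $\Rightarrow$ undirected cover'' chain cannot close, your answer is correct and consistent with the paper.
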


The conjecture was proved false in 2009 by Y.~Rieck and Y.~Yamashita who found a counterexample \cite{RY09}. Therefore, there are graphs $G$ such that $g_{\rm{em}}(G) = 0$ and $g_{\rm{cover}}(G) \geq 1$.

The following conjecture is still open.

\begin{conj}[Negami]
A connected graph has finite planar cover if and only if its embeds into the projective plane.
\end{conj}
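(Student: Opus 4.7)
The plan is to treat the two implications separately, acknowledging from the outset that only one of them is actually known to be within reach.

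For the easy direction — if $G$ embeds in the projective plane then $G$ has a finite planar cover — I would argue topologically. The sphere $S^2$ is the orientation double cover of the projective plane $\mathbb{RP}^2$ via the antipodal identification $p : S^2 \to \mathbb{RP}^2$. Given an embedding $i : G \hookrightarrow \mathbb{RP}^2$, the pullback $\tilde G := p^{-1}(i(G))$ is a finite graph embedded in $S^2$, hence planar, and the restricted map $p|_{\tilde G} : \tilde G \to G$ is a $2$-to-$1$ local homeomorphism. Its combinatorial incarnation is a graph cover of index $2$: verification reduces to checking that over each vertex there are exactly two preimages and over each edge exactly two edges, with the obvious incidences — a routine check, analogous in spirit to the duplication construction used in Lemma~\ref{lem:double_respect_emulators}.

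For the hard direction — the actual content of Negami's open conjecture — my strategy would be a forbidden-minor analysis. The class of projective-planar graphs is characterized (Archdeacon; Glover–Huneke–Wang) by a finite list of forbidden minors. The class of graphs admitting a finite planar cover is closed under taking minors, since a minor operation on $G$ can be lifted to a minor operation on any planar cover, yielding a planar cover of the minor. Hence, to prove that having a finite planar cover implies projective-planarity, it suffices to exhibit, for each of the forbidden minors $H$ for embedability in the projective plane, a proof that $H$ has no finite planar cover. Negami has reduced this task considerably by showing that among these forbidden minors, all but $K_{1,2,2,2}$ are already known to have no finite planar cover.

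The main — and at present the only — obstacle is therefore showing that $K_{1,2,2,2}$ does not admit any finite planar cover. Partial results exist (no regular planar cover; no planar cover of small index; restrictions on bipartite covers), but no general argument is known. An attack might proceed by assuming a finite planar cover $\tilde G \to K_{1,2,2,2}$ of index $n$ and exploiting the strong symmetry of $K_{1,2,2,2}$, together with Euler's formula and counts of triangular faces, to force a contradiction uniformly in $n$. This is precisely the point at which the conjecture has resisted resolution for four decades, and I would not expect this plan to succeed without a substantively new combinatorial or topological idea — for example, a novel invariant of planar covers that obstructs non-projective minors but is not captured by the standard Euler-characteristic bookkeeping.
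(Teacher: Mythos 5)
You have not proved the statement, and neither does the paper: this item is stated as a \emph{conjecture} (Negami's conjecture) and the paper explicitly says it ``is still open,'' offering no proof at all. So there is no paper proof to compare against, and any complete proof you produced would be a major new result rather than a reconstruction.

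That said, your write-up is honest and essentially accurate about the state of the art. The easy direction is correct: the antipodal quotient $p:S^2\to\mathbb{RP}^2$ restricted to the preimage of an embedded graph is a $2$-fold graph cover with planar total space, and the combinatorial verification is as routine as you say. For the converse, your reduction is the standard one — minor-closedness of the class of graphs with finite planar covers, the Archdeacon/Glover--Huneke--Wang forbidden-minor list for the projective plane, and Negami's reduction of the problem to showing that $K_{1,2,2,2}$ (and a handful of graphs containing it) has no finite planar cover. But that last step is precisely the open content of the conjecture, and you acknowledge you cannot supply it. The concrete gap, then, is the nonexistence of a finite planar cover of $K_{1,2,2,2}$: without that, the ``only if'' direction is unproven, and Euler-characteristic counting alone is known to be insufficient (it does not rule out covers of all indices). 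Your proposal should be read as a correct account of why the conjecture is open, not as a proof.
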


\section{An automatic description of directed emulators and covers}

\newcommand{\sursim}[1]{{#1}{/\!\!\sim}}


\subsection{Automatic relations}

In this paragraph we give an alternative description of directed emulators and covers. Let $G = (V,E,s,t)$ be a directed graph.

\begin{definition} \label{def:relations_on_graph}
A pair of equivalence relations $(\sim_V,\sim_E)$ on respectively $V$ and $E$ is said to be \emph{automatic} when for all edges $e, e'$, and any vertex $x'$:
\begin{itemize}
\item[(i)]$e \sim_{E} e'\ \Longrightarrow\ s(e) \sim_V s(e')\ \wedge\ t(e) \sim_V t(e')$;
\item [(ii)] $x' \sim_{V} s(e) \ \Longrightarrow\ \exists e' \in E : e' \sim_{E} e \ \wedge \ s(e') = x'$.
\end{itemize}
\end{definition}
Clause (i) is next called \emph{compatibility} (of $\sim_E$ with respect to $\sim_V$). Clause (ii) is the \emph{bisimilarity} of $\sim_V$ with respect to $\sim_E$. 

Given an equivalence relation $\sim_V$ on vertices, let $e \sim_E e' \Longleftrightarrow (s(e) \sim_V s(e')) \wedge (t(e) \sim_V t(e'))$. Note that $\sim_E$ is compatible with $\sim_V$. If $\sim_V$ is bisimilar with respect to $\sim_E$, the pair $(\sim_V,\sim_E)$ forms an automatic relation. Such a relation is said to be vertex-induced.  

Given an equivalence relation $\sim$ on $V$, we denote $[x]_\sim = \{ z \in V \mid z \sim x\}$ be the equivalence class of $x \in V$.  If the context is clear, we write $[x]$ for $[x]_\sim$. We let $\sursim{V} = \{ [v] \mid v \in V\}$. In order to avoid notation overload we remove subscripts from $\sim_{V}$ and $\sim_{E}$ whenever this does not seem to cause confusion.

\begin{lemma} \label{lem:automatic_merge_well_defined}
An automatic relation $\sim$ on $G$ induces a new directed graph $\sursim{G} = (\sursim{V_G}, \sursim{E_G}, \sursim{s},\sursim{t})$ together with a morphism $(x\mapsto [x]_{\sim_V}, e \mapsto [e]_{\sim_E}) : G \to \sursim{G}$ called the canonical morphism (with respect to $\sim$). It is a directed emulator.
\end{lemma}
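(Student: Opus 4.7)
The plan is to build the quotient structure step by step and then verify the two emulator axioms.

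First I would define $\sursim{s}$ and $\sursim{t}$ on $\sursim{E_G}$ by $\sursim{s}([e]) = [s(e)]$ and $\sursim{t}([e]) = [t(e)]$, then check well-definedness: if $e \sim_E e'$, compatibility (clause (i) of Definition~\ref{def:relations_on_graph}) gives $s(e) \sim_V s(e')$ and $t(e) \sim_V t(e')$, so $[s(e)] = [s(e')]$ and $[t(e)] = [t(e')]$. This makes $\sursim{G} = (\sursim{V_G}, \sursim{E_G}, \sursim{s}, \sursim{t})$ a genuine directed graph.

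Next, setting $p: x \mapsto [x]_{\sim_V}$ and $q: e \mapsto [e]_{\sim_E}$, I would check that $(p,q)$ is a directed graph morphism. The incidence relations (\ref{eq:incidence_relation}) follow directly from the definitions of $\sursim{s}$ and $\sursim{t}$: $\sursim{s}(q(e)) = \sursim{s}([e]) = [s(e)] = p(s(e))$ and similarly for $t$. So $(p,q): G \to \sursim{G}$ is a morphism.

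It remains to show that $(p,q)$ is a directed emulator morphism. Surjectivity of $p$ on vertices is immediate since every class $[v] \in \sursim{V_G}$ has the representative $v$. For the edge outgoing lifting property of Definition~\ref{def:directed_emulator}, take an edge $[e] \in E_{\sursim{G}}$ and a vertex $x' \in V_G$ with $p(x') = \sursim{s}([e])$, i.e., $[x'] = [s(e)]$, that is $x' \sim_V s(e)$. The bisimilarity condition (clause (ii) of Definition~\ref{def:relations_on_graph}) provides an edge $e' \in E_G$ with $e' \sim_E e$ and $s(e') = x'$. Then $q(e') = [e'] = [e]$ and $s_G(e') = x'$, exactly as required. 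Thus $(p,q)$ is a directed emulator morphism.

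The only step I expect to require genuine care is the well-definedness of $\sursim{s}$ and $\sursim{t}$, since this is precisely what the compatibility clause (i) is designed for; every other step is essentially a translation of the two clauses of the automatic relation into the two clauses of the emulator definition.
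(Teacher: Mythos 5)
Your proof is correct and follows essentially the same route as the paper's: compatibility (clause (i)) makes the quotient source and target maps well defined, and bisimilarity (clause (ii)) yields the outgoing edge lifting property. You simply spell out the well-definedness and morphism checks that the paper leaves implicit, which is fine.
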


\begin{proof}It is clear that $[-]_\sim$ is an epimorphim. Let us verify the outgoing edge lifting property. Let $v \in V$ and let
 $\arete{[v]}{[e]}{[w]}$ be an edge in $\sursim{G}$. Since $[-]_{\sim}$ is onto, there is an edge $\arete{x}{e}{y} \in E$ such that $v \sim x$. By bisimilarity, there is an edge $\arete{v}{e'}{v'}$ in $G$ such that $e' \sim e$ so that $[e'] = [e]$. 
\end{proof}

We say that $\sursim{G}$ is an {\emph{automatic quotient}} of $G$.

\begin{example} Given any directed graph $G$, the identity relation $\sim$ on both vertices and edges is automatic and $\sursim{G} \simeq G$.
\end{example}

\begin{lemma} \label{lem:max_auto_implies_simple}
If $\sim$ is vertex-induced automatic then $\sursim{G}$ is simple.
\end{lemma}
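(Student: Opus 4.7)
The plan is to argue directly from the definition of simple (Section~\ref{sec:prelim}): we must show that the ordered boundary map $\Delta_{\sursim{G}}: E_{\sursim{G}} \to V_{\sursim{G}} \times V_{\sursim{G}}$ is injective. So I would take two edges $[e_1]$ and $[e_2]$ in $\sursim{G}$ with $\Delta_{\sursim{G}}([e_1]) = \Delta_{\sursim{G}}([e_2])$ and aim to conclude that $[e_1] = [e_2]$, i.e.\ that $e_1 \sim_E e_2$ in $G$.

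First I would recall from Lemma~\ref{lem:automatic_merge_well_defined} that the source and target maps on $\sursim{G}$ are given by $\sursim{s}([e]) = [s(e)]$ and $\sursim{t}([e]) = [t(e)]$. Hence the equality $\Delta_{\sursim{G}}([e_1]) = \Delta_{\sursim{G}}([e_2])$ unpacks into $[s(e_1)]_{\sim_V} = [s(e_2)]_{\sim_V}$ and $[t(e_1)]_{\sim_V} = [t(e_2)]_{\sim_V}$, which by definition of equivalence classes says exactly $s(e_1) \sim_V s(e_2)$ and $t(e_1) \sim_V t(e_2)$.

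Here is where the hypothesis that $\sim$ is vertex-induced enters decisively: by the very definition of $\sim_E$ for a vertex-induced automatic relation, one has
\[
e_1 \sim_E e_2 \ \Longleftrightarrow\ s(e_1) \sim_V s(e_2)\ \wedge\ t(e_1) \sim_V t(e_2).
\]
The two conditions obtained in the previous step give precisely the right-hand side, so $e_1 \sim_E e_2$, hence $[e_1] = [e_2]$ as required.

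There is essentially no obstacle here: the bisimilarity clause (ii) from Definition~\ref{def:relations_on_graph} is not even needed for this statement; only the compatibility clause (i) is used implicitly, through the well-definedness of $\sursim{s}, \sursim{t}$, together with the converse implication built into the vertex-induced definition of $\sim_E$. The subtlety worth flagging in the write-up is that simplicity really hinges on this converse implication, which is the extra content of ``vertex-induced'' over mere automaticity.
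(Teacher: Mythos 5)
Your proof is correct and is exactly the direct unpacking of the definitions that the paper intends: the paper states this lemma without proof, classifying it among the statements that are immediate consequences of the definitions, and your argument (same ordered boundary in $\sursim{G}$ means $s(e_1)\sim_V s(e_2)$ and $t(e_1)\sim_V t(e_2)$, hence $e_1\sim_E e_2$ by the defining biconditional of a vertex-induced relation) is precisely that immediate consequence. Your remark that only the converse implication built into ``vertex-induced'' is needed, and not bisimilarity, is accurate.
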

Let $1_V$ be the equality on the set $V$ of vertices of some graph $G$. We denote $\sim_G$ its vertex-induced automatic relation. 

\begin{corolem} $G/{\sim_G} \simeq R(G)$. 
\end{corolem}

\begin{prop}\label{pr:presentation_of_covers}
The morphism associated to a pair of automatic relations $(\sim_V,\sim_E)$ is a cover if and only if (iii): $(e\sim_E e' \ \wedge\ e \neq e')  \Longrightarrow \ s(e) \neq s(e')$.
\end{prop}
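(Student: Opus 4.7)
The plan is to unpack the definition of a directed cover in this particular setting and show that the extra uniqueness requirement translates exactly into condition (iii). By Lemma~\ref{lem:automatic_merge_well_defined}, the canonical morphism $\pi = ([-]_{\sim_V}, [-]_{\sim_E}) : G \to \sursim{G}$ is already a directed emulator, so in view of Definition~\ref{def:directed_emulator}, we only need to characterize when the lift of each outgoing edge at a specified vertex is \emph{unique}.

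For the direction ``cover $\Rightarrow$ (iii)'', I would argue by contraposition. Suppose (iii) fails: there exist distinct edges $e, e' \in E_G$ with $e \sim_E e'$ and $s(e) = s(e')$. Set $\bar e = [e]_{\sim_E} = [e']_{\sim_E}$, an edge of $\sursim{G}$ with source $\sursim{s}(\bar e) = [s(e)]_{\sim_V}$. Then both $e$ and $e'$ are lifts of $\bar e$ starting at the vertex $s(e) = s(e')$, contradicting uniqueness in Definition~\ref{def:directed_emulator}(ii). Hence $\pi$ cannot be a directed cover.

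For the converse, suppose (iii) holds, and let $\bar e \in E_{\sursim{G}}$ together with a vertex $x' \in V_G$ satisfying $[x']_{\sim_V} = \sursim{s}(\bar e)$. By the emulator property (Lemma~\ref{lem:automatic_merge_well_defined}) there exists at least one edge $e'_1 \in E_G$ with $[e'_1]_{\sim_E} = \bar e$ and $s(e'_1) = x'$. If $e'_2$ is another such lift, then $e'_1 \sim_E e'_2$ and $s(e'_1) = x' = s(e'_2)$, so by (iii) we must have $e'_1 = e'_2$. Thus the lift is unique, so $\pi$ is a directed cover.

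The proof is essentially a direct unpacking of definitions, so no step should be delicate; the only mild subtlety to watch for is that both implications must be compatible with the automatic relation hypotheses already in force (so that existence of lifts is free and one only trades against uniqueness). No separate calculation on vertices is required since $\pi$ on vertices is just the quotient map $v \mapsto [v]_{\sim_V}$ and plays no role in the cover-versus-emulator distinction.
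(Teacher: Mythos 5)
Your proof is correct and is essentially the paper's argument written out in full: the paper simply notes that clause (iii) is a reformulation of the injectivity of the edge projection on each centripetal star, which is exactly the uniqueness-of-lifts condition you verify in both directions.
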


\begin{proof} Clause (iii) is a reformulation of the injectivity of the projection.
\end{proof}

In other words, adding clause (iii) in Definition~\ref{def:relations_on_graph} lead to a description of covers. 

\begin{prop} \label{lem:diem_implies_automatic}
Let $(p,q):G \to H$ be a directed emulator morphism. The relation $\sim$ defined on $G$ by $v \sim_{V} w$ if $p(v) = p(w)$ and by $e \sim_{E} e'$ if $q(e) = q(e')$ is an automatic equivalence relation on $G$. Furthermore, $H$ is simple if and only if $\sim$ is vertex-induced.
\end{prop}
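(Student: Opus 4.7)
The plan is to verify the two automatic axioms (i) and (ii) on the relation $(\sim_V,\sim_E)$ and then to examine when the ``converse'' of axiom (i) is forced, which turns out to be equivalent to simplicity of $H$.

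First, I observe that $\sim_V$ and $\sim_E$ are visibly equivalence relations since they are defined as kernels of the two component maps $p$ and $q$. To check compatibility (axiom (i) of Definition~\ref{def:relations_on_graph}), suppose $e \sim_E e'$, that is $q(e) = q(e')$. Applying the morphism identities (\ref{eq:incidence_relation}), I get
\[
p(s_G(e)) = s_H(q(e)) = s_H(q(e')) = p(s_G(e')),
\]
so $s(e) \sim_V s(e')$, and the same argument with $t_G, t_H$ gives $t(e) \sim_V t(e')$. To check bisimilarity (axiom (ii)), suppose $x' \sim_V s(e)$, i.e.\ $p(x') = p(s(e)) = s_H(q(e))$. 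I then invoke the edge outgoing lifting property of the directed emulator morphism (Definition~\ref{def:directed_emulator}(ii)) applied to the edge $q(e) \in E_H$ and the vertex $x'$: it yields $e' \in E_G$ with $s_G(e') = x'$ and $q(e') = q(e)$, hence $e' \sim_E e$ as required.

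For the ``furthermore'' part, recall that $\sim$ is \emph{vertex-induced} iff the converse of (i) holds, i.e.\ $s(e)\sim_V s(e')$ and $t(e)\sim_V t(e')$ imply $e \sim_E e'$. Suppose $H$ is simple and take edges $e,e'$ with matching source and target classes. Then $\Delta_H(q(e)) = (p(s(e)), p(t(e))) = (p(s(e')), p(t(e'))) = \Delta_H(q(e'))$, so simplicity of $H$ forces $q(e) = q(e')$, i.e.\ $e \sim_E e'$. Conversely, suppose $\sim$ is vertex-induced; I take $f_1, f_2 \in E_H$ with $\Delta_H(f_1) = \Delta_H(f_2)$ and use surjectivity of $q$ (Lemma~\ref{lem:direct-emulator-is-epimorphism}) to lift them to $e_1, e_2 \in E_G$. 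Applying the morphism identities again yields $p(s(e_i)) = s_H(f_i)$ and $p(t(e_i)) = t_H(f_i)$, so $s(e_1)\sim_V s(e_2)$ and $t(e_1)\sim_V t(e_2)$; the vertex-induced hypothesis then gives $e_1 \sim_E e_2$, i.e.\ $f_1 = q(e_1) = q(e_2) = f_2$, so $H$ is simple.

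The main content is really the use of (\ref{eq:incidence_relation}) to translate between the fibers of $p,q$ and the source/target data, together with the edge lifting property for axiom (ii); no step is a genuine obstacle since each is a direct unfolding of definitions. The only point that requires a bit of care is the surjectivity of $q$ in the backward direction of the simplicity equivalence, which is not immediate from the defining clauses of a directed emulator but follows from Lemma~\ref{lem:direct-emulator-is-epimorphism}.
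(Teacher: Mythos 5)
Your proof is correct and follows essentially the same route as the paper: compatibility via the adjacency relation, bisimilarity via the edge outgoing lifting property, and the simplicity equivalence via $\Delta_H\circ q = p^{\times 2}\circ\Delta_G$. The only cosmetic difference is in the direction ``vertex-induced $\Rightarrow H$ simple,'' where the paper cites Lemma~\ref{lem:max_auto_implies_simple} (simplicity of the quotient $G/\!\sim$, implicitly identified with $H$), while you argue directly by lifting a pair of parallel edges of $H$ through the surjection $q$ --- a slightly more self-contained version of the same argument.
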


\begin{proof}
The relation $\sim$ is clearly an equivalence relation. Let $x\sim y$ be two equivalent vertices. Let $\arete{x}{e}{x'} \in E_{G}$ be an edge. Applying $q$ yields an edge $\arete{p(x)}{q(e)}{p(x')} \in E_{H}$. By the outgoing edge lifting property, there is an edge $\arete{y}{e'}{y'}$. We have $q(e') = q(e)$ so $p(y') = p(x')$, i.e. $x' \sim_{V} y'$. Hence $\sim_{V}$ has the bisimilarity property with respect to $\sim_E$. Compatibility is a direct consequence of adjacency preservation of $q$ with respect to $p$. 

For the second statement, the quotient by a vertex-induced automatic relation is simple by Lemma~\ref{lem:max_auto_implies_simple}. Conversely, suppose that $H$ is simple. Consider a pair of edges $e, e' \in G$ such that $s(e) \sim_{V} s(e')$ and $t(e) \sim_{V} t(e')$, i.e.
$\Delta(q(e)) = \Delta(q(e'))$. Since $H$ is simple, $q(e) = q(e')$. Thus $e \sim_{E} e'$. The conclusion follows.
\end{proof}

\begin{definition}
The relation defined by Prop.~\ref{lem:diem_implies_automatic} is the {\emph{canonical automatic relation}} associated to the directed emulator morphism $G \to H$.
\end{definition}

\begin{prop}[composition of automatic relations] \label{prop:auto_rels_compose}
Let $\sim_1$ be an automatic relation on a directed graph $G$ and $\sim_2$ an automatic relation on $G/\sim_1$. Define the relation $\sim$ on $G$ by $e \sim e'$ if $[e]_{\sim_1} \sim_{2} [e]_{\sim_1}$ and by $v \sim v'$ if $[v]_{\sim_1} \sim_{e} [v']_{\sim_1}$. The relation $\sim$ is automatic and verifies $[-]_{\sim} = [-]_{\sim_2} \circ [-]_{\sim_1}$.
\end{prop}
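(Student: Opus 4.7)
The plan is to check, in turn, that $\sim$ is an equivalence relation on both vertices and edges, that it satisfies the two clauses of Definition~\ref{def:relations_on_graph}, and finally that the canonical morphism associated to $\sim$ factors as the composition of the canonical morphisms associated to $\sim_1$ and $\sim_2$. The equivalence property on both $V_G$ and $E_G$ is immediate since $\sim$ is defined as the pullback of the equivalence relation $\sim_2$ along the function $[-]_{\sim_1}$.

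For compatibility, suppose $e \sim e'$, i.e. $[e]_{\sim_1} \sim_2 [e']_{\sim_1}$. By compatibility of $\sim_2$, $\sursim{s}([e]_{\sim_1}) \sim_2 \sursim{s}([e']_{\sim_1})$ where $\sursim{s}$ denotes the source map in $G/\sim_1$. Since the canonical morphism $G \to G/\sim_1$ is a graph morphism (Lemma~\ref{lem:automatic_merge_well_defined}), we have $\sursim{s}([e]_{\sim_1}) = [s(e)]_{\sim_1}$, so $[s(e)]_{\sim_1} \sim_2 [s(e')]_{\sim_1}$, which is $s(e) \sim s(e')$ by definition of $\sim$. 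The same argument works for targets.

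The key step is bisimilarity, which I would prove by a two-stage lifting. Let $x' \in V_G$ and $e \in E_G$ with $x' \sim s(e)$, that is $[x']_{\sim_1} \sim_2 [s(e)]_{\sim_1} = \sursim{s}([e]_{\sim_1})$. Applying bisimilarity of $\sim_2$ in $G/\sim_1$, there exists an edge $\bar e \in E_{G/\sim_1}$ with $\bar e \sim_2 [e]_{\sim_1}$ and $\sursim{s}(\bar e) = [x']_{\sim_1}$. Pick any representative $\tilde e \in E_G$ with $[\tilde e]_{\sim_1} = \bar e$; then $[s(\tilde e)]_{\sim_1} = [x']_{\sim_1}$, i.e. $x' \sim_1 s(\tilde e)$. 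Now apply bisimilarity of $\sim_1$ in $G$ to obtain an edge $e' \in E_G$ with $e' \sim_1 \tilde e$ and $s(e') = x'$. By transitivity at the level of $\sim_2$, $[e']_{\sim_1} = [\tilde e]_{\sim_1} = \bar e \sim_2 [e]_{\sim_1}$, hence $e' \sim e$. This is the witness we need.

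Finally, for the factorization $[-]_\sim = [-]_{\sim_2} \circ [-]_{\sim_1}$, I would observe that by the very definition of $\sim$, two vertices (resp. edges) are $\sim$-equivalent if and only if their $\sim_1$-classes are $\sim_2$-equivalent. This means that the composite function $[-]_{\sim_2} \circ [-]_{\sim_1}$ has the same fibers as $[-]_\sim$, and induces a canonical bijection between $\sursim{V_G}$ (resp. $\sursim{E_G}$) and $(V_G/\sim_1)/\sim_2$ (resp. $(E_G/\sim_1)/\sim_2$). Compatibility of source and target maps with these bijections is immediate, so this bijection is an isomorphism of directed graphs under which the canonical morphism for $\sim$ equals $[-]_{\sim_2} \circ [-]_{\sim_1}$. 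The hardest part is the bisimilarity step above, which is the only place where both automatic assumptions are used essentially, and where one must be careful to choose the representative $\tilde e$ before invoking the bisimilarity of $\sim_1$.
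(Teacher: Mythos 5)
Your proof is correct and follows the same route as the paper, which simply asserts that $[e]_{\sim}=[[e]_{\sim_1}]_{\sim_2}$, $[v]_{\sim}=[[v]_{\sim_1}]_{\sim_2}$ and that automaticity "follows from the definition"; you supply the details it omits, in particular the two-stage lifting argument for bisimilarity (lift through $\sim_2$ in $G/\!\sim_1$, choose a representative, then lift through $\sim_1$ in $G$), which is exactly the content the paper leaves implicit. No gaps.
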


\begin{proof}
By definition, $[e]_{\sim} = [[e]_{\sim_1}]_{\sim_{2}}$ and $[v]_{\sim} = [[v]_{\sim_1}]_{\sim_2}$ for any edge $e$ and any vertex $v$ in $G$. The fact that the relation is automatic follows from the definition.
\end{proof}

\begin{remark} \label{rem:iso_induces_identity_for_auto_rel}
Recall that an isomorphism $\phi:G \to H$ of directed graphs is a directed cover (Example~\ref{ex:iso_is_directed_emulator}). The canonical automatic relation $\sim$ associated to any isomorphism $\phi$ is the identity relation (defined by the identity on the vertices and the identity on the edges).
\end{remark}

\begin{theorem}\label{th:unique_automatic_decomposition} Any directed emulator morphism $\phi: G \to H$ splits in a unique way as $\phi = \iota  \circ [-]_\sim$ where $\sim$ an automatic relation on $G$ and $\iota$ is an isomorphism.
\end{theorem}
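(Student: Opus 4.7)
The plan is to construct $\sim$ as the canonical automatic relation associated to $\phi$ via Proposition~\ref{lem:diem_implies_automatic}, then to exhibit $\iota$ on the quotient and verify it is an isomorphism, and finally to derive uniqueness from the right-cancellability of directed emulators (Proposition~\ref{pr:emu_right_cancel}).

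For existence, write $\phi = (p,q)$ and let $\sim$ be the canonical automatic relation: $v \sim_V w$ iff $p(v) = p(w)$, and $e \sim_E e'$ iff $q(e) = q(e')$. By Proposition~\ref{lem:diem_implies_automatic} this is indeed automatic, and by Lemma~\ref{lem:automatic_merge_well_defined} the projection $[-]_\sim : G \to G/\!\sim$ is a directed emulator. Define $\iota : G/\!\sim\, \to H$ on vertices by $\iota([v]_\sim) = p(v)$ and on edges by $\iota([e]_\sim) = q(e)$; these maps are well-defined by the very definition of $\sim$, and respect adjacency because $\phi$ does and $[-]_\sim$ is a morphism. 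The factorization $\phi = \iota \circ [-]_\sim$ holds by construction. Injectivity of $\iota$ on both vertices and edges is immediate from the definition of $\sim$, while surjectivity follows from $\phi$ being an epimorphism (Lemma~\ref{lem:direct-emulator-is-epimorphism}). Hence $\iota$ is an isomorphism.

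For uniqueness, suppose $\phi = \iota' \circ [-]_{\sim'}$ for some automatic relation $\sim'$ on $G$ and some isomorphism $\iota' : G/\!\sim'\, \to H$. On vertices, $p(v) = \iota'([v]_{\sim'})$, so $v \sim_V w$ iff $p(v) = p(w)$ iff $\iota'([v]_{\sim'}) = \iota'([w]_{\sim'})$ iff $[v]_{\sim'} = [w]_{\sim'}$ (using injectivity of $\iota'$) iff $v \sim'_V w$. The same reasoning on edges, using $q(e) = \iota'([e]_{\sim'})$, gives $\sim_E \,=\, \sim'_E$. Thus $\sim \,=\, \sim'$, the quotient graphs coincide, and the identity $\iota' \circ [-]_\sim = \phi = \iota \circ [-]_\sim$ together with Proposition~\ref{pr:emu_right_cancel} forces $\iota = \iota'$.

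The only mildly delicate point is the uniqueness half: one must extract $\sim$ itself from the hypothetical factorization before comparing the two isomorphisms, and this requires using injectivity of $\iota'$ at both the vertex and edge levels. Apart from this, the argument is an immediate assembly of the preceding lemmas, so I do not anticipate a substantive obstacle.
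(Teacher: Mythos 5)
Your proposal is correct and follows essentially the same route as the paper: the canonical automatic relation of Proposition~\ref{lem:diem_implies_automatic} gives existence, the induced map on classes is checked to be bijective, and uniqueness of $\sim$ is extracted from the injectivity of $\iota'$ (the paper phrases this as a case-by-case contradiction, you as a direct chain of equivalences, but the content is identical). Your explicit appeal to Proposition~\ref{pr:emu_right_cancel} to pin down $\iota$ is a small completeness bonus over the paper's write-up.
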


\begin{proof}  Let $\sim$ be the canonical automatic relation related to $\phi =(p,q)$. 
We define $\iota =(p',q')$ with $p' : [v]_\sim \mapsto p(v)$ and $q' : [e]_\sim \mapsto q(e)$. Let us verify that the definition is correct. For two edges $e_1$ and $e_2$, we have $[e_1]_\sim = [e_2]_\sim$ iff $e_1 \sim e_2$ iff $q(e_1) = q(e_2)$. Actually, this proves that $q'$ is injective. Moreover $q'$ is  surjective since $q$ is surjective itself. Similarly $p'$ is bijective. Thus $\iota$ is an isomorphism. 
Finally, by construction, $\iota \circ [-]_\sim =\phi$. 
	
Suppose there is an other decomposition $\phi = \iota' \circ [-]_{\sim'}$. Suppose that $[v]_\sim \neq [v]_{\sim'}$. Thus, either there is some $u \sim v$ and $u \not\sim' v$ or $u \not\sim v$ and $u \sim' v$. In the first case, $\iota'([u]_{\sim'}) = p(u) = p(v) = \iota'([v]_{\sim'})$ contradicts the fact that $\iota'$ is an isomorphism. Otherwise, $u \sim' v$ but not $u \sim v$. Thus, $p(v) \neq p(u)$. However, $u \sim' v$ means $[u]_{\sim'} = [v]_{\sim'}$ leading to $p(u) = \iota'([u]_{\sim'}) = \iota'([v]_{\sim'}) = p(v)$. Again, a contradiction. The same argument applies to edges.  
\end{proof}

\begin{example}
In the particular case when $\phi:G \to H$ is an isomorphism of directed graphs, $[-]_{\sim} = 1_{G}$ and $\iota = \phi$.
\end{example}

As suggested by the previous example, directed graph isomorphisms and canonical epimorphisms associated to automatic relations appear to be ``orthogonal'' notions. 

\begin{prop}
The pair $({\mathscr{A}},{\mathscr{I}})$, that consists of the class of canonical epimorphisms associated to automatic relations on one side and the class of directed graph isomorphims on the other, is a factorization system in the category $\Emu$.
\end{prop}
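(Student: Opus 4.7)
The plan is to check the axioms of an orthogonal factorization system in $\Emu$: $(a)$ every morphism admits a factorization $\iota \circ \alpha$ with $\alpha \in \mathscr{A}$ and $\iota \in \mathscr{I}$; $(b)$ both $\mathscr{A}$ and $\mathscr{I}$ contain the identities and are stable under composition; $(c)$ the unique diagonal fill-in property holds for any commutative square with an $\mathscr{A}$-arrow on the left and an $\mathscr{I}$-arrow on the right.

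Condition $(a)$ is exactly Theorem~\ref{th:unique_automatic_decomposition}, which moreover provides the uniqueness of such a factorization. For $(b)$: the class $\mathscr{I}$ of isomorphisms clearly contains identities and is stable under composition. For $\mathscr{A}$: the identity $1_G$ is the canonical epimorphism associated to the identity automatic relation on $G$, and the composition of two canonical epimorphisms is again a canonical epimorphism thanks to Proposition~\ref{prop:auto_rels_compose}.

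The substantive step is $(c)$. Given a commutative square
$$
\xymatrix{
G \ar[r]^-{[-]_{\sim}} \ar[d]_{f} & \sursim{G} \ar[d]^{g} \\
K \ar[r]^-{\iota} & K'
}
$$
in $\Emu$ with $[-]_{\sim} \in \mathscr{A}$ and $\iota \in \mathscr{I}$, I would set $d := \iota^{-1} \circ g : \sursim{G} \to K$. Then $\iota \circ d = g$ by construction, and $d \circ [-]_\sim = \iota^{-1} \circ g \circ [-]_\sim = \iota^{-1} \circ \iota \circ f = f$ using the commutativity of the square. Since $\iota^{-1}$ is an isomorphism of directed graphs, it is a directed emulator by Example~\ref{ex:iso_is_directed_emulator}, and directed emulators compose (Proposition~\ref{lem:composition_of_emulators}), so $d$ is a morphism of $\Emu$. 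Uniqueness follows at once from Proposition~\ref{pr:emu_right_cancel}: any other diagonal $d'$ would satisfy $d' \circ [-]_\sim = f = d \circ [-]_\sim$, hence $d' = d$ because $[-]_\sim$ is right-cancellative in $\Emu$.

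The only delicate point is to verify that the diagonal stays within the subcategory $\Emu$ rather than merely $\DG$, which is immediate from the observation that isomorphisms are directed emulators. The substance of the proposition, namely existence and essential uniqueness of the factorization, is really already contained in Theorem~\ref{th:unique_automatic_decomposition}; the remainder is verification of the orthogonality axiom.
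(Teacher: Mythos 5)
Your proof is correct and follows essentially the same route as the paper: both rest on Theorem~\ref{th:unique_automatic_decomposition} for the (unique) factorization and on Proposition~\ref{prop:auto_rels_compose} for closure of $\mathscr{A}$ under composition. The only difference is one of detail: where the paper simply asserts that ``the decomposition is functorial,'' you explicitly verify the equivalent unique-diagonal-fill-in axiom via $d=\iota^{-1}\circ g$ and right-cancellativity (Proposition~\ref{pr:emu_right_cancel}), which is a welcome elaboration rather than a new idea.
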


\begin{proof}
 The class of canonical epimorphisms associated to automatic relations is closed under composition according to Proposition~\ref{prop:auto_rels_compose}, as well as the class of directed graph isomorphisms. Theorem~\ref{th:unique_automatic_decomposition} provides a decomposition $\phi = \iota \circ [-]_{\sim}$ for any directed emulator morphism $\phi$ where $\iota \in {\mathscr{I}}$ and $[-]_{\sim} \in {\mathscr{A}}$. It is easy to check that the decomposition is functorial. 
\end{proof}

We are on the way to Theorem~\ref{th:terminal_in_slice} (see end of \S \ref{subsec:partial_order_on auto_rel} below) as this is done by T.~Colcombet and D.~Petri\c{s}an in \cite{ColcombetP19}.

\subsection{MN-recursive relations} In this paragraph, the definition of an automatic relation is refined by labelling the edges. The first observation is that a directed graph endowed with an automatic relation induces naturally a semi-automaton. We then describe automatic relations in terms of a certain kind of recursive relations reminiscent of the Myhill-Nerode relation.

\begin{definition} \label{def:canonical_semi_auto}
A directed graph $G$ endowed with an automatic relation $\sim$ induces a semi-automaton $(G,\Sigma, \ell)$ defined by
$$ \Sigma = E_{G}/\!\sim_{E}\ {\hbox{and}}\ \ell:E\to E/\!\sim_{E}, e \mapsto [e].$$
This transition system, denoted $\A_{\sim}$ is the {\emph{canonical semi-automaton associated to the automatic relation}}\ $\sim$. 
\end{definition}

\begin{remark}
The canonical semi-automaton $\A_{\sim}$ depends only on the equivalence relation $\sim_{E}$ on the set $E_{G}$ of edges. However we are mainly interested in refinements of automatic relations in semi-automata. See for instance Lemma~\ref{lem:label-aut_is_aut} and Definition~\ref{def:MN-recursive relation}.
\end{remark}

\begin{remark}
The (important) special case when the automatic relation is the identity on vertices and edges (tautological semi-automata) is studied in \ref{sub:graph-to-semi-auto}.
\end{remark}

\begin{definition}
A pair of equivalence relations $\sim$ on a semi-automaton $\A$ is said to be {\emph{automatic}} if it is an automatic relation on its underlying graph $G_{\A}$ and $e\sim_E e'$ implies $\ell(e) = \ell(e')$ for all edges $e, e'$.
\end{definition}

The following observation should be clear.

\begin{lemma} \label{lem:label-aut_is_aut}
A pair of relations $(\sim_{E}, \sim_{V})$ on a directed graph $G$ is automatic if and only if it is automatic for the canonical semi-automaton $\A_{\sim}$.
\end{lemma}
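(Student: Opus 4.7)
The plan is to show both implications by unwinding the definitions, observing that the extra compatibility condition with the labelling is automatic for the specific labelling $\ell : e \mapsto [e]_{\sim_E}$ used to build $\A_\sim$.

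For the forward direction, assume $(\sim_V, \sim_E)$ is automatic on $G$. By Definition~\ref{def:canonical_semi_auto}, the underlying graph of $\A_\sim$ is exactly $G$, so clauses (i) and (ii) of Definition~\ref{def:relations_on_graph} hold verbatim for $G_{\A_\sim}$. It remains to check the semi-automaton compatibility: if $e \sim_E e'$, then $\ell(e) = [e]_{\sim_E} = [e']_{\sim_E} = \ell(e')$, which is immediate from the definition of the equivalence class and of $\ell$. Hence $(\sim_V, \sim_E)$ is automatic on the semi-automaton $\A_\sim$.

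For the reverse direction, assume $(\sim_V, \sim_E)$ is automatic on $\A_\sim$. By definition of an automatic relation on a semi-automaton, $(\sim_V, \sim_E)$ is in particular an automatic relation on its underlying directed graph $G_{\A_\sim} = G$, which is what had to be shown.

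There is no real obstacle here: the only point worth a second look is that the labelling $\ell$ used in the construction of $\A_\sim$ is precisely the canonical projection $E_G \to E_G/\!\sim_E$, so the extra semi-automaton condition $e \sim_E e' \Rightarrow \ell(e) = \ell(e')$ is trivially forced by the construction and adds no information beyond what is already contained in clauses (i) and (ii) for the underlying graph. In other words, the semi-automaton structure of $\A_\sim$ is tailored so that the two notions of ``automatic'' coincide on $(\sim_V, \sim_E)$.
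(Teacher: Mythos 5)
Your proof is correct and follows essentially the same route as the paper's (much terser) proof: the forward direction is just the observation that the labelling $\ell : e \mapsto [e]_{\sim_E}$ forces the extra condition $e \sim_E e' \Rightarrow \ell(e) = \ell(e')$, and the converse is immediate since automaticity on a semi-automaton includes automaticity on the underlying graph by definition. Your version simply spells out the details the paper leaves implicit.
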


\begin{proof}
Suppose that $(\sim_{V}, \sim_{E})$ is automatic. Labelling the set $E$ of edges by $\ell:E \to E/{\sim_{E}}, \ e \mapsto [e]$ makes the relation automatic for the semi-automaton. 
The converse is trivial.
\end{proof}

\newcommand{\mnr}[2]{{\triple{#1}{\sim}{#2}}}

For the following definitions, we consider a fixed semi-automaton $\A$ and a family ${\mathscr{F}} = \{ F_{i} \}_{i \in I}$ of non-empty pairwise disjoint subsets of the set $V_{\A}$ of vertices of $\A$.

\begin{definition} \label{def:labelled_equiv}
We define on $V_{\A}$ the equivalence relation $x \triple{0}{\sim}{{\mathscr{F}}} y \Longleftrightarrow \exists i : (x\in F_i \Leftrightarrow y \in F_i)$.  Given $n \geq 1$, let $\triple{n}{\sim}{\mathscr{F}}$ be the least equivalence relation such that $x\triple{n}{\sim}{\mathscr{F}} y$ holds whenever the two following conditions are satisfied:
\begin{enumerate}[(1)]
\item $x \triple{n-1}{\sim}{\mathscr{F}} y$;
\item for each edge $\arete{x}{e}{x'}$, there is an edge $\arete{y}{f}{y'}$ such that $\ell(f) = \ell(e)$ and $x' \triple{n-1}{\sim}{\mathscr{F}} y'$.
\end{enumerate}
\end{definition}

It follows from the definition that there exists $n \leq |V_{\A}|$ such that $\triple{n}{\sim}{\mathscr{F}}\, = \triple{n+1}{\sim}{{\mathscr{F}}}$. It follows from that we can drop the underscript $n$ from the notation as long as $n \geq |V_{\A}|$. We shall simply write $u \overset{\mathscr{F}}{\sim} v$ without further comment. 

\begin{remark} \label{rem:partition}
A family of disjoint non-empty subsets of $V$ naturally induces a partition of $V$ as follows.
Let ${\mathscr{F}} = \{ F_{i}, \ i \in I\}$ be any family of disjoint non-empty subsets of $V$. Let ${\mathscr{F}}'$ be the partition of $V$ obtained from ${\mathscr{F}}$ by adding the complement in $V$ of the union of all subsets $F_{i}$, i.e., let ${\mathscr{F}}' = \mathscr{F} \cup \{ \left( \cup_{i \in I} F_{i} \right)^{\complement} \}$. Then $\overset{\mathscr{F}}{\sim}\, =\, \overset{\mathscr{F}'}{\sim}$. That is, both the family ${\mathscr{F}}$ and 
 \emph{the partition ${\mathscr{F}}'$ induced by} ${\mathscr{F}}$ yield the same equivalence relation.
\end{remark}

\begin{example}[Myhill-Nerode equivalence relation]
Suppose that ${\mathscr{F}}$ consists of one unique subset $F$ of states (or if one thinks in terms of partition, of $F$ and $F^{\complement}$): ${\mathscr{F}} = \{ F \}$. In this case we denote the relation $\overset{\{ F \}}{\sim}$ simply by $\overset{F}{\sim}$. Set $F$ to be the subset of final states. The relation $\overset{\{ F \}}{\sim}$ is nothing but the Myhill-Nerode equivalence relation on the set of states, expressed algorithmically in order to recursively build all classes of equivalent states.
\end{example}

\begin{lemma} \label{lem:finer_partition}
If the partition ${\mathscr{G}}$ is finer than ${\mathscr{F}}$ then $\overset{\mathscr{G}}{\sim}\, \subseteq\, \overset{\mathscr{F}}{\sim}$.
\end{lemma}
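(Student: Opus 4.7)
The plan is to prove the inclusion $\triple{n}{\sim}{\mathscr{G}} \subseteq \triple{n}{\sim}{\mathscr{F}}$ by induction on $n \geq 0$, and then pass to the stabilized relations. Recall that the family $\mathscr{F}$ can be replaced by the partition $\mathscr{F}'$ it induces without changing $\overset{\mathscr{F}}{\sim}$ (Remark~\ref{rem:partition}), so we may treat $\mathscr{F}$ and $\mathscr{G}$ as honest partitions with $\mathscr{G}$ refining $\mathscr{F}$; this means that every block of $\mathscr{G}$ is contained in a unique block of $\mathscr{F}$.

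For the base case, if $x \triple{0}{\sim}{\mathscr{G}} y$ then $x$ and $y$ belong to the same block of $\mathscr{G}$; since $\mathscr{G}$ refines $\mathscr{F}$, they also belong to the same block of $\mathscr{F}$, so $x \triple{0}{\sim}{\mathscr{F}} y$. For the inductive step, assume $\triple{n-1}{\sim}{\mathscr{G}}\, \subseteq\, \triple{n-1}{\sim}{\mathscr{F}}$. Because $\triple{n}{\sim}{\mathscr{G}}$ is defined as the \emph{least} equivalence relation containing all pairs $(x,y)$ satisfying clauses (1) and (2) of Definition~\ref{def:labelled_equiv} (with $\mathscr{G}$), it suffices to show that any such generating pair lies in $\triple{n}{\sim}{\mathscr{F}}$, since the latter is itself an equivalence relation. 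But if $(x,y)$ satisfies $x \triple{n-1}{\sim}{\mathscr{G}} y$ and for every edge $\arete{x}{e}{x'}$ there is $\arete{y}{f}{y'}$ with $\ell(f) = \ell(e)$ and $x' \triple{n-1}{\sim}{\mathscr{G}} y'$, then by the inductive hypothesis the same two clauses hold with $\mathscr{F}$ in place of $\mathscr{G}$; hence $(x,y)$ is a generating pair for $\triple{n}{\sim}{\mathscr{F}}$, and a fortiori $x \triple{n}{\sim}{\mathscr{F}} y$.

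Finally, choose $n \geq |V_{\A}|$ so that both sequences $\triple{k}{\sim}{\mathscr{G}}$ and $\triple{k}{\sim}{\mathscr{F}}$ have stabilized. The inclusion $\triple{n}{\sim}{\mathscr{G}} \subseteq \triple{n}{\sim}{\mathscr{F}}$ then reads $\overset{\mathscr{G}}{\sim}\, \subseteq\, \overset{\mathscr{F}}{\sim}$, which is what we wanted.

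The only non-routine point is keeping clear which relation plays the role of a generator and which plays the role of the ambient equivalence closure; once that is disentangled, the induction is entirely formal and no graph-theoretic input beyond the definition is needed.
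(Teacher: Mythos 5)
Your proof is correct and is exactly the routine induction on $n$ that the paper leaves implicit (the lemma is one of the statements the authors assert without proof as an immediate consequence of Definition~\ref{def:labelled_equiv}). The two points you single out --- passing to the induced partitions via Remark~\ref{rem:partition} and checking the inclusion on generating pairs before taking the equivalence closure --- are precisely the details worth making explicit, and both are handled correctly.
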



\begin{definition}[MN-recursive relation in a semi-automaton] \label{def:MN-recursive relation}
We define a relation $MN(\A,\mathscr{F}) = (\overset{\mathscr{F}}{\sim}_{V},\overset{\mathscr{F}}{\sim}_E)$ on $\A$ by setting
$$e \overset{\mathscr{F}}{\sim}_E e' \Longleftrightarrow s(e) \overset{\mathscr{F}}{\sim}_{V} s(e') \ \wedge \ t(e) \overset{\mathscr{F}}{\sim}_{V} t(e') \ \wedge \ \ell(e) = \ell(e').$$ Such a relation is said to be \emph{MN-recursive}. 
\end{definition}

\begin{lemma} \label{lem:finer_partition_II}
If the partition ${\mathscr{G}}$ is finer than ${\mathscr{F}}$ then ${\rm{MN}}(\A, {\mathscr{G}}) \subseteq {\rm{MN}}(\A, {\mathscr{F}})$.
\end{lemma}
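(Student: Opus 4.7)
The plan is to reduce the inclusion to Lemma~\ref{lem:finer_partition}, which already gives the corresponding vertex-level statement $\overset{\mathscr{G}}{\sim}\, \subseteq\, \overset{\mathscr{F}}{\sim}$ on $V_{\A}$. Since $\mathrm{MN}(\A,\mathscr{F}) = (\overset{\mathscr{F}}{\sim}_V, \overset{\mathscr{F}}{\sim}_E)$ is a pair of relations, I need to check the inclusion componentwise: the vertex component is immediate from Lemma~\ref{lem:finer_partition}, and the edge component must be verified from Definition~\ref{def:MN-recursive relation}.

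First I would spell out the vertex component: by Lemma~\ref{lem:finer_partition}, $v \overset{\mathscr{G}}{\sim}_V w$ implies $v \overset{\mathscr{F}}{\sim}_V w$ for any two vertices $v, w \in V_\A$, which is exactly $\overset{\mathscr{G}}{\sim}_V\, \subseteq\, \overset{\mathscr{F}}{\sim}_V$. For the edge component, suppose $e \overset{\mathscr{G}}{\sim}_E e'$. By Definition~\ref{def:MN-recursive relation}, this unpacks to the three conditions
\[
s(e) \overset{\mathscr{G}}{\sim}_V s(e'), \quad t(e) \overset{\mathscr{G}}{\sim}_V t(e'), \quad \ell(e) = \ell(e').
\]
Applying the vertex inclusion just established to the first two, we obtain $s(e) \overset{\mathscr{F}}{\sim}_V s(e')$ and $t(e) \overset{\mathscr{F}}{\sim}_V t(e')$, while $\ell(e) = \ell(e')$ is unchanged. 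Hence, again by Definition~\ref{def:MN-recursive relation}, $e \overset{\mathscr{F}}{\sim}_E e'$.

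There is no real obstacle here; the entire content is packaged in Lemma~\ref{lem:finer_partition}, and the edge-level statement is formal once one unfolds the definition. The only thing to be careful about is that the relation $\mathrm{MN}(\A,\cdot)$ is defined on \emph{pairs} (vertex and edge relations), so the containment should be understood coordinatewise, which the proof above verifies directly.
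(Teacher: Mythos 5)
Your proof is correct and follows exactly the paper's route: the paper's own proof is the one-line remark that the claim ``follows from Lemma~\ref{lem:finer_partition} and the definition of an MN-recursive relation,'' and your write-up simply makes explicit the componentwise check (vertex inclusion from Lemma~\ref{lem:finer_partition}, edge inclusion by unfolding Definition~\ref{def:MN-recursive relation}). No gaps.
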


\begin{proof}
Follows from Lemma \ref{lem:finer_partition} and the definition of an MN-recursive relation.
\end{proof}

\begin{lemma} \label{lem:MN-implies-automatic} An MN-recursive relation is automatic.
\end{lemma}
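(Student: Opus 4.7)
Condition (i) of Definition~\ref{def:relations_on_graph} (compatibility) is immediate: if $e\,\overset{\mathscr{F}}{\sim}_E\,e'$, then by Definition~\ref{def:MN-recursive relation} one already has $s(e)\overset{\mathscr{F}}{\sim}_V s(e')$ and $t(e)\overset{\mathscr{F}}{\sim}_V t(e')$. So all the work is concentrated in establishing condition (ii), the bisimilarity of $\overset{\mathscr{F}}{\sim}_V$ with respect to $\overset{\mathscr{F}}{\sim}_E$.

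The plan is to exploit the fact that the sequence $\triple{n}{\sim}{\mathscr{F}}$ is decreasing (each step refines the preceding one, since the pairs satisfying conditions~(1) and~(2) are already contained in $\triple{n-1}{\sim}{\mathscr{F}}$) and stabilizes after at most $|V_{\A}|$ steps. Let $N$ be a stabilization level, so that $\overset{\mathscr{F}}{\sim}_V\,=\,\triple{N}{\sim}{\mathscr{F}}\,=\,\triple{N+1}{\sim}{\mathscr{F}}$. The key step is to prove the following \emph{bisimulation property at the stable level}: if $x\overset{\mathscr{F}}{\sim}_V y$ and $\arete{x}{e}{x'}$ is an edge, then there exists an edge $\arete{y}{f}{y'}$ with $\ell(f)=\ell(e)$ and $x'\overset{\mathscr{F}}{\sim}_V y'$. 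Once this is established, condition~(ii) follows by specializing: given $x'\overset{\mathscr{F}}{\sim}_V s(e)$, apply the bisimulation property at $(s(e),x')$ to the edge $e$ to produce an edge $e'$ with $s(e')=x'$, $\ell(e')=\ell(e)$, and $t(e')\overset{\mathscr{F}}{\sim}_V t(e)$; combined with $s(e')=x'\overset{\mathscr{F}}{\sim}_V s(e)$, this means $e'\overset{\mathscr{F}}{\sim}_E e$ by definition.

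To prove the bisimulation property at the stable level, I would introduce the auxiliary relation $B$ consisting of all pairs $(x,y)$ satisfying (1) and (2) symmetrically at level $N+1$, that is, (1) $x\triple{N}{\sim}{\mathscr{F}} y$, (2) for every $\arete{x}{e}{x'}$ there is $\arete{y}{f}{y'}$ with $\ell(f)=\ell(e)$ and $x'\triple{N}{\sim}{\mathscr{F}} y'$, and (2$'$) the symmetric condition with the roles of $x$ and $y$ swapped. A direct check shows that $B$ is reflexive, symmetric, and transitive (transitivity: chain two matching edges and use transitivity of $\triple{N}{\sim}{\mathscr{F}}$). Moreover, $B$ contains every pair $(x,y)$ satisfying (1) and (2) asymmetrically, because the symmetric version (2$'$) can be recovered at the stable level by applying (2) starting from the other vertex and using that the sequence has stabilized. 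Since $\triple{N+1}{\sim}{\mathscr{F}}$ is by definition the least equivalence relation containing the pairs satisfying (1) and (2), and since $B$ is such an equivalence relation, we conclude $\overset{\mathscr{F}}{\sim}_V\,=\,\triple{N+1}{\sim}{\mathscr{F}}\,\subseteq\,B$. This is precisely the bisimulation property.

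The main obstacle is exactly this last step: showing that the ``least equivalence relation'' in the definition does not introduce spurious pairs that fail the asymmetric condition~(2). Equivalently, one needs to rule out that two vertices end up equated only via the reflexive-symmetric-transitive closure while not individually enjoying the matching-edge property. The argument above circumvents the issue by observing that at stabilization the asymmetric and symmetric versions coincide, which is the classical fact underlying Hopcroft-style partition refinement for bisimulation equivalences.
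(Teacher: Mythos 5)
Your overall architecture matches the paper's: compatibility is immediate from Definition~\ref{def:MN-recursive relation}, everything reduces to the matching-edge (bisimulation) property of $\overset{\mathscr{F}}{\sim}_V$ at the stable level, and your derivation of clause~(ii) from that property (apply it at the pair $(s(e),x')$ and observe that the resulting edge is $\overset{\mathscr{F}}{\sim}_E$-related to $e$) is exactly right. Where you differ is that the paper simply asserts the matching-edge property ``by definition'', whereas you correctly observe that it is \emph{not} immediate: Definition~\ref{def:labelled_equiv} takes the \emph{least equivalence relation containing} the pairs satisfying (1) and (2), and the reflexive--symmetric--transitive closure could a priori adjoin pairs that do not themselves satisfy~(2). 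Identifying this as the crux is a real gain in precision over the paper's one-line argument.

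However, the step you use to close the gap fails. You claim that every pair $(x,y)$ satisfying (1) and (2) also satisfies the symmetric condition (2$'$) ``at the stable level, by applying (2) starting from the other vertex''. Condition (2) for $(x,y)$ gives you no mechanism for producing an outgoing edge of $x$ from an outgoing edge of $y$, and stabilization does not help. Concretely, take $V=\{x,y\}$ with a single loop $\arete{y}{e}{y}$ and ${\mathscr{F}}=\{\{x,y\}\}$: the pair $(x,y)$ satisfies (2) vacuously at every level, $(y,x)$ never does, the sequence $\triple{n}{\sim}{\mathscr{F}}$ is the total relation for every $n$ (hence stable from the start), yet $x$ has no edge emulating $e$. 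So your inclusion $P_{N+1}\subseteq B$ is false, and under the literal reading of Definition~\ref{def:labelled_equiv} the bisimilarity clause itself fails on this example. Both your proof and the paper's only go through if Definition~\ref{def:labelled_equiv} is read as imposing condition (2) \emph{symmetrically} in $x$ and $y$ (the usual partition-refinement scheme): the set of pairs satisfying (1), (2) and (2$'$) is then already reflexive, symmetric and transitive, so it coincides with $\triple{n}{\sim}{\mathscr{F}}$, your auxiliary relation $B$ is just $\overset{\mathscr{F}}{\sim}_V$ itself, and the matching-edge property holds by construction. The missing ingredient is thus not a cleverer closure argument but a symmetrized reading (or restatement) of the definition.
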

\begin{proof}
By construction, $\sim_E$ is compatible with $\mnr{}{\mathscr{F}}$ and the labelling. Let us check $\mnr{}{\mathscr{F}}$ is bisimilar with respect to $\sim_E$. Suppose $v \mnr{}{\mathscr{F}} w$ and $\arete{v}{e}{v'} \in G_\A$. By definition, there is an edge $\arete{w}{f}{w'}$ such that $v'\mnr{}{\mathscr{F}}w'$ and $\ell(e) = \ell(f)$. But then, by definition, $e \sim_E f$ is as expected.
\end{proof}

The next two observations aim at identifying MN-recursive relations. The first observation says that in order to prove that an MN-recursive relation on a canonical semi-automaton coincides with an automatic relation, it suffices to show that they coincide on the set of vertices. The second observation gives a partial criterion for this.

\begin{lemma} \label{lem:vertices_suffice}
Let $\sim\, = (\sim_{V}, \sim_{E})$ be an automatic relation on a directed graph $G$ and let ${\mathscr{F}}$ be any partition of $V_{G}$. Then, ${\rm{MN}}(\A_{\sim}, {\mathscr{F}}) =\, \sim$ if and only if $\overset{\mathscr{F}}{\sim}_{V}\, = \, \sim_{V}$.
\end{lemma}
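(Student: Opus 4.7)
The plan is straightforward: one direction is immediate from the definition, and the other direction reduces to unpacking the definition of the MN-recursive relation on the canonical semi-automaton.

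First I would dispatch the ``only if'' direction, which is trivial: if ${\rm MN}(\A_{\sim},{\mathscr{F}}) = \sim$ then in particular their vertex components coincide, i.e.\ $\overset{\mathscr{F}}{\sim}_{V} = \sim_{V}$.

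For the ``if'' direction, suppose $\overset{\mathscr{F}}{\sim}_{V} = \sim_{V}$. The thing to show is that the edge components also agree, i.e.\ $\overset{\mathscr{F}}{\sim}_{E} = \sim_{E}$. The key observation is that in the canonical semi-automaton $\A_{\sim}$ (Definition~\ref{def:canonical_semi_auto}), the labelling is by definition $\ell(e) = [e]_{\sim_{E}}$, so for any pair of edges $e, e'$ the condition $\ell(e) = \ell(e')$ is exactly $e \sim_{E} e'$. Plugging this into Definition~\ref{def:MN-recursive relation}, together with the assumption $\overset{\mathscr{F}}{\sim}_{V} = \sim_{V}$, gives
$$e \overset{\mathscr{F}}{\sim}_{E} e' \Longleftrightarrow s(e) \sim_{V} s(e') \ \wedge\ t(e) \sim_{V} t(e') \ \wedge\ e \sim_{E} e'.$$
The conjunction on the right simplifies using the automatic character of $\sim$: by clause (i) of Definition~\ref{def:relations_on_graph} (compatibility of $\sim_{E}$ with respect to $\sim_{V}$), $e \sim_{E} e'$ already forces $s(e) \sim_{V} s(e')$ and $t(e) \sim_{V} t(e')$. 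Hence the right-hand side collapses to $e \sim_{E} e'$, yielding $\overset{\mathscr{F}}{\sim}_{E} = \sim_{E}$, which completes the proof.

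I do not expect any real obstacle here: everything is a direct unfolding of the definitions of $\A_{\sim}$, of the MN-recursive relation, and of an automatic relation. The only subtlety worth flagging explicitly in the write-up is that the labelling in the canonical semi-automaton is precisely the quotient map by $\sim_{E}$, which is what allows the label-equality condition in the definition of ${\rm MN}(\A_{\sim},{\mathscr{F}})$ to be converted into the relation $\sim_{E}$ itself.
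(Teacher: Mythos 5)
Your proof is correct and follows essentially the same route as the paper's: the trivial direction, then the chain of equivalences converting the label condition $\ell(e)=\ell(e')$ into $e\sim_E e'$ via the definition of $\A_{\sim}$, substituting $\overset{\mathscr{F}}{\sim}_{V}=\sim_{V}$, and collapsing the conjunction by compatibility. Nothing to add.
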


\begin{proof}
Trivially if the two relations coincide, they coincide on the set of vertices. Conversely suppose that $\sim_{V}\, = \, \overset{\mathscr{F}}{\sim}_{V}$. The equality of the relations on the edges $\overset{\mathscr{F}}{\sim}_{E} = \sim_{E}$ follows from the series of equivalences
\begin{align*}
e\overset{\mathscr{F}}{\sim}_{E} e' & \Longleftrightarrow s(e) \overset{\mathscr{F}}{\sim}_{V} s(e') \ \wedge \ t(e) \overset{\mathscr{F}}{\sim}_{V} t(e') \ \wedge \ \ell(e) = \ell(e') \\
& \Longleftrightarrow s(e) {\sim}_{V}\, s(e') \ \wedge \ t(e) {\sim}_{V}\, t(e') \ \wedge \ e \sim_{E} e'\\
& \Longleftrightarrow e \sim_{E} e'.
\end{align*}
\end{proof}

\begin{remark} \label{rem:inclusions}
The lemma is true, more generally, if in the statement we replace the equality $=$ between the relations by $\subseteq$ or $\supseteq$. The proof is mutatis mutandis the same.
\end{remark}

\begin{lemma} \label{lem:vertices_suffice_at_level_0}
Let $\sim\, = (\sim_{V}, \sim_{E})$ be an automatic relation on a directed graph $G$ and let ${\mathscr{F}}$ be any partition of $V_{G}$. Then $\sim\, \subseteq {\rm{MN}}(\A_{\sim}, {\mathscr{F}})$ if and only if $\sim_{V}\, \subseteq\,\, \underset{\!\!\!\! 0}{\overset{\mathscr{F}}{\sim}_{V}}$.
\end{lemma}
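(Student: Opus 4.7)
\medskip

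\noindent\textbf{Proof plan.} The forward direction is immediate: by Definition~\ref{def:labelled_equiv}, the relations $\triple{n}{\sim}{\mathscr{F}}$ form a decreasing sequence (each one is the least equivalence relation whose pairs also satisfy condition (1), namely $\triple{n-1}{\sim}{\mathscr{F}}$, plus the outgoing-edge condition (2)). Hence the stabilized relation $\overset{\mathscr{F}}{\sim}_V$ is contained in $\underset{0}{\overset{\mathscr{F}}{\sim}_V}$, and so $\sim_V \subseteq \overset{\mathscr{F}}{\sim}_V$ immediately yields $\sim_V \subseteq \underset{0}{\overset{\mathscr{F}}{\sim}_V}$.

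\medskip

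\noindent For the backward direction, assume $\sim_V \subseteq \underset{0}{\overset{\mathscr{F}}{\sim}_V}$. The plan is first to establish the inclusion on vertices by induction on $n$, using in a crucial way the two defining properties of an automatic relation (compatibility and bisimilarity) on the canonical semi-automaton $\A_{\sim}$, where the labelling is precisely $\ell(e) = [e]_{\sim_E}$.

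\medskip

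\noindent More precisely, I would prove by induction on $n \geq 0$ that $\sim_V \subseteq \triple{n}{\sim}{\mathscr{F}}_V$. The base case $n = 0$ is our assumption. For the inductive step, take $x \sim_V y$; by the inductive hypothesis, clause~(1) of Definition~\ref{def:labelled_equiv} holds. For clause~(2), given any edge $\arete{x}{e}{x'}$, bisimilarity of $\sim_V$ with respect to $\sim_E$ furnishes an edge $\arete{y}{f}{y'}$ with $e \sim_E f$; compatibility then yields $x' \sim_V y'$, hence $x' \triple{n-1}{\sim}{\mathscr{F}} y'$ by the inductive hypothesis. Finally, since in $\A_{\sim}$ the label of an edge is exactly its $\sim_E$-class, $e \sim_E f$ means $\ell(e) = \ell(f)$. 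Thus $x \triple{n+1}{\sim}{\mathscr{F}} y$, and after stabilization (which happens for some $n \leq |V_G|$) we obtain $\sim_V \subseteq \overset{\mathscr{F}}{\sim}_V$.

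\medskip

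\noindent It then remains to derive the inclusion on edges, which is a direct consequence of the vertex inclusion together with the compatibility property. Indeed, if $e \sim_E e'$, compatibility gives $s(e) \sim_V s(e')$ and $t(e) \sim_V t(e')$, hence by the above $s(e) \overset{\mathscr{F}}{\sim}_V s(e')$ and $t(e) \overset{\mathscr{F}}{\sim}_V t(e')$. Moreover, by construction of $\A_{\sim}$, $\ell(e) = [e]_{\sim_E} = [e']_{\sim_E} = \ell(e')$. Therefore $e \overset{\mathscr{F}}{\sim}_E e'$, and $\sim \,\subseteq \mathrm{MN}(\A_{\sim},\mathscr{F})$. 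The main subtle point is the inductive step, where one must simultaneously use \emph{both} properties of an automatic relation and the fact that the labelling of $\A_{\sim}$ encodes precisely $\sim_E$; everything else is bookkeeping.
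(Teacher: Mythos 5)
Your proof is correct and follows essentially the same route as the paper: the forward direction via the decreasing chain $\triple{n}{\sim}{\mathscr{F}} \subseteq \triple{n-1}{\sim}{\mathscr{F}}$, the converse by induction on $n$ using bisimilarity to lift the edge and compatibility to propagate $\sim_V$ to the targets, and the vertex-to-edge reduction (which the paper delegates to Remark~\ref{rem:inclusions} but you re-derive inline). The only blemish is a harmless off-by-one in the inductive step, where the hypotheses at level $n-1$ yield $x \triple{n}{\sim}{\mathscr{F}} y$ rather than $x \triple{n+1}{\sim}{\mathscr{F}} y$.
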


\begin{proof}
According to Remark~\ref{rem:inclusions}, $\sim\, \subseteq \,\, \overset{\mathscr{F}}{\sim}$ if and only if $\sim_{V}\, \subseteq\,\,{\overset{\mathscr{F}}{\sim}_{V}}$. Thus it is enough to prove that $\sim_{V}\, \subseteq\,\,{\overset{\mathscr{F}}{\sim}_{V}}$ if and only if $\sim_{V}\, \subseteq\,\, \underset{\!\!\!\! 0}{\overset{\mathscr{F}}{\sim}_{V}}$. The direct implication is trivial. Let us prove the converse. For simplicity, we drop the subscript $V$ from the notation whenever there should be no confusion. Suppose that $x \sim y$ implies $x \, \underset{0}{\overset{\mathscr{F}}{\sim}}\, y$ for any $x, y \in V_{G}$. Suppose that $x \sim y$. We shall prove that $x \underset{n}{\overset{\mathscr{F}}{\sim}} y$ for any $n$ by induction on $n \geq 0$. By assumption, the base case $n = 0$ holds. Suppose (induction hypothesis) that we have proved that $x \sim y$ implies that $x \underset{n-1}{\overset{\mathscr{F}}{\sim}} y$. Note that this implies that condition $(1)$ of Definition~\ref{def:labelled_equiv} holds. Let us verify that condition $(2)$ holds as well. Given an edge $\arete{x}{e}{x'}$, the bisimilarity property for $\sim_{V}$ with respect to $\sim_{E}$ 
yields an edge $\arete{y}{f}{y'}$ such that $e \sim_{E} f$.  The compatibility property for $\sim_{V}$ with respect to $\sim_{E}$ 
 then yields $x' = t(e) \sim_{V} t(f) = y'$. By the induction hypothesis, this implies that $x'\, \mnr{n-1}{\mathscr{F}}\, y'$.
Hence $x\, \mnr{n}{\mathscr{F}}\, y$ holds as claimed.
\end{proof}

It will be convenient to define, for any vertex $w \in V$, the subset $${\rm{Pr}}(w)  = \{ v \in V \ | \ {\hbox{there is a walk starting at}}\ v\ {\hbox{and ending at}}\ w \}.$$ We accept length $0$ walks so $w \in {\rm{Pr}}(w)$ for any vertex $w \in V_{G}$. The existence of a walk from $v$ to $w$ ($w$ is reachable from $v$) implies ${\rm{Pr}}(v) \subseteq {\rm{Pr}}(w)$. A subset $\{ v_{i}\}_{i \in I}$ of vertices in $V_{G}$ such that $V = \cup_{i \in I} {\rm{Pr}}(w_{i})$ will be called a {\emph{complete final system of the directed graph}} $G$. A complete final system $\{v_{i}\}_{i \in I}$ of 
$G$ is {\emph{minimal}} if no strict subfamily $\{ v_{j}\}_{j \in J}$, $J \subset I, J \not= I$, is a final system of $G$. Although we shall not use it in the sequel, note that the cardinality of a minimal complete final system is an invariant of the directed graph.

\begin{lemma}
Minimal complete final systems of $G$ have the same cardinality.
\end{lemma}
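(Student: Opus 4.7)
The plan is to characterise complete final systems in terms of the maximal (terminal) strongly connected components of $G$, and then read off the cardinality of minimal ones directly.

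First, I would recall that the reachability preorder $v \leq w$ on $V_G$ descends to a partial order on the quotient by mutual reachability; the equivalence classes are the strongly connected components (SCCs) of $G$. Call an SCC $M$ \emph{terminal} if it is maximal for this partial order — equivalently, if every edge with source in $M$ has its target in $M$. Since $V_G$ is finite, the set $\mathcal{M}$ of terminal SCCs is finite and non-empty, and crucially every vertex $v \in V_G$ satisfies $\mathrm{Pr}(w) \not\ni v$ is ruled out for some $w$ in some terminal SCC: following edges out of $v$ one eventually falls into a terminal SCC $M_v$, because leaving an SCC is irreversible and there are finitely many SCCs.

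Next I would prove the key lemma: a subset $F \subseteq V_G$ is a complete final system if and only if $F \cap M \neq \emptyset$ for every $M \in \mathcal{M}$. For sufficiency, given $v \in V_G$, there is a walk from $v$ to some vertex of a terminal SCC $M_v$; since any two vertices of $M_v$ are mutually reachable, $v$ admits a walk to any vertex of $F \cap M_v$, so $v \in \mathrm{Pr}(f)$ for some $f \in F$. For necessity, if $M \in \mathcal{M}$ satisfies $F \cap M = \emptyset$, then any $w \in M$ can only reach vertices of $M$ (terminality), so $w \notin \mathrm{Pr}(f)$ for any $f \in F$, contradicting completeness.

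From this characterisation I would deduce that any minimal complete final system $F$ satisfies $|F \cap M| = 1$ for every $M \in \mathcal{M}$ and $F \subseteq \bigcup_{M \in \mathcal{M}} M$. Indeed, if $|F \cap M| \geq 2$ for some terminal $M$, one can remove an element of $F \cap M$ and the remainder still meets every terminal SCC, hence is still complete, contradicting minimality. Similarly, if some $v \in F$ lies in no terminal SCC, then $F \smallsetminus \{v\}$ still meets every terminal SCC, again a contradiction. Therefore $|F| = |\mathcal{M}|$, which depends only on $G$, proving the claim.

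The proof is essentially bookkeeping once the characterisation in terms of $\mathcal{M}$ is established, so the only genuinely non-trivial point is verifying that every vertex reaches at least one terminal SCC. This uses the finiteness of $V_G$ and the irreversibility of leaving an SCC; both are immediate from the definition of the reachability preorder, but this is the step that the rest of the argument rests upon.
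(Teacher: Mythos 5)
Your proof is correct, but it takes a genuinely different route from the paper's. The paper works directly with two minimal complete final systems $S$ and $T$: completeness of $T$ gives a walk from $s\in S$ to some $t\in T$, completeness of $S$ gives a walk from $t$ back to some $s'\in S$, and minimality of $S$ forces $s'=s$ (otherwise ${\rm{Pr}}(s)\subseteq {\rm{Pr}}(s')$ would let one discard $s$), so ${\rm{Pr}}(s)={\rm{Pr}}(t)$; minimality of $T$ then makes this $t$ unique, and the resulting assignment $S\to T$ is a bijection by symmetry. You instead pass to the condensation of $G$ and prove the structural characterisation that the complete final systems are exactly the sets meeting every terminal strongly connected component, after which minimality forces exactly one representative per terminal component and no vertex outside them, so every minimal system has cardinality equal to the number of terminal components. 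Your route needs the extra (standard, and correctly justified) step that every vertex reaches a terminal strongly connected component, but in exchange it computes the common cardinality explicitly and describes all minimal complete final systems, which is more than the paper's bijection yields; the paper's argument is shorter and avoids introducing the strongly-connected-component machinery. Both arguments are valid.
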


\begin{proof}
Let $S, T$ be two minimal complete final systems.  Let $s \in S$. By completeness of $T$, there is a walk starting at $s$ ending at some $t \in T$. By completeness of $S$, there is a walk starting at $t$ ending at some $s' \in S$. Now minimality implies $s = s'$ for otherwise since ${\rm{Pr}}(s) \subseteq {\rm{Pr}}(s')$, the smaller system $S - \{ s \}$ would still be complete. Thus $s \in S$ is paired to an element $t \in T$ such that ${\rm{Pr}}(s) = {\rm{Pr}}(t)$. Moreover there is no further walk from $s$ to some $t' \in T$, $t' \not= t$, for otherwise the smaller system $T - \{ t \}$ would still be complete. Therefore, the assignment that sends $s \in S$ to the unique $t \in T$ such that ${\rm{Pr}}(t) = {\rm{Pr}}(s)$ is a well-defined map $S \to T$. The symmetry of $S$ and $T$ implies that it is bijective.
\end{proof}

We are now ready to prove a converse to Lemma~\ref{lem:MN-implies-automatic}.

\begin{prop}[Automatic relations are MN-recursive]
\label{prop:automatic-implies-MN}
If a pair of relations $\sim\, = (\sim_{V}, \sim_{E})$ on a directed graph $G$ is automatic, then $\sim\, =  {\rm{MN}}(\A_{\sim}, {\mathscr{F}})$ for the canonical semi-automaton $\A_{\sim}$ associated to $\sim_{E}$ and for the family ${\mathscr{F}}$ that consists of the $\sim_{V}$-equivalence classes of the vertices of any complete final system of $G$. 
\end{prop}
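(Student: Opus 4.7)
The plan is to invoke Lemma~\ref{lem:vertices_suffice} together with Remark~\ref{rem:inclusions}, reducing the required equality $\sim\, =\, {\rm MN}(\A_{\sim}, \mathscr{F})$ to the equality $\overset{\mathscr{F}}{\sim}_V\, =\, \sim_V$ at the level of vertices; the latter is then established by two inclusions.

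For the inclusion $\sim_V\, \subseteq\, \overset{\mathscr{F}}{\sim}_V$, Lemma~\ref{lem:vertices_suffice_at_level_0} reduces the task to showing $\sim_V\, \subseteq\, \underset{0}{\overset{\mathscr{F}}{\sim}}_V$. This is immediate from the definition of $\mathscr{F}$: each $F_i$ is a $\sim_V$-equivalence class $[v_i]_{\sim_V}$, so if $x \sim_V y$ then $x$ and $y$ lie in exactly the same $F_i$'s (and, by Remark~\ref{rem:partition}, simultaneously inside or outside the complement of $\bigcup_i F_i$), which is precisely the 0-level condition.

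For the reverse inclusion $\overset{\mathscr{F}}{\sim}_V\, \subseteq\, \sim_V$, I plan to combine Lemma~\ref{lem:MN-implies-automatic}, which guarantees that ${\rm MN}(\A_{\sim}, \mathscr{F})$ is itself automatic and hence satisfies the bisimilarity clause (ii) of Definition~\ref{def:relations_on_graph}, with the compatibility clause (i) for $\sim$, using completeness of $\{v_i\}_{i \in I}$ to handle sink vertices. Assume $x \overset{\mathscr{F}}{\sim}_V y$. If $x$ admits an outgoing edge $e$, then bisimilarity of $\overset{\mathscr{F}}{\sim}$ yields an edge $f$ with $s(f) = y$ and $f \overset{\mathscr{F}}{\sim}_E e$; by Definitions~\ref{def:canonical_semi_auto} and~\ref{def:MN-recursive relation}, this forces $\ell_{\A_{\sim}}(f) = \ell_{\A_{\sim}}(e)$, i.e.\ $f \sim_E e$; the compatibility of $\sim$ then gives $x = s(e) \sim_V s(f) = y$. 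If $x$ has no outgoing edges, completeness of the final system forces $x = v_i$ for some $i$ (the walk from $x$ into the final system must have length zero), so $x \in F_i$; since the sequence $\underset{n}{\overset{\mathscr{F}}{\sim}}_V$ is decreasing in $n$, we have $\overset{\mathscr{F}}{\sim}_V\, \subseteq\, \underset{0}{\overset{\mathscr{F}}{\sim}}_V$, which yields $y \in F_i = [v_i]_{\sim_V}$ and hence $y \sim_V v_i = x$.

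The main subtlety is conceptual rather than technical: one must observe that completeness of $\{v_i\}_{i \in I}$ is precisely what guarantees that sink vertices belong to the final system, hence are covered by~$\mathscr{F}$. Without this, a sink $x \notin \bigcup_i F_i$ could be $\overset{\mathscr{F}}{\sim}_V$-related to other vertices of the common complement class without being $\sim_V$-related to them, and the argument would collapse. Everything else reduces, via the preparatory lemmas, to a single invocation of edge-lifting followed by compatibility.
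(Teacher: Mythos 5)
Your proposal is correct and follows the paper's own route: the same reduction to vertices via Lemma~\ref{lem:vertices_suffice}, the same use of Lemma~\ref{lem:vertices_suffice_at_level_0} for the inclusion $\sim_V\,\subseteq\,\overset{\mathscr{F}}{\sim}_V$, and the same key step for the converse (lifting an outgoing edge, observing that equality of labels in $\A_{\sim}$ means $\sim_E$-equivalence, then applying compatibility). The only difference is cosmetic: the paper packages the converse as an induction on the length of a walk to the complete final system, whereas you flatten it into a two-case split on whether $x$ has an outgoing edge, with completeness used only to place sinks inside $\mathscr{F}$ --- a legitimate streamlining, since the paper's inductive hypothesis is never actually invoked in its inductive step.
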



\begin{proof}
Let $S$ be a minimal complete final system for $G$ and let ${\mathscr{F}}$ denote the corresponding partition of $V$. In accordance with Remark~\ref{rem:partition}, each element in ${\mathscr{F}}$ is $[s]_{\sim_{V}}$, $s \in S$, and possibly (if non empty) the subset $\left( \cup_{s \in S} [s]_{\sim_{V}} \right)^{\complement}$. 
In accordance to Lemma~\ref{lem:vertices_suffice}, it suffices to prove that $\sim_{V}\, =\, \overset{\mathscr{F}}{\sim}_{V}$. 

For simplicity, we drop the subscript $V$ from the notation. Let us prove first that $\sim_{V}\, \subseteq\, \overset{\mathscr{F}}{\sim}_{V}$. Suppose that $x \sim y$. Either $x \in [s]$ for some $s \in S$ (and then $x$ belongs to exactly one $[s] \in {\mathscr{F}}$) or $x \in \left( \cup_{s \in S} [s]_{\sim_{V}} \right)^{\complement}$ (and then $x$ also belongs to exactly one class in ${\mathscr{F}}$). Since $x \sim y$, it follows that $y$ has the same property as $x$. Hence $x \overset{\mathscr{F}}{\underset{0}{\sim}} y$. Then by Lemma~\ref{lem:vertices_suffice_at_level_0}, $x \overset{\mathscr{F}}{\sim} y$.

Conversely, assume that $x \triple{}{\sim}{\mathscr{F}} y$. We have to prove that $x \sim y$. By assumption $x \overset{\mathscr{F}}{\underset{0}{\sim}} y$, that is, either there is $s \in S$ such that $[x] = [y] = [s]$ (and then $x \sim y$) or $x, y \in (\cup_{s \in S} [s])^{\complement}$. 
Given a vertex $x$ in $G$, there is at least one walk from $x$ to (one of the vertices in) $[s]$ for some $s \in S$. That justifies to proceed by induction on the length of such a walk (of minimal length). By such an induction, we prove that $x\triple{}{\sim}{\mathscr{F}}y$ implies $x\sim y$.
The walk has length $0$ if and only if $[x] = [s] = [y]$, as we've just already observed. Otherwise, there is an edge $\arete{x}{e}{x'}$, where the vertex $x'$ is ``closer" to $[s]$ for some $s \in S$. Since the relation $\mnr{}{\mathscr{F}}$ is automatic, there is an edge $\arete{y}{f}{y'}$ with $e\stackrel{\mathscr{F}}{\sim}_E f$. Hence $\ell(e) = \ell(f)$ for the label $\ell$ of the canonical semi-automaton $\A_{\sim}$, thus $e \sim_E f$. So $x = s(e) \sim_V s(f) = y$ by compatibility of $\sim_E$ with respect to $\sim_V$. 
\end{proof}

\begin{theorem}\label{th:automatic_as_mn}
A relation on a directed graph $G$ is automatic if and only if it is MN-recursive with respect to some semi-automaton $\A = (G, \Sigma, \ell)$. More precisely, given a pair of relations $\sim = (\sim_V, \sim_E)$ on a directed graph $G$, the following assertions are equivalent:
\begin{enumerate}
\item[$(1)$] The relation $\sim$ on $G$ is automatic.
\item[$(2)$] The relation $\sim$ on the canonical semi-automaton $\A_{\sim}$ is automatic.
\item[$(3)$] $\sim\, = {\rm{MN}}(\A_{\sim}, {\mathscr{F}})$
where ${\mathscr{F}}$ is the family of $\sim_{V}$-equivalence classes of the vertices of any complete final system of $G$.
\item[$(4)$] $\sim\, = {\rm{MN}}(\A_{\sim}, {\mathscr{G}})$
where ${\mathscr{G}}$ is the partition of $V$ into $\sim_{V}$-equivalence classes.
\end{enumerate}
\end{theorem}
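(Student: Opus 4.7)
The plan is to close the four-way equivalence by a short cycle, relying almost entirely on lemmas already proved in the section. The implication $(1) \Leftrightarrow (2)$ is precisely Lemma~\ref{lem:label-aut_is_aut}: an automatic relation $\sim$ on $G$ is automatic on $\A_\sim$ because the labelling $\ell : e \mapsto [e]_{\sim_E}$ is by construction constant on $\sim_E$-classes, and the converse is immediate since $\A_\sim$ extends $G$. The implication $(1) \Rightarrow (3)$ is exactly Proposition~\ref{prop:automatic-implies-MN}; note that a complete final system exists on every finite directed graph (e.g.\ by iteratively picking targets of walks until exhaustion). The implications $(3) \Rightarrow (1)$ and $(4) \Rightarrow (1)$ are both instances of Lemma~\ref{lem:MN-implies-automatic}.

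The only remaining implication is $(1) \Rightarrow (4)$, which I would prove by a two-sided inclusion. For the inclusion $\mathrm{MN}(\A_\sim, \mathscr{G}) \subseteq \sim$, I observe that the partition $\mathscr{G}$ of $V_G$ into $\sim_V$-classes is finer than any partition $\mathscr{F}$ obtained from a complete final system as in Proposition~\ref{prop:automatic-implies-MN} (the elements of $\mathscr{F}$ are themselves unions of $\sim_V$-classes). By Lemma~\ref{lem:finer_partition_II}, this yields $\mathrm{MN}(\A_\sim, \mathscr{G}) \subseteq \mathrm{MN}(\A_\sim, \mathscr{F})$, and combined with $(1) \Rightarrow (3)$ already established, we get $\mathrm{MN}(\A_\sim, \mathscr{G}) \subseteq \sim$.

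For the reverse inclusion $\sim \subseteq \mathrm{MN}(\A_\sim, \mathscr{G})$, I would invoke Lemma~\ref{lem:vertices_suffice_at_level_0} (read via Remark~\ref{rem:inclusions}): it suffices to check that $\sim_V \subseteq \underset{0}{\overset{\mathscr{G}}{\sim}}_V$. But this is tautological, since by definition two vertices are $\underset{0}{\overset{\mathscr{G}}{\sim}}_V$-related exactly when they belong to the same class of $\mathscr{G}$, i.e.\ to the same $\sim_V$-class. The top-level statement ``automatic iff MN-recursive with respect to some semi-automaton'' is then a formal consequence: the forward direction is $(1) \Rightarrow (3)$ (or $(4)$) with witness $\A_\sim$, and the converse is Lemma~\ref{lem:MN-implies-automatic}. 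The main (minor) obstacle is purely bookkeeping, namely checking that $\mathscr{G}$ refines $\mathscr{F}$ correctly (including the treatment of the ``complement'' class of Remark~\ref{rem:partition}) and that the inclusion version of Lemma~\ref{lem:vertices_suffice} applies; both amount to unwinding definitions.
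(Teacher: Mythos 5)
Your proposal is correct and follows essentially the same route as the paper: $(1)\Leftrightarrow(2)$ via Lemma~\ref{lem:label-aut_is_aut}, $(1)\Rightarrow(3)$ via Proposition~\ref{prop:automatic-implies-MN}, the refinement argument of Lemma~\ref{lem:finer_partition_II} for ${\rm{MN}}(\A_{\sim},{\mathscr{G}})\subseteq{\rm{MN}}(\A_{\sim},{\mathscr{F}})$, and the level-$0$ criterion of Lemma~\ref{lem:vertices_suffice_at_level_0} for the reverse inclusion. You are in fact slightly more explicit than the paper in closing the cycle with $(3)\Rightarrow(1)$ and $(4)\Rightarrow(1)$ via Lemma~\ref{lem:MN-implies-automatic}, and you cite the correct level-$0$ lemma where the paper's text points to Lemma~\ref{lem:vertices_suffice}.
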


\begin{proof}
$(1) \Longleftrightarrow (2)$: Lemma \ref{lem:label-aut_is_aut}. $(2) \Longrightarrow (3)$: Prop.~ \ref{prop:automatic-implies-MN}. To end the proof, we have to prove that ${\rm{MN}}(\A_{\sim}, {\mathscr{G}}) = {\rm{MN}}(\A_{\sim}, {\mathscr{F}})$. Since ${\mathscr{G}}$ is finer than (the partition  induced by) ${\mathscr{F}}$, Lemma~\ref{lem:finer_partition_II} ensures that ${\rm{MN}}(\A_{\sim}, {\mathscr{G}}) \subseteq {\rm{MN}}(\A_{\sim}, {\mathscr{F}})$. But ${\rm{MN}}(\A_{\sim}, {\mathscr{F}}) =\, \sim$ on the canonical semi-automaton. Hence it suffices to prove that $\sim \, \, \subseteq {\rm{MN}}(\A_{\sim}, {\mathscr{G}})$.
By Lemma~\ref{lem:vertices_suffice}, it suffices to prove that $\sim_{V}\, \subseteq \underset{\!\!\!\!0}{\overset{\mathscr{G}}{\sim}_{V}}$. But by definition of ${\mathscr{G}}$, $x \sim_{V} y$ is equivalent to $x \underset{\!\!\!\!0}{\overset{\mathscr{G}}{\sim}_{V}}\, y$. 
\end{proof}

Whenever a vertex is reachable, the previous result yields a rather appealing (and presumably, familiar) description of an automatic relation in terms of one single subset of vertices.

\begin{corollary} \label{cor:accessible_automatic_description}
Let $G$ be a labelled directed graph with at least one reachable vertex $v$. A relation $\sim$ on $G$ is automatic if and only if $\sim\, = {\rm{MN}}(\A_{\sim}, \{ [v]_{\sim_{V}} \})$. 
\end{corollary}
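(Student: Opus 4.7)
The plan is to derive this corollary directly from Theorem~\ref{th:automatic_as_mn} by exhibiting a particularly simple complete final system whenever a reachable vertex exists. The paper's terminology says $v$ is reachable when every vertex $w \in V$ admits a walk ending at $v$, i.e.\ $w \in {\rm{Pr}}(v)$ for all $w$, so that $V = {\rm{Pr}}(v)$.

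First I would observe that the singleton $\{v\}$ is therefore a complete final system of $G$: indeed, $V = {\rm{Pr}}(v)$ is exactly the defining condition. (It is even minimal, though this is not needed here.) Consequently, the family ${\mathscr{F}}$ associated to this complete final system, in the sense of the statement $(3)$ of Theorem~\ref{th:automatic_as_mn}, is the family of $\sim_{V}$-equivalence classes of its vertices, namely the single class $\{ [v]_{\sim_{V}} \}$.

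Next I would invoke Theorem~\ref{th:automatic_as_mn} with this choice of ${\mathscr{F}}$. The equivalence $(1) \Longleftrightarrow (3)$ of that theorem then reads: $\sim$ is automatic if and only if $\sim\, = {\rm{MN}}(\A_{\sim}, \{ [v]_{\sim_{V}} \})$, which is precisely the statement of the corollary.

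I do not foresee a genuine obstacle here, since all the work is already encapsulated in Theorem~\ref{th:automatic_as_mn}; the only content is the translation between ``$v$ is reachable'' and ``$\{v\}$ is a complete final system,'' which is immediate from the respective definitions given in \S\ref{sec:prelim} and just before Proposition~\ref{prop:automatic-implies-MN}. The one subtlety worth flagging explicitly is the convention of the paper, in which a reachable vertex is a sink-like vertex (every vertex walks to it), rather than the more common opposite convention; verifying this alignment with the definition of ${\rm{Pr}}$ is what makes the argument a one-line reduction.
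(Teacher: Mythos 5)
Your proof is correct and follows exactly the paper's own argument: note that reachability of $v$ (in the paper's sink-like sense) means $V = {\rm{Pr}}(v)$, so $\{v\}$ is a (minimal) complete final system, and then apply the equivalence $(1) \Longleftrightarrow (3)$ of Theorem~\ref{th:automatic_as_mn} with ${\mathscr{F}} = \{ [v]_{\sim_V} \}$. The extra care you take in aligning the paper's convention for ``reachable'' with the definition of ${\rm{Pr}}$ is exactly the only point of substance, and you handle it correctly.
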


\begin{proof}
By assumption, $S = \{ v \}$ is a minimal complete final system. Let ${\mathscr{F}}$ be the partition induced by the single class $[v]_{\sim_{V}}$. Theorem~\ref{th:automatic_as_mn} applies.
\end{proof}

This corollary shows that at least in the case of a reachable vertex, an automatic relation can be regarded as the graph-theoretic version of a Myhill-Nerode relation with respect to one distinguished subset of vertices (final states).

\subsection{A partial order on automatic relations} \label{subsec:partial_order_on auto_rel}

There is a natural partial order on automatic relations. We define it as follows: $(\sim_v, \sim_E) \leq (\sim'_v, \sim'_e)$ whenever $\sim_v \subseteq \sim'_v$ and $\sim_E \subseteq \sim'_e$.

\begin{prop}\label{pr:sim_order_to_morphism}
	Suppose that $\sim \leq \sim'$, then there is a directed emulator $\phi_{\sim,\sim'}: \sursim{G} \to G/{\sim'}$ with $\phi_{\sim,\sim'} \circ [-]_{\sim'} = [-]_{\sim}$. 
\end{prop}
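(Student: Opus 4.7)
The plan is to define $\phi_{\sim,\sim'}$ directly as the obvious ``further quotient'' and verify each clause of Definition~\ref{def:directed_emulator}, using the hypothesis $\sim\,\leq\,\sim'$ for well-definedness and the bisimilarity property of $\sim'$ for the outgoing edge lifting property. (I read the commutation relation as $\phi_{\sim,\sim'} \circ [-]_{\sim} = [-]_{\sim'}$, which is the only typing that makes sense given the direction of $\phi_{\sim,\sim'}$.)

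First I would define $\phi_{\sim,\sim'} = (p', q')$ by
$$ p'([v]_{\sim_V}) = [v]_{\sim'_V}, \qquad q'([e]_{\sim_E}) = [e]_{\sim'_E}. $$
Well-definedness of both maps is immediate from $\sim\,\leq\,\sim'$: if $v \sim_V w$ then $v \sim'_V w$, hence $[v]_{\sim'_V} = [w]_{\sim'_V}$, and similarly on edges. The adjacency relation $\Delta_{G/\sim'} \circ q' = (p')^{\times 2} \circ \Delta_{G/\sim}$ reduces, after unfolding the definition of $\sursim{G}$ and $G/{\sim'}$, to the tautology $s([e]_{\sim'}) = [s(e)]_{\sim'} = p'([s(e)]_{\sim}) = p'(s([e]_{\sim}))$, and likewise for $t$. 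The commutation $\phi_{\sim,\sim'} \circ [-]_\sim = [-]_{\sim'}$ then holds on the nose by construction.

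Next I would check that $\phi_{\sim,\sim'}$ is a directed emulator morphism. Surjectivity of $p'$ is inherited from the surjectivity of $[-]_{\sim'_V}$ on $V_G$. For the outgoing edge lifting property, let $[e]_{\sim'} \in E_{G/\sim'}$ with source $[v]_{\sim'}$ and let $[w]_\sim \in V_{G/\sim}$ satisfy $p'([w]_\sim) = [v]_{\sim'}$, i.e.\ $w \sim'_V v$. Since $\sim'$ is automatic, bisimilarity of $\sim'_V$ with respect to $\sim'_E$ (Definition~\ref{def:relations_on_graph}(ii)) yields an edge $e' \in E_G$ with $s(e') = w$ and $e' \sim'_E e$. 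The class $[e']_\sim$ is then the required lift: it has source $[w]_\sim$ and $q'([e']_\sim) = [e']_{\sim'} = [e]_{\sim'}$.

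The main (minor) subtlety is just making sure every step invokes the correct one of the two relations: well-definedness needs the inclusion $\sim\,\subseteq\,\sim'$, while the lifting step needs the bisimilarity of $\sim'$, not of $\sim$. There is no real obstacle; alternatively one could obtain the same morphism abstractly by defining an induced automatic relation $\sim''$ on $\sursim{G}$ via $[v]_\sim \sim''_V [w]_\sim \Leftrightarrow v \sim'_V w$, observing that $(\sursim{G})/\sim''$ is canonically isomorphic to $G/\sim'$ (essentially a one-sided version of Proposition~\ref{prop:auto_rels_compose}), and composing $[-]_{\sim''}$ from Lemma~\ref{lem:automatic_merge_well_defined} with this isomorphism; but the direct verification above is shorter.
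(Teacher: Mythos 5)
Your proof is correct and takes essentially the same route as the paper: the map $[x]_{\sim} \mapsto [x]_{\sim'}$ (on vertices and edges), well-defined because $\sim\,\leq\,\sim'$, with adjacency and the commutation holding by construction. You actually go further than the paper's very terse proof by explicitly verifying the outgoing edge lifting property via the bisimilarity of $\sim'$, and you are right that the stated commutation $\phi_{\sim,\sim'} \circ [-]_{\sim'} = [-]_{\sim}$ is a typo for $\phi_{\sim,\sim'} \circ [-]_{\sim} = [-]_{\sim'}$.
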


\begin{proof}If $\sim \leq \sim'$, for all vertex $[x]_\sim \subseteq [x]_{\sim'}$ so that $[x]_\sim \mapsto [x]_{\sim'}$ is a function. The same holds for edges. Adjacency follows from the definitions. The equality $\phi_{\sim,\sim'} \circ [-]_{\sim'} = [-]_{\sim}$ is immediate.
\end{proof}

\begin{lemma} \label{lem:automatic_rels_upper_bound} Automatic relations have least upper bounds. 
\end{lemma}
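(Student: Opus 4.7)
My plan is to take, for two automatic relations $\sim^1=(\sim^1_V,\sim^1_E)$ and $\sim^2=(\sim^2_V,\sim^2_E)$ on $G$, their componentwise join as equivalence relations: let $\sim_V$ be the smallest equivalence relation on $V_G$ containing $\sim^1_V\cup\sim^2_V$ and let $\sim_E$ be the smallest equivalence relation on $E_G$ containing $\sim^1_E\cup\sim^2_E$. I claim $\sim=(\sim_V,\sim_E)$ is automatic; once this is established, it is by construction an upper bound of $\sim^1$ and $\sim^2$, and it is the least upper bound in the lattice of pairs of equivalence relations (because any upper bound must contain $\sim^1_V\cup\sim^2_V$ and $\sim^1_E\cup\sim^2_E$ and, being equivalence relations, must contain the transitive closures), hence a fortiori the least upper bound among automatic relations.

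For compatibility (clause (i)), I would observe that if $e\sim_E e'$ then there is a finite chain $e=e_0,e_1,\dots,e_n=e'$ with $e_j\,\sim^{i_j}_E\,e_{j+1}$ for some $i_j\in\{1,2\}$. Compatibility of each $\sim^{i_j}$ gives $s(e_j)\sim^{i_j}_V s(e_{j+1})$, and transitivity of $\sim_V$ chains these to $s(e)\sim_V s(e')$; the same argument handles targets.

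For bisimilarity (clause (ii)), suppose $x'\sim_V s(e)$. Pick a chain $x'=v_0,v_1,\dots,v_n=s(e)$ with $v_j\sim^{i_j}_V v_{j+1}$. I would lift $e$ along this chain by reverse induction on $j$: set $\tilde e_n=e$ and, given $\tilde e_{j+1}$ with $s(\tilde e_{j+1})=v_{j+1}$, use the bisimilarity of $\sim^{i_j}_V$ with respect to $\sim^{i_j}_E$ (applied to $v_j\sim^{i_j}_V v_{j+1}=s(\tilde e_{j+1})$) to produce an edge $\tilde e_j$ with $s(\tilde e_j)=v_j$ and $\tilde e_j\sim^{i_j}_E\tilde e_{j+1}$. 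The resulting $\tilde e_0$ then satisfies $s(\tilde e_0)=x'$ and $\tilde e_0\sim_E e$ by chaining the $\sim_E$-relations.

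The main delicacy, and the step that could obstruct a naive attempt, is exactly this second step: one cannot lift $e$ in a single move because $\sim_V$ is a transitive closure and not pointwise either $\sim^1_V$ or $\sim^2_V$. The reverse induction along the chain is what makes the argument work, and it crucially uses that bisimilarity is available independently in each component. (The same scheme extends to arbitrary non-empty families, which I would mention in passing since the lemma is stated for least upper bounds of pairs.)
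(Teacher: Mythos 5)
Your proposal is correct and follows essentially the same route as the paper: the join is the componentwise transitive closure of the unions, compatibility is chained link by link, and bisimilarity is obtained by lifting the edge stepwise backwards along the chain, exactly as in the paper's induction on the length of the transitive closure. Your closing observation that minimality is automatic (any upper bound must contain the unions and hence their transitive closures) is in fact cleaner than the paper's appeal to a ``tedious but not difficult induction''.
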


\begin{proof}
Given  $(\sim_V, \sim_E)$ and $(\sim'_V, \sim'_E)$, let $\sim_V''$ and $\sim_E''$ be respectively the transitive closure of $\sim_V \cup \sim_V'$ and $\sim_{E}\cup \sim'_{E}$. 
Let us check that $(\sim_V'',\sim_E'')$ is an automatic relation. First, since $\sim_V$ and $\sim_V'$ are equivalence relations, the transitive closure of their union is an equivalence relation. The same argument holds for edges. 

We argue by induction on the length of the transitive closure that the relation $\sim''_E$ is compatible with $\sim''_V$. The base case is immediate. For the inductive step, suppose that an edge $e \sim_E e' \sim_E'' e''$ (the other case $e \sim'_E e' \sim_E'' e''$ is similar). Then $s_{G}(e) \sim_{V} s_{G}(e')$. By induction, $s_{G}(e') \sim_{V}'' s_{G}(e'')$. By transitivity, $s_{G}(e) \sim_{V}'' s_{G}(e'')$. Similarly $t_{G}(e) \sim_{V}'' t_{G}(e')$. 

For bisimilarity, suppose that a vertex $x \sim''_V s_G(f)$ for some edge $f$. The base case is again immediate. For the inductive step, we suppose that we have $x \sim_V x' \sim_V'' x'' = s_G(f)$ (the other case $x\sim_V' x' \sim''_V x''$ being symmetric). By induction, there is an edge $f'\sim''_E f$ with $s_G(f') = x'$. By bisimilarity, for $\sim$, there is an edge $f''$ such that $s_{G}(f'') = x$ and $f'' \sim_{E} f'$. By transitivity $f'' \sim''_E f$. 

By a (tedious but not difficult) induction on the transitive closure, one verifies that $(\sim''_V,\sim''_E)$ is actually the least upper bound. 
\end{proof}

The lower bound  $(\sim_V'',\sim_E'')$ of two relations  $(\sim_V, \sim_E)$ and $(\sim'_v, \sim'_E)$ is somewhat heavier to define. Set $f \sim_{E}'' f'$ if $f \sim_{E} f'$ and $f \sim_{E}' f'$. The relation on vertices is defined by
$x \sim''_{V} y$ if the following properties are satisfied:
\begin{enumerate}
\item[(i)] $ x \sim_{V} y \wedge x \sim_{V}' y$;
\item[(ii)] if $\arete{x}{e}{x'}$ and $y\sim_V x$ then there is an edge $\arete{y}{f}{y'}$ such that $e \sim_{E}'' f$.
\end{enumerate}

\begin{remark}Both conditions are necessary. If (ii) does not hold, the definition yields an equivalence relation that fails to be automatic (it fails to satisfy bisimilarity). For instance, consider the graph \begin{tikzpicture}[transform shape,scale=0.6,baseline=-1mm]
\node[etat] (A) at (0,0) {$v_0$};
\node[etat] (B) at (1.5,0) {$u_0$};
\node[etat] (C) at (3,0) {$w_0$};
\node[etat] (D) at (5,0) {$v_1$};
\node[etat] (E) at (6.5,0) {$u_1$};
\node[etat] (F) at (8,0) {$w_1$};
\draw[->] (B) -- (A);
\draw[->] (B) -- (C);
\draw[->] (E) -- (D);
\draw[->] (E) -- (F);
\end{tikzpicture}. Consider $\sim$ the vertex-induced automatic containing the three pairs $(u_0,u_1)$, $(v_0,v_1)$, $(w_0,w_1)$ and an other vertex-induced automatic relation build on $(u_0,u_1)$,$(v_0,w_1)$,$(w_0,v_1)$. We have $u_0 \sim u_1 \wedge u_0 \sim' u_1$ but this is not bisimilar with respect to $e\sim'' e' = e \sim e' \wedge e\sim' e'$ that is empty. 
\end{remark}

\begin{lemma}\label{le:aut_lower_bound}Automatic relations have greatest lower bounds.
\end{lemma}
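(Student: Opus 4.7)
The plan is to bypass the somewhat delicate explicit construction sketched before the statement and instead deduce the existence of greatest lower bounds from general lattice-theoretic considerations, using only the previous lemma and one observation about the identity relation.

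First I would check that the identity relation $(1_V, 1_E)$, where $x \sim_V y$ iff $x=y$ and $e \sim_E e'$ iff $e=e'$, is always automatic: compatibility is trivial, and bisimilarity holds because $x \sim_V s(e)$ forces $x = s(e)$, so the edge $e$ itself witnesses the required lift. In particular, $(1_V, 1_E)$ is a lower bound for every automatic relation in the order introduced above Proposition~\ref{pr:sim_order_to_morphism}.

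Next, given two automatic relations $\sim, \sim'$, let $S$ be the set of all automatic relations $\approx$ on $G$ satisfying $\approx \leq \sim$ and $\approx \leq \sim'$. By the previous paragraph, $S$ contains $(1_V, 1_E)$ and is non-empty. Since $V$ and $E$ are finite, there are only finitely many pairs of equivalence relations on $V$ and $E$, hence $S$ is finite. Iterating the binary join provided by Lemma~\ref{lem:automatic_rels_upper_bound} over the finitely many elements of $S$ yields an automatic relation $\sim'' = \bigvee_{\approx \in S} \approx$, the least upper bound of $S$ inside the poset of automatic relations.

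Finally I would verify that $\sim''$ is the greatest lower bound of $\sim$ and $\sim'$. For every $\approx \in S$ one has $\approx \leq \sim$ by definition of $S$, so $\sim$ is an upper bound of $S$; since $\sim''$ is the least such upper bound, $\sim'' \leq \sim$, and symmetrically $\sim'' \leq \sim'$. Any other lower bound of $\sim$ and $\sim'$ automatically lies in $S$ and is therefore below $\sim''$. Hence $\sim''$ is the required greatest lower bound.

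The main obstacle in a more direct proof, following the explicit construction announced before the statement, would be to show that the pair $(\sim''_V, \sim''_E)$ there defined is indeed an equivalence relation \emph{and} automatic; this is exactly what clause~(ii) of the given definition is tailored to enforce (bisimilarity), and a direct argument would require a transfinite- or fixpoint-style analysis of how clause~(ii) cuts down the candidate relation on vertices. The lattice-theoretic route above sidesteps that verification entirely, at the price of losing the explicit description of the meet.
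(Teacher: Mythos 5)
Your proof is correct, but it takes a genuinely different route from the paper's. The paper proves the lemma by exhibiting the meet explicitly: $\sim''_E$ is the intersection $\sim_E \cap \sim'_E$, and $\sim''_V$ is cut down from $\sim_V \cap \sim'_V$ by the extra clause~(ii) that forces bisimilarity with respect to $\sim''_E$; it then checks by hand that this pair is an equivalence relation, is automatic, and is below every common automatic lower bound. You instead invoke the standard order-theoretic fact that a finite poset with a bottom element and binary joins is a complete lattice: the identity relation $(1_V,1_E)$ is automatic and minimal, so the set $S$ of automatic lower bounds of $\{\sim,\sim'\}$ is nonempty and finite, and its join (obtained by iterating Lemma~\ref{lem:automatic_rels_upper_bound}) is automatic, lies below both $\sim$ and $\sim'$ because each of them is an automatic upper bound of $S$, and dominates every element of $S$. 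All the ingredients you use are available at that point in the paper (the minimality of the identity relation is even asserted right after Corollary~\ref{lem:automatic_rels_lattice}), so the argument is complete. What the paper's approach buys is a concrete, directly computable description of the meet and an explanation of \emph{why} naive intersection of the vertex relations fails (the counterexample given in the surrounding remark); what your approach buys is brevity and robustness, since it entirely avoids the somewhat delicate verification that the explicitly defined $\sim''_V$ is symmetric, transitive and bisimilar. Your closing remark slightly overstates the difficulty of the direct route (no transfinite or fixpoint machinery is needed on a finite graph), but that is commentary rather than a gap.
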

\begin{proof}
We prove similarly that the lower bound $(\sim_V'',\sim_E'')$ is an automatic relation. It follows from the definition that it is an equivalence relation. The compatibility of $\sim_{E}''$ with respect to $\sim_{V}''$ immediately follows from the compatibility of $\sim_{E}$ (resp. $\sim_{E}'$) with respect to $\sim_{V}$ (resp. $\sim_{V}'$).Bisimilarity is a direct consequence of (ii).

Let us verify it is the greatest upper bound. Take $\sim''' \leq \sim$ and $\sim'''\leq \sim'$. By definition, $e\sim''' e' \Rightarrow e \sim e' \wedge e\sim' e'$. Thus $e\sim'' e'$. For the same reason, $x \sim''' x' \Rightarrow x \sim x' \wedge x\sim' x'$ so that (i) holds. Suppose that (ii) does not hold. Then, there is an edge $f\in \outE(x)$ such that for any edge $g\in \outE(y)$ either $f\not\sim g$ or $f\not\sim' g$. In both cases, $f \not\sim''' g$. But that means that $\sim'''$ has not the bisimilarity property for vertex $x$ with respect to $f$.
\end{proof}

\begin{prop} \label{prop:automatic_rels_lattice}
Automatic relations on a graph $G$ form a lattice. 
\end{prop}

Since the lattice is finite, there is a maximum element. The minimum is $(1_{V_G}, 1_{E_G})$. 

Looking at the lattice as a category, Proposition~\ref{pr:sim_order_to_morphism} shows that $\sim \mapsto [-]_\sim : G \to \sursim{G}$ is a functor from the lattice to  the coslice category, $\hom_{\Emu}(G, -)$. Actually, we have:

\begin{theorem}\label{th:terminal_in_slice}
In the coslice category, $\hom_{\Emu}(G, -)$, the directed emulator $[-]_\sim : G \to \sursim{G}$  with $\sim$ the maximum element of the lattice is a terminal object. 
\end{theorem}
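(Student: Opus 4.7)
The plan is to unfold what it means for $[-]_{\sim} : G \to G/{\sim}$ to be terminal in the coslice category, then build the required mediating morphism from the tools already proved. Given any object $\phi : G \to H$ of $\hom_{\Emu}(G,-)$, I need to produce a unique directed emulator morphism $\psi : H \to G/{\sim}$ such that $\psi \circ \phi = [-]_{\sim}$.

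For \emph{existence}, I first invoke Theorem~\ref{th:unique_automatic_decomposition} to split $\phi$ as $\phi = \iota_{\phi} \circ [-]_{\sim_{\phi}}$, where $\sim_{\phi}$ is the canonical automatic relation associated to $\phi$ and $\iota_{\phi} : G/{\sim_{\phi}} \to H$ is an isomorphism. Since $\sim$ is the maximum element of the lattice of automatic relations on $G$ (Corollary~\ref{lem:automatic_rels_lattice}), we have $\sim_{\phi} \leq \sim$, so Proposition~\ref{pr:sim_order_to_morphism} yields a directed emulator morphism $\phi_{\sim_{\phi},\sim} : G/{\sim_{\phi}} \to G/{\sim}$ satisfying $\phi_{\sim_{\phi},\sim} \circ [-]_{\sim_{\phi}} = [-]_{\sim}$. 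Then I set
\[
\psi := \phi_{\sim_{\phi},\sim} \circ \iota_{\phi}^{-1} : H \to G/{\sim},
\]
which is a directed emulator morphism as a composition of a directed emulator morphism with an isomorphism (Proposition~\ref{lem:composition_of_emulators} together with Example~\ref{ex:iso_is_directed_emulator}). The commutation of the triangle is then immediate:
\[
\psi \circ \phi = \phi_{\sim_{\phi},\sim} \circ \iota_{\phi}^{-1} \circ \iota_{\phi} \circ [-]_{\sim_{\phi}} = \phi_{\sim_{\phi},\sim} \circ [-]_{\sim_{\phi}} = [-]_{\sim}.
\]

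For \emph{uniqueness}, suppose $\psi_1, \psi_2 : H \to G/{\sim}$ are two directed emulator morphisms with $\psi_i \circ \phi = [-]_{\sim}$ for $i = 1,2$. Then $\psi_1 \circ \phi = \psi_2 \circ \phi$, and since $\phi$ is a morphism in $\Emu$, Proposition~\ref{pr:emu_right_cancel} (right-cancellativity of morphisms in $\Emu$) gives $\psi_1 = \psi_2$.

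I expect the only subtle point to be checking that the candidate $\psi$ really is a morphism in $\Emu$ (not just in $\DG$), but this is handled cleanly by the fact that $\phi_{\sim_{\phi},\sim}$ is already an emulator by Proposition~\ref{pr:sim_order_to_morphism} and isomorphisms are emulators, so their composition is an emulator. The rest is essentially bookkeeping of the factorization in Theorem~\ref{th:unique_automatic_decomposition} and the compatibility condition of Proposition~\ref{pr:sim_order_to_morphism}.
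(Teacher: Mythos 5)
Your proof is correct and follows essentially the same route as the paper's: decompose $\phi$ via Theorem~\ref{th:unique_automatic_decomposition}, use maximality of $\sim$ together with Proposition~\ref{pr:sim_order_to_morphism} to get the mediating morphism $\phi_{\sim_{\phi},\sim}\circ\iota_{\phi}^{-1}$, and conclude uniqueness from right-cancellativity (Proposition~\ref{pr:emu_right_cancel}). Your write-up is in fact slightly more explicit than the paper's about why the mediating morphism lives in $\Emu$, which is a welcome addition rather than a divergence.
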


\begin{proof}Suppose that $\phi : G \to H$ is a directed emulator. Then, there is an automatic relation $\sim'$ such that $\phi = \iota \circ [-]_{\sim'}$. Since $\sim$ is a maximum, by Proposition~\ref{pr:sim_order_to_morphism}, $[-]_\sim = \phi_{\sim',\sim} \circ [-]_{\sim'} = \phi_{\sim',\sim} \circ \iota^{-1} \circ \phi$. Uniqueness is a direct consequence of Proposition~\ref{pr:emu_right_cancel}.
\end{proof}

%
%

\section{From graphs to regular languages}

\subsection{Graphs as Semi-automata} \label{sub:graph-to-semi-auto}

\begin{definition} \label{def:tautological_transition_system}
Let $G$ be a directed graph. The canonical semi-automaton $A_{G} = (G, E_G, 1_{E_G})$ associated to $G$ with respect to the identity automatic relation ($\sim_{V} = 1_V$ and $\sim_{E} = 1_{E}$) is called 
 the {\emph{tautological semi-automaton}}.
\end{definition}

The morphism $(f,g) : G \to H$ between two directed graphs is sent to $\Gaut{(f,g)} = (f,g,g) : \Gaut{G} \to \Gaut{H}$. Indeed, we have $g \circ 1_{E_G} = 1_{E_H} \circ g$, thus is a morphism. So, the tautological assignement forms a functor $${\rm{A}}_{(\mathunderscore)}: \DG \to {\mathbf{Semi}}.$$

\begin{prop}\label{A_is_full_and_faithful}The functor $G \to \Gaut{G}$ is full and faithful.
\end{prop}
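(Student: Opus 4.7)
The plan is to unfold the definitions of both categories and exploit the fact that the tautological labeling $\ell_{\Gaut{G}} = 1_{E_G}$ is the identity, which rigidly determines the alphabet component of any semi-automaton morphism.

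For faithfulness, I would start by taking two graph morphisms $(f_1, g_1), (f_2, g_2) : G \to H$ in $\DG$ with $\Gaut{(f_1,g_1)} = \Gaut{(f_2,g_2)}$. By the definition of the functor, this equality reads $(f_1, g_1, g_1) = (f_2, g_2, g_2)$ in $\mathbf{Semi}$, from which $f_1 = f_2$ and $g_1 = g_2$ are immediate. Hence the induced map on hom-sets is injective.

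For fullness, I would take an arbitrary semi-automaton morphism $(f, g, \alpha) : \Gaut{G} \to \Gaut{H}$ and show it is the image of the underlying graph morphism $(f,g)$. The key observation is that the labeling compatibility of Equation~(\ref{eq:rel_labels_morphisms}) becomes
\[
\alpha \circ 1_{E_G} = 1_{E_H} \circ g,
\]
so $\alpha = g$ is forced. Moreover, by definition of a semi-automaton morphism, the pair $(f,g)$ already satisfies the incidence relations of Equation~(\ref{eq:incidence_relation}), hence is a bona fide morphism $G \to H$ in $\DG$. Applying the functor to this pair produces $(f, g, g) = (f, g, \alpha)$, which is the original morphism.

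The main conceptual point --- and the only place where any care is needed --- is recognizing that the tautological labeling collapses the third component of a morphism into the second, eliminating the extra freedom that a general semi-automaton morphism enjoys (cf.\ the remark following Lemma~\ref{lem:forgetful_faithful_functor}, where this freedom is exactly what prevented ${\rm G}_{(\mathunderscore)}$ from being full). There is no genuine obstacle beyond a careful reading of the definitions.
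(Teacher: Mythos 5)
Your proof is correct and follows essentially the same route as the paper's: faithfulness by comparing components of $(f,g,g)$, and fullness by noting that $\alpha \circ 1_{E_G} = 1_{E_H} \circ g$ forces $\alpha = g$. The added remark contrasting this with the non-fullness of ${\rm G}_{(\mathunderscore)}$ is a sensible observation but not needed for the argument.
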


\begin{proof}Consider two morphisms $\phi = (f,g):G \to H$ and $\psi = (f',g') : G \to H$ such that $\Gaut{\phi} = (f,g,g) = (f',g',g')= \Gaut{\psi}$. Then, $f=f'$ and $g=g'$.  
	
Suppose now $(f, g, \alpha) : \Gaut{G} \to \Gaut{H}$.  It satisfies $\alpha \circ {\rm{id}}_{E} = {\rm{id}}_{E'} \circ g$, that is, $\alpha = g$ so that $(f, g, \alpha) = \Gaut{(f,g)}$. 
\end{proof}

\begin{lemma} \label{lem:tautological_properties}
The following properties hold:
\begin{enumerate}
\item[$(i)$] The identity $1_{\DG} : G \to G$ is a natural transformation $1_{\DG} \to \Autg{\Gaut{(-)}}$. 
\item[$(ii)$] There is a natural transformation $\epsilon : \Gaut{{\rm G}_{(-)}} \to 1_{\mathbf{Semi}}$.
\end{enumerate}
\end{lemma}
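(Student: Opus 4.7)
The plan is to unpack both statements directly from the definitions; each reduces to a single compatibility equation.

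For (i), observe that on objects $\Autg{\Gaut{G}} = G$ since $\Gaut{G} = (G, E_G, 1_{E_G})$ has $G$ as its underlying graph. Hence the identity $1_G : G \to \Autg{\Gaut{G}}$ is trivially a morphism in $\DG$. For naturality along $\phi = (f,g) : G \to H$, we must check that $\Autg{\Gaut{\phi}} \circ 1_G = 1_H \circ \phi$. By the construction of the tautological functor, $\Gaut{\phi} = (f,g,g)$, and applying $\Autg{(-)}$ forgets the alphabet component to yield $(f,g) = \phi$, so the square commutes on the nose.

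For (ii), I would define $\epsilon_\A = (1_{V_\A}, 1_{E_\A}, \ell_\A) : \Gaut{\Autg{\A}} \to \A$ for each semi-automaton $\A = (\Autg{\A}, \Sigma_\A, \ell_\A)$. This is a well-formed semi-automaton morphism because the labelling compatibility (\ref{eq:rel_labels_morphisms}) reads $\ell_\A \circ 1_{E_\A} = \ell_\A \circ 1_{E_\A}$, a tautology; moreover $\ell_\A$ is surjective by assumption on $\A$, so the target alphabet is not underused. For naturality along $\phi = (f,g,\alpha) : \A \to \B$, I compute both sides of the naturality square: the left-hand composite is $\phi \circ \epsilon_\A = (f,\,g,\,\alpha \circ \ell_\A)$, whereas the right-hand composite, using $\Gaut{\Autg{\phi}} = (f,g,g)$, is $\epsilon_\B \circ \Gaut{\Autg{\phi}} = (f,\,g,\,\ell_\B \circ g)$. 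Equality on the third coordinate is precisely the defining relation (\ref{eq:rel_labels_morphisms}) of $\phi$ as a morphism of semi-automata, and the other two coordinates agree on the nose.

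There is no real obstacle here; the content is purely notational. The one conceptual point worth flagging is that, despite the symmetric look of the notations $\Gaut{(-)}$ and $\Autg{(-)}$, these functors are not mutually inverse: the composite $\Gaut{\Autg{\A}}$ forgets the original alphabet and labelling of $\A$, replacing the alphabet by $E_\A$ itself. This asymmetry is exactly why the natural transformation in (i) is an honest identity while in (ii) the component $\epsilon_\A$ must carry the original labelling $\ell_\A$ as its alphabet map — the unit/counit pattern one would expect if $\Gaut{(-)} \dashv \Autg{(-)}$ were an adjunction, though the lemma does not assert this.
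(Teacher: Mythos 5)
Your proof is correct and follows essentially the same route as the paper: part (i) is immediate from $\Autg{\Gaut{G}} = G$ and $\Autg{\Gaut{(f,g)}} = (f,g)$, and for part (ii) the paper defines the very same counit $\epsilon_\A = (1_{V_\A}, 1_{E_\A}, \ell_\A)$ and reduces both well-definedness and naturality to Equation~(\ref{eq:rel_labels_morphisms}) exactly as you do. Your closing remark about the unit/counit pattern is also consistent with the paper, which records the resulting adjunction as Corollary~\ref{prop:tautological_functor_is_left_adjoint}.
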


\begin{proof}
$(i)$ follows from the definition. 
For (ii), we define $\epsilon_\A = (1_{V_\A}, 1_{E_\A}, \ell_\A)$. It is a (relabelling) morphism $\Gaut{{\rm G}_\A} \to \A$. Indeed, due to (i), the two automata share the same graph: $\Autg{\Gaut{{\rm G}_\A}} = \Autg{\A}$. Thus $(1_{V_\A}, 1_{E_\A})$ is a graph morphism. 
Second, we have  $\ell_{\A_{{\Autg{\A}}}} = 1_{E_{\Autg{\A}} }  = 1_{E_\A} : E_{\A} \to E_{\A}$, again due to $(i)$. Thus, $\ell_{\A} \circ \ell_{\A_{{\rm G}_{\A}}} = \ell_\A \circ  1_{E_\A}$ that leads to Equation~\ref{eq:rel_labels_morphisms}. 

Let us verify naturality. Given $(f, g, \alpha) : \A \to \B$,  by definition, $\Gaut{\Autg{(-)}}(f,g, \alpha) = (f,g, g)$. We have to verify that the diagram commute:
$$ \xymatrix{
	{\A_{\Autg{\A}}} \ar[d]_{(f,g,g)} \ar[rr]^-{(1_{V_\A},1_{E_A},\ell_\A)} && \A  \ar[d]^-{(f, g, \alpha)} \\
	{\A_{\Autg{\B}}} \ar[rr]^-{(1_{V_\B},1_{E_B},\ell_\B)} && \B
}$$

For vertices and edges, this is trivial. The last equation is read $\alpha \circ \ell_\A = \ell_\B \circ g$ which is  Equation~\ref{eq:rel_labels_morphisms}. 
\end{proof}

\begin{corollary} \label{prop:tautological_functor_is_left_adjoint}
The tautological assignment $G \mapsto {\A}_{G}$ is left-adjoint to the forgetful functor ${\rm{G}}_{(\mathunderscore)}$: ${\mathbf{Semi}} \to \DG$.
\end{corollary}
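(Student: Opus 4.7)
The plan is to exhibit a natural bijection
\[
\Phi_{G,\A} : \hom_{\mathbf{Semi}}(\Gaut{G}, \A) \longrightarrow \hom_{\DG}(G, \Autg{\A})
\]
for every directed graph $G$ and every semi-automaton $\A$. The bijection itself is forced by the definition of $\Gaut{G}$. Indeed, let $\psi = (f,g,\alpha) : \Gaut{G} \to \A$ be any semi-automaton morphism. The compatibility relation~(\ref{eq:rel_labels_morphisms}) reads $\alpha \circ \ell_{\Gaut{G}} = \ell_\A \circ g$; but by Definition~\ref{def:tautological_transition_system} we have $\ell_{\Gaut{G}} = 1_{E_G}$, hence $\alpha = \ell_\A \circ g$. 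Thus $\psi$ is completely determined by its underlying graph morphism $(f,g) : G \to \Autg{\A}$, and I would set $\Phi_{G,\A}(f,g,\alpha) = (f,g)$, with candidate inverse $\Psi_{G,\A}(f,g) = (f, g, \ell_\A \circ g)$.

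I would then verify three things in order. First, $\Psi_{G,\A}(f,g)$ is a \emph{bona fide} semi-automaton morphism: the graph part $(f,g)$ is a morphism $\Gaut{G} \to \Gaut{\Autg{\A}}$ trivially (same underlying graphs as $G \to \Autg{\A}$ by Lemma~\ref{lem:tautological_properties}(i)), and the label condition $(\ell_\A \circ g) \circ 1_{E_G} = \ell_\A \circ g$ holds tautologically. Second, $\Phi$ and $\Psi$ are mutually inverse, which is immediate from the equation $\alpha = \ell_\A \circ g$ above. Third, naturality in $\A$ (post-composition by $\A \to \B$) and in $G$ (pre-composition by $G' \to G$) reduces to the observation that these compositions act as the identity on the $\alpha$ coordinate once one remembers $\alpha = \ell_\A \circ g$; this is a short diagram chase.

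Alternatively, and perhaps even more economically, one may simply invoke Lemma~\ref{lem:tautological_properties} and verify the two triangle identities for the putative unit $\eta_G = 1_G$ and counit $\epsilon_\A = (1_{V_\A}, 1_{E_\A}, \ell_\A)$. Since $\Autg{\Gaut{G}} = G$ on the nose and $\ell_{\Gaut{G}} = 1_{E_G}$, both $\Gaut{\eta_G}$ and $\epsilon_{\Gaut{G}}$ are literal identities, so $\epsilon_{\Gaut{G}} \circ \Gaut{\eta_G} = 1_{\Gaut{G}}$ is trivial. Similarly $\Autg{\epsilon_\A} = (1_{V_\A}, 1_{E_\A})$ and $\eta_{\Autg{\A}} = 1_{\Autg{\A}}$, so the second triangle identity is trivial as well.

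I do not expect any genuine obstacle: the tautological semi-automaton is \emph{designed} so that the unit and counit of the adjunction are point-wise identities on the underlying sets. The only thing that needs care is to keep track of which identities live in $\DG$ and which in $\mathbf{Semi}$, and to cite Proposition~\ref{A_is_full_and_faithful} (fullness and faithfulness of $\Gaut{(-)}$) when asserting that the assignment $(f,g) \mapsto (f,g,\ell_\A\circ g)$ really lands in the hom-set of semi-automata.
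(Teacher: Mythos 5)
Your proposal is correct and, in its second half, is exactly the paper's route: the paper states the adjunction as a corollary of Lemma~\ref{lem:tautological_properties}, which supplies the unit ($1_G : G \to \Autg{\Gaut{G}}$) and counit ($\epsilon_\A = (1_{V_\A},1_{E_\A},\ell_\A)$), the triangle identities being immediate since both composites are literal identities. Your direct hom-set bijection $(f,g,\alpha)\mapsto(f,g)$ with inverse $(f,g)\mapsto(f,g,\ell_\A\circ g)$, forced by $\alpha = \alpha\circ 1_{E_G} = \ell_\A\circ g$, is an equivalent and equally valid packaging of the same observation.
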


A semi-automaton $\A$ is {\emph{complete}} (resp. {\emph{deterministic}})
 if given any state $q \in V_{\A}$, the map $\ell_q : {\rm OutE}(q) \to \Sigma_\A, \ e \mapsto \ell(e)$ is surjective (resp. injective\footnote{We rule out multiple transitions with the same source, target and label for a deterministic semi-automaton. (Not only this is consistent with the traditional definition, but this is required for
the next theorem to hold.) }). The following observation is a direct consequence of the definitions.

\begin{lemma} \label{lem:tautological_semi_is_deterministic}
For any directed graph $G$, the tautological semi-automaton ${\rm{A}}_{G}$ is deterministic.
\end{lemma}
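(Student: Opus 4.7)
The plan is to unwind the definitions and observe that injectivity of the local labelling map is immediate when the labelling is the identity.

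First I would recall what $\A_G$ is: by Definition~\ref{def:tautological_transition_system}, the alphabet is $\Sigma_{\A_G} = E_G$ and the labelling is the identity function $\ell = 1_{E_G} : E_G \to E_G$. Determinism requires that for every state $q \in V_G$, the restriction $\ell_q : \outE_G(q) \to \Sigma_{\A_G} = E_G$, $e \mapsto \ell(e) = e$, is injective.

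Next I would simply observe that $\outE_G(q)$ is, by Definition~\ref{def:centripetal_star}, a subset of $E_G$, and the map $\ell_q$ is then the inclusion $\outE_G(q) \hookrightarrow E_G$. An inclusion is tautologically injective, so $\ell_q$ is injective for every $q \in V_G$. Hence $\A_G$ is deterministic.

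There is essentially no obstacle here: the statement is an unfolding of the definitions, since the labelling in the tautological semi-automaton is chosen to be the identity precisely so that distinct outgoing edges carry distinct labels. The one small point worth noting explicitly is that even parallel outgoing edges from the same vertex $q$ (which would normally obstruct determinism) are here distinguished as elements of $E_G$, so they carry distinct labels under $1_{E_G}$.
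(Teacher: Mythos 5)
Your proof is correct and matches the paper's intent exactly: the paper gives no explicit proof, merely asserting the lemma is ``a direct consequence of the definitions,'' and your unwinding --- that $\ell_q$ is the inclusion $\outE_G(q) \hookrightarrow E_G$, hence injective --- is precisely that consequence. Your closing remark that even parallel outgoing edges are distinguished as elements of $E_G$ correctly addresses the footnote in the paper's definition of determinism.
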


\begin{lemma}\label{lem:complete_and_deterministic_outE}
	If $\A$ is complete and deterministic, then, for all $q \in V_\A$, $\outE(q) \simeq \Sigma_\A$.
\end{lemma}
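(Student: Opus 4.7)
The plan is to observe that this lemma is an immediate consequence of the two defining conditions just stated, so the proof amounts to assembling them. For a fixed state $q \in V_\A$, consider the labelling map $\ell_q : \outE(q) \to \Sigma_\A$, $e \mapsto \ell(e)$ introduced just above Lemma~\ref{lem:tautological_semi_is_deterministic}.

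First I would unpack completeness: by definition, $\ell_q$ is surjective for every $q$. Then I would unpack determinism: by definition, $\ell_q$ is injective for every $q$. The combination of the two yields that $\ell_q$ is a bijection, which is precisely the statement $\outE(q) \simeq \Sigma_\A$ (as an isomorphism of sets, with the explicit witness $\ell_q$).

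There is no real obstacle here; the only thing worth noting is that the isomorphism is natural in $q$ in the sense that the family $\{\ell_q\}_{q \in V_\A}$ is given uniformly by the edge-labelling $\ell$ of $\A$, so the statement holds simultaneously for every state. No appeal to other lemmas is required.
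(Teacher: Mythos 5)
Your proof is correct and is exactly the paper's argument: the paper's one-line proof simply notes that $\ell_q$ is both surjective (completeness) and injective (determinism), hence a bijection. Nothing further is needed.
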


\begin{proof}
	By definition, $\ell_q$ is both surjective and injective. 
\end{proof}

\begin{corollary}\label{cor:emu_is_cover_complete_deterministic}
	Suppose that $\phi :\A \to \B$ is strict morphism, $\A$ is both complete and deterministic and $G_\phi$ is a directed emulator, then, it is a cover.
\end{corollary}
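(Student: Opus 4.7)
My plan is to unpack the definitions and show that the uniqueness clause in the definition of a directed cover follows automatically from determinism and strictness, with surjectivity already guaranteed by the emulator hypothesis.

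First I would recall, via Lemma~\ref{lem:emu-surjects-onto-outedges}, that $G_\phi = (f,g)$ is a cover precisely when each restriction $g_{x'} : \outE_{G_\A}(x') \to \outE_{G_\B}(f(x'))$ is a bijection. Since $G_\phi$ is already a directed emulator, each $g_{x'}$ is surjective; hence the only thing to verify is injectivity of $g_{x'}$.

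Next, I would argue injectivity directly. Let $e_1', e_2' \in \outE_{G_\A}(x')$ be two edges with the same source $x'$ and $g(e_1') = g(e_2') = e$. Because $\phi$ is \emph{strict}, the label equation~\eqref{eq:rel_labels_morphisms} specializes to $\ell_\A = \ell_\B \circ g$; applying it gives
\[
\ell_\A(e_1') = \ell_\B(g(e_1')) = \ell_\B(e) = \ell_\B(g(e_2')) = \ell_\A(e_2').
\]
Thus $e_1'$ and $e_2'$ are two edges in $\outE_{G_\A}(x')$ with the same label. Determinism of $\A$ means the map $\ell_{x'} : \outE_{G_\A}(x') \to \Sigma_\A$ is injective, so $e_1' = e_2'$, which is exactly the uniqueness required for a directed cover.

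There is essentially no obstacle here: once the diagrammatic definition of ``cover'' is translated into injectivity of $g_{x'}$, the strictness condition transports the identification across the label map and determinism of $\A$ collapses the two candidate edges. Note that completeness of $\A$ is not strictly needed for uniqueness; it is presumably stated in the hypothesis to match the natural setting of the next section, where Lemma~\ref{lem:complete_and_deterministic_outE} identifies $\outE(q)$ with the alphabet and thereby guarantees that the emulator surjection $g_{x'}$ actually has the expected domain.
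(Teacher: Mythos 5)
Your argument is correct and is essentially the paper's proof: the paper's one-liner writes $\phi=(f,g,\alpha)$ and observes that $\ell_\B \circ g_q = \ell_q$ is an isomorphism (Lemma~\ref{lem:complete_and_deterministic_outE}), hence $g_q$ is injective, which is exactly your label-transport-plus-determinism step expressed as a composition of maps. Your closing remark is also accurate: only injectivity of $\ell_{x'}$ (determinism) is needed for the uniqueness clause, the surjectivity of $g_{x'}$ being already supplied by the emulator hypothesis, so completeness plays no essential role here.
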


\begin{proof} Let $\phi = (f,g,\alpha)$. For all $q \in V_\A$, we have $\ell_\B \circ g_q = \ell_q$ that is an isomorphism. Thus $g_q$ is an isomorphism. 
\end{proof}

\begin{lemma} \label{lem:direct_equiv_conformal}
	Suppose that $\phi: \A \to \B$ is a semi-automaton epimorphism. Assume the following conditions:
	\begin{enumerate}
		\item[$(1)$] The source semi-automaton $\A$ is complete;
		\item[$(2)$] The target semi-automaton $\B$ is deterministic.
	\end{enumerate}
	Then the morphism $G_\phi$ is a directed emulator.
\end{lemma}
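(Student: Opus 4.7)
Write $\phi = (f,g,\alpha)$. The plan is to verify directly the two clauses in Definition~\ref{def:directed_emulator} for the underlying graph morphism $G_\phi = (f,g) : G_\A \to G_\B$.

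First, condition (i) is immediate: since $\phi$ is an epimorphism, $f: V_\A \to V_\B$ is surjective. (A remark aside: from $\alpha \circ \ell_\A = \ell_\B \circ g$, together with the surjectivity of $g$ and the fact that no alphabet is underused, one also gets that $\alpha$ is surjective; this will be used below.)

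For the edge outgoing lifting property, I would proceed as follows. Fix an edge $e \in E_\B$ and a vertex $x' \in V_\A$ with $f(x') = s_\B(e)$. Set $a = \ell_\B(e) \in \Sigma_\B$. By surjectivity of $\alpha$, choose $a' \in \Sigma_\A$ with $\alpha(a') = a$. Now invoke completeness of $\A$ at the vertex $x'$: the map $\ell_{x'}: \outE_{G_\A}(x') \to \Sigma_\A$ is surjective, hence there exists an edge $e' \in \outE_{G_\A}(x')$ such that $\ell_\A(e') = a'$.

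It remains to show $g(e') = e$. By construction, $s_\B(g(e')) = f(s_\A(e')) = f(x') = s_\B(e)$, and
\begin{equation*}
\ell_\B(g(e')) = \alpha(\ell_\A(e')) = \alpha(a') = a = \ell_\B(e).
\end{equation*}
So $g(e')$ and $e$ are two edges in $\B$ with the same source $s_\B(e)$ and the same label $a$. Because $\B$ is deterministic, the restriction of $\ell_\B$ to $\outE_{G_\B}(s_\B(e))$ is injective, which forces $g(e') = e$. This closes the argument.

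The only subtle point is the little lemma that $\alpha$ is surjective for a semi-automaton epimorphism; everything else is a direct unpacking. The remark about $\alpha$ surjective is the only step that uses the ``no underused alphabet'' convention of the paper, which matches Lemma~\ref{lem:forgetful_faithful_functor}, so no additional hypotheses are needed.
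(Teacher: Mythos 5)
Your proof is correct and follows essentially the same strategy as the paper's: use completeness of $\A$ to produce a candidate outgoing edge at $x'$ with the right label, then use determinism of $\B$ to force its image to coincide with $e$. The only (harmless) difference is in how the label is transported: the paper first lifts the edge $e$ itself through the surjective edge map $g$ to some vertex $p_1$ in the fibre and reads off $\ell_\A(\tilde e)$, needing no hypothesis on $\alpha$, whereas you lift the letter $\ell_\B(e)$ through $\alpha$, whose surjectivity you correctly derive from the no-underused-alphabet convention; both routes land in the same place once determinism of $\B$ is invoked.
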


\begin{proof}
	Suppose that $\phi = (f, g, \alpha)$ with $(f,g)$ an epimorphism. Then, $f$ is surjective. Let us check the outgoing edge lifting property.  Let $\arete{q_0}{e}{q'_0} \in E_{\B}$ be a transition in $\B$ and $p_0 \in f^{-1}(q_0)$. We look for a preimage of $e$ starting at $p_0$. Since $(f,g)$ is an epimorphism, there exists $p_1 \in f^{-1}(q_0)$ and $\arete{p_1}{\tilde{e}}{q'_1} \in E_{\A}$ such that $g(\tilde{e}) = e$, $f(p_1) = q_0$ and $f(q'_1) = q'_0$. Furthermore, $\alpha(\ell_{A}(\tilde{e})) = \ell_{\B}(e)$. If $p_0 = p_1$, we are done. Otherwise, since $\A$ is complete, there is some edge $\arete{p_0}{\tilde{e}'}{p_1} \in E_{\A}$ starting from $p_0$ with the same label $\ell_{\A}(\tilde{e})$. Consider its image $g(\tilde{e}')$. We have $s_\B(g(\tilde{e}')) = q_0$ and $\ell_{\B}(g(\tilde{e}')) = \alpha(\ell_{\A}(\tilde{e}')) = \alpha(\ell_{\A}(\tilde{e})) = \ell_{\B}(e)$. Therefore the edge $g(\tilde{e}')$ has same source and same label as the edge $g(\tilde{e})$. Since $\B$ is deterministic, this implies that $g(\tilde{e}') = g(\tilde{e}) = e$ and we are done.
\end{proof}

\subsection{Finite state automata} 

\label{sec:finie_state_auto}

An {\emph{automaton}} $\A$ is a semi-automaton endowed with a set of states $I \subseteq V_G$ that is called the set of \emph{initial states} and a set $F \subseteq V_{\A}$ of \emph{final states}. A state $q\in V_G$ is \emph{accessible} if $q$ is reachable from an initial state, that is, if there is a  state $q_0 \in I$ and a walk starting at $q_0$ and ending at $q$. A state $q \in V_G$ is \emph{co-accessible} if $q$ is co-reachable to a final state, that is, if there is a state $q_{1} \in F$ and walk starting at $q$ and ending at $q_1$. \emph{We suppose in this paragraph that any state $q \in V_G$ is accessible.}


The standard definition considers automata with a unique initial state. We call them \emph{finite state automata}. To denote the full class, we speak about multi-input state automata to stress the fact that multiple inputs are allowed. $\Maut$ denotes the full class. 

\begin{definition}\label{def:automaton_morphism}
Let $\A$ and $\B$ be two automata. A {\emph{morphism between automata}} $\A \to \B$ is a
semi-automaton morphism $(f, g, \alpha)$ verifying 
\begin{enumerate}
	\item $I_\A = f^{-1}(I_\B)$ and \label{eq:init_state}
	\item $F_\A = f^{-1}(F_\B)$. \label{eq:final_state}
\end{enumerate}
In other words, $q \in I_\A \Leftrightarrow f(q) \in I_\B$ and $q\in F_\A \Leftrightarrow f(q) \in F_\B$.
\end{definition}
The identity is an automaton morphism. Automata morphism compose componentwise. We denote \textbf{Auto} the category of automata and their morphisms. 

\begin{lemma}
	The assignment that forgets the sets of initial and final states induces a forgetful faithful functor ${\mathbf{Auto}} \to {\mathbf{Semi}}$.
\end{lemma}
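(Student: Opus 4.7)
The plan is to unpack the definitions and observe that the assignment in question is essentially the identity on underlying triples $(f,g,\alpha)$, so the only real content lies in checking functoriality and then noting that faithfulness is immediate.

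First I would define the assignment explicitly: on objects, send an automaton $\A = (G,\Sigma,\ell,I,F)$ to the semi-automaton $(G,\Sigma,\ell)$, simply dropping the distinguished subsets $I$ and $F$; on morphisms, send an automaton morphism $(f,g,\alpha):\A\to\B$ to the same triple $(f,g,\alpha)$ regarded as a semi-automaton morphism. This triple is by definition already a semi-automaton morphism since Definition~\ref{def:automaton_morphism} requires the data to be a semi-automaton morphism first, and the extra conditions \eqref{eq:init_state} and \eqref{eq:final_state} only restrict which semi-automaton morphisms qualify.

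Next I would verify functoriality. The automaton identity is $(1_V, 1_E, 1_\Sigma)$, which maps to the semi-automaton identity. For composition, given $\A\xrightarrow{(f,g,\alpha)}\B\xrightarrow{(f',g',\alpha')}\C$ in $\Maut$, the composite is $(f'\circ f, g'\circ g, \alpha'\circ\alpha)$ by componentwise composition, and this coincides with the composition of the images in $\mathbf{Semi}$.

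Finally, for faithfulness, suppose $(f_1,g_1,\alpha_1)$ and $(f_2,g_2,\alpha_2)$ are two parallel morphisms $\A\rightrightarrows\B$ in $\Maut$ whose images in $\mathbf{Semi}$ agree. Then the triples themselves agree, so the morphisms agree in $\Maut$ as well. I expect no obstacle here: the subtle point, if any, is merely to emphasize that an automaton morphism carries no additional data beyond its underlying semi-automaton morphism; the sets $I$ and $F$ intervene only as constraints on that morphism (via $I_\A = f^{-1}(I_\B)$ and $F_\A = f^{-1}(F_\B)$), not as extra structure on the morphism itself.
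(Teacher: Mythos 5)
Your proposal is correct: the paper states this lemma without proof, treating it as an immediate consequence of the definitions, and your unpacking --- the forgetful assignment is the identity on the underlying triple $(f,g,\alpha)$, the conditions on $I$ and $F$ being constraints on which semi-automaton morphisms qualify rather than extra data --- is exactly the verification the paper leaves implicit. Functoriality and faithfulness then follow as you describe, with no gap.
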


\begin{remark}
	The forgetful functor is not full. For instance, the induced map $H_1 = {\rm Hom}_{\mathbf{Auto}}\left( \begin{tikzpicture}[->,transform shape, scale=0.7, initial text={}, baseline=-1mm]
	\node (I0) at (0,0.75) {};
	\node (F0) at (1.5,0.75) {};
	\node[state, inner sep=3pt, minimum size=12pt ](G0) at (0,0) {i};
	\node[state,  inner sep=2pt, minimum size=12pt](G1) at (1.5,0) {f};
	\path [] (G0) edge [bend left] node [above=0.5mm] {$e:a$} (G1) ;
	\path [] (G1) edge [bend left] node [below=0.5mm]
	{$e':a$} (G0);
	\draw (I0) -- (G0);
	\draw (G1) -- (F0);
	\end{tikzpicture},\  \begin{tikzpicture}[->,transform shape, scale=0.7, initial text={}, baseline=-1mm]
	\node (I0) at (0,0.75) {};
	\node (F0) at (1.5,0.75) {};
	\node[state, inner sep=3pt, minimum size=12pt ](G0) at (0,0) {i};
	\node[state,  inner sep=2pt, minimum size=12pt](G1) at (1.5,0) {f};
	\path [] (G0) edge [bend left] node [above=0.5mm] {$e:a$} (G1) ;
	\path [] (G1) edge [bend left] node [below=0.5mm]
	{$e':a$} (G0);
	\draw (I0) -- (G0);
	\draw (G1) -- (F0);
	\end{tikzpicture}\right) \to
	{\rm{Hom}}_{\mathbf{Semi}}\left( \begin{tikzpicture}[->,transform shape, scale=0.7, initial text={}, baseline=-1mm]
	\node[state, inner sep=2pt, minimum size=12pt ](G0) at (0,0) {i};
	\node[state,  inner sep=2pt, minimum size=12pt](G1) at (1.5,0) {f};
	\path [] (G0) edge [bend left] node [above=0.5mm] {$e:a$} (G1) ;
	\path [] (G1) edge [bend left] node [below=0.5mm]
	{$e':a$} (G0);
	\end{tikzpicture},\  \begin{tikzpicture}[->,transform shape, scale=0.7, initial text={}, baseline=-1mm]
	\node[state, inner sep=2pt, minimum size=12pt ](G0) at (0,0) {i};
	\node[state,  inner sep=2pt, minimum size=12pt](G1) at (1.5,0) {f};
	\path [] (G0) edge [bend left] node [above=0.5mm] {$e:a$} (G1) ;
	\path [] (G1) edge [bend left] node [below=0.5mm]
	{$e':a$} (G0);
	\end{tikzpicture}\right)
	= H_2$ is not onto. Indeed, $H_1$ contains only the identity while $H_2$ contains also  the semi-automaton morphism induced by exchanging the two vertices.
\end{remark}

A ({\emph{length $n$}}) {\emph{computation}} in $\A$ starting at state $q$, ending at state $q'$, is a walk $\pi = e_0 e_1 \ldots e_n$ in $G_\A$. The {\emph{label}} of the computation $\pi$ is $\ell(\pi) = \ell(e_0)\ell(e_1)\cdots \ell(e_n) \in \Sigma_\A^{*}$. A computation in an automaton is {\emph{successful}} if it starts at some initial state and ends at some final state. 

\begin{definition} \label{def:language}Given an automaton $\A$ the subset $L(\A) \subseteq \Sigma_\A^*$ of the words $w$ for which there is a a successful computation $\pi$ such that $\ell(\pi) = w$ is called the language represented by the automaton.
\end{definition}

From Kleene's Theorem, we know that the languages defined by (finite state and multi-input state) automata are the regular languages. 

\begin{lemma}\label{lem:image_word_via_morphism}
	The image of a successful computation $\pi$ in $\A$ by a semi-automaton morphism $(f,g, \alpha) :\A \to \B$ is a succesful computation $\pi' = g(\pi)$ in $\B$. Furthermore, the label of the image of the computation is the image of the label of the computation:
	$\ell_\B(\pi') = \ell_\B(g(\pi)) = \alpha(\ell_A(\pi))$.
\end{lemma}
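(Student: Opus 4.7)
The plan is to unwind the three conditions that a successful computation and a semi-automaton morphism satisfy, and check them one by one on the image sequence $g(\pi) = g(e_0) g(e_1) \cdots g(e_n)$.

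First, I would verify that $g(\pi)$ is a walk in $G_{\B}$. Writing $\pi = e_0 e_1 \cdots e_n$ with $t_{\A}(e_i) = s_{\A}(e_{i+1})$, and applying the incidence relations $s_{\B} \circ g = f \circ s_{\A}$ and $t_{\B} \circ g = f \circ t_{\A}$ (Equation~(\ref{eq:incidence_relation})), one obtains
\[ t_{\B}(g(e_i)) = f(t_{\A}(e_i)) = f(s_{\A}(e_{i+1})) = s_{\B}(g(e_{i+1})), \]
so $g(\pi)$ is indeed a walk in $G_{\B}$.

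Next I would check that $g(\pi)$ is successful. Its source vertex is $s_{\B}(g(e_0)) = f(s_{\A}(e_0))$, and since $\pi$ is successful, $s_{\A}(e_0) \in I_{\A}$. By clause~(\ref{eq:init_state}) of Definition~\ref{def:automaton_morphism}, $f$ sends $I_{\A}$ into $I_{\B}$, so $s_{\B}(g(e_0)) \in I_{\B}$. Similarly, $t_{\B}(g(e_n)) = f(t_{\A}(e_n)) \in F_{\B}$ by clause~(\ref{eq:final_state}). Hence $g(\pi)$ is a successful computation in $\B$.

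Finally, for the label equation, I would apply Equation~(\ref{eq:rel_labels_morphisms}) edge-by-edge: $\ell_{\B}(g(e_i)) = \alpha(\ell_{\A}(e_i))$ for each $i$, and concatenating yields
\[ \ell_{\B}(g(\pi)) = \alpha(\ell_{\A}(e_0)) \cdots \alpha(\ell_{\A}(e_n)) = \alpha(\ell_{\A}(\pi)). \]
There is no real obstacle here; the statement is a direct bookkeeping consequence of the three defining relations of a morphism of automata (incidence, preservation of $I$ and $F$, and compatibility with the labelling). The only mild care needed is to invoke the right clauses of Definition~\ref{def:automaton_morphism} at the correct endpoints of the walk.
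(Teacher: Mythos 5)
Your proof is correct and follows essentially the same route as the paper's: adjacency preservation gives that $g(\pi)$ is a walk, clauses (1) and (2) of Definition~\ref{def:automaton_morphism} give that initial and final states are preserved, and Equation~(\ref{eq:rel_labels_morphisms}) applied edge-by-edge gives the label identity. You have merely written out the details that the paper's one-line proof leaves implicit (and, like the paper, you correctly read the lemma as being about an \emph{automaton} morphism, since the successfulness claim needs the conditions on $I$ and $F$).
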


\begin{proof}The function $g$ preserving adjacency, $g(\pi)$ is a walk within $G_\B$.   By Lemma~\ref{lem:image_word_via_morphism} and items (1) and (2) of Definition~\ref{def:automaton_morphism}, initial and final states are preserved. The equality of images is a direct consequence of Equation~\ref{eq:rel_labels_morphisms}.
\end{proof}

We record a consequence:

\begin{lemma} \label{lem:automaton-morphism-and-computation}
	If there is an automaton morphism $(f,g,\alpha):\A \to \B$ then
	$\alpha(L(\A)) \subseteq L(\B)$. In particular, if there is a strict morphism
	$\A \to \B$ then $L(\A) \subseteq L(\B)$.
\end{lemma}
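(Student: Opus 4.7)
The plan is to deduce this essentially directly from Lemma~\ref{lem:image_word_via_morphism}, since that lemma does most of the work: it already asserts that a successful computation in $\A$ is sent to a successful computation in $\B$, and that labels are transformed by $\alpha$. So the argument is just a matter of repackaging that statement in terms of the languages defined in Definition~\ref{def:language}.

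First I would take an arbitrary element $u \in \alpha(L(\A))$. By definition of the image, there is some $w \in L(\A)$ with $u = \alpha(w)$. By Definition~\ref{def:language}, $w \in L(\A)$ means there is a successful computation $\pi$ in $\A$ with $\ell_\A(\pi) = w$. Applying Lemma~\ref{lem:image_word_via_morphism} to the morphism $(f,g,\alpha)$ and the computation $\pi$, the walk $\pi' = g(\pi)$ is a successful computation in $\B$, and its label satisfies $\ell_\B(\pi') = \alpha(\ell_\A(\pi)) = \alpha(w) = u$. So $u \in L(\B)$ by Definition~\ref{def:language}, and since $u$ was arbitrary we get $\alpha(L(\A)) \subseteq L(\B)$.

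For the particular case of a strict morphism, I recall from the preliminaries on semi-automata that \emph{strict} means $\alpha$ is the identity on the alphabet. So $\alpha(L(\A)) = L(\A)$, and the general inclusion just proved reduces to $L(\A) \subseteq L(\B)$, which is the announced statement.

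There is no real obstacle here; the content of the lemma has already been packed into Lemma~\ref{lem:image_word_via_morphism} (preservation of walks by $g$, preservation of initial/final states guaranteed by clauses (\ref{eq:init_state}) and (\ref{eq:final_state}) of Definition~\ref{def:automaton_morphism}, and commutation of labels with $\alpha$ coming from Equation~\ref{eq:rel_labels_morphisms}). The only thing to check is that one reads off the definitions correctly; in particular, that the notion of ``successful computation'' on both sides matches the one used in Definition~\ref{def:language}, which is immediate.
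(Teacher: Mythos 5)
Your proof is correct and follows exactly the route the paper intends: the paper states this lemma as a recorded consequence of Lemma~\ref{lem:image_word_via_morphism} without further argument, and your unwinding of the definitions (successful computation, image of its label under $\alpha$, and the identity case for strict morphisms) is precisely that consequence spelled out.
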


\begin{prop} \label{prop:strict-conformal-implies-same-language}
	Suppose that $\phi:\A \to \B$ is a morphism between two automata $\A$ and $G_\phi$ is a directed emulator, then $\alpha(L(\A)) = L(\B)$ where $\phi = (f,g,\alpha)$. In particular, if $\phi$ is a strict morphism, $L(\A) = L(\B)$.
\end{prop}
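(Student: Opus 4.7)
The plan is to prove the equality $\alpha(L(\A)) = L(\B)$ by establishing the two inclusions separately. The inclusion $\alpha(L(\A)) \subseteq L(\B)$ is already provided by Lemma~\ref{lem:automaton-morphism-and-computation}, which follows from the fact that a semi-automaton morphism sends successful computations to successful computations (Lemma~\ref{lem:image_word_via_morphism}). So the real work is to show $L(\B) \subseteq \alpha(L(\A))$, i.e.\ to lift any successful computation in $\B$ to a successful computation in $\A$ whose $\alpha$-image has the same label.

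Let $w \in L(\B)$, witnessed by a successful computation $\pi = e_1 \cdots e_n$ in $\B$ starting at some $q_0 \in I_\B$ and ending at some $q_n \in F_\B$. I would construct a lift $\tilde{\pi} = \tilde{e}_1 \cdots \tilde{e}_n$ in $\A$ step by step. First, since $G_\phi$ is a directed emulator, $f$ is surjective, so there exists $p_0 \in f^{-1}(q_0)$; the condition $I_\A = f^{-1}(I_\B)$ from Definition~\ref{def:automaton_morphism} forces $p_0 \in I_\A$. Then, inductively, assume we have produced $p_{i-1} \in V_\A$ with $f(p_{i-1}) = s_\B(e_i)$. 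The outgoing edge lifting property (clause (ii) of Definition~\ref{def:directed_emulator}) supplies an edge $\tilde{e}_i \in E_\A$ with $s_\A(\tilde{e}_i) = p_{i-1}$ and $g(\tilde{e}_i) = e_i$; setting $p_i = t_\A(\tilde{e}_i)$, adjacency preservation gives $f(p_i) = t_\B(g(\tilde{e}_i)) = t_\B(e_i) = s_\B(e_{i+1})$, so the induction continues.

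At the end of this iteration, $f(p_n) = t_\B(e_n) = q_n \in F_\B$, and the condition $F_\A = f^{-1}(F_\B)$ forces $p_n \in F_\A$, so $\tilde{\pi}$ is indeed a successful computation in $\A$. Its label satisfies $\alpha(\ell_\A(\tilde{e}_i)) = \ell_\B(g(\tilde{e}_i)) = \ell_\B(e_i)$ by Equation~\ref{eq:rel_labels_morphisms}, hence $\alpha(\ell_\A(\tilde{\pi})) = \ell_\B(\pi) = w$, proving $w \in \alpha(L(\A))$. The particular case of a strict morphism is then immediate: strictness means $\alpha$ is an identity map, so $\alpha(L(\A)) = L(\A) = L(\B)$.

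I do not anticipate a serious obstacle here; the argument is essentially a length induction on computations using the outgoing edge lifting property, and the only delicate point is to verify at both endpoints that preimages of initial (resp. final) states of $\B$ are initial (resp. final) states of $\A$, which is precisely built into the notion of automaton morphism.
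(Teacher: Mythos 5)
Your proof is correct and follows the same route as the paper: one inclusion from Lemma~\ref{lem:automaton-morphism-and-computation}, the reverse by lifting a successful computation in $\B$ edge-by-edge via the outgoing edge lifting property and using $I_\A = f^{-1}(I_\B)$, $F_\A = f^{-1}(F_\B)$ at the endpoints. The paper's proof merely asserts the liftability of walks in one sentence; your induction is the spelled-out version of exactly that argument.
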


\begin{proof} Any successful walk in $\B$ is the image of a walk in $\A$. The preimage in $\A$ of an initial (resp. final) state in $\B$ being itself initial (resp. final), the walk in $\A$ is successful.  Thus, $\alpha(L(\A)) \supseteq L(\B)$. With the preceding Lemma, we conclude $\alpha(L(\A)) = L(\B)$. 
\end{proof}

\begin{remark}
	In Sakarovitch's book, a strict automaton morphism that induces a directed emulator
	morphism is called a totally surjective morphism \cite[Chap. II, Def.~3.2]{Sakarovitch}.
\end{remark}

\begin{theorem} \label{theo:functor_properties}
	Suppose that $\phi: \A \to \B$ is an automaton morphism. Assume the following conditions:
	\begin{enumerate}
		\item[$(1)$] The source  automaton $\A$ is complete;
		\item[$(2)$] The target automaton $\B$ is deterministic.
	\end{enumerate}
	Then the two assertions are equivalent:
	\begin{enumerate}[(i)]
		\item the morphism $G_\phi$ is an epimorphism,
		\item  $G_\phi$ is a directed emulator morphism.
	\end{enumerate}
\end{theorem}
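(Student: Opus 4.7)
The plan is to prove the two implications separately, observing that neither direction really requires the automaton structure (initial and final states): it is entirely a statement about the underlying semi-automaton morphism, and both directions are essentially corollaries of lemmas already in hand.

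For $(ii) \Rightarrow (i)$, the implication is immediate from Lemma~\ref{lem:direct-emulator-is-epimorphism}, which asserts that every directed emulator morphism is already a directed graph epimorphism. No use of completeness of $\A$ or determinism of $\B$ is required at this step.

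For $(i) \Rightarrow (ii)$, the strategy is to reduce to Lemma~\ref{lem:direct_equiv_conformal}. Write $\phi = (f,g,\alpha)$. The hypothesis that $G_\phi = (f,g)$ is an epimorphism of directed graphs supplies the surjectivity of $f$ and $g$. To invoke Lemma~\ref{lem:direct_equiv_conformal} I must also know that $\phi$ is a semi-automaton epimorphism, that is, that $\alpha : \Sigma_\A \to \Sigma_\B$ is surjective. This is a small bookkeeping check: the labellings $\ell_\A$ and $\ell_\B$ are surjective by convention (no underused alphabets), so Equation~(\ref{eq:rel_labels_morphisms}) gives
$$\alpha(\Sigma_\A) \;=\; \alpha(\ell_\A(E_\A)) \;=\; \ell_\B(g(E_\A)) \;=\; \ell_\B(E_\B) \;=\; \Sigma_\B,$$
where in the third equality I use surjectivity of $g$. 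Once this is noted, Lemma~\ref{lem:direct_equiv_conformal} applied with the standing hypotheses that $\A$ is complete and $\B$ is deterministic yields directly that $G_\phi$ is a directed emulator morphism.

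The main work is therefore already contained in Lemma~\ref{lem:direct_equiv_conformal}, where completeness of $\A$ is used to relocate the source of a preimage edge to the prescribed lift of the starting vertex, and determinism of $\B$ then forces this relocated edge to project onto the originally given edge. The only thing left here is the routine verification that surjectivity of the underlying graph morphism propagates to the alphabet component via the labelling compatibility relation. The main obstacle, if any, is just being careful not to assume more structure than we actually have: the conditions on initial and final states in Definition~\ref{def:automaton_morphism} are never used, which is consistent with the purely geometric nature of the conclusion about $G_\phi$.
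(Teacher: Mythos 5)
Your proof is correct and follows exactly the paper's own route: Lemma~\ref{lem:direct-emulator-is-epimorphism} for $(ii)\Rightarrow(i)$ and Lemma~\ref{lem:direct_equiv_conformal} for $(i)\Rightarrow(ii)$. The only difference is that you explicitly verify that a graph epimorphism $G_\phi$ upgrades to a semi-automaton epimorphism via surjectivity of $\alpha$ (using the no-underused-alphabet convention), a small step the paper leaves implicit.
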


\begin{proof}
	
	Lemma~\ref{lem:direct_equiv_conformal} shows that $(i) \Rightarrow (ii)$.  By Lemma~\ref{lem:direct-emulator-is-epimorphism}, $(ii) \Rightarrow (i)$.
\end{proof}

\begin{corollary}\label{cor:epi_complete_deterministic_is_cover}
	If $\phi: \A \to \B$ is a {strict} epimorphism between complete and deterministic
	automata, then $G_\phi : G_\A \to G_\B$ is a directed covering map.
\end{corollary}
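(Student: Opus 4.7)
The plan is to combine two earlier results. First, I would invoke Theorem~\ref{theo:functor_properties}: since $\phi$ is an automaton epimorphism (implying $G_\phi$ is a graph epimorphism by Remark~\ref{rem:surjectivity_is_preserved}), $\A$ is complete, and $\B$ is deterministic, the theorem gives that $G_\phi$ is a directed emulator morphism. This handles the outgoing edge lifting property.

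Second, I would upgrade the emulator to a cover using Corollary~\ref{cor:emu_is_cover_complete_deterministic}. That corollary applies precisely in the situation at hand: $\phi$ is a strict morphism (by hypothesis), $\A$ is complete and deterministic, and $G_\phi$ is a directed emulator (just established). The conclusion is exactly that $G_\phi$ is a directed cover morphism.

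The substantive content is already packaged in Lemma~\ref{lem:complete_and_deterministic_outE} (completeness plus determinism of $\A$ forces $\outE(q) \simeq \Sigma_\A$ at each state $q$) and strictness ($\alpha = 1_{\Sigma}$): under these assumptions the outgoing edge map $g_q : \outE(q) \to \outE(f(q))$ is a surjection between sets that are identified with the same alphabet via $\ell_q$, hence a bijection. So there is no obstacle: the corollary is really a two-line assembly of Theorem~\ref{theo:functor_properties} and Corollary~\ref{cor:emu_is_cover_complete_deterministic}, with the deterministic-ness of $\B$ used only to secure the emulator property, and the deterministic-ness of $\A$ together with strictness converting the emulator into a cover.
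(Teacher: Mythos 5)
Your proof is correct and follows exactly the paper's own route: Theorem~\ref{theo:functor_properties} (via Lemma~\ref{lem:direct_equiv_conformal}) to get the directed emulator property, then Corollary~\ref{cor:emu_is_cover_complete_deterministic} to upgrade it to a cover using strictness and the completeness and determinism of $\A$. The extra remark unpacking Lemma~\ref{lem:complete_and_deterministic_outE} is accurate but not needed beyond what the cited corollary already provides.
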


\begin{proof} First, by the preceding theorem, it is a directed emulator morphism. Then, by Corollary~\ref{cor:emu_is_cover_complete_deterministic}, it is a covering map. 
\end{proof}

We recall that, given a deterministic finite state automaton $\A$, there is a minimal complete deterministic finite state automaton, denoted $\A_{\rm{min}}$, such that $L(\Amin) = L(\A)$ together with the canonical projection $\A \to \Amin$ that sends equivalent states to their equivalence class; furthermore, $\Amin$ is unique up to automaton (strict) isomorphism. 

\begin{corollary} \label{cor:epimorphism-to-covering}
	Let $\A$ be a deterministic automaton and let $\pi: \A \to \Amin$ be the canonical
	epimorphism to the minimal automaton. There exists an automaton $\tilde{\A}$ such that the following properties are satisfied:
	\begin{enumerate}
		\item[$(1)$] $\tilde{\A}$ is complete and deterministic;
		\item[$(2)$] $\tilde{\A}$ contains $\A$ as a subautomaton;
		\item[$(3)$] $L(\tilde{\A}) = L(\A)$;
		\item[$(4)$] The underlying graph morphism  ${\mathrm{G}}_{\tilde{\A}} \to
		{\mathrm{G}}_{\Amin}$ is a covering morphism.
	\end{enumerate}
\end{corollary}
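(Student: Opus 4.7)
The plan is to construct $\tilde{\A}$ by adjoining to $\A$ a disjoint isomorphic copy of $\Amin$, together with the transitions that are missing in $\A$, and then to invoke Corollary~\ref{cor:epi_complete_deterministic_is_cover}.

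Concretely, I would let $\overline{(\cdot)} : V_{\Amin} \to \overline{V_{\Amin}}$ be a bijection onto a disjoint copy, set $V_{\tilde{\A}} = V_\A \sqcup \overline{V_{\Amin}}$, and take the edges of $\tilde{\A}$ to consist of: all edges of $\A$; a copy of every edge of $\Amin$ between the new vertices (so that the full subautomaton on $\overline{V_{\Amin}}$ is an isomorphic copy of $\Amin$); and, for each $v \in V_\A$ and each letter $a$ with no outgoing $a$-edge at $v$ in $\A$, one new edge from $v$ to $\overline{\pi(v)\cdot a}$ labelled $a$ (where $\cdot$ denotes the transition in the complete automaton $\Amin$). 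I would set $I_{\tilde{\A}} = I_\A \sqcup \overline{I_{\Amin}}$ and $F_{\tilde{\A}} = F_\A \sqcup \overline{F_{\Amin}}$, and extend $\pi$ to $\tilde{\pi} : \tilde{\A} \to \Amin$ by $\tilde{\pi}(\overline{q}) = q$ and the obvious assignment on edges.

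The verifications are then essentially by construction. Completeness of $\tilde{\A}$ holds because every vertex of $V_\A$ now has an $a$-transition for each $a$, and $\overline{V_{\Amin}}$ inherits completeness from $\Amin$; determinism is preserved because no new $a$-edge is added at a vertex that already had one. The inclusion $\A \hookrightarrow \tilde{\A}$ is a strict automaton monomorphism, giving $(2)$, and also the containment $L(\A) \subseteq L(\tilde{\A})$ by Lemma~\ref{lem:automaton-morphism-and-computation}. The map $\tilde{\pi}$ is a strict epimorphism between complete deterministic automata: surjectivity on vertices and edges is immediate from the embedded copy of $\Amin$, and the automaton-morphism conditions $\tilde{\pi}^{-1}(I_{\Amin}) = I_{\tilde{\A}}$ and $\tilde{\pi}^{-1}(F_{\Amin}) = F_{\tilde{\A}}$ reduce to the corresponding identities for $\pi$. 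Applying Corollary~\ref{cor:epi_complete_deterministic_is_cover} then yields $(4)$, and the reverse inclusion $L(\tilde{\A}) \subseteq L(\A)$ in $(3)$ follows by projecting any successful computation in $\tilde{\A}$ through $\tilde{\pi}$ to a successful computation in $\Amin$ carrying the same label, which lies in $L(\Amin) = L(\A)$.

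The main obstacle is the choice of the initial and final states of $\tilde{\A}$. Adding the fresh copies $\overline{I_{\Amin}}$ and $\overline{F_{\Amin}}$ is forced by the automaton-morphism condition required for Corollary~\ref{cor:epi_complete_deterministic_is_cover} to apply, but a priori this could enlarge the language; the projection argument above is precisely what rules out any new words and is the crux of the proof.
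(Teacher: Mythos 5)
Your construction is correct and all four properties do hold for it, but it is genuinely different from the paper's. The paper never adds vertices: for each state $q$ of $\A$ and each letter $a$ missing at $q$, it picks the edge $\arete{\pi(q)}{e'}{p'}$ in the complete automaton $\Amin$, chooses \emph{some state $q'$ of $\A$ in the fibre over $p'$} (using surjectivity of $\pi$ on states), and adds a single new $a$-labelled edge from $q$ to $q'$ mapping to $e'$. The completed automaton therefore has the same state set, the same (single) initial state and the same final states as $\A$, and the extended $\pi$ is still a strict epimorphism of complete deterministic automata, so Corollary~\ref{cor:epi_complete_deterministic_is_cover} applies exactly as in your argument. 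Your variant instead grafts a fresh copy of $\Amin$ and routes missing transitions into it; this buys you independence from the non-emptiness of the fibres of $\pi$ (you never need to find a preimage of $p'$ in $\A$), and your projection argument for $L(\tilde{\A}) \subseteq L(\A)$ is the same one that makes the paper's version work. The price, which you correctly identify as forced by Definition~\ref{def:automaton_morphism}, is that $\tilde{\A}$ acquires the extra initial state $\overline{I_{\Amin}}$ and extra final states, so it is a multi-input automaton and has $|V_{\Amin}|$ more states than $\A$. The corollary as literally stated tolerates this (the paper admits the class $\Maut$), but it clashes with the single-input convention under which the genus of a language is defined, and it needlessly enlarges the underlying graph; the paper's in-place completion avoids both issues and is the construction you should prefer if $\tilde{\A}$ is ever to serve as a genus witness.
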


\begin{proof} Each time there is a state $q$ in $\A$ and a letter $a \in\Sigma$ without edges $\arete{q}{e}{q'} \in E_\A$ with $\ell_\A( e ) = a$, take $p = \pi( q )$. Since $\Amin$ is complete, there is an outgoing edge $\arete{p}{e'}{p'}$ with $\ell_{\Amin}( e')  = a$.  Take some node $q'$ in $Q_\A$ in the fibre over $p'$ (it exists since the morphism is onto). We add to $\A$ a new edge $e''$ between $p$ and $p'$ with label $a$ and we set $\pi(e'') = e'$. After modification, the automaton $\A$ still verify the hypotheses of the Corollary. We continue the process until $A$ is complete, call the result $\tilde{\A}$.
	Since $\tilde{\A}$ is complete and deterministic,  Corollary \ref{cor:epi_complete_deterministic_is_cover} leads to (4).
\end{proof}

We now seek a reconstruction of an automaton morphism from a directed emulator morphism.
The following lemma is the first step.

\begin{lemma}
	Consider a directed graph ${\rm{G}}_\A$ induced by an automaton $\A$. Suppose that there is a directed emulator
	morphism $\varphi: G' \to \rm{G}_\A$.
	Then there exists an automaton $\A'$  and a strict automaton morphism $\pi: \A' \to \A$  such that
	\begin{enumerate}
		\item[$(1)$] $G_\pi = \varphi$; 
		\item[$(2)$] $L(\A') = L(\A)$.
	\end{enumerate}
\end{lemma}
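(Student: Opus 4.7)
The plan is to upgrade the graph morphism $\varphi = (p,q): G' \to G_\A$ to a strict automaton morphism by endowing $G'$ with the structure pulled back from $\A$ along $\varphi$. Concretely, I would set
\[
\A' = (G', \Sigma_\A, \ell_\A \circ q), \qquad I_{\A'} = p^{-1}(I_\A), \qquad F_{\A'} = p^{-1}(F_\A),
\]
and define $\pi = (p, q, 1_{\Sigma_\A}) : \A' \to \A$. The candidate $\pi$ is strict by construction, and $G_\pi = (p,q) = \varphi$ holds tautologically.

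First I would check that $\A'$ is a well-defined semi-automaton under our standing hypothesis that alphabets are not underused. Since $\varphi$ is a directed emulator morphism, Lemma~\ref{lem:direct-emulator-is-epimorphism} says it is an epimorphism, so $q : E_{G'} \to E_{G_\A}$ is surjective; composing with the surjection $\ell_\A$ yields a surjective labelling $\ell_{\A'} = \ell_\A \circ q$ onto $\Sigma_\A$. Next I would verify that $\pi$ is a semi-automaton morphism: the graph part $(p,q)$ is a directed graph morphism by assumption, and the compatibility relation (\ref{eq:rel_labels_morphisms}) reads $1_{\Sigma_\A} \circ \ell_{\A'} = \ell_\A \circ q$, which holds by the very definition of $\ell_{\A'}$.

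To promote $\pi$ to a morphism of automata in the sense of Definition~\ref{def:automaton_morphism}, I need $I_{\A'} = p^{-1}(I_\A)$ and $F_{\A'} = p^{-1}(F_\A)$, but these are the defining choices, so clauses (1) and (2) hold by construction. Thus $\pi : \A' \to \A$ is a strict automaton morphism with $G_\pi = \varphi$, giving (1).

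For the language equality (2), I would invoke Proposition~\ref{prop:strict-conformal-implies-same-language}: since $\pi$ is strict and $G_\pi = \varphi$ is a directed emulator morphism, we obtain $L(\A') = L(\A)$ immediately. I do not expect any serious obstacle here; the only points that require attention are the verification that the pulled-back labelling remains surjective (which relies on the emulator being an epimorphism, hence on Lemma~\ref{lem:direct-emulator-is-epimorphism}) and the observation that Proposition~\ref{prop:strict-conformal-implies-same-language} applies because our $\pi$ is both strict and underlies a directed emulator.
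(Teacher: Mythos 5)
Your construction is correct and coincides with the paper's: the automaton $\A' = (G', \Sigma_\A, \ell_\A \circ q)$ with $I_{\A'} = p^{-1}(I_\A)$, $F_{\A'} = p^{-1}(F_\A)$ is exactly what the paper obtains by factoring $\epsilon_\A \circ \A_\varphi$ through the relabelling of the tautological semi-automaton $\A_{G'}$, and both arguments conclude with Proposition~\ref{prop:strict-conformal-implies-same-language}. You have simply unrolled the categorical packaging (the functor $\A_{(-)}$, the counit $\epsilon$, and Proposition~\ref{conformal_decomposition}) into an explicit hands-on verification, which is a faithful rendering of the same proof.
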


Here, we do not ask the automaton $\A$ to be deterministic nor complete.

\begin{proof} We begin with the commutative diagram:
	$$\xymatrix{
		\A_{G'} \ar^{\A_\varphi}[r] \ar^{G}[d] & \A_{G_\A} \ar_{G}[d] \ar^{\epsilon_\A}[r] & \A \ar^{G}[d] \\
		G' \ar[r]^{\varphi} & G_\A \ar^{1}[r] & G_\A
	}$$
	
	By Proposition~\ref{conformal_decomposition}, the morphism $\epsilon_\A \circ A_\varphi = \pi \circ \lambda$ for a strict morphism $\pi$ and $\lambda$ a relabelling. Then, set $\A' = \lambda ( \A_{G'} )$. We have $G_{\A'} = G_{A_{G'}} = G'$ and $G_\pi = \varphi$. However, as defined, $\A'$ is a semi-automaton. We define $I_{\A'} = \pi^{-1}( I_\B )$ and $F_{\A'} = \pi^{-1}( F_\A)$ so that $\pi$ is actually an automaton morphism. We conclude with Lemma~\ref{prop:strict-conformal-implies-same-language}. 
\end{proof}

\begin{lemma} \label{lem:from-emulator-to-automaton}
	Consider a directed graph $G = {\mathrm{G}}_\A$ induced by some deterministic automaton $\A$. Suppose that there is a directed emulator
	morphism $G' \to G$. Then there exists a deterministic automaton $\A''$ and a strict automaton epimorphism $\pi: \A'' \to \A$ such that
	\begin{enumerate}
		\item[$(1)$] $G_\pi: {\mathrm{G}}_{\A''} \to {\mathrm{G}}_\A$ is a directed covering morphism;
		\item[$(2)$] ${\mathrm{G}}_{\A''}$ is a subgraph of $G'$;
		\item[$(3)$] $g(\A'') \leq g(G')$.
	\end{enumerate}
\end{lemma}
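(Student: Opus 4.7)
The plan is to combine the preceding lemma (which promotes the directed emulator $\varphi:G'\to G$ into a strict automaton morphism) with the extraction of a directed cover from a directed emulator (Lemma~\ref{lem:directed-emulator-contains-directed-cover}), and then to cut $G'$ down to that cover.

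First I would apply the preceding lemma to obtain an automaton $\A'$ with $G_{\A'} = G'$ and a strict automaton morphism $\pi':\A'\to\A$ whose underlying graph morphism is $\varphi$, and such that $L(\A')=L(\A)$. Next I would invoke Lemma~\ref{lem:directed-emulator-contains-directed-cover} on the directed emulator $\varphi:G'\to G$ to produce a directed subgraph $G''\subseteq G'$ with $V_{G''}=V_{G'}$ such that the restriction $\psi:=\varphi|_{G''}:G''\to G$ is a directed cover morphism.

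Now I would define $\A''$ as the sub-(semi-)automaton of $\A'$ supported by $G''$: the alphabet is $\Sigma_\A$, the labelling is the restriction of $\ell_{\A'}$ to $E_{G''}$, and since $V_{G''}=V_{G'}$ I can take $I_{\A''}:=I_{\A'}$ and $F_{\A''}:=F_{\A'}$. The map $\pi$ on underlying sets is the restriction of $\pi'$ to $G''$ (the vertex map being unchanged); since $\pi'$ is a strict automaton morphism and $I_{\A''},F_{\A''}$ are inherited unchanged, the relations $I_{\A''}=\pi^{-1}(I_\A)$ and $F_{\A''}=\pi^{-1}(F_\A)$ of Definition~\ref{def:automaton_morphism} hold; hence $\pi:\A''\to\A$ is a strict automaton morphism with $G_\pi=\psi$, which is a directed cover morphism (and in particular an epimorphism), giving (1) and (2).

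It remains to check that $\A''$ is deterministic and to bound the genus. For determinism, suppose $e_1,e_2\in\outE_{\A''}(q)$ satisfy $\ell_{\A''}(e_1)=\ell_{\A''}(e_2)$; strictness of $\pi$ yields $\ell_\A(g_\pi(e_1))=\ell_\A(g_\pi(e_2))$, and by adjacency both $g_\pi(e_i)$ have source $\pi(q)$, so determinism of $\A$ forces $g_\pi(e_1)=g_\pi(e_2)$; since $G_\pi$ is a directed cover, the induced map $\outE_{\A''}(q)\to\outE_{\A}(\pi(q))$ is injective, whence $e_1=e_2$. For the genus, $G_{\A''}=G''$ is a subgraph of $G'$, so iterated applications of Lemma~\ref{lem:removing-edge-does-not-increase-genus} (together with the trivial fact that removing isolated vertices does not raise the genus) yield $g(\A'')=g(G'')\leq g(G')$, giving (3). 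The main subtlety is step three: one must verify that the passage from $\A'$ on $G'$ to $\A''$ on the subgraph $G''$ still satisfies the automaton morphism axioms in Definition~\ref{def:automaton_morphism} and that the edge-deletion performed by the cover extraction does not destroy determinism — both issues are handled cleanly because the cover extraction preserves the vertex set and because the directed cover property gives injectivity on outgoing stars.
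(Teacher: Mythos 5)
Your proposal is correct and follows essentially the same route as the paper's own proof: invoke the preceding lemma to put an automaton structure $\A'$ on $G'$ with a strict morphism to $\A$, extract a directed cover $G''\subseteq G'$ via Lemma~\ref{lem:directed-emulator-contains-directed-cover}, restrict to the subautomaton $\A''$ supported on $G''$, and deduce determinism from the injectivity of the cover on outgoing stars combined with the determinism of $\A$, and the genus bound from $G''$ being a subgraph of $G'$. Your write-up is somewhat more explicit than the paper's (in particular about why the initial/final states and the strict-morphism axioms survive the restriction, which works because the cover extraction preserves the vertex set), but the argument is the same.
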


\begin{proof}
	By the previous lemma, there is an automaton $\B$ together with a
	strict epimorphism $\B \to \A$ inducing the directed emulator
	morphism $G' \to G$. By Lemma \ref{lem:directed-emulator-contains-directed-cover},
	one can extract from the directed emulator $G'$ a directed cover $G'' \subseteq G'$ over $G$.
	This determines a subautomaton $\A''$ of $\B$ with underlying directed graph ${\mathrm{G}}_{\A''} = G''$ that is a subgraph of $G'$.
	Therefore $g(\A'') = g(G'') \leq g(G')$. It remains to verify that $\A''$ is deterministic. Since the strict epimorphism $\A'' \to \A$ induces a covering morphism $G'' \to G$, for any state $q' \in Q_{\A'}$ and its image $q \in Q_{\A}$, the
	induced map ${\rm{OutE}}(q') \to {\rm{OutE}}(q)$ is a bijection between sets
	of labelled outgoing transitions. Since $\A$ is deterministic, $\ell_{\A}|_{{\rm{OutE}}(q)}$ is injective and so must be $\ell_{\A''}|_{{\rm{OutE}}(q')}$. Therefore $\A''$ is deterministic.
\end{proof}

We can get rid of the multiple edges in the previous lemma, i.e., a simple graph theoretical version of the previous lemma holds.

\begin{lemma} \label{lem:from-directed-simple-emulator-to-automaton}
	Consider a directed graph $G = {\mathrm{G}}_\A$ induced by some deterministic automaton $\A$. Suppose that there is a directed emulator morphism $\phi: H \to R(G)$. Then there exists a deterministic automaton $\A'$ and a strict automaton epimorphism $\A' \to \A$ such that
	\begin{enumerate}[(1)]
		\item The induced
		directed emulator morphism ${\mathrm{G}}_\A' \to {\mathrm{G}}_\A$ is a
		directed covering morphism;
		\item $g(\A') \leq g(H)$.
	\end{enumerate}
\end{lemma}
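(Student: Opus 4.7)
The plan is to reduce the statement to Lemma \ref{lem:from-emulator-to-automaton} by upgrading the given directed emulator $\phi = (\phi_V,\phi_E) : H \to R(G)$ into a directed emulator $\tilde\phi : \tilde H \to G$ with the same genus as $H$. Informally, $\tilde H$ is obtained from $H$ by replacing each edge $f$ of $H$ with a bundle of parallel copies, one copy for each edge $e$ of $G$ with $\Delta_G(e) = \phi_E(f)$, and the map to $G$ simply remembers the $G$-coordinate.

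First I would set $V_{\tilde H} = V_H$ and
$$E_{\tilde H} = \{(f,e) \in E_H \times E_G \mid \phi_E(f) = \Delta_G(e)\},$$
with $s_{\tilde H}(f,e) = s_H(f)$, $t_{\tilde H}(f,e) = t_H(f)$, and define $\tilde\phi = (\phi_V,\ (f,e)\mapsto e) : \tilde H \to G$. The map $\tilde\phi$ is a directed emulator morphism: $\phi_V$ is surjective by Lemma \ref{lem:direct-emulator-is-epimorphism}, and for the outgoing edge lifting property, given $v' \in V_H$ and an outgoing edge $e \in E_G$ with $s_G(e) = \phi_V(v')$, the emulator property of $\phi$ supplies an edge $f \in E_H$ with $s_H(f) = v'$ and $\phi_E(f) = \Delta_G(e)$, so $(f,e) \in E_{\tilde H}$ is an outgoing lift of $e$ at $v'$.

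Next I would observe that $R(\tilde H) = R(H)$. Indeed, both graphs share the vertex set $V_H$, and by construction $\Delta_{\tilde H}(f,e) = \Delta_H(f)$, so $\Delta_{\tilde H}(E_{\tilde H}) = \Delta_H(E_H)$. Combined with Lemma \ref{lem:R_preserves_genus}, this gives $g(\tilde H) = g(R(\tilde H)) = g(R(H)) = g(H)$.

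Finally I would apply Lemma \ref{lem:from-emulator-to-automaton} to the directed emulator $\tilde\phi : \tilde H \to G_\A$, which is legitimate since $\A$ is deterministic. It yields a deterministic automaton $\A'$ together with a strict epimorphism $\A' \to \A$ such that $G_{\A'} \to G_\A$ is a directed covering morphism and $g(\A') \leq g(\tilde H) = g(H)$, as required. The only mildly subtle point is the equality $R(\tilde H) = R(H)$, which is precisely what makes the genus bound go through; everything else is routine verification from the definitions.
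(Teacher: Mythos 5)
Your proof is correct and is essentially the paper's argument: your $\tilde H$ is exactly the pullback $G \times_{R(G)} H$ of $\phi$ along $\rho_G$ (with the vertex set identified with $V_H$), which the paper obtains via Lemma~\ref{lem:pull_backs}, and your observation $R(\tilde H)=R(H)$ is the paper's computation $R(G\times_{R(G)}H)\simeq R(H)$ from Lemma~\ref{le:R_preserve_pullback}; both then conclude by Lemma~\ref{lem:from-emulator-to-automaton}. The only difference is that you verify the emulator property and the genus equality by hand instead of citing the two pullback lemmas.
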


\begin{proof}The two morphisms $G \stackrel{\rho_G}{\to} R(G) \stackrel{\phi}{\leftarrow} H$ are directed emulators. Via Lemma~\ref{lem:pull_backs}, we have the diagram with ${\pi_1}_{\mid}$ a directed emulator:
	$$\xymatrix{
		G \times_{R(G)} H \ar[r]^-{\pi_{2}|} \ar[d]_-{\pi_{1}|} & H \ar[d]^-{\phi} \\
		G \ar[r]^-{\rho_G} & R(G)
	}$$
Observe that $R(G \times_{R(G)} H) \simeq R(G) \times_{R(R(G))} R(H) \simeq R(G) \times_{R(G)} R(H) \simeq R(H)$, the first equality being due to Lemma~\ref{le:R_preserve_pullback}. Thus $g(G \times_{R(G)} H) =  g(H)$ via Lemma~\ref{lem:R_preserves_genus}. We conclude  by Lemma~\ref{lem:from-emulator-to-automaton}.  
\end{proof}

Finally we get rid of loops in the following sense. 

\begin{lemma}  \label{lem:from-directed-cover-over-simple-selfloopless-to-automaton}
	Consider a directed graph $G = {\mathrm{G}}_\A$ induced by some deterministic automaton $\A$. Suppose that there is a simple directed cover
	morphism $H \to {\rm{Exc}}(R(G))$. Then there exists a deterministic automaton $\A'$ and a strict automaton epimorphism $\A' \to \A$ such that
	\begin{enumerate}[(1)]
		\item The induced
		directed emulator morphism ${\mathrm{G}}_\A' \to {\mathrm{G}}_\A$ is a
		directed covering morphism;
		\item $g(\A') \leq g(H)$.
	\end{enumerate}
\end{lemma}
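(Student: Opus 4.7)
The strategy is to reduce the statement to the previous lemma (Lemma~\ref{lem:from-directed-simple-emulator-to-automaton}) by first promoting the given cover over the loop-free graph $\mathrm{Exc}(R(G))$ to a cover over the full graph $R(G)$, and then observing that genus is unchanged. Two facts do the work: the extension lemma for covers across excision (Lemma~\ref{lem:from-cover-over-simple-excized-to-cover-over-original}), and the invariance of genus under excision (Lemma~\ref{lem:excision-preserves-genus}).

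First, I would invoke Lemma~\ref{lem:from-cover-over-simple-excized-to-cover-over-original} with the roles played as follows: the target graph of the form ``$\mathrm{Exc}(H)$'' is taken to be $\mathrm{Exc}(R(G))$, and the graph ``$H$'' above it is taken to be $R(G)$ itself (so the trivial identity $\mathrm{Exc}(R(G)) = \mathrm{Exc}(R(G))$ supplies the required presentation). Applied to the given simple directed cover $\psi : H \to \mathrm{Exc}(R(G))$, this produces a directed graph $H'$ and a directed cover morphism $\varphi : H' \to R(G)$ such that $\mathrm{Exc}(H') = H$ and $\varphi|_{H} = \psi$.

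Next, by Lemma~\ref{lem:excision-preserves-genus} applied to $H'$, we have $g(H') = g(\mathrm{Exc}(H')) = g(H)$. In particular, $\varphi : H' \to R(G)$ is a directed cover morphism, hence a directed emulator morphism, onto the simple directed graph $R(G)$, with $g(H') = g(H)$.

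Finally, I would apply Lemma~\ref{lem:from-directed-simple-emulator-to-automaton} directly to $\varphi : H' \to R(G)$. This yields a deterministic automaton $\A'$ and a strict automaton epimorphism $\A' \to \A$ such that the induced morphism $\mathrm{G}_{\A'} \to \mathrm{G}_{\A}$ is a directed covering morphism and $g(\A') \leq g(H') = g(H)$, establishing both conclusions $(1)$ and $(2)$. There is no real obstacle beyond checking that the hypotheses of Lemma~\ref{lem:from-cover-over-simple-excized-to-cover-over-original} are satisfied in the form stated above; all the genuine content sits in the two lemmas invoked and in Lemma~\ref{lem:from-directed-simple-emulator-to-automaton}, which already did the pullback along $\rho_G : G \to R(G)$ and the extraction of a cover and of an automaton.
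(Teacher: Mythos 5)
Your proposal is correct and follows exactly the paper's own argument: extend the cover over $\mathrm{Exc}(R(G))$ to a cover over $R(G)$ via Lemma~\ref{lem:from-cover-over-simple-excized-to-cover-over-original}, note that excision preserves genus, and then apply Lemma~\ref{lem:from-directed-simple-emulator-to-automaton}. You simply spell out the role assignments and the genus bookkeeping more explicitly than the paper does.
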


\begin{proof}
	Apply Lemma \ref{lem:from-cover-over-simple-excized-to-cover-over-original} to obtain a directed cover $H'$ over $R(G)$ such that $g(H') = g(H)$. Applying Lemma \ref{lem:from-directed-simple-emulator-to-automaton} yields the result.
\end{proof}

\section{The genus of a regular language}\label{sec:genus_regular_language}

In this section, we state and prove two major results of the paper (Theorem~\ref{th:genus-and-emulator} and Corollary~\ref{cor:genus_equivalent_emulator}). We also discuss the relationship with the undirected setting.

\subsection{Main results}

Let us recall the definition that we introduced in \cite{BD16}.

\begin{definition}
The {\emph{genus}} $g(L)$ of a regular language $L$ over alphabet $A$ is the minimum of all genera of finite state deterministic automata computing $L$.
\end{definition}

Before we proceed to the main result, we mention three important observations in the definition.

\begin{remark}[Determinism]
The word ``deterministic'' is essential in the definition of the genus, for it is known that any regular language has a genus $0$ (planar) nondeterministic automaton that computes it, a nice result due to R.~.V.~Book and A.~K.~Chandra~\cite{BC76}.
\end{remark}

\begin{remark}[Single-input versus multi-input]
We would like to emphasize that the definition of genus here is confined to regular languages computed by automata with one single initial state (single-input automata), in accordance to our convention in this paper (cf. \ref{sec:finie_state_auto}). Taking into account multi-input (deterministic) automata leads to a richer and more complex notion of genus, studied in \cite{BD23}.
\end{remark}

\begin{remark}[Completeness]
It can be proved (\cite{BD16}) that there always exists a {\emph{complete}} deterministic
automaton $\A$ such that $g(L) = g(\A)$. Basically, given a deterministic (but not complete) automaton, for any missing transition, one may add a transition to a fresh trash state whose outgoing transitions are loops. That leaves the genus unchanged.
\end{remark}

%
%
%

The directed graph $G(L)$ {\emph{associated to a regular language}} $L$ is the directed graph underlying the minimal automaton $A_{\rm{min}}(L)$ canonically associated to $L$: $G(L) = {\mathrm{G}}_{\A_{\rm{min}}(L)}$.
For any regular language $L$, $g(L) \leq g(G(L))$.

\begin{theorem} \label{th:genus-and-emulator}
Let $L$ be a regular language. Let $n \in {\mathbb{N}}$. The following assertions are equivalent:
\begin{enumerate}
\item[$(1)$] $g(L) \leq n$;
\item[$(2)$] The directed graph $G(L)$ has a directed cover $G$ of genus $g(G) \leq n$;
\item[$(3)$] The directed graph $G(L)$ has a directed emulator $G$ with $g(G)\leq n$;
\item[$(4)$] The directed simple graph $R(G(L))$ has a directed simple cover $G$ such that $g(G) \leq n$;
\item[$(5)$] The directed simple graph ${\rm{Exc}}(RG(L))$  has a directed cover $G$ such that
$g(G) \leq n$.
\end{enumerate}
\end{theorem}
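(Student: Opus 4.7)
The plan is to close a cycle of implications $(1) \Rightarrow (2) \Rightarrow (3) \Rightarrow (4) \Rightarrow (5) \Rightarrow (1)$, with each link drawn from the machinery built in the preceding sections.

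For $(1) \Rightarrow (2)$, I would start from a deterministic automaton $\A$ recognizing $L$ with $g(\A) = g(L) \leq n$. By the completeness remark recalled from \cite{BD16}, I may assume $\A$ is complete. The Myhill-Nerode canonical projection $\pi: \A \to \A_{\rm{min}}(L)$ is then a strict epimorphism between complete deterministic automata, so Corollary~\ref{cor:epi_complete_deterministic_is_cover} turns ${\rm{G}}_{\pi}$ into a directed cover ${\rm{G}}_{\A} \to G(L)$ of genus $\leq n$. The step $(2) \Rightarrow (3)$ is immediate because every directed cover is a directed emulator.

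For $(3) \Rightarrow (4)$, I would apply the functor $R$: if $G \to G(L)$ is a directed emulator of genus $\leq n$, then $R(G) \to R(G(L))$ is a directed emulator by Proposition~\ref{lem:functor_R_preserves_directed_emulators}, with the same genus by Lemma~\ref{lem:R_preserves_genus}. Because $R(G(L))$ is simple, Proposition~\ref{prop:directed_is_cover} replaces this emulator by a directed cover of $R(G(L))$ of genus $\leq n$, which is automatically simple since a cover of a simple graph cannot acquire parallel edges.

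For $(4) \Rightarrow (5)$, given a simple directed cover $\psi = (p,q): G \to R(G(L))$ of genus $\leq n$, I would pass to the subgraph $G^{-}$ of $G$ on the same vertex set with edges $\{ e \in E_G : p(s(e)) \neq p(t(e)) \}$. Because $\psi$ is a cover, the bijection ${\rm{OutE}}(v') \to {\rm{OutE}}(p(v'))$ restricts to a bijection between the outgoing edges of $v'$ surviving in $G^{-}$ and the outgoing non-loops of $p(v')$, so $(p, q|_{G^{-}}): G^{-} \to \mathrm{Exc}(R(G(L)))$ is a directed cover, and $g(G^{-}) \leq g(G) \leq n$ by Lemma~\ref{lem:removing-edge-does-not-increase-genus}. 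Finally, $(5) \Rightarrow (1)$ is a direct application of Lemma~\ref{lem:from-directed-cover-over-simple-selfloopless-to-automaton} to $\A = \A_{\rm{min}}(L)$: the cover given by (5) is automatically simple, and the lemma produces a deterministic automaton $\A'$ with a strict automaton epimorphism onto $\A_{\rm{min}}(L)$ (so $L(\A') = L$ by Proposition~\ref{prop:strict-conformal-implies-same-language}) and $g(\A') \leq n$, whence $g(L) \leq n$.

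I expect the only delicate point to be $(4) \Rightarrow (5)$: a simple directed cover of $R(G(L))$ may contain non-loop edges whose image is a loop, and these must be pruned in order to land in $\mathrm{Exc}(R(G(L)))$. Verifying that the pruning preserves the cover property at every vertex, i.e.\ that the bijection on outgoing stars restricts cleanly to outgoing non-loops, is the one step that is not essentially formal once the preparatory lemmas are available.
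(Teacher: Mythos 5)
Your proposal is correct and follows essentially the same cycle of implications $(1)\Rightarrow(2)\Rightarrow(3)\Rightarrow(4)\Rightarrow(5)\Rightarrow(1)$ as the paper, invoking the same supporting results (the covering property of the projection to $\Amin$, the genus- and emulation-preserving properties of $R$, extraction of covers from emulators, excision, and Lemma~\ref{lem:from-directed-cover-over-simple-selfloopless-to-automaton}). Your explicit pruning argument for $(4)\Rightarrow(5)$ and the extraction of a simple cover in $(3)\Rightarrow(4)$ merely spell out steps the paper compresses into single sentences via ${\rm Exc}$ and Proposition~\ref{prop:directed_is_cover}, and both verifications are sound.
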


\begin{proof}
Suppose that $L$ has genus $g(L) \leq n$. There is some finite deterministic complete automaton $\A$ computing $L$ such that $g(\A) \leq n$. This automaton comes naturally with an automaton epimorphism $\A \to \A_{\rm{min}}(L)$. Applying the functor ${\mathrm{G}}_{(-)}$ yields (Theorem \ref{theo:functor_properties}) a directed covering epimorphism ${\mathrm{G}}_\A \to G(L)$. Hence $(1) \Longrightarrow (2)$.
The implication $(2) \Longrightarrow (3)$ is obvious.
The functor $R$ preserves directed emulation (Lemma \ref{lem:functor_R_preserves_directed_emulators}) and genus (Lemma \ref{lem:R_preserves_genus}), thus $(3) \Longrightarrow (4)$. Excision also preserves the genus, thus $(4) \Longrightarrow (5)$. Suppose (5). Applying Lemma~\ref{lem:from-directed-cover-over-simple-selfloopless-to-automaton} provides a deterministic automaton $\A$ such that $L(\A) = L$ and
$g(\A) \leq g(G) \leq n$. 
Therefore $g(L) \leq g(\A) \leq n$. This proves $(1)$.
\end{proof}

\begin{corollary}
If $L, L'$ are two regular languages such that
${\rm{Exc}}(RG(L)) = {\rm{Exc}}(RG(L'))$, then $g(L) = g(L')$.
\end{corollary}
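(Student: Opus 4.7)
The plan is to deduce this immediately from the equivalence $(1) \Longleftrightarrow (5)$ in Theorem~\ref{th:genus-and-emulator}. That equivalence says: for any regular language $K$ and any $n \in \mathbb{N}$, one has $g(K) \leq n$ if and only if ${\rm{Exc}}(RG(K))$ admits a directed cover $G$ with $g(G) \leq n$. Crucially, the right-hand condition depends on $K$ only through the simple loopless directed graph ${\rm{Exc}}(RG(K))$.

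Concretely, I would fix $n \in \mathbb{N}$ and apply Theorem~\ref{th:genus-and-emulator} to $L$ to get $g(L) \leq n$ iff ${\rm{Exc}}(RG(L))$ has a directed cover of genus at most $n$. By hypothesis ${\rm{Exc}}(RG(L)) = {\rm{Exc}}(RG(L'))$, so the class of directed covers considered is the same for $L$ and for $L'$. Applying the theorem again to $L'$ yields $g(L') \leq n$ iff ${\rm{Exc}}(RG(L'))$ has a directed cover of genus at most $n$. Combining, we obtain
\[
g(L) \leq n \ \Longleftrightarrow\ g(L') \leq n \qquad \text{for every } n \in \mathbb{N},
\]
from which it follows at once that $g(L) = g(L')$.

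There is no real obstacle here: the entire content of the corollary is packaged in the nontrivial direction $(5) \Longrightarrow (1)$ of Theorem~\ref{th:genus-and-emulator}, whose proof relied on Lemma~\ref{lem:from-directed-cover-over-simple-selfloopless-to-automaton}. Once that theorem is in hand the corollary is simply the observation that the genus invariant of $L$ factors through the isomorphism class of ${\rm{Exc}}(RG(L))$.
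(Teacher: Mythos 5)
Your argument is correct and is exactly the intended one: the paper states this corollary without proof as an immediate consequence of the equivalence $(1) \Longleftrightarrow (5)$ in Theorem~\ref{th:genus-and-emulator}, since that condition depends on $L$ only through ${\rm{Exc}}(RG(L))$. Nothing further is needed.
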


\begin{corollary} \label{cor:subgraph_and_languages}
Let $L, L'$ be regular languages such that ${\rm{Exc}}(RG(L'))$ is a subgraph of ${\rm{Exc}}(RG(L))$. Then $g(L') \leq g(L)$.
\end{corollary}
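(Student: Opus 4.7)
The plan is to combine Theorem~\ref{th:genus-and-emulator} with the pullback-style lifting result of Corollary~\ref{sub-graph-preserve}. The route is: a cover of a graph restricts to a cover of any subgraph (without increasing the genus), and the characterisation of $g(L)$ through directed covers of $\Exc(RG(L))$ transfers this fact to regular languages.

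First, set $n = g(L)$. By the implication $(1) \Longrightarrow (5)$ in Theorem~\ref{th:genus-and-emulator}, there exists a directed cover morphism $\psi: G \to \Exc(RG(L))$ with $g(G) \leq n$. Now invoke Corollary~\ref{sub-graph-preserve} with $K = \Exc(RG(L))$, with the subgraph in question being $\Exc(RG(L'))$, and with the directed cover $\psi$. This produces a subgraph $G'$ of $G$ together with a directed cover morphism $G' \to \Exc(RG(L'))$.

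Next, since $G'$ is a subgraph of $G$, iterating Lemma~\ref{lem:removing-edge-does-not-increase-genus} (removing edges does not raise the genus, and removing isolated vertices does not either) yields $g(G') \leq g(G) \leq n$. Finally, applying the implication $(5) \Longrightarrow (1)$ of Theorem~\ref{th:genus-and-emulator} to the language $L'$ and to the cover $G' \to \Exc(RG(L'))$ gives $g(L') \leq g(G') \leq n = g(L)$, as desired.

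There is no real obstacle; the main point is just to notice that Corollary~\ref{sub-graph-preserve} is exactly the right tool to restrict a directed cover to a subgraph of the base. The only mild care needed is verifying that the genus is monotone for subgraphs, which is immediate from Lemma~\ref{lem:removing-edge-does-not-increase-genus} together with the fact that deleting isolated vertices does not change the geometric realisation in the surface.
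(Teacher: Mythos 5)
Your proof is correct and follows essentially the same route as the paper: the paper also takes a minimal-genus directed emulator of $\Exc(RG(L))$, restricts it to a directed emulator of the subgraph $\Exc(RG(L'))$, and concludes by genus monotonicity for subgraphs together with Theorem~\ref{th:genus-and-emulator}. Your version is simply more explicit, naming Corollary~\ref{sub-graph-preserve} as the restriction tool and working with covers (item (5)) rather than emulators, which is an equivalent formulation.
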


\begin{proof}
A directed emulator $H$ of ${\rm{Exc}}(RG(L))$ of minimal genus contains a directed emulator $H'$ of ${\rm{Exc}}(RG(L'))$ as a subgraph. Hence $g(L) \geq g(H') \geq g(L')$.
\end{proof}

Corollary~$\ref{cor:subgraph_and_languages}$ can be used to bound genera of languages as well.

\begin{corollary} \label{cor:size_less_or_equal_to_six_is_planar}
Any regular language $L$ of size $|L|_{\rm{set}} \leq 6$ is planar.
\end{corollary}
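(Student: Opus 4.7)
The plan is to combine Theorem~\ref{th:genus-and-emulator} with an explicit construction of a planar directed cover of the complete simple loopless directed graph on $6$ vertices, which I denote by $K$. For any regular language $L$ with $|L|_{\rm{set}}\le 6$, the directed graph $G(L)$ has at most six vertices, so the simple loopless directed graph ${\rm{Exc}}(R(G(L)))$ embeds as a directed subgraph of $K$.

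To exhibit a planar directed cover of $K$, I would use the graph $I$ of the regular icosahedron: $12$ vertices, $30$ edges, all of degree $5$, planar, and equipped with the antipodal automorphism $\sigma$, a fixed-point-free involution. Since the icosahedron has graph-theoretic diameter $3$ and the antipode is the unique vertex at distance $3$, two neighbors of a common vertex $w$ are never antipodal to each other (else they would be at distance $3$, contradicting $d(u,v)\le d(u,w)+d(w,v)=2$). Hence the quotient $I/\sigma$ has $6$ vertices and $15$ edges and is isomorphic to $K_6$, with $I\to K_6$ an undirected two-sheeted cover. Replacing each undirected edge $\{u,v\}$ in both graphs by two opposite directed edges $u\to v$ and $v\to u$ produces a directed graph $\vec{I}$ and a directed map $\vec{I}\to K$ which satisfies the directed cover property: at each vertex $v'$ of $\vec{I}$, the $5$ outgoing edges reach five distinct fibres and hence map bijectively onto the $5$ outgoing edges at the image of $v'$. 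The underlying undirected multigraph of $\vec{I}$ is the icosahedron with each edge doubled, which is planar (draw the duplicate alongside the original to form a digon face), so $g(\vec{I})=0$.

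To conclude, I would apply Corollary~\ref{sub-graph-preserve} to the subgraph inclusion ${\rm{Exc}}(R(G(L)))\hookrightarrow K$ and to the directed cover $\vec{I}\to K$. This yields a subgraph $H\subseteq \vec{I}$ together with a directed cover $H\to {\rm{Exc}}(R(G(L)))$. As a subgraph of the planar graph $\vec{I}$, $H$ is planar, so $g(H)=0$. The implication $(5)\Rightarrow(1)$ of Theorem~\ref{th:genus-and-emulator} then gives $g(L)\le 0$, i.e.\ $L$ is planar.

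The only genuinely non-formal step is the geometric input about the icosahedron: its antipodal involution quotients it onto the (non-planar) graph $K_6$, and this quotient respects the directed cover property. Once this combinatorial input is in hand, the remaining steps are routine applications of the categorical machinery built in the previous sections.
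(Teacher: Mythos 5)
Your proof is correct, and it differs from the paper's in the one step that actually carries the mathematical weight. The paper also reduces the statement to the complete directed graph on six vertices, but it does so by introducing the language $Z_6$ (sums of letters divisible by $6$), observing that $G(Z_6)$ is the complete directed graph on six vertices with loops, and then simply \emph{citing} \cite[\S 2.1]{BD19} for the existence of a planar directed emulator of that graph; the conclusion is then drawn via Corollary~\ref{cor:subgraph_and_languages} rather than via Corollary~\ref{sub-graph-preserve}. You instead construct the planar directed cover explicitly: the antipodal quotient of the icosahedron is $K_6$, bidirecting both sides yields a planar directed two-sheeted cover of the complete simple loopless digraph on six vertices, and the implication $(5)\Rightarrow(1)$ of Theorem~\ref{th:genus-and-emulator} finishes the argument (your cover of ${\rm{Exc}}(R(G(L)))$ is automatically simple, being a directed cover of a simple graph, so the hypotheses of Lemma~\ref{lem:from-directed-cover-over-simple-selfloopless-to-automaton} are met). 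Your route is self-contained where the paper's is not, and your use of statement $(5)$ cleanly sidesteps the loops present in $G(Z_6)$; the paper's route is shorter given the external reference and showcases the language-level comparison tool (Corollary~\ref{cor:subgraph_and_languages}) rather than the purely graph-level one. The geometric input you supply --- that antipodal neighbours cannot share a common neighbour in the icosahedron, so the quotient is a simple $5$-regular graph on six vertices, hence $K_6$ --- is sound, and is presumably the same planar cover that \cite{BD19} has in mind.
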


\begin{proof}
Consider the regular language $Z_{6}$ that consists of words $w = a_1 a_2 \cdots a_n$ on the alphabet ${\mathbb{Z}}/6{\mathbb{Z}}$ such that the sum $\sum_{i} a_{i}$ of its letters is $0$ mod $6$. 
The underlying graph of its minimal deterministic automaton is the complete simple directed graph of size 6:
\begin{center}
 \begin{tikzpicture}[->,transform shape, scale=0.7, initial text={}, baseline=-1mm]]
\node[state](G0) at (2,0) {$0$};
\node[state] (G1) at (1,1.73) {$1$};
\node[state] (G2) at (-1,1.73) {$2$};
\node[state] (G3) at (-2,0) {$3$};
\node[state] (G4) at (-1,-1.73) {$4$};
\node[state] (G5) at (1,-1.73) {$5$};
\path [] 
(G0) edge[bend right=10] node [right=1mm] {} (G1)
(G1) edge[bend right=10] node [above=1mm] {} (G2)
(G2) edge[bend right=10] node [left=1mm] {} (G3)
(G3) edge[bend right=10] node [left=1mm] {} (G4)
(G4) edge[bend right=10] node [below=1mm] {} (G5)
(G5) edge[bend right=10] node [right=1mm] {} (G0)
(G0) edge[bend right=10] node [below=1mm] {} (G2)
(G1) edge[bend right=10] node [below=1mm] {} (G3)
(G2) edge[bend right=10] node [right=1mm] {} (G4)
(G3) edge[bend right=10] node [above=1mm] {} (G5)
(G4) edge[bend right=10] node [above=1mm] {} (G0)
(G5) edge[bend right=10] node [left=1mm] {} (G1)
(G0) edge[bend right=10] node {} (G3)
(G1) edge[bend right=10] node {} (G4)
(G2) edge[bend right=10] node {} (G5)
(G3) edge[bend right=10] node {} (G0)
(G4) edge[bend right=10] node {} (G1)
(G5) edge[bend right=10] node {} (G2)
(G0) edge[bend right=10] node [right=1mm] {} (G5)
(G1) edge[bend right=10] node [above=1mm] {} (G0)
(G2) edge[bend right=10] node [left=1mm] {} (G1)
(G3) edge[bend right=10] node [left=1mm] {} (G2)
(G4) edge[bend right=10] node [below=1mm] {} (G3)
(G5) edge[bend right=10] node [right=1mm] {} (G4)
(G0) edge[bend right=10] node [right=1mm] {} (G4)
(G1) edge[bend right=10] node [above=1mm] {} (G5)
(G2) edge[bend right=10] node [left=1mm] {} (G0)
(G3) edge[bend right=10] node [left=1mm] {} (G1)
(G4) edge[bend right=10] node [below=1mm] {} (G2)
(G5) edge[bend right=10] node [right=1mm] {} (G3)
(G0) edge[loop right] (G0)
(G1) edge[loop above] (G1)
(G2) edge[loop above] (G2)
(G3) edge[loop left] (G3)
(G4) edge[loop below] (G4)
(G5) edge[loop below] (G5);
\end{tikzpicture}
\end{center}
This graph has a planar directed emulator (for this and other facts about $Z_6$, see \cite[\S 2.1]{BD19}). Let $L$ be any regular language whose minimal automaton $\A$ has size at most $6$. The simple graph $R(G_\A) = R(G(L))$ is then a subgraph of $R(G(Z_6))$, thus (Corollary~\ref{cor:subgraph_and_languages}) has a planar directed emulator. Hence by Theorem~\ref{th:genus-and-emulator}, $L$ is planar.

\end{proof}

\begin{corollary}
Let $L_1$ and $L_2$ be two regular languages on disjoint alphabets. Then $g(L_1 \cup L_2) \geq \max(g(L_1), g(L_2)).$
\end{corollary}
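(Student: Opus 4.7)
The plan is to prove $g(L_i) \leq g(L_1 \cup L_2)$ separately for each $i \in \{1,2\}$; the maximum inequality then follows. Fix $i$, write $j$ for the other index, and choose a complete deterministic automaton $\A$ recognizing $L_1 \cup L_2$ with $g(\A) = g(L_1 \cup L_2)$, whose existence follows from the completeness remark after the definition of genus.

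First construct the sub-semi-automaton $\A_i$ by keeping all states of $\A$ together with the same initial state and the same set of final states, but discarding every transition whose label lies in $\Sigma_j$. Since $\A$ is complete and deterministic, $\A_i$ is a complete deterministic automaton over $\Sigma_i$, and its underlying directed graph is a subgraph of $G_\A$, so Lemma~\ref{lem:removing-edge-does-not-increase-genus} yields $g(\A_i) \leq g(\A)$. A direct computation shows
$$L(\A_i) = L(\A) \cap \Sigma_i^* = L_i \cup (L_j \cap \Sigma_i^*),$$
and since the alphabets are disjoint and $L_j \subseteq \Sigma_j^*$, the intersection $L_j \cap \Sigma_i^*$ is contained in $\{\epsilon\}$. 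Hence $L(\A_i) \in \{L_i, L_i \cup \{\epsilon\}\}$, the second possibility occurring only in the edge case $\epsilon \in L_j \setminus L_i$.

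When $L(\A_i) = L_i$, we conclude directly $g(L_i) \leq g(\A_i) \leq g(L_1 \cup L_2)$. In the remaining edge case, we adjust $\A$ before restricting: introduce a new initial vertex $q_0'$ equipped with a parallel copy of each outgoing transition at the original initial state $q_0$, make $q_0'$ non-final and demote $q_0$ from being initial. The resulting automaton $\A'$ recognizes $(L_1\cup L_2) \setminus \{\epsilon\}$, and its restriction $\A'_i$ to $\Sigma_i$-transitions computes precisely $L_i$ since $(L_j \setminus \{\epsilon\}) \cap \Sigma_i^* = \emptyset$ by disjointness of alphabets. The genus comparison $g(\A') \leq g(\A)$ follows from a standard local surgery at $q_0$: double each outgoing edge at $q_0$ into a parallel pair (Lemma~\ref{lem:R_preserves_genus} ensures that identifying parallel edges preserves genus, so adding them does not change it), perform a genus-preserving vertex split of $q_0$ into $\{q_0,q_0'\}$ along a cut of the cyclic order that separates the originals from their parallel copies, and finally delete the auxiliary edge created by the split, which cannot increase genus by Lemma~\ref{lem:removing-edge-does-not-increase-genus}.

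The principal obstacle is precisely this vertex-duplication step in the edge case: while adding a twin vertex sharing the outgoing star of $q_0$ is classically known to preserve genus, a careful arrangement of the parallel copies in the cyclic order around $q_0$ is needed so that the subsequent split is admissible. Once this is settled, chaining the inequalities gives $g(L_i) \leq g(\A'_i) \leq g(\A') \leq g(\A) = g(L_1 \cup L_2)$, and the symmetric argument for $j$ yields $\max(g(L_1), g(L_2)) \leq g(L_1 \cup L_2)$.
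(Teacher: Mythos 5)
Your main case is sound and genuinely different from the paper's argument. The paper disposes of the corollary in one line by observing that $\A_{\rm{min}}(L_1\cup L_2)$ contains $\A_{\rm{min}}(L_1)$ and $\A_{\rm{min}}(L_2)$ as subgraphs and invoking Corollary~\ref{cor:subgraph_and_languages}, hence ultimately the emulator machinery of Theorem~\ref{th:genus-and-emulator}. You instead work directly with a genus-minimal complete deterministic automaton $\A$ for $L_1\cup L_2$ and restrict to the $\Sigma_i$-labelled transitions; the computation $L(\A_i)=L_i\cup(L_j\cap\Sigma_i^*)$ with $L_j\cap\Sigma_i^*\subseteq\{\epsilon\}$ is correct, and Lemma~\ref{lem:removing-edge-does-not-increase-genus} gives $g(\A_i)\leq g(\A)$. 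Whenever $\epsilon\notin L_j$ or $\epsilon\in L_i$ this yields $g(L_i)\leq g(L_1\cup L_2)$ by an argument that is more elementary than the paper's and avoids both minimal automata and emulators.

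The edge case $\epsilon\in L_j\setminus L_i$, however, contains a genuine gap, and it is not the technical loose end you present it as. The assertion that ``adding a twin vertex sharing the outgoing star of $q_0$ is classically known to preserve genus'' is false. Take $K_{2,3}$ with parts $\{u,v\}$ and $\{a,b,c\}$, oriented so that $u$ has outgoing edges to $a,b,c$: adding a twin $u'$ with the same outgoing star produces $K_{3,3}$, raising the genus from $0$ to $1$. Consequently no arrangement of the parallel copies in the cyclic order around $q_0$ can make your split admissible: the graph you are trying to reach (the doubled star redistributed onto $q_0$ and $q_0'$, with or without the auxiliary edge) can itself have strictly larger genus than $\A$, so the obstruction is not in the bookkeeping of rotations but in the statement $g(\A')\leq g(\A)$, which is simply not true as a graph operation. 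Your chain $g(L_i)\leq g(\A_i')\leq g(\A')\leq g(\A)$ therefore breaks at its middle link. To repair the edge case you would either need to prove $g(L_i)\leq g(L_i\cup\{\epsilon\})$ by some other means, or fall back on the paper's route: exhibit ${\rm{Exc}}(RG(L_i))$ as a subgraph of ${\rm{Exc}}(RG(L_1\cup L_2))$ (via the identification of the residuals $w^{-1}(L_1\cup L_2)=w^{-1}L_i$ for nonempty $w\in\Sigma_i^*$) and apply Corollary~\ref{cor:subgraph_and_languages}, which handles all cases uniformly.
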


\begin{proof}
The minimal automaton $\A_{\rm{min}}(L_1 \cup L_2)$ for $L_1 \cup L_2$ contains both the minimal automaton $\A_{\rm{min}}(L_1)$ and the minimal automaton $\A_{\rm{min}}(L_2)$ as subgraphs.
\end{proof}


The {\emph{Language Genus Problem}} is the following: given a regular language $L$ and $n \in \N$, the answer is YES if $g(L) \leq n$, otherwise NO.

The {\emph{Directed Emulation Genus Problem}} is: given a directed graph $G$ and $n \in \N$,  YES if there is a directed emulator $G'$ of $G$ such that $g(G') \leq n$, otherwise NO.

Theorem~\ref{th:genus-and-emulator} allows to reduce the problem of the determining the genus of a language to a graph-theoretic problem in terms of directed emulators.

\begin{corollary}\label{cor:genus_equivalent_emulator}
The {\emph{Language Genus Problem}} has a solution if and only if the {\emph{Directed Emulation Genus Problem}} restricted to co-reachable directed graphs has a solution.
\end{corollary}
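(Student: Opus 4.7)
The plan is to prove both directions of the equivalence by exploiting the correspondence $L \leftrightarrow G(L)$ supplied by Theorem~\ref{th:genus-and-emulator}, which asserts that $g(L) \leq n$ if and only if $G(L)$ has a directed emulator of genus at most $n$.

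The direction \emph{Directed Emulation Genus (on co-reachable digraphs) solvable $\Rightarrow$ Language Genus solvable} is immediate. Given $L$ and $n \in \N$, I would compute $\Amin(L)$ and extract $G(L)$. Since every state of $\Amin(L)$ is accessible from the unique initial state $q_{0}$, the vertex $q_{0}$ is co-reachable in $G(L)$, placing $G(L)$ in the class of co-reachable digraphs. A call to the Directed Emulation Genus oracle on $(G(L),n)$, combined with Theorem~\ref{th:genus-and-emulator}, decides whether $g(L) \leq n$.

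For the converse, I need to reduce, for each co-reachable digraph $G$ with co-reachable vertex $v_{0}$ and each integer $n$, the question ``does $G$ admit a directed emulator of genus $\leq n$?'' to a Language Genus query. The strategy is to build a regular language $L$ such that $G(L)$ is canonically related to $G$ and the minimal emulator genera coincide. Let $G^{\ast}$ denote $G$ enriched, at each vertex $v$, by an extra loop with a fresh label $c_{v}$; assign pairwise distinct fresh labels to the original edges of $G$. Endow the resulting semi-automaton with $v_{0}$ as initial state and $V_{G}$ as set of final states, and let $L$ be the accepted language. Accessibility of every state from $v_{0}$ follows from co-reachability, and distinguishability from the fact that the one-letter word $c_{v}$ lies in the right-language of $v$ and of no other vertex; so $G(L)$ coincides with $G^{\ast}$ up to the completion step. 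The minimal emulator genus of $G^{\ast}$ equals that of $G$: adding a loop at every vertex is reversed by $\Exc$, which preserves genus (Lemma~\ref{lem:excision-preserves-genus}) and is functorial on emulator maps (Proposition~\ref{prop:exc_as_functor}); conversely, given an emulator $H \to G$, one produces an emulator $H^{\ast} \to G^{\ast}$ of identical genus by placing a $c_{v}$-labelled loop at each vertex in the preimage of $v$. Invoking the Language Genus oracle on $(L,n)$ and combining with Theorem~\ref{th:genus-and-emulator} then decides the Directed Emulation Genus problem on $G$.

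The main obstacle lies in the completion step: the standard convention forces $\Amin(L)$ to be complete, so one must adjoin a trash vertex $\bot$ equipped with a loop per letter and one incoming edge from each vertex whose transition for that letter was undefined, thereby perturbing $G(L)$ away from $G^{\ast}$. A careful analysis, in the spirit of Lemma~\ref{lem:from-cover-over-simple-excized-to-cover-over-original} and Lemma~\ref{lem:from-directed-cover-over-simple-selfloopless-to-automaton}, is required to verify that this controlled augmentation does not affect the minimal emulator genus, for instance by inserting the trash subgraph into a face of an optimal embedding of $G^{\ast}$ to obtain one of $G(L)$ of the same genus, and conversely extracting an emulator of $G^{\ast}$ from one of $G(L)$ by deleting the trash component.
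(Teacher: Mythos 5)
Your first direction coincides with the paper's. For the converse, the paper's construction is leaner than yours: given a co-reachable vertex $v_0$ of $G$, it takes the tautological semi-automaton $\A_{G}$ of Definition~\ref{def:tautological_transition_system} (each edge is its own letter), declares $v_0$ initial and \emph{every} vertex final, and observes that the resulting automaton is deterministic (Lemma~\ref{lem:tautological_semi_is_deterministic}), accessible and minimal, so that $G(L(\A)) = G$ on the nose and Theorem~\ref{th:genus-and-emulator} applies with no further graph surgery. Your fresh loops $c_{v}$ are a legitimate alternative device for forcing pairwise distinguishability (they even repair the corner case of two final sink states, which the paper's one-line minimality argument glosses over), and your claim that these loops do not change the minimal emulator genus is correctly justified via $\Exc$ and Lemma~\ref{lem:from-cover-over-simple-excized-to-cover-over-original}.

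The genuine gap is exactly the step you flag and do not carry out: the completion. If $\Amin(L)$ is taken complete, the trash state $\bot$ receives, from \emph{every} state, one edge for each of its many missing letters, so $\bot$ is adjacent to all other vertices. Your proposed fix, ``inserting the trash subgraph into a face of an optimal embedding of $G^{\ast}$'', fails as stated: a single new vertex joined to every vertex of an emulator $H$ is an apex over $H$ and in general strictly increases the genus (an apex over a planar graph need not be planar). A correct repair must exploit the freedom of emulators to have large fibres over $\bot$ --- for instance one copy of $\bot$ per face of an embedding of $H$, each copy carrying all the required loops and receiving the missing-letter edges only from the vertices on the boundary of its face --- while the reverse implication is obtained by restricting an emulator of $G(L)$ to the preimage of the subgraph $G^{\ast}$ (Corollary~\ref{sub-graph-preserve}). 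The paper sidesteps this entire discussion by never completing the tautological automaton, which is why its proof is a few lines and yours, as written, is not yet a proof.
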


\begin{proof}
Theorem~\ref{th:genus-and-emulator} shows a solution for the Directed Emulation Genus Problem implies a solution for the Language Genus Problem. We need to show the converse. Suppose that $G=(V,E)$ is a strongly connected directed graph. Consider its associated tautological semi-automaton $\A_{G}$ (see Definition~\ref{def:tautological_transition_system}). By assumption, there is a vertex $v_0 \in V$ such that all other vertices are reachable from it. We define $v_0$ to be the initial state and all vertices $v \in V$ to be final states. Let $\A$ be the resulting automaton. By Lemma~\ref{lem:tautological_semi_is_deterministic}, $\A$ is deterministic. Since $v_0$ is co-reachable, that every state is accessible (from $v_0$) in $\A$. Since every state is final, every state is trivially co-accessible (to itself). Furthermore, since every outgoing edge has a unique label, there cannot be two distinct and equivalent states, so $\A$ is minimal. We have $G = G(L(\A))$. By Theorem~\ref{th:genus-and-emulator}, $L(\A)$ has genus $g$ if and only if $G$ has a directed emulator of genus $g$.
\end{proof}

The \emph{planarity problem} is the genus problem (applied to directed graphs or regular languages) restricted to $n = 0$. 
	
\begin{corollary} \label{cor:planarity_pb} The planarity problems for co-reachable directed graphs and for regular languages respectively are equivalent. 
\end{corollary}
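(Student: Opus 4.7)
The plan is to obtain Corollary~\ref{cor:planarity_pb} as the $n = 0$ specialization of Corollary~\ref{cor:genus_equivalent_emulator}. Planarity of a regular language $L$ means $g(L) \leq 0$, and since genus is a nonnegative integer this is equivalent to $g(L) = 0$; similarly, the planarity problem for a directed graph $G$, in the sense used here, asks whether $G$ admits a directed emulator of genus $\leq 0$. The forward direction (solution for languages implies solution for co-reachable directed graphs) and the reverse direction (solution for co-reachable directed graphs implies solution for languages) can each be extracted by rerunning the two constructions already present in the proof of Corollary~\ref{cor:genus_equivalent_emulator}, but reading them at $n=0$.

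Concretely, I would proceed as follows. Given a regular language $L$, the directed graph $G(L) = {\mathrm{G}}_{\A_{\mathrm{min}}(L)}$ is co-reachable, since the initial state of the minimal automaton is a co-reachable vertex. Theorem~\ref{th:genus-and-emulator} at $n=0$ then yields: $g(L) = 0$ if and only if $G(L)$ has a planar directed emulator. This gives the reduction from language planarity to directed emulator planarity on co-reachable directed graphs. For the converse, given a co-reachable directed graph $G$ with a co-reachable vertex $v_0$, I would form the tautological semi-automaton $\A_G$ (Definition~\ref{def:tautological_transition_system}), declare $v_0$ initial and every vertex final. By Lemma~\ref{lem:tautological_semi_is_deterministic}, the resulting automaton $\A$ is deterministic; co-reachability of $v_0$ makes every state accessible, every state is trivially co-accessible because every state is final, and minimality follows because distinct outgoing edges carry distinct labels so no two states can be Myhill--Nerode equivalent. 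Thus $G = G(L(\A))$, and again Theorem~\ref{th:genus-and-emulator} at $n=0$ gives: $G$ has a planar directed emulator if and only if $L(\A)$ is planar.

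Combining the two reductions yields the equivalence of the planarity problems. I do not expect any real obstacle: everything needed is already assembled, and the statement is a faithful restriction of Corollary~\ref{cor:genus_equivalent_emulator} to the genus-zero case, with the only point worth mentioning being that "$\leq 0$" and "$= 0$" coincide for genus, so the two problems really are planarity problems in the usual sense.
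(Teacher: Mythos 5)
Your proposal is correct and matches the paper's treatment: the paper defines the planarity problem as the genus problem restricted to $n=0$ and obtains Corollary~\ref{cor:planarity_pb} as an immediate specialization of Corollary~\ref{cor:genus_equivalent_emulator}, whose proof uses exactly the two reductions you re-derive (Theorem~\ref{th:genus-and-emulator} in one direction, the tautological semi-automaton with a co-reachable initial vertex and all states final in the other). Nothing further is needed.
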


\begin{remark}
Corollary~\ref{cor:planarity_pb} was first proved by D.~Kuperberg.
\end{remark}

\subsection{Directed vs undirected}

We state in this paragraph other applications of Theorem~\ref{th:genus-and-emulator} in relation to our study of the differences between directed and undirected emulators.

Our first observation is that Theorem~\ref{th:genus-and-emulator} yields a bound for genus by means of {\emph{undirected}} emulators.

\begin{corollary} \label{cor:genus_bound_with_undirected_emulator}
Let $L$ be a regular language and let $U_{L} = U({\rm{Exc}}(RG(L)))$ be the undirected graph obtained from $G(L)$ by simplification and excision. Let $H \to U_{L}$ be any undirected emulator of $U_{L}$. Then $g(L) \leq g(H)$. In particular, if $U_{L}$ has a planar emulator then $L$ is planar.
\end{corollary}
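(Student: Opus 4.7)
The plan is to reduce the statement to Theorem~\ref{th:genus-and-emulator} via the Lifting Lemma~\ref{lem:from_emulator_to_directed_emulator}. The key observation is that the graph $\mathrm{Exc}(RG(L))$, obtained by simplification and excision, is loopless by construction (the excision removes every loop from $RG(L)$), so the Lifting Lemma applies with $G = U_L$ and target direction equal to $\mathrm{Exc}(RG(L))$ itself.

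First I would verify that $\mathrm{Exc}(RG(L))$ is a direction of $U_L$ in the sense of \S\ref{subsec:undirected_emulators}: given any directed graph $D$, there is a canonical embedding $D \hookrightarrow \double{U(D)}$ sending an edge $\arete{x}{e}{y}$ to $(U(e),x,y)$, and under this embedding $U$ sends $D$ back to $U(D)$. Applied to $D = \mathrm{Exc}(RG(L))$, this exhibits $\mathrm{Exc}(RG(L))$ as a direction of $U_L$. Then, given the undirected emulator $\phi : H \to U_L$, Lemma~\ref{lem:from_emulator_to_directed_emulator} produces a direction $\vec{H}$ of $H$ together with a directed emulator morphism $\vec{\phi} : \vec{H} \to \mathrm{Exc}(RG(L))$.

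Next, by Lemma~\ref{lem:genus_forget}, $g(\vec{H}) = g(U(\vec{H})) = g(H)$, since $U(\vec{H}) \simeq H$. Applying Proposition~\ref{prop:directed_is_cover} (or directly Lemma~\ref{lem:directed-emulator-contains-directed-cover}) to $\vec{\phi}$, we extract a directed cover $G' \subseteq \vec{H}$ of $\mathrm{Exc}(RG(L))$ with $g(G') \leq g(\vec{H}) = g(H)$. This places us exactly in condition (5) of Theorem~\ref{th:genus-and-emulator} with $n = g(H)$, from which we conclude $g(L) \leq g(H)$. The planarity statement is the special case $g(H) = 0$.

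There is no real obstacle here beyond chaining previously established results; the only point requiring care is checking that the Lifting Lemma is applicable, which hinges on the looplessness of $\mathrm{Exc}(RG(L))$ (guaranteed by the definition of $\mathrm{Exc}$) and on the canonical identification of a directed graph as a direction of its underlying undirected graph.
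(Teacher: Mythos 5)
Your proof is correct and follows essentially the same route as the paper's: lift the undirected emulator to a directed emulator of $\mathrm{Exc}(R(G(L)))$ via Lemma~\ref{lem:from_emulator_to_directed_emulator}, using that $\mathrm{Exc}(R(G(L)))$ is a loopless direction of $U_L$, and then invoke Theorem~\ref{th:genus-and-emulator} together with $g(\vec{H})=g(H)$. Your explicit verification of the direction claim and the extraction of a directed cover so as to land precisely in condition (5) of the theorem are welcome details that the paper's own proof leaves implicit.
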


\begin{proof}
Suppose there is an emulator $\pi : H \to  U_L$. Then, by Lemma~\ref{lem:from_emulator_to_directed_emulator}, there is a directed emulator $\vec{\pi} : \vec{H} \to \rm{Exc}( \rm{ R }( G(L) )$ (note that there is a direction such that $\vec{U}_{L} = \rm{Exc}( \rm{ R }( G(L) )$). Therefore by Theorem~\ref{th:genus-and-emulator} $g(L) \leq g(\vec{H}) = g(H)$.
\end{proof}

For the next corollary, for a directed graph $G$, denote ${\rm{Em}}_{\DG}(G)$ the set of all finite directed emulators of $G$. For an undirected graph $H$, denote ${\rm{Em}}_{\DSG}(H)$ the set of all finite emulators of $H$.


\begin{corollary} \label{cor:bounds_directed_undirected}
Let $G$ be a simple loopless directed graph. Then
$$ \underset{\widetilde{G} \in {\rm{Em}}_{\DG}(G)}{\min}\ g(\widetilde{G}) \leq \underset{H \in {\rm{Em}}_{\DSG}(U(G))}{\min}\ g(H).$$
\end{corollary}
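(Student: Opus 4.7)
The plan is to produce, from any undirected emulator $H$ of $U(G)$, a directed emulator $\tilde{G}$ of $G$ with $g(\tilde{G}) \leq g(H)$, and then take the minimum. This immediately implies the claimed inequality.

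First, I would fix an undirected emulator $\pi \colon H \to U(G)$ that realizes the right-hand side, i.e.\ $g(H) = \min_{H' \in {\rm{Em}}(U(G))} g(H')$. The key observation is that $G$ itself is a (complete) direction of $U(G)$: by definition, $G$ embeds as a subgraph of $\double{U(G)}$ whose forgetful image is $U(G)$. Since $G$ is assumed simple and loopless, the hypotheses of the Lifting Lemma (Lemma~\ref{lem:from_emulator_to_directed_emulator}) are satisfied for the direction $\vec{U(G)} := G$. Applying it to $\pi$ yields a direction $\vec{H}$ of $H$ together with a directed emulator morphism $\vec{\pi} \colon \vec{H} \to G$.

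In particular, $\vec{H} \in {\rm{Em}}_{\mathscr{D}}(G)$, so
\[
\min_{\tilde{G} \in {\rm{Em}}_{\mathscr{D}}(G)} g(\tilde{G}) \ \leq\ g(\vec{H}).
\]
Now I would compare $g(\vec{H})$ and $g(H)$. By construction $U(\vec{H}) = H$, so Lemma~\ref{lem:genus_forget} gives $g(\vec{H}) = g(U(\vec{H})) = g(H)$. Combining the two yields $\min_{\tilde{G} \in {\rm{Em}}_{\mathscr{D}}(G)} g(\tilde{G}) \leq g(H)$, which is the desired inequality since $H$ was chosen to realize the minimum on the right.

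No step is really a hard obstacle: the whole argument is an application of the Lifting Lemma, whose loopless hypothesis is exactly what the statement's loopless assumption on $G$ is there to guarantee. The only point worth being careful about is checking that the choice of direction used in the Lifting Lemma can be taken to be $G$ itself (rather than some other direction of $U(G)$), which is why simplicity of $G$ is convenient: it ensures that $G$ is unambiguously a subgraph of $\double{U(G)}$ with $U(G) \simeq U(G)$ as required by the definition of a direction.
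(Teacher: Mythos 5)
Your proof is correct and follows exactly the route the paper intends: the paper states this corollary without proof as an immediate consequence of the Lifting Lemma (Lemma~\ref{lem:from_emulator_to_directed_emulator}), applied just as in the proof of Corollary~\ref{cor:genus_bound_with_undirected_emulator}, with the direction of $U(G)$ taken to be $G$ itself and $g(\vec{H}) = g(U(\vec{H})) = g(H)$ supplied by Lemma~\ref{lem:genus_forget}. Your remarks on where looplessness (needed for the Lifting Lemma) and simplicity enter are accurate.
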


\begin{remark}
There are languages (resp. directed graphs) such that the inequality in Corollary~\ref{cor:genus_bound_with_undirected_emulator} (resp. Corollary~\ref{cor:bounds_directed_undirected}) is strict. In other words, the upper bound for the genus of directed emulators given by undirected emulators may be not reached. In particular, the converse of the last statement in Corollary~\ref{cor:genus_bound_with_undirected_emulator} does not hold. Consider the  language $L$ defined on the alphabet ${\mathbb{Z}}/7{\mathbb{Z}}$ as the set of three-letter words $abc$ $(a,b,c \in {\mathbb{Z}}/7{\mathbb{Z}})$ such that $a+b+c = 0$ mod $7$. The underlying graph $G(L)$ of the minimal deterministic automaton $\A_{L}$ for $L$ is depicted below.
\begin{center}
 \begin{tikzpicture}[->,transform shape, scale=0.7, initial text={}, baseline=-1mm]
  \tikzstyle{every state}=[inner sep=2pt,minimum size=2pt]
\node[initial above, state,  inner sep=5pt, minimum size=5pt](G0) at (3,0.5) {};
\node[state] (H0) at (0,-1) {$0_{0}$};
\node[state] (H1) at (1,-1) {$1_{0}$};
\node[state] (H2) at (2,-1) {$2_{0}$};
\node[state] (H3) at (3,-1) {$3_{0}$};
\node[state] (H4) at (4,-1) {$4_{0}$};
\node[state] (H5) at (5,-1) {$5_{0}$};
\node[state] (H6) at (6,-1) {$6_{0}$};
\node[state] (I0) at (0,-3) {$0_1$};
\node[state] (I1) at (1,-3) {$1_1$};
\node[state] (I2) at (2,-3) {$2_1$};
\node[state] (I3) at (3,-3) {$3_1$};
\node[state] (I4) at (4,-3) {$4_1$};
\node[state] (I5) at (5,-3) {$5_1$};
\node[state] (I6) at (6,-3) {$6_1$};
\node[accepting below, state,  inner sep=5pt, minimum size=5pt] (K0) at (3,-4.5) {};
\foreach \x in {0,...,6}
   \foreach \y in {0,...,6}
{ \path[] (G0) edge[] node[]{} (H\x);
	\path[] (H\x) edge[] node[]{} (I\y);
  \path[] (I\x) edge[] node[]{} (K0);
}
\end{tikzpicture}
\end{center}

\begin{corollary}\label{cor:diff_directed_undirected}
The directed graph $G(L)$ has a planar emulator and the undirected graph $U(G(L))$ has no planar emulator.
\end{corollary}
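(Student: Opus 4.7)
The plan is to establish the two parts of the corollary independently. Label the sixteen vertices of $G(L)$ as $*$ (initial), $u_0, \ldots, u_6$ (after reading the first letter), $v_0, \ldots, v_6$ (after reading two letters), and $f$ (final); the directed edges are $* \to u_i$, $u_i \to v_j$ for all pairs $(i,j)$, and $v_j \to f$.

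For the planar directed emulator, I will construct $H$ explicitly by splitting each $v_j$ into seven copies $v_j^{(0)}, \ldots, v_j^{(6)}$, one per index $i$, while leaving $*$, the $u_i$'s, and $f$ unchanged. The edge $u_i \to v_j$ lifts to $u_i \to v_j^{(i)}$, and each $v_j^{(i)}$ carries its own edge to $f$. The outgoing edge lifting property (Definition~\ref{def:directed_emulator}) is immediate at every preimage: at each $u_i$, the seven outgoing edges land in $v_0^{(i)}, \ldots, v_6^{(i)}$, covering one preimage of each $v_j$, and each $v_j^{(i)}$ has the required edge to $f$. I will then exhibit a planar embedding layered top to bottom as $*$, the row of $u_i$'s, a row of the forty-nine $v_j^{(i)}$'s (grouped contiguously by $i$), and $f$; the $7+49+49$ edges fan downward without crossings, the only nontrivial point being that all forty-nine edges to $f$ meet at a common point at the bottom.

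For the negative part, I will argue via Euler's formula. Note that $U(G(L))$ is simple, loopless, and bipartite, with $\deg(*) = \deg(f) = 7$ and $\deg(u_i) = \deg(v_j) = 8$. Suppose, toward a contradiction, that $\phi \colon H \to U(G(L))$ is an undirected emulator with $H$ planar. First, since $U(G(L))$ is simple, any parallel edges of $H$ all project to the same edge below, so deleting duplicates preserves the emulator property; I may therefore assume $H$ is simple, hence also bipartite and loopless. Writing $n_*, n_u, n_v, n_f$ for the numbers of preimages in the four classes of vertex, the lifting property forces every preimage to have degree at least that of its image, so $2|E_H| \geq 7 n_* + 8 n_u + 8 n_v + 7 n_f$. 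The simple bipartite planar bound $|E_H| \leq 2|V_H| - 4$ then yields $3n_* + 4n_u + 4n_v + 3n_f + 8 \leq 0$, which is impossible.

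The step requiring the most care is the simplification of $H$ in the negative part, which relies on $U(G(L))$ being simple so that parallel edges in $H$ all project to a common edge, allowing one to collapse them without breaking the lifting property. Beyond that, the argument reduces to a clean Euler-type inequality: the degrees $7$ and $8$ comfortably exceed twice the bipartite planar average-degree bound, which is strictly less than $4$.
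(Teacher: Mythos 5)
Your proof is correct, but both halves take a genuinely different route from the paper's. For the positive half the paper constructs nothing by hand: it observes that $L$ is a finite language, hence recognized by a planar tree-shaped deterministic automaton, so $g(L)=0$, and then invokes Theorem~\ref{th:genus-and-emulator} to convert this into a planar directed cover of $G(L)$. Your explicit splitting of the middle layer is essentially the cover induced by that tree automaton, checked directly against Definition~\ref{def:directed_emulator}; it is self-contained and avoids the main theorem, and your verification of the lifting property and of planarity is sound. The real divergence is in the negative half. The paper's argument is a multi-step reduction: it passes to the bidirection $\double{U(G(L))}$ via Lemma~\ref{lem:double_respect_emulators}, contracts the seven directed $2$-cycles $i_0 \leftrightarrow i_1$ using Proposition~\ref{lem:cycle_contraction_does_not_increase_genus}, observes that the contracted graph contains the underlying graph of $\A_{Z_{7}^{1,2,3}}$, and concludes from $g(Z_{7}^{1,2,3}) \geq 1$, itself established by the genus formula of \cite{BD16}. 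Your Euler-formula argument replaces all of this with degree counting: after discarding parallel edges (legitimate, as you note, because the base is simple, so a parallel class projects to a single edge and one representative suffices for the lifting property) every vertex of the emulator has degree at least that of its image, hence at least $7$, and already the crude bound $|E|\leq 3|V|-6$ for simple planar graphs gives a contradiction, so your bipartite refinement $|E|\leq 2|V|-4$ is more than is needed. One cosmetic point: the bipartiteness and looplessness of $H$ follow from the existence of a graph homomorphism onto the loopless bipartite graph $U(G(L))$, not from simplicity as your ``hence'' suggests, but this is immaterial since you do not really need bipartiteness. What the paper's longer route buys is a demonstration of its own machinery (bidirection, cycle contraction, the subgraph monotonicity of Corollary~\ref{cor:subgraph_and_languages}) and the link to the language $Z_{7}^{1,2,3}$; what yours buys is a short, elementary proof readable without any of it.
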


\begin{proof}
As the language $L$ is finite, there is a finite deterministic (planar) tree that recognizes it, so $L$ is planar: $g(L) = 0$. By Theorem~\ref{th:genus-and-emulator}, the first assertion follows. For the second assertion, we need the following observations.

\begin{prop} \label{lem:easy_not_planar}
Let $L$ be an $m$-letter regular language such that $m \geq 3$. Suppose that the outdegree at each vertex of $G(L)$ is $m$ and that $U(G(L))$ has no cycles of length $\leq 2$. Then $g(L) \geq 1$.
\end{prop}

\begin{remark}
This is a slight improvement of \cite[Corollary 3.1]{BD19}. 
\end{remark}

\begin{proof}
This is a consequence of the genus formula \cite[Theorem 5]{BD16} recalled below. Consider a minimal embedding of a directed emulator $H$ of $G(L)$ in a closed surface. Since there is no cycle of length $1$ and $2$, each face of the embedding has length at least $3$ (the proof is the same as that of the claim in \cite[Proof of Prop. 1]{BD16}). Let $f_{i}$ be the number of $i$-faces of the embedding. According to the genus formula,
\begin{align*}
g(L)  = g(H) & = 1 - \frac{m+1}{m}f_1 - \frac{1}{2m}f_2 + \frac{m-3}{4m}f_3 + \frac{2m-4}{4m} f_4 + \cdots  \\
 & = 1 + \underbrace{\frac{m-3}{4m}}_{\geq 0}f_{3} + \underbrace{\frac{2m-4}{4m}}_{\geq 0} f_{4} + \cdots
\\
& \geq 1.
\end{align*}
\end{proof}

\begin{lemma} \label{lem:nonplanar_language}
Let $Z_{7}^{1, 2, 3}$ be the language on the alphabet $\{1,2,3\} \subset \mathbb{Z}/7{\mathbb{Z}}$ that consists of words $a_1 a_2 \cdots a_n$ such that $\sum_{i} a_{i} = 0$ mod $7$. Then $g(Z_{7}^{1, 2, 3}) \geq 1$.
\end{lemma}

\begin{proof}
The minimal complete deterministic automaton $\A$ for $Z_{7}^{1, 2, 3}$ is seen to be defined by $Q = {\mathbb{Z}}/7{\mathbb{Z}}$, $0$ being the initial and final state, and transitions $\arete{i}{j}{i+j}$ for $i \in {\mathbb{Z}}/7{\mathbb{Z}}$ and $j \in \{1, 2, 3 \} \subset {\mathbb{Z}}/7{\mathbb{Z}}$. The underlying graph $G(L)$ is easily seen to have outdegree $m$ at each vertex and to have no cycles of length $< 3$. Therefore Lemma \ref{lem:easy_not_planar} applies.
\end{proof}

For the next lemma, we need to define the \emph{directed cycle contraction} (see \cite{KiZh14}). Let $G = (V,E)$ be a directed graph and $c = e_1 e_2 \cdots e_n$ a directed cycle  in $G$ (the map $\{1, \ldots, n \} \to E, i \mapsto e_i$ is injective and $s(e_1) = t(e_n)$). Let $E_{c}$ denote the subset of edges $e$ in $E$ incident to $c$, that is, such that $s(e) = s(e_i)$ or $t(e) = t(e_i)$ for some $1 \leq i \leq n$. Let $w$ be a new vertex. Define a bijective map by sending each edge $e \in E_{c} - \{ e_1, \ldots, e_n \}$ to a new edge $e'$ where the vertex in $c$ incident to $e$ is replaced by $w$. Denote the image $E_{w}$. The contraction of $G$ along $c$ is the new directed graph $G_{c} = (V',E')$ where $V' = (V - \{ s(e_1), \ldots, s(e_n) \}) \cup \{ w \}$ and $E' = (E - E_{c}) \cup E_{w}$.

\begin{prop} \label{lem:cycle_contraction_does_not_increase_genus}
If $G$ has a directed emulator of genus $\leq g$, then $G_{c}$ has a directed emulator of genus $\leq g$.
\end{prop}

\begin{proof} \cite[Theorem~12]{Denis}.
\end{proof}

We now return to the proof of Corollary~\ref{cor:diff_directed_undirected}. By  Lemma~\ref{lem:double_respect_emulators}, $H \to U(G(L))$ is an (undirected) emulator morphism if and only if the induced map $\double{H} \to \double{U(G(L))}$ is a directed emulator morphism. In the directed graph $\double{U(G(L))}$, for each $i \in {\mathbb{Z}}/7{\mathbb{Z}}$, contract the directed cycle 
$c_{i} = \begin{tikzpicture}[->,transform shape, scale=0.7, initial text={}, baseline=-1mm]
\tikzstyle{every state}=[inner sep=2pt,minimum size=2pt] 
\node[state] (M) at (0,0) {$i_{0}$};
\node[state] (N) at (2,0) {$i_{1}$}; 
\path (M) edge[bend left] (N);
\path	  (N) edge[bend left] (M);
\end{tikzpicture}$. (Note that all these cycles are disjoint in $\double{U(G(L))}$.) Let $G = G_{c_1, \ldots, c_7}$ be the resulting directed graph and let $\tilde{G}$ be any directed emulator of $G$. Note that $G$ contains $\A_{Z_{7}^{1,2,3}}$. Therefore
\begin{align*}
g(\double{U(G(L))}) & \geq g(\tilde{G}) &  \hbox{(Lemma~\ref{lem:cycle_contraction_does_not_increase_genus})}\\
& \geq g(Z_{7}^{1,2,3}) &  {\hbox{(Corollary \ref{cor:subgraph_and_languages})}}\\
& \geq 1 & \hbox{(Lemma~\ref{lem:nonplanar_language})}.
\end{align*}
\end{proof}
\end{remark}

We now consider the undirected version of the Emulation Genus Problem. The {\emph{Emulation Genus Problem}} is: given a connected undirected graph $G$ and $n \in \N$, answer YES if there is an emulator $G'$ of $G$ such that $g(G') \leq n$, otherwise NO. 

\begin{corollary}\label{cor:sol_for_directed_implies_sol_for_undirected}
The {\emph{Emulation Genus Problem}} has a solution if the {\emph{Directed Emulation Genus Problem}} has a solution.
\end{corollary}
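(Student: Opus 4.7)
The plan is to reduce the Emulation Genus Problem to the Directed Emulation Genus Problem via the bidirection functor $\double{(-)}$. Given an instance $(G,n)$ of the Emulation Genus Problem, I would compute $\double{G}$ and query the directed oracle on $(\double{G},n)$, returning its answer. This is polynomial in the size of $G$ (indeed $|V_{\double{G}}|=|V_G|$ and $|E_{\double{G}}|\leq 2|E_G|$), so the only thing to check is correctness.

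The correctness amounts to the equivalence: \emph{$G$ admits an undirected emulator of genus at most $n$ if and only if $\double{G}$ admits a directed emulator of genus at most $n$}. For $(\Leftarrow)$, Lemma~\ref{lem:double_almost_adjoint_to_U} immediately yields, from a directed emulator $H\to\double{G}$ with $g(H)\leq n$, an undirected emulator $H'\to G$ with $g(H')=g(H)\leq n$. For $(\Rightarrow)$, starting from an undirected emulator $\phi:G'\to G$ with $g(G')\leq n$, Lemma~\ref{lem:double_respect_emulators} produces a directed emulator $\double{\phi}:\double{G'}\to\double{G}$, so it remains to show that $g(\double{G'})=g(G')$.

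To handle this last point, I would argue as follows. By Lemma~\ref{lem:genus_forget}, $g(\double{G'})=g(U(\double{G'}))$, and unpacking the definition of $\double{(-)}$ shows that $U(\double{G'})$ is obtained from $G'$ by duplicating each non-loop edge, loops being preserved. The inequality $g(G')\leq g(U(\double{G'}))$ follows at once since $G'$ is a subgraph of $U(\double{G'})$ (Lemma~\ref{lem:removing-edge-does-not-increase-genus}). For the reverse inequality, a minimal embedding of $G'$ in a surface of genus $g(G')$ can be extended by drawing each new parallel copy of a non-loop edge arbitrarily close to its twin inside one of the two adjacent faces; each such addition increases both the number of edges and the number of faces by one, preserving $V-E+F$, so $U(\double{G'})$ embeds in the same surface.

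The main technical subtlety I anticipate is this genus-preservation statement under duplication of non-loop edges; it is classical but requires some care about loops (which, crucially, are \emph{not} duplicated by $\double{(-)}$, so they cause no trouble) and about carrying out the parallel-edge insertions simultaneously, which is unproblematic as the operation is local to each edge. Everything else is a direct application of the two lemmas about the functor $\double{(-)}$ developed in Section~\ref{subsec:undirected_emulators}, together with Lemma~\ref{lem:genus_forget}.
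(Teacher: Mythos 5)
Your proof is correct and follows essentially the same route as the paper: both directions of the equivalence rest on Lemma~\ref{lem:double_respect_emulators} and Lemma~\ref{lem:double_almost_adjoint_to_U}, exactly as in the paper's argument. The only difference is that you spell out the parallel-edge insertion argument for $g(\double{G'}) = g(G')$, which the paper simply asserts as clear; your treatment of loops (not duplicated by $\double{(-)}$, hence harmless) is accurate.
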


\begin{proof}
Let $G$ be a graph. Assume that the Directed Emulation Genus Problem has a solution. It suffices, then, to prove that there is an emulator $G'$ of $G$ such that $g(G') \leq n$ if and only if there is a directed emulator $H$ of $\double{G}$ such that $g(H) \leq n$. 

Suppose that $\phi: G'\to G$ is an emulator such that $g(G') \leq n$. By Lemma~\ref{lem:double_respect_emulators}, $\double{\phi}: \double{G'} \to \double{G}$ is a directed emulator. Furthermore, it is clear that $g(\double{G'}) = g(G')$. 
	
Conversely, suppose that $\phi: G'\to \double{G}$ is a directed emulator with $g(G') \leq n$. According to Lemma~\ref{lem:double_almost_adjoint_to_U}, there is an emulator $G'' \to G$ with $g(G'') = g(G') \leq n$. 
\end{proof}

\begin{remark}
Corollary~\ref{cor:sol_for_directed_implies_sol_for_undirected} may be a little surprising at first. Indeed, the existence of a directed cover of genus $g$ is equivalent to the existence of a directed emulator of genus $g$ (Proposition~\ref{prop:directed_is_cover}), a fact that turns out to be wrong in the undirected setting (after the works of P.~Hlin{\v{e}}n{\'y} \cite{Hl99}, Y.~Rieck and Y.~Yamashita \cite{RY09}, respectively) as discussed above (\S \ref{subsec:undirected_emulators}). However, Corollary~\ref{cor:sol_for_directed_implies_sol_for_undirected} (more precisely the key lemma used in the proof above) does not contradict this. Typically, the solution provided by Corollary~\ref{cor:sol_for_directed_implies_sol_for_undirected}  is an emulator and not a cover, even if one started from a directed cover $G' \to \double{G}$. Indeed, the key lemma (Lemma~\ref{lem:double_almost_adjoint_to_U}) cannot be used in general to build an undirected cover from a directed cover. See our remarks there (Remark \ref{rem:does_not_work_for_covers}). 
\end{remark}

\begin{remark} \label{rem:nonconstructive_sol_for_undirected}
On the one hand, the existence of an emulator of an undirected graph of genus $\leq n$ is preserved under edge contraction, see \cite[\S 2]{FL88}. It follows that the Emulation Genus Problem is decidable (albeit not constructively). On the other hand, the existence of a directed emulator of genus $\leq n$ of a directed graph is not even preserved under edge contraction, see \cite[\S 2]{Denis}.
\end{remark}

\section{Conclusion}

We have shown that the Language Genus Problem is decidable if and only if the Directed Emulation Genus Problem is decidable. However, we do not have yet a complete proof of decidability. On another direction, we proved 
that the {\emph{Emulation Genus Problem}} has a solution if the {\emph{Directed Emulation Genus Problem}} has a solution. The former problem is known to have a theoretical solution (Remark~\ref{rem:nonconstructive_sol_for_undirected}). 
A general approach, suggested by a natural generalization of Corollary~\ref{cor:subgraph_and_languages}, consists in properly defining directed minors and proving a ``directed graph minor'' theorem analogous to the celebrated graph minor theorem of Robertson and Seymour \cite[\S 10.5]{RS04}. This is the approach aimed at in \cite{Denis}. Suitable operations are defined at the level of the underlying graph $G(L)$ of the minimal deterministic automaton $\A_{\rm{min}}(L)$ that are non-increasing on the genus of the language $L$ (hence, by Theorem~\ref{th:genus-and-emulator}, of any directed cover of $G(L)$). One should note, however, that the operations (and hence the ``directed graph minors'') are less elementary than the operations for undirected graphs. Even if this approach would be successful, one would furthermore need to find the minors of nonplanar emulable directed graphs. We would face the kind of issues that are discussed by M.~Chimani, M.~Derka, P.~Hlin{\v{e}}n{\'y} and M.~Klus{\'a}{\v{c}}ek in their  article~\cite{CDHK13} (on undirected graphs). 
 In any case, the complete relationship between undirected and directed  emulators, beyond Corollary~\ref{cor:sol_for_directed_implies_sol_for_undirected},   is intriguing and seems quite a challenging question.

Finally, we note that the fields of linear logic and automata theory are closely related. See for instance  the early paper by R.~Statman \cite{Statman74} who introduced the genus of a proof by means of a directed graph $G(\Pi)$ associated to a proof $\Pi$ (see also \cite{Carbone09}). We expect that the questions raised in this paper have applications or analogs in linear logic \cite{Girard87}. Furthermore, developments in linear logic include the assignment of more general objects than graphs to proof nets, empowering the full arsenal of categorical topological invariants to (suitable categories of) linear logic (see e.g., \cite{mellies07}).

\bibliography{main5}
\bibliographystyle{plain}

\end{document}